\definecolor{ForestGreen}{rgb}{0.1333,0.5451,0.1333}
\definecolor{DarkRed}{rgb}{0.8,0,0}
\definecolor{Red}{rgb}{0.9,0,0}
\par\vspace{4mm}}
\renewcommand{\paragraph}{%
  \@startsection{paragraph}{4}%
  {\z@}{1ex \@plus 1ex \@minus .2ex}{-1em}%
  {\normalfont\normalsize\bfseries}%
}
\def\thmt@refnamewithcomma #1#2#3,#4,#5\@nil{%
  \@xa\def\csname\thmt@envname #1utorefname\endcsname{#3}%
  \ifcsname #2refname\endcsname
    \csname #2refname\expandafter\endcsname\expandafter{\thmt@envname}{#3}{#4}%
  \fi
}
\declaretheorem[numberwithin=section,refname={Theorem,Theorems},Refname={Theorem,Theorems}]{theorem}
\declaretheorem[numberlike=theorem,refname={Lemma,Lemmas},Refname={Lemma,Lemmas}]{lemma}
\declaretheorem[numberlike=theorem,refname={Corollary,Corollaries},Refname={Corollary,Corollaries}]{corollary}
\declaretheorem[numberlike=theorem,refname={Claim, Claims},Refname={Claim, Claims}]{claim}
\declaretheorem[numberlike=theorem]{definition}
\renewcommand{\phi}{\varphi}
\newcommand{\poly}{\operatorname{poly}}
\newcommand{\NN}{\mathcal{N}}
\newcommand{\QValue}{{\sc QueryValue}\xspace}
\newcommand{\N}{\mathcal{N}}
\newcommand{\B}{\mathcal{B}}
\newcommand{\eps}{\epsilon}
\newcommand{\dl}{\delta}
\newcommand{\dd}{D}
\def\danupon#1{\marginpar{$\leftarrow$\fbox{D}}\footnote{$\Rightarrow$~{\sf #1 --Danupon}}}
\def\babis#1{\marginpar{$\leftarrow$\fbox{B}}\footnote{$\Rightarrow$~{\sf #1 --Babis}}}
\def\sayan#1{\marginpar{$\leftarrow$\fbox{S}}\footnote{$\Rightarrow$~{\sf #1 --Sayan}}}
\def\monika#1{\marginpar{$\leftarrow$\fbox{M}}\footnote{$\Rightarrow$~{\sf #1 --Monika}}}
\def\danupon#1{}
\def\babis#1{}
\def\sayan#1{}
\def\monika#1{}
\newcommand{\shortOnly}[1]{\ifthenelse{\boolean{short}}{#1}{}}
\newcommand{\longOnly}[1]{\ifthenelse{\boolean{short}}{}{#1}}
\title{Space- and Time-Efficient Algorithm for Maintaining\\ Dense Subgraphs on One-Pass Dynamic Streams\thanks{A preliminary version of this paper appeared in the 47th ACM Symposium on Theory of Computing (STOC 2015).}}
\date{}
\author{ 
	Sayan Bhattacharya\thanks{The Institute of Mathematical Sciences, Chennai, India. Part of this work was done while the author was in Faculty of Computer Science, University of Vienna, Austria.}
	\and Monika Henzinger\thanks{Faculty of Computer Science, University of Vienna, Austria. The research leading to these results has received funding from the European Unions Seventh Framework Programme (FP7/2007-2013) under grant agreement 317532 and from
	the European Research Council under the European Union's Seventh Framework Programme (FP7/2007-2013) / ERC Grant Agreement number 340506.}       
	\and Danupon Nanongkai\thanks{KTH Royal Institute of Technology, Sweden. Part of this work was done while the author was in Faculty of Computer Science, University of Vienna, Austria.}      
    \and Charalampos E. Tsourakakis\thanks{Harvard University,  School of Engineering and Applied Sciences.}    
}
\begin{document}


\maketitle
\pagenumbering{roman}
\begin{abstract}
While in many graph mining applications it is crucial to handle a stream of updates efficiently in terms of {\em both} time and space, not much\danupon{``not much -- if anything --'' is now ``not much''} was known about achieving such type of algorithm. In this paper we study this issue for a problem which lies at the core of many graph mining applications called {\em densest subgraph problem}. We develop an algorithm that achieves time- and space-efficiency for this problem simultaneously. It is one of\danupon{``one of''} the first of its kind for graph problems to the best of our knowledge.

Given an input graph $G = (V, E)$, the ``density'' of a  subgraph induced by a subset of nodes $S \subseteq V$ is  defined as $|E(S)|/|S|$, where $E(S)$ denotes the set of edges in $E$ with both endpoints in $S$. In the densest subgraph problem, the goal is to find a subset of nodes that maximizes the density of the corresponding induced subgraph.  

For any $\epsilon>0$, we present a dynamic algorithm that, with high probability, maintains a $(4+\epsilon)$-approximate solution for the densest subgraph problem under a sequence of edge insertions and deletions  in an input graph with $n$ nodes. The algorithm uses $\tilde O(n)$ space, and has an amortized update time of $\tilde O(1)$ and a query time of $\tilde O(1)$. Here,  $\tilde O$ hides a $O(\poly\log_{1+\epsilon} n)$ term. The approximation ratio can be improved to $(2+\epsilon)$ at the cost of increasing the query time to $\tilde O(n)$. It can be extended to a $(2+\epsilon)$-approximation sublinear-time algorithm and a distributed-streaming algorithm.  Our algorithm is the first streaming algorithm that can maintain the densest subgraph in {\em one pass}. Prior to this, no algorithm could do so even in the special case of an incremental stream and even when there is no time restriction. The previously best algorithm in this setting required $O(\log n)$ passes [Bahmani, Kumar and Vassilvitskii, VLDB'12]. The space required by our algorithm is tight up to a polylogarithmic factor.

\end{abstract}

\newpage
\setcounter{tocdepth}{3}
\tableofcontents

\newpage

\pagenumbering{arabic}


\newpage



\renewcommand{\O}{\tilde{O}}

\section{Introduction}\label{sec:intro}

In analyzing large-scale rapidly-changing graphs, it is crucial that algorithms must use small space and adapt to the change quickly.
This is the main subject of interest in at least two areas, namely {\em data streams} and {\em dynamic algorithms}. In the context of graph problems, both areas are interested in maintaining some graph property, such as connectivity or distances, for graphs undergoing a stream of edge insertions and deletions. This is known as the (one-pass) {\em  dynamic semi-streaming} model in the data streams community, and as the {\em fully-dynamic} model in the dynamic algorithm community. 
 
The two areas have been actively studied since at least the early 80s (e.g. \cite{EvenS81,MunroP80}) and have produced several sophisticated techniques for achieving time and space efficiency. In dynamic algorithms, where the primary concern is {\em time}, the heavy use of {\em amortized analysis} has led to several extremely fast algorithms that can process updates and answer queries in  poly-logarithmic amortized time.
In data streams, where the primary concern is {\em space}, the heavy use of {\em sampling} techniques to maintain small {\em sketches} has led to algorithms that require space significantly less than the input size; in particular, for dynamic graph streams the result by Ahn, Guha, and McGregor \cite{AhnGM12SODA} has demonstrated the power of  linear graph sketches in the dynamic model, 
and initiated an extensive study of dynamic graph streams (e.g. \cite{KapralovLMMS14,KapralovW14,AhnGM12SODA,AhnGM12PODS,AhnGM13}).
Despite  numerous successes in these two areas, we are not aware of many results that  combine techniques from {\em both} areas to achieve time- and space-efficiency  {\em simultaneously} in dynamic graph streams. A notable exception  we are aware of is the connectivity problem, where one can combine the space-efficient streaming algorithm of Ahn~et~al.~\cite{AhnGM12PODS} with the fully-dynamic algorithm of Kapron~et~al.~\cite{KapronKM13}\footnote{We thank Valerie King (private communication) for pointing out this fact.}.

\subsection{Problem definition}
\label{sec:intro:problem}

In this paper, we study  the {\em densest subgraph} problem in dynamic and streaming setting. Fix any unweighted undirected input graph $G = (V, E)$. The density of a subgraph induced by the set of nodes $H \subseteq V$ is defined as $\rho(H) = |E(H)|/|H|$, where $E(H) = \{ (u,v) \in E : u, v \in H \}$ is the set of edges in the induced subgraph. The densest subgraph of $G$ is the subgraph induced by a  node set $H \subseteq V$ that maximizes $\rho(H)$, and we denote the density of such a subgraph by $\rho^*(G)=\max\limits_{H\subseteq V} \rho(H)$. For any $\gamma \geq 1$ and $\eta$, we say that $\eta$ is an {\em $\gamma$-approximate value} of $\rho^*(G)$ if $\rho^*(G)/\gamma \leq \eta \leq \rho^*(G)$. The (static) densest subgraph problem is to compute or approximate $\rho^*(G)$ and the corresponding subgraph. Throughout the paper, we use $n = |V|$ and $m = |E|$ to denote the number of nodes and edges in the input graph, respectively.  
 
This problem and its variants have been intensively studied in practical areas as it is an important primitive in analyzing massive graphs. Its applications range from identifying dense communities in social networks (e.g. \cite{DourisboureGP07}), link spam detection (e.g. \cite{DGibsonKT05}) and finding stories and events (e.g. \cite{AngelKSS12}); 
for many more applications of this problem see, e.g., \cite{BahmaniKV12,LeeRJA10,Tsourakakis14,TangL10}. 
Goldberg \cite{Goldberg84} was one of the first to study this problem although the notion of graph density has been around much earlier (e.g. \cite[Chapter~4]{Lawler:combopt}). His algorithm can solve this problem in polynomial time by using $O(\log n)$ flow computations. Later Gallo, Grigoriadis and Tarjan slightly improved the running time using parametric maximum flow computation. These algorithms are, however, not very practical, and an algorithm that is more popular in practice is an $O(m)$-time $O(m)$-space $2$-approximation algorithm of Charikar \cite{Charikar00}.
However, as mentioned earlier, graphs arising in modern applications are huge and keep changing, and the earlier algorithms cannot handle edge insertions/deletions in the input graph. 
Consider, for example, an application of detecting a dense community in social networks. Since people can make new friends as well as ``unfriend'' their old friends, the algorithm must be able to process these updates efficiently.
With this motivation, it is natural to consider a dynamic version of this problem as defined below.

\paragraph{Our Model.} We start with an empty graph $G = (V, E)$ where $E = \emptyset$. Subsequently, at each time-step, an adversary  either  inserts an edge into the graph, or deletes an already existing edge  from the graph. The set of nodes, on the other hand, remain unchanged. The goal is to maintain a good approximation to the value of the densest subgraph while processing this sequence of edge insertions/deletions.  More formally, we want to design a data structure for the input graph $G = (V, E)$ that supports the following operations.
\begin{itemize}
\item {\sc Initialize}($V$): Initialize the data structure with an empty graph $G = (V, E)$ where $E = \emptyset$.
\item {\sc Insert}($u,v$): Insert the edge $(u,v)$, where $u, v \in V$, into the graph $G$.
\item {\sc Delete}($u,v$): Delete an existing edge $(u,v) \in E$ from the graph $G$.
\item {\sc QueryValue}: Return an estimate of the value of the maximum density $\rho^*(G) = \max_{S \subseteq V} \rho(S)$. If this estimate is always within a $\gamma$-factor of $\rho^*(G)$, for some $\gamma \geq 1$, then we say that the algorithm maintains a $\gamma$-approximation to the value of the densest subgraph. We want this approximation factor to be a small constant. 
\end{itemize}
\noindent The performance of a data structure is measured in term of four different metrics, as defined below.
\begin{itemize}
\item {\em Space-complexity:} This is given by the total space (in terms of bits) used by the data structure.
\item {\em Update-time:} This is the time taken to handle an {\sc Insert} or {\sc Delete} operation.
\item {\em Query-time:} This is the time taken to handle a {\sc QueryValue} operation. 
\item {\em Preprocessing-time:} This is the time taken to handle the {\sc Initialize} operation. Unless explicitly mentioned otherwise, in this paper the preprocessing time  will always be $\O(n)$. 
\end{itemize}

\paragraph{Comparison with the semi-streaming model.} In the streaming algorithms literature,  the ``{\em semi-streaming model}'' for graph problems is defined as follows. We start with an empty graph of $n$ nodes. Subsequently, we have to process a ``{\em stream}'' of updates in the graph. For ``{\em insert-only}'' streams, each update consist of inserting a new edge into the graph. For ``{\em dynamic}'' (or, ``turnstile'') streams, each update consists of either inserting a new edge into the graph or deleting an already existing edge from the graph. 

A ``{\em semi-streaming algorithm}'' can use only $\O(n)$ bits of space while processing a stream of updates. In particular, the algorithm cannot store all the edges in the graph (which might require $\Omega(n^2)$ space). At the end of the stream, the algorithm has to output an (approximate) solution to the problem concerned, which, in our case, happens to be the value of the densest subgraph. The algorithm is  allowed to make ``multiple passes'' over this stream. Typically, in the streaming algorithms literature the focus is on the space complexity,  and optimizing  the update-time and the query-time (which can be as large as $\Omega(n)$) are of secondary importance. 

\paragraph{Our goal.} We want to design  algorithms that maintain  constant factor approximations to the value  of the densest subgraph in a dynamic setting,  have very fast (polylogarithmic in $n$) update and query times, and use very little (near-linear in $n$) space. In other words, we want single-pass semi-streaming algorithms over dynamic streams with polylogarithmic update and query times. 


\paragraph{Remark on the query operation.}
The \QValue operation described above asks only for an estimate of the value $\rho^*(G)$. This raises a natural question: Can we answer a more general query that asks for a subset of nodes which constitute an approximate densest subgraph (in time proportional to the number of nodes returned in response to the query)? The answer is yes. We can easily extend all the algorithms presented in this paper so as  to enable them with this new feature  (see the discussion after the proof of Corollary~\ref{cor:test:1}).

\subsection{Our Results} Our main result is an efficient $(4+\epsilon)$-approximation algorithm for this problem (see Theorem~\ref{main:th:dynamic:stream:main}). To be more specific, we present a randomized algorithm that can process a stream of polynomially many edge insertions/deletions starting from an empty graph using only $\O(n)$ space, and with high probability, the algorithm maintains a $(4+\epsilon)$-approximation to the value of the densest subgraph. The algorithm has $\O(1)$ amortized update-time and $\O(1)$ query-time.

For every integer $t \geq 0$, let $G^{(t)} = (V, E^{(t)})$ be the state of the input graph $G = (V,E)$ just after we have processed the first $t$ updates (edge insertions/deletions) in the dynamic stream, and define  $m^{(t)} \leftarrow |E^{(t)}|$. Thus, we have $m^{(0)} = 0$ and $m^{(t)} \geq 0$ for all $t \geq 1$.
We let $\text{{\sc Opt}}^{(t)} = \rho^*(G^{(t)})$ denote the density of the densest subgraph in $G^{(t)}$. 

\paragraph{Notation.} Throughout  this paper, the notations $\O(.)$ and $\tilde \Theta(.)$ will hide $\text{poly}(\log n, 1/\epsilon)$ factors in the running times and space complexities  of our algorithms, where $\epsilon \in (0,1)$ is a small constant.

\begin{theorem}
\label{main:th:dynamic:stream:main}
Fix a  small constant $\epsilon \in (0,1)$, a constant $\lambda > 1$, and let $T = \lceil n^{\lambda} \rceil$. We can process the first $T$ updates (edge insertions/deletions) in a dynamic stream using $\tilde O(n)$  space, and maintain a value $\text{{\sc Output}}^{(t)}$ at each $t \in [T]$. The algorithm gives the following  guarantees with high probability: We have $\text{{\sc Opt}}^{(t)}/(4+O(\epsilon)) \leq \text{{\sc Output}}^{(t)} \leq \text{{\sc Opt}}^{(t)}$ for all $t \in [T]$. Further, the total amount of computation performed while processing the first $T$ updates in the  stream is $\O(T)$.
\end{theorem}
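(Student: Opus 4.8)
The plan is to build a maintain-a-threshold data structure around Charikar's observation that the maximum density is sandwiched between quantities governed by degree peeling. Concretely, I would discretize the possible density values into geometrically spaced guesses $\rho \in \{(1+\epsilon)^i : 0 \le i \le O(\log_{1+\epsilon} n)\}$, since $\text{{\sc Opt}}^{(t)} \in [1/n, n]$ whenever the graph is nonempty. For each guess $\rho$, I want to maintain a small certificate: a subgraph obtained by iteratively deleting vertices of (current) degree below $\approx 2\rho$ (or some constant multiple dictated by the target $4+\epsilon$ ratio). The key structural fact I would invoke is that if such a peeling process leaves a nonempty set, then the maximum density is at least (a constant fraction of) $\rho$; and conversely, if the maximum density is comfortably above $\rho$, then the densest subgraph itself survives the peeling, so the certificate is nonempty. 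Thus $\text{{\sc Output}}^{(t)}$ is taken to be the largest surviving guess, scaled appropriately, and the $4+O(\epsilon)$ factor comes from composing the factor-$2$ loss in the peeling bound with another factor-$2$ slack built into the degree threshold needed to make the data structure cheap to maintain.

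The technical heart is making each of these $O(\log_{1+\epsilon} n)$ peeling certificates updateable in $\O(1)$ amortized time and jointly storable in $\O(n)$ space. Here I would maintain, for each guess, a partition of the vertices into the "surviving core" and the "peeled" vertices, together with degree counters restricted to the core, in the spirit of an orientation / low-out-degree data structure. The crucial efficiency point is that each surviving core has only $O(n\rho)$ edges but we only ever need to store, per vertex, its adjacency within the core up to the threshold, so the total is $\O(n)$ across all guesses. An edge insertion or deletion can only push degrees across the threshold locally; I would argue via a potential-function / amortization argument (charging re-peeling work to the updates that inserted the relevant edges) that the total work over $T = \lceil n^\lambda \rceil$ updates is $\O(T)$. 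The sketching / randomization enters because in the dynamic (turnstile) stream setting we cannot afford to store the surviving cores exactly once they are moderately dense; instead I would maintain $\ell_0$-samplers or sparse-recovery sketches on the edge sets of the cores, so that after deletions we can recover a small surviving subgraph with high probability, which is where the "with high probability" clause and the $\O(n)$ (rather than $O(n)$) space come from.

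The steps, in order: (1) reduce the approximation problem to maintaining, for each of $O(\log_{1+\epsilon} n)$ thresholds, a Boolean "is the $\rho$-core nonempty" flag, and prove the two-sided bound relating these flags to $\text{{\sc Opt}}^{(t)}$ up to factor $4+O(\epsilon)$; (2) design the per-threshold data structure (core/peeled partition + truncated degree counters + sketches), and prove the $\O(n)$ space bound by summing $O(n\rho)$-type estimates over the geometric sequence; (3) analyze the update procedure and prove the $\O(T)$ total-work bound by an amortized/potential argument, being careful that a single update can cascade through several thresholds but the total cascade length is bounded by the total number of edges ever inserted, which is $O(T)$; (4) assemble \QValue to read off the largest active threshold in $\O(1)$ time; (5) take a union bound over all $T$ steps and all sketches to get the high-probability guarantee, adjusting sketch sizes by a $\poly\log n$ factor so the failure probability is $n^{-\Omega(\lambda)}$.

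The main obstacle I anticipate is step (3): controlling the amortized cost of re-peeling when deletions occur, because deleting one edge can in principle trigger a long chain of vertices dropping below threshold, and this can happen at many of the $O(\log_{1+\epsilon} n)$ thresholds at once. The resolution I would aim for is a careful charging scheme in which each unit of peeling work is paid for by an earlier insertion of a distinct edge (an edge, once it causes a vertex to be peeled at threshold $\rho$, cannot do so again at $\rho$ until it is re-inserted), combined with the observation that the thresholds are geometrically spaced so a vertex lives in only $O(\log_{1+\epsilon} n)$ cores; this keeps the amortized update time $\O(1)$. A secondary subtlety is ensuring the interaction between the exact degree bookkeeping and the sampled/sketched edge sets does not break the amortization — I would handle this by only sketching the *dense* cores (where exact storage is too expensive) and keeping exact adjacency lists for the *sparse* cores, switching representations at a single well-chosen density cutoff.
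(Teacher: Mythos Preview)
Your high-level strategy matches the paper's: discretize the density into $\tilde O(1)$ geometric guesses, maintain an iterated peeling structure (the paper's $(\alpha,d,L)$-decomposition) for each guess, output the largest surviving threshold, and combine uniform edge sampling for the space bound with a potential-function amortization for the time bound. You also correctly identify that $4+\epsilon = 2\cdot\alpha$ with $\alpha = 2+\Theta(\epsilon)$, the slack being needed for the amortization. But two of the steps you gloss over are exactly where the proof lives, and your sketches of them would fail.

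First, the sampler update time. You propose $\ell_0$-samplers to produce $\tilde\Theta(n)$ random edges, but every edge update in $G$ must be fed to every sampler, so the naive implementation costs $\tilde\Theta(n)$ per update, not $\tilde O(1)$. The paper fixes this by hashing the edge universe into $s=\tilde\Theta(n)$ buckets and running one $\ell_0$-sampler per bucket, so an update touches one bucket; proving that this yields nearly-uniform samples with negative association, and that one update in $G$ causes $O(1)$ changes in the sample, is a separate argument (their Theorems~5.3--5.4). Without it your step~(3) cannot reach $\tilde O(T)$ total work. Second, the interaction between sampling and peeling. You plan to sketch ``the edge sets of the cores,'' but the cores depend on the samples, so the Chernoff bound for degree concentration is circular. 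The paper draws $L-1$ \emph{independent} sample sets $S_1,\ldots,S_{L-1}$ and uses $S_i$ only to decide which vertices of $Z_i$ pass to $Z_{i+1}$, making $Z_i$ independent of $S_i$. This then complicates the amortization: the cost of moving a vertex is governed by its degree in all the $S_{i'}$ while the trigger is its degree in a single $S_i$, and the paper needs a dedicated high-probability event (their Lemma~5.13) coupling these degrees before the potential argument goes through. Your charging scheme does not address this. A smaller point: your space accounting ``summing $O(n\rho)$ over the geometric sequence'' is dominated by the largest term, $\Theta(n^2)$; the right bound is $\tilde O(m/d)$ samples at threshold $d$, and restricting to $d\gtrsim m/n$ (since $d^*\ge m/n$) makes each threshold cost $\tilde O(n)$---but then $d$ depends on the time-varying $m$, which the paper handles via its sparse/dense interval split.
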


\paragraph{Oblivious Adversary.} We remark that  Theorem~\ref{main:th:dynamic:stream:main} holds only when the sequence of edge insertions/deletions in the input graph does not depend on the random bits used by our algorithm. In other words, the ``adversary'', who decides upon the sequence of edge insertions/deletions, is ``oblivious'' to  the random bits used in the algorithm. This is a standard assumption in the graph streaming literature. For example, the paper by Ahn, Guha and  McGregor~\cite{AhnGM12SODA}  also requires this assumption on the adversary. We prove Theorem~\ref{main:th:dynamic:stream:main} in Section~\ref{sec:combine}.   In addition, we  obtain the following results. 


\paragraph{$\bullet$ A $(2+\epsilon)$-approximation one-pass dynamic semi-streaming algorithm:} This follows from the fact that with the same space, preprocessing time, and update time, and an additional $\tilde O(n)$ query time, our main algorithm can output a $(2+\epsilon)$-approximate solution. See Section~\ref{sec:sketch}.

\paragraph{$\bullet$ A $(4+\epsilon)$-approximation deterministic dynamic algorithm with $\O(1)$ update time.} In Section~\ref{sec:dynamic}, we present a deterministic  algorithm that maintains a $(4+\epsilon)$-approximation to the value of the densest subgraph. This  requires $\O(m+n)$ space, and  $\O(1)$ update and query times. 

\paragraph{$\bullet$ Extensions to directed graphs.} In Section~\ref{sec:directed}, we extend our result from Section~\ref{sec:dynamic} to directed graphs. Specifically, we present a deterministic dynamic algorithm that maintains a $(8+\epsilon)$-approximation to the value of the densest subgraph of a directed graph. This requires $\O(m+n)$ space, and  $\O(1)$ update and query times. 

\paragraph{$\bullet$ Sublinear-time algorithm:} We show that Charikar's linear-time linear-space algorithm~\cite{Charikar00} can be improved further. In particular, if the graph is represented by an incident list (this is a standard representation \cite{ChazelleRT05,GoelKK13}), our algorithm needs to read only $\tilde O(n)$ edges in the graph (even if the graph is dense) and requires $\tilde O(n)$ time to output a $(2+\eps)$-approximate solution. We also provide a lower bound that matches this running time up to a poly-logarithmic factor. See Section~\ref{sec:sublinear}.

\paragraph{$\bullet$ Distributed streaming algorithm:} In the distributed streaming setting with $k$ sites as defined in \cite{CormodeMYZ10}, we can compute a $(2+\epsilon)$-approximate solution with  $\tilde O(k+n)$ communication by employing the algorithm of Cormode~et~al.~\cite{CormodeMYZ10}. See Section~\ref{sec:distributed}.

\subsection{Previous work}
To the best of our knowledge, our main algorithm is the first dynamic graph algorithm that requires $\tilde O(n)$ space  and at the same time can quickly process each update and answer each query for densest subgraph. 
Previously, there was no space-efficient algorithm known for this problem, even when time efficiency is not a concern, and even for insert-only streams. In this insert-only model, Bahmani, Kumar, and Vassilvitskii~\cite{BahmaniKV12} provided a deterministic $(2+\epsilon)$-approximation $O(n)$-space algorithm.  
Their algorithm   needs $O(\log_{1+\epsilon} n)$ passes; i.e., it has to read through the sequence of edge insertions $O(\log_{1+\epsilon} n)$ times. 
(Their algorithm was also extended to a MapReduce algorithm, which was later improved by \cite{BahmaniGM14}.)
In Section~\ref{sec:sketch}, we improve this result of Bahmani et al. in two respects: (a) We can process a dynamic stream of updates, and (b) we need only a single pass. 
%
Further, the space usage of our algorithm from Section~\ref{sec:sketch} matches the lower bound provided by \cite[Lemma~7]{BahmaniKV12} up to a polylogarithmic factor.
%

We note that while in some settings it is reasonable to compute the solution at the end of the stream or even make multiple passes (e.g. when the graph is kept on an external memory), and thus our and Bahmani~et~al's $(2+\epsilon)$-approximation algorithms are sufficient in these settings, there are many natural settings where the stream keeps changing, e.g.  social networks where users keep making new friends and disconnecting from old friends. In the latter case our main algorithm is necessary since it can quickly prepare to answer the densest subgraph query after every update. 

Another related result in the streaming setting is by Ahn et al. \cite{AhnGM12PODS} which approximates the fraction of some dense subgraphs such as a small clique in dynamic streams. This algorithm does not solve the densest subgraph problem but might be useful for similar applications. 

Not much was known about time-efficient algorithm for this problem even when space efficiency is not a concern. 
%
One possibility is to adapt dynamic algorithms for the related problem called {\em dynamic arboricity}. The arboricity of a graph $G$ is $\alpha(G)=\max_{U\subseteq V(G)} |E(U)|/(|U|-1)$ where $E(U)$ is the set  of edges of $G$ that belong to the subgraph induced by $U$. Observe that $\rho^*(G) \leq \alpha(G)\leq 2\rho^*(G)$. Thus, a $\gamma$-approximation algorithm for the arboricity problem will be a $(2\gamma)$-approximation algorithm for densest subgraph. In particular, we can use the $4$-approximation algorithm of Brodal and Fagerberg \cite{BrodalF99} to maintain an $8$-approximate solution to the densest subgraph problem in $\tilde O(1)$ amortized update time. (With a little more thought, one can in fact improve the approximation ratio to $6$.)

In a work that appeared at about the same time as the preliminary version of this paper, Epasto~et~al. \cite{EpastoLS15} presented a $(2+\epsilon)$-approximation algorithm for densest subgraph which can handle arbitrary edge insertions and random edge deletions. After the preliminary version of our paper appeared, Esfandiari~et~al.~\cite{recent} and McGregor~et~al.~\cite{McGregorTVV15} presented semi-streaming algorithms for densest subgraph that give $(1+\epsilon)$-approximation and require $\O(n)$ space. The same result was obtained independently by Mitzenmacher et al. \cite{mitzenmacher2015scalable}. These improve the approximation ratio of our $(2+\epsilon)$-approximation semi-streaming algorithm. Like our $(2+\epsilon)$-approximation algorithm, their algorithms have an update-time of $\O(1)$, but the query-time can be as large as $\tilde \Omega(n)$.

\subsection{Overview of our techniques}  An intuitive way to combine techniques from data streams and dynamic algorithms for any problem is to run the dynamic algorithm using the sketch produced by the streaming algorithm as an input. This idea does not work straightforwardly. The first obvious issue is that the streaming algorithm might take excessively long time to maintain its sketch and the dynamic algorithm might require an excessively large additional space. A more subtle issue is that  the sketch might need to be processed in a specific way to recover a solution, and the dynamic algorithm might not be able to facilitate this. As an extreme example, imagine that the sketch for our problem is not even a graph; in this case, we cannot even feed this sketch to a dynamic algorithm as an input. 

The key idea that allows us to get around this difficulty is to develop streaming and dynamic algorithms based on the same structure called {\em ($\alpha$, $d$, $L$)-decomposition}. This structure is an extension of a concept called {\em $d$-core}, which was studied in graph theory since at least the 60s (e.g., \cite{Erdos46,Matula68,Szekeres68}) and has played an important role in the studies of the densest subgraph problem (e.g., \cite{BahmaniKV12,SariyuceGJWC13}).
The $d$-core of a graph is its (unique) largest induced subgraph with every node having degree at least $d$. It can be computed by repeatedly removing nodes of degree less than $d$ from the graph, and can be used to $2$-approximate the densest subgraph.  
Our  ($\alpha,  d, L$)-decomposition with parameter $\alpha \geq 1$ is an approximate version of this process where we repeatedly remove nodes of degree ``approximately'' less than $d$: in this decomposition we must remove all nodes of degree less than $d$ and are allowed to remove {\em some} nodes of degree between $d$ and $\alpha d$. We will repeat this process for $L$ iterations. 
Note that  the ($\alpha,  d, L$)-decomposition of a graph is not unique. However, for $L=O(\log_{1+\epsilon} n)$, an ($\alpha,  d, L$)-decomposition can be use to $2\alpha(1+\epsilon)^2$-approximate the densest subgraph. We explain this concept in detail in Section \ref{sec:decomposition}. 

We show that this concept can be used to obtain an approximate solution to the densest subgraph problem and leads to both a streaming algorithm with a small sketch and a dynamic algorithm with small amortized update time. 
In particular, it is intuitive that to check if a node has degree approximately $d$, it suffices to sample every edge with probability roughly $1/d$. The value of $d$ that we are interested in is approximately $\rho^*(G)$, which can be shown to be roughly the same as the average degree of the graph. Using this fact, it follows almost immediately that we only have to sample $\tilde O(n)$ edges. 
Thus, to repeatedly remove nodes for $L$ iterations, we will need to sample $\tilde O(Ln)=\tilde O(n)$ edges (we need to sample a new set of edges in every iteration to avoid dependencies). 

We turn the ($\alpha,  d, L$)-decomposition concept into a dynamic algorithm by dynamically maintaining the sets of nodes removed in each of the $L$ iterations, called {\em levels}. Since the ($\alpha, d, L$)-decomposition gives us a choice  whether to keep or remove each node of degree between $d$ and $\alpha d$, we can save time needed to maintain this decomposition by moving nodes between levels {\em only when it is necessary}.
If we allow $\alpha$ to be large enough, nodes will not be moved often and we can obtain a small amortized update time; in particular, it can be shown that the amortized update time is $\tilde O(1)$ if $\alpha\geq 2+\epsilon$. In analyzing an amortized time, it is usually tricky to come up with the right {\em potential function} that can keep track of the cost of moving nodes between levels, which is not frequent but expensive. 
In case of our algorithm, we have to define  two potential functions for our amortized analysis, one on nodes and one on edges. 
(For intuition, we provide an analysis for the simpler case where we run this dynamic algorithm directly on the input graph in 
Section~\ref{sec:dynamic}.)

Our goal is to run the dynamic algorithm   on top of the sketch maintained by our streaming algorithm in order to maintain the ($\alpha, d, L$)-decomposition. To do this, there are a few issues we have to deal with that makes the analysis rather complicated: In the sketch we maintain $L$ sets of sampled edges, and for each of  the $L$ iterations we use different such sets to determine which nodes to remove. This causes the potential functions and its analysis to be even more complicated since whether a node should be moved from one level to another depends on its degree in one set, but the cost of moving such node depends on its degree in other sets as well. The analysis, however, goes through (intuitively because all sets are sampled from the same graph and so their degree distributions are close enough). See Section \ref{sec:combine} for further details.

 \subsection{Roadmap}
 
 The rest of the paper is organized as follows. 
 \begin{itemize}
 \item We define the preliminary concepts and notations in Section~\ref{sec:prelim}.
 \item In Section~\ref{sec:sketch}, we present an algorithm that returns a $(2+\epsilon)$-approximation to the value of the densest subgraph. The algorithm processes a stream of edge insertions/deletions using only $\O(n)$ bits of space, and at the end of the stream returns an estimate of $\rho^*(G)$ in $\O(n)$ time. The output of the algorithm is correct with high probability. 
 \item  In Section~\ref{sec:dynamic}, we present a deterministic algorithm that maintains a $(4+\epsilon)$-approximation to the value of the densest subgraph in $\O(m+n)$ space. It  has $\O(1)$ update and query times.
 \item We present our main result in Section~\ref{sec:combine}. Specifically, combining  the techniques from Sections~\ref{sec:sketch} and~\ref{sec:dynamic}, we design  an algorithm that maintains a $(4+\epsilon)$-approximation to the value of the densest subgraph with high probability, and requires only $\O(n)$ space and $\O(1)$  update  time. 
  \item In Section~\ref{sec:directed}, we extend the result from Section~\ref{sec:dynamic} to directed graphs. Specifically, in a directed graph, we present a deterministic algorithm that maintains an $(8+\epsilon)$-approximation  to the value of the densest subgraph using $\O(m+n)$ space. It has $\O(1)$ update and query times.
   \item In Sections~\ref{sec:sublinear} and \ref{sec:distributed} we present simple extensions of our result from Section~\ref{sec:sketch}, giving sublinear time and distributed-streaming algorithms for densest subgraph.
 \end{itemize}

\section{Notations and Preliminaries}
\label{sec:prelim}

We start by defining some notations that will be used throughout the rest of the paper. We denote the input graph by $G = (V, E)$. It has $n = |V|$ nodes and $m = |E|$ edges. Let $\N_v = \{u \in V : (u,v) \in E\}$ and $\dd_v = |\N_v|$ respectively denote the set of  neighbors and the degree of a node $v \in V$.  Consider any subset of nodes $S \subseteq V$. Let $E(S) = \{ (u,v) \in E : u, v \in S\}$ denote the set of edges with both endpoints in $S$, and let $G(S) = (V, E(S))$ denote the subgraph of $G$ induced by the  nodes in $S$. Further, given  any subset of edges  $E' \subseteq E$ and any node $u \in V$, define $\N_u(S, E') = \{ v \in \N_u \cap S : (u,v) \in E'\}$ and  $\dd_u(S, E') = |\N_u(S, E')|$. In other words,  $\N_u(S, E')$ is the  subset of nodes in $S$ that are neighbors of $u$ in the subgraph induced by the edges in $E'$, whereas $\dd_u(S, E')$ denotes the degree of $u$ among the nodes in $S$ in the same subgraph. For simplicity, we write $\N_u(S)$ and  $\dd_u(S)$ instead of $\N_u(S,E)$ and $\dd_u(S,E)$. 
If the set of nodes $S \subseteq V$ is nonempty, then its {\em density} and {\em average-degree}  are defined as $\rho(S) = |E(S)|/|S|$ and $\dl(S) = \sum_{v \in S} \dd_v(S)/|S|$ respectively.  Throughout the paper, the symbol $\O(.)$ will be used to hide  $\text{poly} (\log n, 1/\epsilon)$ factors in the running times of our algorithms, where $\epsilon > 0$ is some arbitrary small constant (the approximation guarantee will depend on $\epsilon$). Finally, for any positive integer $k$, we will use the symbol $[k]$ to denote the set $\{1, \ldots, k\}$.

This paper deals with the  ``Densest Subgraph Problem'', which is about finding a subset of nodes of maximum density. Specifically, we want to find a subset of nodes $S \subseteq V$ in the input graph $G = (V, E)$ that maximizes $\rho(S)$. Further, a subset  $S \subseteq V$ is called a {\em $\gamma$-approximate densest subgraph}, for $\gamma \geq 1$, iff $\gamma \cdot \rho(S) \geq \max_{S' \subseteq V} \rho(S')$. 
We will consider the densest subgraph problem in a dynamic setting. For a detailed description of our model, see Section~\ref{sec:intro:problem}. The main result of this paper is summarized below.

\begin{theorem}
\label{th:main:summary}
There is a dynamic data structure for the densest subgraph problem that requires $\O(n)$ bits of space, has an amortized update time of $\O(1)$, a query time of $O(1)$, and with high probability maintains a $(4+\epsilon)$-approximation to the value of the densest subgraph.
\end{theorem}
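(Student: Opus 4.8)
The plan is to build the result by combining two ingredients, both organized around the notion of an $(\alpha,d,L)$-decomposition promised in the overview. First I would establish the \emph{structural lemma}: for appropriate parameters (with $\alpha = 2+\epsilon$ and $L = \Theta(\log_{1+\epsilon} n)$), any $(\alpha,d,L)$-decomposition of $G$ lets one read off a $2\alpha(1+\epsilon)^2$-approximation of $\rho^*(G)$ from the level at which the removal process ``stabilizes'' (i.e., the smallest level whose node set is nonempty but has drifted slowly), and this is where the factor $4+\epsilon$ ultimately comes from once $\alpha \to 2$. Concretely, if at threshold $d$ the process still retains a nonempty subgraph after $L$ iterations, then that subgraph has min-degree $\gtrsim d/\alpha$ and hence density $\gtrsim d/(2\alpha)$, while if it empties out quickly then every subgraph has average degree $< d(1+\epsilon)$-ish, bounding $\rho^*(G)$ from above. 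Running this over $O(\log_{1+\epsilon} n)$ geometrically spaced thresholds $d \in \{(1+\epsilon)^i\}$ and taking the best certifies $\rho^*(G)$ up to the claimed constant.

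Second, I would import the two implementations of this decomposition. From Section~\ref{sec:dynamic} I take the deterministic fully-dynamic data structure that maintains an $(\alpha,d,L)$-decomposition for all the relevant thresholds simultaneously in $\O(1)$ amortized update time, using $\O(m+n)$ space; the key point reused from there is the amortized analysis via the two potential functions (one on nodes, one on edges) showing that the slack $\alpha \geq 2+\epsilon$ makes node relocations between levels rare enough. From Section~\ref{sec:sketch} I take the streaming sketch: for each of the $L$ iterations and each threshold, independently sample each edge with probability $\Theta(1/d)$ (a fresh sample per iteration to kill dependencies), which by a Chernoff bound preserves, w.h.p., whether a node's degree is above or below the threshold up to an $\epsilon$ factor, and which occupies only $\O(n)$ edges in total because $d = \Theta(\rho^*(G)) = \Theta(m/n)$ along the useful range of thresholds.

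The actual theorem is then obtained by \emph{running the dynamic data structure on top of the streaming sketch} rather than on the real graph: the decomposition is maintained with respect to the sampled edge sets, so each query answers in $O(1)$, each stream update touches $O(1)$ sampled structures and triggers $\O(1)$ amortized relocation work, and the total space is $\O(n)$. Correctness follows by pushing the structural lemma through the sampled graph: since all $L$ sample sets are drawn from the same $G$, their degree distributions are simultaneously $(1\pm\epsilon)$-faithful w.h.p., so the level at which stabilization is detected in the sketch certifies the same density bound for $G$, losing only another $(1+\epsilon)$ factor that is absorbed into the $O(\epsilon)$ slack. Finally I would invoke an oblivious-adversary union bound over the $T = \poly(n)$ updates (as in Theorem~\ref{main:th:dynamic:stream:main}) to get the high-probability guarantee throughout the stream.

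The main obstacle is the amortized update-time analysis in the combined setting. In the plain dynamic algorithm of Section~\ref{sec:dynamic} the cost of moving a node between levels is charged against its degree \emph{in the graph the decomposition is built on}, and that same graph governs whether the move is forced. In the combined algorithm the two roles decouple: whether node $v$ must change level in iteration $j$ depends on $\dd_v$ in the $j$-th sample set, but the work done when it moves depends on its degrees in the other sample sets (its neighbors there may need re-examination). The potential functions must therefore be defined across all $L$ sampled graphs at once, and the telescoping argument needs the fact that these degrees are mutually within $(1\pm\epsilon)$ of the scaled true degrees — so the concentration bounds feed directly into the amortization, not just into correctness. Getting this bookkeeping right, so that the released potential still dominates the relocation cost with the $\alpha \geq 2+\epsilon$ slack, is the delicate part; everything else is a relatively routine assembly of the structural lemma with the two black boxes.
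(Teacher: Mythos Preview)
Your high-level plan matches the paper's: prove the structural lemma for $(\alpha,d,L)$-decompositions, maintain the decomposition dynamically on a sampled subgraph, and carry the amortized analysis across the $L$ independent sample sets via a joint potential function. You also correctly single out the cross-sample potential bookkeeping as the delicate step, and that is indeed where the paper invests the most work.

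There is, however, a genuine gap in the sampling layer. You assert that you ``sample each edge with probability $\Theta(1/d)$'' and that ``each stream update touches $O(1)$ sampled structures'', but neither is automatic. The relevant threshold is $d \approx m/n$, and $m$ changes with every update; with a \emph{fixed} Bernoulli probability the sample size drifts away from $\O(n)$, while keeping a separate sample for every grid threshold $(1+\epsilon)^i$ makes the small-$d$ samples alone occupy $\Theta(m)$ space. The $\ell_0$-sampler approach of Section~\ref{sec:sketch} handles changing $m$ but costs $\tilde\Theta(n)$ per update, since every sampler must be touched. The paper resolves this with a different primitive (Section~\ref{sec:new:new:maintain}): edges are hashed into $s=\O(n)$ buckets with one $\ell_0$-sampler per bucket, so a stream update touches $O(1)$ samplers and the effective inclusion probability $s/m^{(t)}$ tracks $m$ automatically. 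The thresholds are then redefined as $d_k^{(t)} \propto m^{(t)}$ so that the sampled-degree cutoff becomes a time-independent constant $\Theta(c\log n)$; this in turn forces a sparse/dense case split, because when $m^{(t)}=\O(n)$ the concentration bounds fail and the full edge set has to be stored instead (Theorems~\ref{main:th:dynamic:sample:sparse}--\ref{main:th:dynamic:sample:dense}). The paper flags all of this explicitly as ``main technical challenges'' (Section~\ref{main:sec:challenge}); it is not routine assembly, and your proposal does not address it.
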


It follows that Theorem~\ref{th:main:summary} gives a single-pass semi-streaming algorithm over dynamic streams for the approximate densest subgraph problem. And, unlike most other semi-streaming algorithms, Theorem~\ref{th:main:summary} gives very fast update and query times.

\subsection{Three basic properties}

We now state three basic lemmas that will be used throughout the rest of the paper. The first lemma shows that the average degree of a set of nodes  is  twice its density.

\begin{lemma}
\label{lm:prelim:density}
For all $S \subseteq V$, we have $\dl(S) =  2 \cdot \rho(S)$. 
\end{lemma}

\begin{proof}
We have $\dl(S) = \sum_{v \in S} \dd_v(S)/|S| = 2 \cdot |E(S)|/|S| = 2 \cdot \rho(S)$. The second equality holds since every edge is incident upon two nodes. 
\end{proof}

The second lemma gives simple upper and lower bounds on the maximum density of a subgraph.
\begin{lemma}
\label{main:lm:stream:range}
Let $d^* = \max_{S \subseteq V} \rho(S)$ be the maximum density of any subgraph in $G$. Then $m/n \leq d^* < n$.
\end{lemma}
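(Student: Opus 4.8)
The claim is that $m/n \le d^* < n$, where $d^* = \max_{S\subseteq V}\rho(S)$ and $G=(V,E)$ has $n=|V|$ nodes and $m=|E|$ edges. The lower bound is immediate: taking $S = V$ gives $\rho(V) = |E(V)|/|V| = m/n$, and since $d^*$ is a maximum over all subsets, $d^* \ge \rho(V) = m/n$. This handles one side of the inequality with essentially no work.

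For the upper bound $d^* < n$, the plan is to bound $\rho(S)$ for an arbitrary nonempty $S \subseteq V$ by noting that a simple graph on $|S|$ vertices has at most $\binom{|S|}{2}$ edges. Hence $\rho(S) = |E(S)|/|S| \le \binom{|S|}{2}/|S| = (|S|-1)/2$. Since $|S| \le n$, this gives $\rho(S) \le (n-1)/2 < n$. Taking the maximum over all nonempty $S$ yields $d^* \le (n-1)/2 < n$, which is in fact stronger than what is stated. (One should note that if $G$ has no edges at all then $d^* = 0$, and $0 < n$ still holds since $n \ge 1$; and the empty set is excluded from the maximization by the convention that density is only defined for nonempty sets, so there is no division-by-zero concern.)

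There is no real obstacle here — both directions are one-line arguments. The only thing to be slightly careful about is the edge case where the graph is empty (so $m = 0$): then the lower bound reads $0 \le d^*$, which holds trivially since densities are nonnegative, and the upper bound is unaffected. I would write the two bounds as two short sentences and be done.

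\begin{proof}
For the lower bound, take $S = V$: then $\rho(V) = |E(V)|/|V| = m/n$, so $d^* = \max_{S \subseteq V} \rho(S) \ge \rho(V) = m/n$. For the upper bound, consider any nonempty $S \subseteq V$. Since $G$ is a simple graph, $|E(S)| \le \binom{|S|}{2} = |S|(|S|-1)/2$, and therefore $\rho(S) = |E(S)|/|S| \le (|S|-1)/2 \le (n-1)/2 < n$. Taking the maximum over all nonempty $S \subseteq V$ gives $d^* < n$.
\end{proof}
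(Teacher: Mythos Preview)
Your proof is correct and follows essentially the same approach as the paper: the lower bound comes from taking $S=V$, and the upper bound comes from bounding $|E(S)|$ for an arbitrary $S$. Your edge bound $|E(S)| \le \binom{|S|}{2}$ is actually slightly tighter than the paper's (which bounds $|E(S')| < n|S'|$ via the maximum-degree argument), yielding $d^* \le (n-1)/2$ rather than just $d^* < n$, but the argument is the same in spirit.
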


\begin{proof}
Clearly, we have $d^* \geq \rho(V) = |E|/|V| = m/n$. On the other hand, consider any subset of nodes $S' \subseteq V$. We have $\rho(S') = |E(S')|/|S'| < n |S'| /|S'| = n$. The inequality holds since the maximum degree of a node is $(n-1)$, and hence the subgraph induced by the nodes in $S'$ can have at most $n |S'|$ edges. Thus, we get: $d^* = \max_{S' \subseteq V} \rho(S') < n$.  
\end{proof}

The final lemma will also be very helpful in analyzing our algorithm in later sections.

\begin{lemma}
\label{lm:prelim:structure}
\label{main:lm:prelim:structure}
Let $S^* \subseteq V$ be a subset of nodes with maximum density, i.e., $\rho(S^*) \geq \rho(S)$ for all $S \subseteq V$. Then  $\dd_v({S^*}) \geq \rho(S^*)$ for all $v \in S^*$. Thus, the  degree of each node in $G(S^*)$ is at least  the density of $S^*$.
\end{lemma}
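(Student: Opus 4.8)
The plan is to argue by contradiction: suppose there is a node $v \in S^*$ with $\dd_v(S^*) < \rho(S^*)$, and show that deleting $v$ from $S^*$ strictly increases the density, contradicting the maximality of $S^*$. Concretely, set $S' = S^* \setminus \{v\}$. Removing $v$ destroys exactly $\dd_v(S^*)$ edges of $E(S^*)$ (the edges of the induced subgraph incident to $v$), so $|E(S')| = |E(S^*)| - \dd_v(S^*)$, while $|S'| = |S^*| - 1$. Hence
\[
\rho(S') = \frac{|E(S^*)| - \dd_v(S^*)}{|S^*| - 1}.
\]
The goal is to show $\rho(S') > \rho(S^*)$ under the assumption $\dd_v(S^*) < \rho(S^*) = |E(S^*)|/|S^*|$.

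The key computation is elementary: $\rho(S') \geq \rho(S^*)$ is equivalent (cross-multiplying by the positive quantities $|S^*|-1$ and $|S^*|$) to $|S^*|\,(|E(S^*)| - \dd_v(S^*)) \geq (|S^*|-1)\,|E(S^*)|$, i.e.\ to $|E(S^*)| \geq |S^*| \cdot \dd_v(S^*)$, i.e.\ to $\dd_v(S^*) \leq \rho(S^*)$. So the assumed strict inequality $\dd_v(S^*) < \rho(S^*)$ yields $\rho(S') > \rho(S^*)$, contradicting $\rho(S^*) \geq \rho(S)$ for all $S \subseteq V$. The ``Thus'' sentence then follows immediately, since $\dd_v(S^*) \geq \rho(S^*)$ for every $v \in S^*$ is exactly the statement that every node of the induced subgraph $G(S^*)$ has degree at least $\rho(S^*)$.

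One small edge case to dispatch: this argument needs $|S^*| \geq 2$ so that $S'$ is a legitimate nonempty set and we do not divide by zero. If $|S^*| = 1$ then $E(S^*) = \emptyset$, so $\rho(S^*) = 0$, and the claimed inequality $\dd_v(S^*) \geq 0$ holds trivially; also $\rho(S^*)=0$ cannot be the unique maximum unless the graph is edgeless, but in any case the inequality is vacuously fine. I expect no real obstacle here — the only thing to be slightly careful about is the direction of the inequality when clearing denominators (all denominators are positive, so directions are preserved) and the degenerate singleton case.
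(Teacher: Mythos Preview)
Your proof is correct and follows essentially the same approach as the paper: both argue by contradiction, removing a hypothetical low-degree vertex $v$ from $S^*$ and showing the density strictly increases. The only cosmetic difference is that the paper carries out the calculation in terms of the average degree $\delta(\cdot)$ (using $\delta = 2\rho$) rather than cross-multiplying densities directly; you also explicitly handle the $|S^*|=1$ edge case, which the paper omits.
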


\begin{proof}
 Suppose that there is a node $v \in S^*$ with $\dd_{S^*}(v) <  \rho(S^*)$. Define the set $S' \leftarrow S^* \setminus \{v\}$. We derive the following bound on the average degree in $S'$.
\begin{eqnarray*}
\delta(S') & = & \frac{\sum_{u \in S'} \dd^{\o}(u)}{|S'|} \\
& = & \frac{\sum_{u \in S^*} \dd_{S^*}(u) -  2 \cdot \dd_{S^*}(v)}{|S^*| -1} \\
& = & \frac{\delta(S^*) \cdot |S^*| - 2 \cdot \dd_{S^*}(v)}{|S^*| - 1} \\
& > &  \frac{\delta(S^*) \cdot |S^*| - \delta(S^*)}{|S^*| -1} \qquad \qquad \qquad (\text{since by assumption } \dd_{S^*}(v) <  \rho(S^*) = \delta(S^*)/2)\\
& = & \delta(S^*) 
\end{eqnarray*}
Since $\dl(S') > \dl(S^*)$, we infer that $\rho(S') > \rho(S^*)$. But this contradicts the assumption that the subset of nodes $S^*$ has maximum density. Thus, we conclude that $\dd_{S^*}(v) \geq \rho(S^*)$ for every node $v \in S^*$.
\end{proof}

\subsection{$(\alpha,  d, L)$-decomposition}
\label{sec:decomposition}

Our algorithms will use the concept of an  ``$(\alpha,  d, L)$-decomposition'', as defined below. To give some intuitions behind Definition~\ref{main:def:partition}, suppose that we start by setting $Z_1 \leftarrow V$. Next, suppose that we have already constructed the subsets $Z_1 \supseteq \cdots \supseteq Z_i$ for some positive integer $i < L$. While  constructing the next subset $Z_{i+1}$, we ensure that the following two conditions are satisfied.
\begin{itemize}
\item All the nodes $v \in Z_i$ with $\dd_v(Z_i) > \alpha d$ must be included in $Z_{i+1}$.
\item All the nodes $v \in Z_i$ with $\dd_v(Z_i) < d$ must be excluded from $Z_{i+1}$. 
\end{itemize}
Using this iterative procedure, we can build an $(\alpha, d, L)$-decomposition.

\begin{definition}
\label{main:def:partition}
\label{def:partition}
Fix any $\alpha \geq 1$,  $d \geq 0$, and any positive integer $L$. Consider a family of subsets 
  $Z_1  \supseteq \cdots \supseteq Z_L$. The tuple $(Z_1, \ldots, Z_L)$ is an $(\alpha,  d, L)$-decomposition 
of the input graph $G = (V,E)$ iff $Z_1 = V$ and, for every $i \in [L-1]$, we have 
$Z_{i+1} \supseteq \left\{v \in Z_i : \dd_v(Z_i) > \alpha d \right\}$ and   $Z_{i+1} \cap \left\{ v \in Z_i : \dd_v(Z_i) < d \right\} = \emptyset$.

Given an $(\alpha, d, L)$-decomposition $(Z_1, \ldots, Z_L)$, we define $V_i = Z_i \setminus Z_{i+1}$ for all $i \in [L-1]$, and $V_i = Z_i$ for $i = L$. We say that the nodes in $V_i$ constitute the $i^{th}$ level of this decomposition. We also denote the level of a node $v \in V$ by $\ell(v)$. Thus, we have $\ell(v) = i$ whenever $v \in V_i$. 
\end{definition}

The following theorem and its immediate corollary will be of crucial importance. Roughly speaking, they state that we can use the $(\alpha, d, L)$-decomposition to $2\alpha (1+\epsilon)^2$-approximate the densest subgraph by trying different values of $d$ in powers of $(1+\epsilon)$.

\begin{theorem}
\label{main:thm:test}
\label{thm:test}
Fix any  $\alpha \geq 1$, $d \geq 0$, $\epsilon \in (0,1)$, $L = 2 + \lceil \log_{(1+\epsilon)} n \rceil$. Let $d^*= \max_{S \subseteq V} \rho(S)$ be the maximum density of any subgraph in $G = (V,E)$, and let $(Z_1, \ldots, Z_L)$ be an $(\alpha,  d, L)$-decomposition of  $G = (V,E)$. Then we have:
\begin{enumerate}
\item  If $d >  2  (1+\eps)d^*$, then $Z_L = \emptyset$.
\item  Else if $d < d^*/\alpha$, then  $Z_L \ne \emptyset$ and there is an index $j \in \{1, \ldots, L-1\}$ such that $\rho(Z_j) \geq d/(2(1+\epsilon))$.
\end{enumerate}
\end{theorem}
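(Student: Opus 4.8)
The plan is to prove the two parts separately, each by a fairly short argument that exploits the defining inclusion/exclusion properties of the decomposition together with the elementary density bounds (Lemmas~\ref{lm:prelim:density}, \ref{main:lm:stream:range}, \ref{lm:prelim:structure}).

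For Part~1, I would argue by contradiction: suppose $d > 2(1+\epsilon)d^*$ but $Z_L \ne \emptyset$. The key observation is that if some $Z_{i+1}$ is nonempty, then every node of $Z_{i+1}$ has $\dd_v(Z_i) \ge d$ (by the exclusion clause, nodes with degree $< d$ are dropped), and since $Z_{i+1}\subseteq Z_i$, in fact every node of $Z_{i+1}$ has $\dd_v(Z_{i+1}) \ge \dd_{Z_i}(v) - (\text{edges lost to }Z_i\setminus Z_{i+1})$ — so this needs a touch of care. The cleaner route: for any nonempty $Z_{i+1}$, apply the exclusion property at level $i$ to get that all surviving nodes had $Z_i$-degree at least $d$; but a more robust bound is to look directly at the set $Z_L$ and the chain $Z_{L-1}\supseteq Z_L$. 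Actually the standard peeling argument is: if $Z_L\neq\emptyset$, then since $Z_L\subseteq Z_{L-1}$ and no node of $Z_L$ was excluded from $Z_{L-1}\to Z_L$, every $v\in Z_L$ satisfies $\dd_v(Z_{L-1})\ge d$. Iterating, every $v \in Z_L$ has $\dd_v(Z_i) \ge d$ for all $i$, but that alone does not bound $\rho(Z_L)$. The right statement is instead: pick the \emph{smallest} $i$ with $Z_{i+1}=\emptyset$ (this exists since $Z_1=V\ne\emptyset$ presumably, unless $V=\emptyset$; handle trivially), so $Z_i\neq\emptyset$ and $Z_{i+1}=\emptyset$, which by the inclusion clause forces \emph{every} $v\in Z_i$ to have $\dd_v(Z_i)\le \alpha d$. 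Hmm, but that gives an upper bound via $\alpha d$, not $2(1+\epsilon)d^*$. The intended argument must be: if $Z_L\neq\emptyset$ then in particular $Z_2\neq\emptyset$, and $Z_2$ contains only nodes $v$ with $\dd_v(V)=\dd_v(Z_1)\ge d$; more usefully, consider the last nonempty set $Z_j$ (so $j=L$ if $Z_L\ne\emptyset$). Every $v\in Z_j$ has $\dd_v(Z_{j-1})\ge d$, hence $|E(Z_{j-1})|\ge d|Z_j|/2$... this is getting delicate. I expect the actual proof tracks the quantity $|E(Z_i)|$ or uses Lemma~\ref{main:lm:stream:range} applied to the induced subgraph $G(Z_i)$: if $Z_i\neq\emptyset$ then $\rho(G(Z_i))\le d^*$, while the exclusion rule forces the peeled-off set to shrink $|Z_i|$ geometrically once $d>2(1+\epsilon)d^*$, because the number of nodes with $Z_i$-degree $\ge d$ is at most $2|E(Z_i)|/d \le 2d^*|Z_i|/d < |Z_i|/(1+\epsilon)$. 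So $|Z_{i+1}| < |Z_i|/(1+\epsilon)$, and after $L-1 = 1+\lceil\log_{1+\epsilon}n\rceil$ steps we get $|Z_L| < n/(1+\epsilon)^{L-1} \le n/((1+\epsilon)\cdot n) < 1$, forcing $Z_L=\emptyset$. That is the argument for Part~1.

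For Part~2, suppose $d < d^*/\alpha$. Let $S^*$ be a densest subgraph, $\rho(S^*)=d^*$. By Lemma~\ref{lm:prelim:structure}, every $v\in S^*$ has $\dd_v(S^*)\ge d^* > \alpha d$. I claim by induction that $S^*\subseteq Z_i$ for all $i$: it holds for $i=1$ since $Z_1=V$; and if $S^*\subseteq Z_i$, then for $v\in S^*$ we have $\dd_v(Z_i)\ge \dd_v(S^*) > \alpha d$, so the inclusion clause forces $v\in Z_{i+1}$. Hence $S^*\subseteq Z_L$, so $Z_L\neq\emptyset$. For the existence of the index $j$ with $\rho(Z_j)\ge d/(2(1+\epsilon))$: I would argue by contradiction. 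If $\rho(Z_j) < d/(2(1+\epsilon))$ for every $j\in\{1,\dots,L-1\}$, then by Lemma~\ref{lm:prelim:density} the average degree $\delta(Z_j) < d/(1+\epsilon)$, so $|E(Z_j)| < d|Z_j|/(2(1+\epsilon))$. The number of nodes of $Z_j$ with $Z_j$-degree $\ge d$ is at most $2|E(Z_j)|/d < |Z_j|/(1+\epsilon)$. Since $Z_{j+1}$ only retains nodes that survived the exclusion rule — i.e.\ $Z_{j+1}\subseteq\{v\in Z_j:\dd_v(Z_j)\ge d\}$ — we get $|Z_{j+1}| < |Z_j|/(1+\epsilon)$. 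Iterating over $j=1,\dots,L-1$ gives $|Z_L| < n/(1+\epsilon)^{L-1} < 1$, so $Z_L=\emptyset$, contradicting $Z_L\neq\emptyset$. Therefore some $j$ has $\rho(Z_j)\ge d/(2(1+\epsilon))$.

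The main obstacle, I expect, is getting the exponents and the value of $L$ exactly right in the geometric-shrinkage steps — in particular being careful that the set with small $Z_j$-degree is precisely the set \emph{excluded} from $Z_{j+1}$ (so $Z_{j+1}$ is contained in the complementary "high-degree" set), and tracking whether we need $L-1$ or $L-2$ halvings; the additive "$2 + \lceil\log_{1+\epsilon} n\rceil$" is chosen so that $(1+\epsilon)^{L-1} > n$ with a bit of room to spare. A secondary subtlety is the degenerate case where $Z_j$ becomes empty partway through (then its density is undefined/zero), which only helps and should be dispatched with a one-line remark. Everything else is routine counting with Lemma~\ref{lm:prelim:density} and Lemma~\ref{lm:prelim:structure}.
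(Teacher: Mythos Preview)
Your proposal is correct and follows essentially the same approach as the paper: for Part~1, use $\rho(Z_i)\le d^*$ and a Markov-type count to show $|Z_{i+1}|<|Z_i|/(1+\epsilon)$, then iterate $L-1$ times; for Part~2, use Lemma~\ref{lm:prelim:structure} to inductively trap $S^*$ inside every $Z_i$, and then reuse the geometric-shrinkage argument by contradiction to find the index $j$. One small cleanup: the bound $\rho(Z_i)\le d^*$ is immediate from the definition of $d^*$, not from Lemma~\ref{main:lm:stream:range}, so you can drop that reference.
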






\begin{proof} \ 
\begin{enumerate}
\item Suppose that $d > 2  (1+\eps)d^*$. Consider any level $i \in [L-1]$, and note that $\delta(Z_i) = 2 \cdot \rho(Z_i) \leq 2 \cdot \max_{S \subseteq V} \rho(S) = 2 d^* < d/(1+\eps)$.
It follows  that the number of nodes $v$ in $G(Z_i)$ with degree $\dd_v(Z_i) \geq d$ is less than $|Z_i|/(1+\epsilon)$,
as otherwise $\delta(Z_i) \ge d/(1+\eps)$. 
Let us define the set $C_i = \{ v \in Z_i : \dd_v(Z_i) < d\}$. We have $|Z_i \setminus C_i| \leq |Z_i|/(1+\eps)$. Now, from Definition~\ref{main:def:partition} we have $Z_{i+1} \cap C_i = \emptyset$, which, in turn, implies that $|Z_{i+1}| \leq |Z_i \setminus C_i| \leq |Z_i|/(1+\eps)$. Thus, for all $i \in [L-1]$, we have $|Z_{i+1}| \leq |Z_i|/(1+\eps)$. Multiplying all these inequalities, for $i = 1$ to $L-1$, we conclude that $|Z_L| \leq |Z_1|/(1+\eps)^{L-1}$. Since $|Z_1| = |V| = n$ and $L = 2 + \lceil \log_{(1+\eps)} n \rceil$, we get $|Z_L| \leq n/(1+\eps)^{(1+ \log_{(1+\eps)} n)} < 1$. This can happen only if $Z_L = \emptyset$.

\item Suppose that $d < d^*/\alpha$, and let $S^* \subseteq V$ be a subset of nodes with highest density, i.e., $\rho(S^*) = d^*$. We will  show that $S^* \subseteq Z_i$ for all $i \in \{1, \ldots, L\}$. This will imply that $Z_L \ne \emptyset$.
Clearly, we have $S^* \subseteq V = Z_1$. By induction hypothesis, assume that $S^* \subseteq Z_i$ for some $i \in [L-1]$. We show that $S^* \subseteq Z_{i+1}$.
By Lemma~\ref{main:lm:prelim:structure}, for every node $v \in S^*$, we have $\dd_v(Z_i) \geq \dd_v(S^*) \geq \rho(S^*) = d^* > \alpha d$. Hence, from Definition~\ref{main:def:partition}, we get $v \in Z_{i+1}$ for all $v \in S^*$. This implies that $S^* \subseteq Z_{i+1}$. 

Next, we will show that if $d < d^*/\alpha$, then there is an index $j \in \{1, \ldots, L-1\}$ such that $\rho(Z_j) \geq d/(2(1+\epsilon))$. For the sake of contradiction, suppose that this is not the case. Then we have $d < d^*/\alpha$ and $\delta(Z_i) = 2 \cdot \rho(Z_i) < d/(1+\epsilon)$ for every $i \in \{1, \ldots, L-1\}$. Then, applying an argument similar to case (1), we conclude that $|Z_{i+1}| \leq |Z_i|/(1+\epsilon)$ for every $i \in \{1, \ldots, L-1\}$, which implies that $Z_L = \emptyset$. Thus, we arrive at a contradiction.
\end{enumerate}
\end{proof}

\begin{corollary}
\label{main:cor:test:1}
\label{cor:test:1}
Fix  $\alpha,  \epsilon, L, d^*$ as in Theorem~\ref{main:thm:test}. Let $\pi, \sigma > 0$ be any two numbers satisfying $\alpha \cdot \pi < d^* < \sigma/(2(1+\epsilon))$. Fix any integer $K \geq 2 + \lceil \log_{(1+\epsilon)} \left(\sigma / \pi\right) \rceil$. Discretize  the range $[\pi, \sigma]$ into powers of $(1+\epsilon)$, by defining $d_{k} = (1+\epsilon)^{k-1} \cdot \pi$ for every  $k \in [K]$. 
Next, for every $k \in [K]$, construct an $(\alpha,  d_k, L)$-decomposition $(Z_1(k), \ldots, Z_L(k))$ of $G = (V,E)$.  Let  $k'  =  \max\{k \in [K] : Z_L(k) \neq \emptyset\}$. Then we have the following guarantees:
\begin{enumerate}
\item $d^*/(\alpha (1+\epsilon)) \leq d_{k'} \leq 2(1+\epsilon)^2 \cdot d^*$.
\item There exists an index $j' \in \{1, \ldots, L-1\}$ such that $\rho(Z_{j'}(k')) \geq d_{k'}/(2(1+\epsilon))$.
\end{enumerate}
\end{corollary}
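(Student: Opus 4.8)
The plan is to derive Corollary~\ref{main:cor:test:1} directly from Theorem~\ref{main:thm:test} by applying it with $d = d_k$ for each $k \in [K]$. First I would observe that the hypothesis $\alpha \pi < d^* < \sigma/(2(1+\epsilon))$ together with $K \geq 2 + \lceil \log_{(1+\epsilon)}(\sigma/\pi)\rceil$ ensures the discretized grid $d_1 = \pi < d_2 < \cdots < d_K$ straddles $d^*$ in the right sense: we have $d_1 = \pi < d^*/\alpha$ and $d_K = (1+\epsilon)^{K-1}\pi \geq (\sigma/\pi)\cdot\pi = \sigma > 2(1+\epsilon)d^*$. So there is at least one $k$ with $d_k < d^*/\alpha$ (namely $k=1$) and at least one $k$ with $d_k > 2(1+\epsilon)d^*$ (namely $k=K$), which means $k'$ is well-defined (the set is nonempty since $Z_L(1) \neq \emptyset$ by case (2) of Theorem~\ref{main:thm:test}, and it is bounded above by $K$ since $Z_L(K) = \emptyset$ by case (1)).

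Next I would prove the two claimed guarantees. For guarantee (1), the lower bound $d_{k'} \geq d^*/(\alpha(1+\epsilon))$: if instead $d_{k'} < d^*/(\alpha(1+\epsilon))$, then $d_{k'+1} = (1+\epsilon)d_{k'} < d^*/\alpha$, so by case (2) of Theorem~\ref{main:thm:test} applied to the $(\alpha, d_{k'+1}, L)$-decomposition we get $Z_L(k'+1) \neq \emptyset$, contradicting maximality of $k'$ (note $k'+1 \leq K$ since $d_{k'} < d^*/(\alpha(1+\epsilon))$ forces $k' < K$). For the upper bound $d_{k'} \leq 2(1+\epsilon)^2 d^*$: since $Z_L(k') \neq \emptyset$, case (1) of Theorem~\ref{main:thm:test} (contrapositive) gives $d_{k'} \leq 2(1+\epsilon)d^* \leq 2(1+\epsilon)^2 d^*$. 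For guarantee (2): combining the lower bound just established, $d_{k'} \geq d^*/(\alpha(1+\epsilon)) > \alpha d_{k'}\cdot$? — more cleanly, I would just check that $d_{k'} < d^*/\alpha$ still holds, since $d_{k'} \leq 2(1+\epsilon)d^*$ is not enough; instead note $d_{k'} \leq 2(1+\epsilon)d^*$ does not immediately give $d_{k'} < d^*/\alpha$. The right move is: by maximality, $d_{k'+1} > 2(1+\epsilon)d^*$ is false in general, so I should instead argue that if $d_{k'} \geq d^*/\alpha$ were to fail the hypothesis of case (2), we'd still be fine — actually the cleanest route is to note that $Z_L(k') \neq \emptyset$ rules out $d_{k'} > 2(1+\epsilon)d^*$ (case 1), and then apply case (2) of Theorem~\ref{main:thm:test} provided $d_{k'} < d^*/\alpha$; if $d^*/\alpha \leq d_{k'} \leq 2(1+\epsilon)d^*$, Theorem~\ref{main:thm:test} as stated says nothing, so I would handle this by invoking the theorem with a slightly smaller value. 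In fact the simplest fix: apply Theorem~\ref{main:thm:test} case (2) to $d_{k'-1}$ if needed, or observe that the proof of case (2) only uses $\delta(Z_i) < d/(1+\epsilon)$ failing somewhere, which is exactly what $Z_L(k') \neq \emptyset$ gives us. So guarantee (2) follows by re-running the argument of case (2)'s second half with $d = d_{k'}$: since $Z_L(k') \neq \emptyset$, it is not the case that $\delta(Z_i(k')) < d_{k'}/(1+\epsilon)$ for all $i \in [L-1]$, hence some $j'$ has $\rho(Z_{j'}(k')) \geq d_{k'}/(2(1+\epsilon))$.

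The main obstacle I anticipate is the bookkeeping around guarantee (2): Theorem~\ref{main:thm:test} only gives the existence of a good level $Z_{j'}$ under the hypothesis $d < d^*/\alpha$, but after choosing $k'$ we only know $d_{k'} \leq 2(1+\epsilon)d^*$, not $d_{k'} < d^*/\alpha$. The resolution is to recognize that the \emph{conclusion} $\rho(Z_{j'}(k')) \geq d_{k'}/(2(1+\epsilon))$ actually follows from the weaker, self-contained fact that $Z_L(k') \neq \emptyset$: the counting argument in the proof of Theorem~\ref{main:thm:test} shows that if $\delta(Z_i) < d/(1+\epsilon)$ for every $i \in [L-1]$ then $|Z_{i+1}| \leq |Z_i|/(1+\epsilon)$ for each such $i$, forcing $Z_L = \emptyset$; contrapositively, $Z_L(k') \neq \emptyset$ implies $\delta(Z_j(k')) \geq d_{k'}/(1+\epsilon)$ for some $j \in [L-1]$, i.e. $\rho(Z_j(k')) \geq d_{k'}/(2(1+\epsilon))$ by Lemma~\ref{lm:prelim:density}. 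So guarantee (2) needs only $Z_L(k') \neq \emptyset$, which holds by definition of $k'$, and no density-dependent hypothesis on $d_{k'}$ is required. I would present the argument in this order: well-definedness of $k'$, then guarantee (1) via the two-sided argument above, then guarantee (2) via the contrapositive counting argument.
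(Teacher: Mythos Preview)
Your proposal is correct and follows essentially the same approach as the paper: both parts rest on Theorem~\ref{main:thm:test}, with guarantee~(2) derived (after your correct self-correction) from the contrapositive of the counting argument in case~(1), using only the fact that $Z_L(k') \neq \emptyset$. Your treatment of guarantee~(1) is in fact a bit cleaner than the paper's --- you bound $d_{k'}$ directly via maximality of $k'$ and the contrapositive of case~(1), whereas the paper detours through defining the interval $Q = \{k : d^*/(\alpha(1+\epsilon)) \leq d_k \leq 2(1+\epsilon)^2 d^*\}$ and arguing $k' \in Q$ --- but the underlying logic is the same.
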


\begin{proof} \
\begin{enumerate}
\item Note that $\pi < d^*/\alpha$ and $\sigma > 2 d^*(1+\epsilon)$. Furthermore, we have $d_1 < \pi$ and $d_K \geq (1+\epsilon) \sigma$. This implies that $d_1 < d^*/\alpha$ and $d_K > 2d^*(1+\epsilon)^2$. Next, note that successive $d_k$ values differ from each other by a factor of $(1+\epsilon)$. 
Accordingly,  there exists some index $k \in [K]$ for which $d^*/(\alpha (1+\epsilon)) \leq d_k \leq 2(1+\epsilon)^2 \cdot d^*$. In other words, the set $Q = \{ k \in [K] : d^*/(\alpha (1+\epsilon)) \leq d_k \leq 2(1+\epsilon)^2 \cdot d^*\}$ is nonempty. Let $k_1 = \min_{k \in Q} \{ k \}$ and $k_2 = \max_{k \in Q} \{k\}$ respectively denote the minimum and maximum indices in the set $Q$. Observe that the set $Q$ is ``contiguous'', i.e., $Q = \{k_1, k_1+1, \ldots, k_2\}$.  Since the $d_k$ values are discretized in powers of $(1+\epsilon)$, we have $k_1 < d^*/\alpha$ and $k_2 > 2d^*(1+\epsilon)$. Hence, by Theorem~\ref{main:thm:test}, we have $Z_L(k_1) \neq \emptyset$, and $Z_L(k) = \emptyset$ for all $k \geq k_2$. It follows that the index $k'$ must satisfy the inequality $k_1 \leq k' \leq k_2$, which means that $k' \in Q$. Thus, we have $d^*/(\alpha (1+\epsilon) \leq k' \leq 2(1+\epsilon)^2 d^*$. 
\item Suppose that the claim is false. Then we have $Z_L(k') \neq \emptyset$ and $\delta(Z_i(k')) = 2 \cdot \rho(Z_i(k')) < d_{k'}/(1+\epsilon)$ for every $i \in \{1, \ldots, L-1\}$. Then, applying an argument similar to the proof of case (1) in Theorem~\ref{main:thm:test}, we conclude that $|Z_{i+1}(k')| \leq |Z_i(k')|/(1+\epsilon)$ for every $i \in \{1, \ldots, L-1\}$, which implies that $Z_L(k') = \emptyset$. Thus, we arrive at a contradiction.
\end{enumerate}
\end{proof}

\noindent  We will use the above corollary as follows. Lemma~\ref{main:lm:stream:range} states that $m/n \leq d^* < n$. Thus, in Corollary~\ref{cor:test:1}, we can choose the values of $\pi, \sigma$ and $K$ in such a way which ensures that $K = \tilde  \Theta(1)$. Hence, to maintain a $2\alpha(1+\epsilon)^3 = (2\alpha +\Theta(\epsilon))$-approximation of the maximum density, it suffices to maintain $K = \tilde \Theta(1)$ many $(\alpha, d, L)$-decompositions and to keep track of the maximum $d$ for which the topmost level (i.e., the node set $V_{L}$) of the decomposition is nonempty. This gives a query time of $O(1)$. In addition, if we want to answer a more general query which asks us to output a subgraph of approximate maximum density, then we  simply keep track of the densities of all the node-induced subgraphs of the form $Z_i(k)$, where $i \in [L-1]$, $k \in [K]$, and output the one among them with maximum density. Since there are only $K = \tilde \Theta(1)$ many decompositions to consider, and since each such decomposition has $L  = \tilde \Theta(1)$ levels, this can be done by incurring an additional cost of no more than $\Theta (K L) = \tilde \Theta(1)$  in the update time. It turns our that this simple extension applies to all the dynamic and streaming algorithms presented in the paper. Accordingly,  for simplicity of exposition, from now on we only focus on the simpler query which asks for an estimate of  the {\em value} of the densest subgraph (and not the subgraph itself).

\subsection{Two results on $\ell_0$-sampling and uniform hashing}
\label{sec:prelim:sample:hash}

We now state a well known result on $\ell_0$-sampling in the streaming setting. All the streaming algorithms in this paper will use this result.

\begin{lemma}[$\ell_0$-sampler~\cite{JowhariST11}]
\label{main:th:l0:sample}
We can process a dynamic stream of $O(\text{poly } n)$ updates in the graph $G = (V,E)$ in $\O(1)$ space, and with high probability, at each step we can maintain a simple random sample from the set $E$. The algorithm takes $\O(1)$ time to handle each update in the stream.
\end{lemma}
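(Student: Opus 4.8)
The plan is to reduce the statement to the standard $\ell_0$-sampler of Jowhari, Săglam and Tardos~\cite{JowhariST11} and to sketch that construction. Encode the current graph as a vector $x \in \mathbb{Z}^{\binom{V}{2}}$ indexed by the potential edges, with $x_e$ equal to the net number of times $e$ has been inserted; on an insertion/deletion stream each $x_e \in \{0,1\}$, so $\operatorname{supp}(x) = E$. Every update changes a single coordinate of $x$ by $\pm 1$, so it suffices to maintain a \emph{linear} sketch of $x$ using $\O(1)$ bits from which, with high probability, one can recover a coordinate drawn uniformly at random from $\operatorname{supp}(x)$; linearity makes an update trivial to apply. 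Write $N = \binom{n}{2}$ and note $\|x\|_0 = m \le N$.

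The first ingredient is geometric subsampling. Take hash functions $h_0,\dots,h_R$ with $R = \lceil \log_2 N\rceil = \O(1)$, drawn from a $\Theta(\log n)$-wise independent family (each stored in $\O(1)$ space) with $\Pr[h_j(e)=0] = 2^{-j}$, and let $A_j = \{e : h_j(e)=0\}$, so $A_0 \supseteq A_1 \supseteq \cdots$ and $\mathbb{E}\big[|A_j \cap \operatorname{supp}(x)|\big] = 2^{-j} m$. For each level $j$ we maintain an $s$-sparse recovery sketch of the restriction $x|_{A_j}$ with $s = \Theta(\log n)$: this can be done with $O(s)$ linear measurements, e.g.\ the power sums $\sum_{i\in A_j} i^t x_i$ for $t = 0,\dots,2s-1$ over a prime field together with a random fingerprint $\sum_{i\in A_j} x_i z^i \bmod p$, using $\O(1)$ bits. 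Standard arguments (Prony/Berlekamp--Massey style decoding, or hashing into buckets and $1$-sparse decoding) show that if $\|x|_{A_j}\|_0 \le s$ the sketch reconstructs $x|_{A_j}$ exactly, and if $\|x|_{A_j}\|_0 > s$ it reports ``fail'' with probability $\ge 1 - 1/\operatorname{poly}(n)$ (the fingerprint catches a decoded vector that does not match the true restriction). Over the $R+1 = \O(1)$ levels the whole structure uses $\O(1)$ bits.

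To answer a query, scan $j = 0, 1, 2, \dots$, let $j^\star$ be the first level at which the $s$-sparse recovery succeeds and the recovered set $S := \operatorname{supp}(x|_{A_{j^\star}})$ is nonempty, and output an element of $S$ chosen uniformly at random. Two things need to be checked. First, with high probability such a $j^\star$ exists: when $2^{-j}m$ lies in a window like $[c, s/2]$, a Chebyshev bound (pairwise independence suffices here) gives $1 \le |A_j \cap \operatorname{supp}(x)| \le s$ with constant probability, and since the $R+1$ levels cover all scales from $m=1$ to $m=N$ and sparse-recovery failures occur with probability $1/\operatorname{poly}(n)$, a union bound over the $\O(1)$ levels and the $\operatorname{poly}(n)$ updates pushes the overall failure probability to $1/\operatorname{poly}(n)$. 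Second, conditioned on $j^\star$ and on $|S| = k$, the set $S$ is a uniformly random $k$-subset of $E$ up to $1/\operatorname{poly}(n)$ error --- this is where the $\Theta(\log n)$-wise independence of $h_{j^\star}$ is used --- so outputting a uniform element of $S$ yields a sample that is uniform over $E$ up to $1/\operatorname{poly}(n)$, which is what ``simple random sample with high probability'' requires.

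For the time bound, an update to edge $e$ requires evaluating $h_0(e),\dots,h_R(e)$ ($O(1)$ each with a suitable family) and, for each level with $h_j(e)=0$, updating the $O(1)$ affected measurements by $\pm i^t$ and $\pm z^i$ --- $\O(1)$ arithmetic operations in total --- and the sample itself is recomputed in $\O(1)$ time by the scan above. I expect the main obstacle to be exactly the uniformity-and-failure analysis under \emph{bounded} independence: with fully independent hashes everything is immediate, but storing them would cost $\Omega(N)$ space, so one must verify that $\Theta(\log n)$-wise independent families (which fit in $\O(1)$ space) still give both the occupancy concentration at the good level and the ``uniform $k$-subset'' property to within $1/\operatorname{poly}(n)$. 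This is the technical core of~\cite{JowhariST11}; I would either invoke their analysis directly or re-derive it using explicit $\Theta(\log n)$-wise independent hash families, over which the union bounds above are routine.
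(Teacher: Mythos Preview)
The paper does not prove this lemma at all: it is stated in the preliminaries as a ``well known result'' and proved purely by citation to~\cite{JowhariST11}. Your proposal therefore goes beyond what the paper does, by actually sketching the $\ell_0$-sampler construction. The outline you give---encode $E$ as the support of a turnstile vector, subsample geometrically into $O(\log N)$ levels, run an $s$-sparse recovery sketch at each level with $s=\Theta(\log n)$, and return a uniform element of the first successfully decoded nonempty level---is the standard architecture and is correct in broad strokes.

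Two small technical points are worth tightening. First, you assert $A_0\supseteq A_1\supseteq\cdots$, but this nesting does not follow from taking independent hash functions $h_j$ with $\Pr[h_j(e)=0]=2^{-j}$; you need the levels to be coupled, e.g.\ by taking a single hash $h:E^*\to[N]$ and setting $A_j=\{e:h(e)\equiv 0\pmod{2^j}\}$. Second, the uniformity claim ``conditioned on $j^\star$ and $|S|=k$, $S$ is a uniformly random $k$-subset of $E$'' is stronger than what $\Theta(\log n)$-wise independence actually gives; the correct (and sufficient) statement is that, by symmetry of the hash family over the coordinates, each element of $\operatorname{supp}(x)$ has the same marginal probability of being output, up to the $1/\operatorname{poly}(n)$ failure events. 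You already flag exactly this step as the technical core and defer to~\cite{JowhariST11}, which is appropriate here and is precisely what the paper itself does.
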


The next lemma deals with uniform hashing in constant time and optimal space. We will use this lemma in Section~\ref{sec:new:new:maintain}. 

\begin{lemma}~\cite{Pagh}.
\label{lm:pagh:hashing}
Let $E^*={[n] \choose 2}$ be the set of all possible unordered pairs of nodes in $V$. 
Consider any two integers $w, q \geq 1$. We can construct a $w$-wise independent uniform hash function $h : E^* \rightarrow [q]$ using $O(w \text{ poly} (\log w, \log q, \log n))$ bits of space. Given any $e \in E^*$, the hash value $h(e)$ can be evaluated in $O(1)$ time.
\end{lemma}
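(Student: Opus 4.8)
The plan is to reduce the statement to the construction of a $w$-wise independent hash function on an integer universe of polynomial size and then invoke the standard constant-time highly-independent hashing machinery; essentially all the content is in the cited work \cite{Pagh}, and the only thing specific to our setting is the reduction. For the reduction, put $N = \binom{n}{2} \le n^2$ and fix the explicit lexicographic-rank bijection $\iota : E^* \to [N]$ sending the pair $\{i,j\}$ with $1 \le i < j \le n$ to $\iota(\{i,j\}) = \binom{j-1}{2} + i$; both $\iota$ and $\iota^{-1}$ are computable with $O(1)$ word operations on $O(\log n)$-bit words. Since $\log N = \Theta(\log n)$, it suffices to build $h' : [N] \to [q]$ that is $w$-wise independent, uses $O(w\,\poly(\log w,\log q,\log n))$ bits, and is evaluable in $O(1)$ time, and then output $h = h' \circ \iota$.

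For $h'$ one starts from the textbook polynomial family: pick a prime $p$ with $\max(N,q) \le p = n^{O(1)}$, draw a uniform polynomial $g$ of degree at most $w-1$ over $\mathbb{F}_p$, and set $h'(x) = (g(x) \bmod p) \bmod q$. This is $w$-wise independent (up to the usual $O(w/q)$ additive bias from the last reduction, which is either eliminated by taking $q$ a power of two with $p$ slightly larger, or simply absorbed into the failure budget since in our application we only need the guarantee at each of $\poly(n)$ steps against an oblivious adversary), and its description has $O(w\log p) = O(w\log n)$ bits. Its one defect is evaluation time: Horner's rule costs $\Theta(w)$ field operations. To bring this down to $O(1)$ one uses the expander-composition technique: write $h'$ as the composition of a constant number of hash functions, each read off a small \emph{fully random} table of size $w\cdot n^{o(1)}$, with arguments routed through a fixed constant-left-degree unbalanced bipartite expander; a counting argument over the expander shows that, except with probability negligible in $n$ over the choice of tables and graph, the composed function restricted to every $w$-subset of $[N]$ is distributed exactly as a $w$-wise independent (in fact fully random on that subset) function. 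Each evaluation then costs a constant number of table look-ups plus a constant-degree neighbourhood computation, i.e. $O(1)$ time, while the space is the size of the $O(1)$ small tables plus the cheaply-describable expander, which tunes to $O(w\,\poly(\log w,\log q,\log n))$ bits.

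The one genuinely nontrivial point — and the reason this is a cited result rather than a one-line argument — is the last step: obtaining $O(1)$ evaluation time \emph{simultaneously} with space nearly linear in $w$. The polynomial construction is elementary but inherently $\Theta(w)$-time, and a flat table of a $w$-wise independent function on $n^{O(1)}$ points is far too large, so neither extreme works; the crux is the expander analysis showing that a constant-depth composition of small random tables is genuinely $w$-wise independent on every $w$-subset, which is exactly what \cite{Pagh} supplies and which we take as a black box.
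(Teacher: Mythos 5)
The paper does not prove this lemma: it is stated verbatim as a black-box citation to \cite{Pagh}, so there is no internal argument to compare your proposal against. Your write-up correctly recognizes this and supplies the only step that is genuinely specific to the present setting---the explicit lexicographic rank $\iota:E^*\to[\binom{n}{2}]$, which reduces hashing unordered node pairs to hashing a polynomial-size integer universe in $O(1)$ word operations---and then, reasonably, treats the rest as the cited machinery. Your sketch of that machinery is informal but sound in outline, and you are right to flag the subtle distinction that constructions in the Pagh style yield a function that is fully random on every fixed $w$-subset only \emph{with high probability over the choice of the data structure}, rather than unconditional $w$-wise independence; this mismatch with the lemma's phrasing is immaterial in this paper, since the lemma is only ever invoked inside high-probability Chernoff arguments (Theorems~\ref{main:th:dynamic:sample:sparse} and~\ref{main:th:dynamic:sample:dense}) that already budget for a polynomially small per-step failure probability. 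One minor inaccuracy in your warm-up discussion: eliminating the $O(w/q)$ bias of the plain polynomial family by ``taking $q$ a power of two with $p$ slightly larger'' does not work as stated---one needs $q$ to divide the field size (e.g., work over a binary field and project to low-order bits) or use rejection---but since the polynomial family is not what \cite{Pagh} actually uses to reach $O(1)$ evaluation time, this slip has no bearing on the conclusion.
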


\subsection{Concentration bounds}
\label{sec:prelim:concentration}

We will use the following concentration bounds throughout the rest of this paper.

\begin{theorem}(Chernoff bound-I)
\label{th:chernoff:up}
Consider a collection of mutually independent random variables $\{X_1, \ldots, X_t\}$ such that $X_i \in [0,1]$ for all $i \in \{1, \ldots, t\}$. Let $X = \sum_{i=1}^t X_i$ be the sum of these random variables. Then we have $\Pr[X > (1+\epsilon) \mu] \leq e^{-\epsilon^2 \mu/3}$ whenever $E[X] \leq \mu$. \end{theorem}

\begin{theorem}(Chernoff bound-II)
\label{th:chernoff:down}
Consider a set of mutually independent random variables $\{X_1, \ldots, X_t\}$ such that $X_i \in [0,1]$ for all $i \in \{1, \ldots, t\}$. Let $X = \sum_{i=1}^t X_i$ be the sum of these random variables. Then we have $\Pr[X < (1-\epsilon) \mu] \leq e^{-\epsilon^2 \mu/2}$ whenever $E[X] \geq \mu$.
\end{theorem}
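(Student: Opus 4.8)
The plan is the standard exponential-moment argument (the Chernoff/Bernstein method) applied to $-X$. We may assume $\epsilon\in(0,1)$, which is the paper's standing convention: for $\epsilon\ge 1$ we have $(1-\epsilon)\mu\le 0\le X$, so the left-hand side is $0$, and for $\epsilon=0$ the left-hand side is at most $1=e^{0}$. We may likewise assume $\mu>0$, since if $\mu\le 0$ the bound is again trivial. Fix a parameter $s>0$, to be optimized later. Applying Markov's inequality to the nonnegative random variable $e^{-sX}$,
\[
\Pr\!\left[X<(1-\epsilon)\mu\right]=\Pr\!\left[e^{-sX}>e^{-s(1-\epsilon)\mu}\right]\le e^{s(1-\epsilon)\mu}\cdot E\!\left[e^{-sX}\right],
\]
and by mutual independence $E[e^{-sX}]=\prod_{i=1}^{t}E[e^{-sX_i}]$, so it suffices to bound each factor.

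The key elementary estimate is that for $x\in[0,1]$ and $s>0$ the convex function $x\mapsto e^{-sx}$ stays below the chord through its endpoint values, i.e.\ $e^{-sx}\le 1-x(1-e^{-s})$. Taking expectations and then using $1+y\le e^{y}$ gives
\[
E\!\left[e^{-sX_i}\right]\le 1-E[X_i]\,(1-e^{-s})\le \exp\!\big(-E[X_i]\,(1-e^{-s})\big).
\]
Multiplying over $i\in[t]$, and using that $1-e^{-s}>0$ together with $\sum_i E[X_i]=E[X]\ge\mu>0$, we obtain $E[e^{-sX}]\le \exp(-(1-e^{-s})\mu)$, hence
\[
\Pr\!\left[X<(1-\epsilon)\mu\right]\le \exp\!\Big(\mu\big(s(1-\epsilon)-1+e^{-s}\big)\Big).
\]

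It remains to pick $s$ and simplify the scalar exponent. Minimizing $\phi(s)=s(1-\epsilon)-1+e^{-s}$ over $s>0$ gives $\phi'(s)=(1-\epsilon)-e^{-s}=0$, i.e.\ $s=\ln\tfrac1{1-\epsilon}>0$, at which $\phi(s)=-(1-\epsilon)\ln(1-\epsilon)-\epsilon$. We are thus reduced to the purely one-variable claim $-(1-\epsilon)\ln(1-\epsilon)-\epsilon\le-\epsilon^2/2$ for $\epsilon\in[0,1)$, equivalently $g(\epsilon):=(1-\epsilon)\ln(1-\epsilon)+\epsilon-\epsilon^2/2\ge 0$. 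I would verify this by noting $g(0)=0$, $g'(\epsilon)=-\ln(1-\epsilon)-\epsilon$ so $g'(0)=0$, and $g''(\epsilon)=\epsilon/(1-\epsilon)\ge 0$ on $[0,1)$; hence $g'$ is nondecreasing and nonnegative, so $g$ is nondecreasing and nonnegative. (Equivalently, the identity $-(1-\epsilon)\ln(1-\epsilon)-\epsilon=-\sum_{k\ge 2}\epsilon^{k}/(k(k-1))\le-\epsilon^2/2$, valid for $\epsilon\in[0,1)$, settles it in one line.) Combining, $\Pr[X<(1-\epsilon)\mu]\le e^{\mu\phi(s)}\le e^{-\epsilon^2\mu/2}$, the desired bound.

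The argument is routine; the only delicate point is the final scalar inequality, where the crude estimate $\ln(1-\epsilon)\le-\epsilon$ is too weak (it only yields probability $\le 1$) and one genuinely needs the second-order refinement captured by $g''\ge 0$ (or by the nonnegative-series identity above). Nothing beyond one-variable calculus is required.
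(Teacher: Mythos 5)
Your proof is correct: the exponential-moment argument, the chord bound $e^{-sx}\le 1-x(1-e^{-s})$ for $x\in[0,1]$, the optimization $s=\ln\frac{1}{1-\epsilon}$, and the scalar inequality $(1-\epsilon)\ln(1-\epsilon)+\epsilon\ge\epsilon^2/2$ are all handled properly, as are the degenerate cases $\epsilon\ge 1$ and $\mu\le 0$ and the relaxations $E[X]\ge\mu$, $X_i\in[0,1]$. The paper itself states this bound in its preliminaries as a standard Chernoff-type concentration inequality without giving any proof, so there is no internal argument to compare against; what you have written is the standard derivation and it is complete.
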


\begin{definition}(Negative association)
\label{def:negative}
A set of random variables $\{X_1, \ldots, X_t\}$ are negatively associated iff for all disjoint subsets $I, J \subseteq \{1, \ldots, t\}$ and all non-decreasing functions $f$ and $g$, we have $E[f(X_i, i \in I) \cdot g(X_j, j \in J)] \leq E[f(X_i, i \in I)] \cdot E[g(X_j, j \in J)]$.
\end{definition}

\begin{theorem}(Chernoff bound with negative dependence)
\label{th:chernoff:negative}
The Chernoff bounds, as stated in Theorems~\ref{th:chernoff:up} and~\ref{th:chernoff:down}, hold even if the random variables $\{X_1, \ldots, X_t\}$ are negatively associated.
\end{theorem}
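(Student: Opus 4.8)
The Chernoff bounds stated above remain valid when the summands are only negatively associated rather than fully independent.

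\medskip

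The plan is to reduce the negatively-associated case to the independent case by exploiting the single property of the standard Chernoff proof that actually uses independence, namely the factorization of the moment generating function (MGF). Recall how the upper tail bound is proved: for any $s>0$ one writes $\Pr[X>(1+\epsilon)\mu]=\Pr[e^{sX}>e^{s(1+\epsilon)\mu}]\le e^{-s(1+\epsilon)\mu}\,E[e^{sX}]$ by Markov's inequality, and then independence gives $E[e^{sX}]=E[\prod_i e^{sX_i}]=\prod_i E[e^{sX_i}]$, after which one bounds each factor using $X_i\in[0,1]$ and optimizes over $s$. So the only thing I need is the inequality $E[\prod_i e^{sX_i}]\le \prod_i E[e^{sX_i}]$; everything downstream of that is identical to the proof already implicitly invoked for Theorems~\ref{th:chernoff:up} and~\ref{th:chernoff:down}.

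\medskip

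First I would establish that inequality from Definition~\ref{def:negative}. For $s>0$ the map $x\mapsto e^{sx}$ is non-decreasing, so the functions $f_i(x)=e^{sx}$ are all non-decreasing. Applying the negative-association inequality with $I=\{1\}$, $J=\{2,\dots,t\}$, $f=f_1$ and $g(x_2,\dots,x_t)=\prod_{j\ge 2}e^{sx_j}$ (a non-decreasing function of each coordinate, hence an admissible $g$), gives $E[\prod_{i=1}^t e^{sX_i}]\le E[e^{sX_1}]\cdot E[\prod_{j=2}^t e^{sX_j}]$. Iterating this $t-1$ times — at each step splitting off one more variable, which is legitimate because each remaining product is still a non-decreasing function of the remaining coordinates — yields $E[\prod_{i=1}^t e^{sX_i}]\le \prod_{i=1}^t E[e^{sX_i}]$. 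Plugging this into the Markov step above reproduces the bound $\Pr[X>(1+\epsilon)\mu]\le e^{-\epsilon^2\mu/3}$ verbatim, since the per-factor estimates and the optimization over $s$ are unchanged. For the lower tail (Theorem~\ref{th:chernoff:down}) I would run the symmetric argument with $s<0$: now $x\mapsto e^{sx}$ is non-increasing, so $-e^{sx}$ is non-decreasing, and applying negative association to the functions $-e^{sX_i}$ (absorbing the sign changes, which cancel in pairs across the product) again gives $E[\prod_i e^{sX_i}]\le\prod_i E[e^{sX_i}]$; the rest of the lower-tail Chernoff computation then goes through unchanged.

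\medskip

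The one point that needs a little care — and the only place I'd expect a subtlety — is checking that the $g$'s appearing at each stage of the iteration genuinely satisfy the monotonicity hypothesis of Definition~\ref{def:negative}: a product of the non-negative non-decreasing functions $e^{sx_j}$ is non-decreasing in each remaining coordinate, so this is fine, but it does rely on $e^{sx}\ge 0$ so that multiplying such factors preserves monotonicity, and in the lower-tail case one must track the signs so that $g$ (not $-g$) is the non-decreasing object. Once the MGF factorization inequality is in hand, there is nothing new to do: I would simply remark that the proofs of Theorems~\ref{th:chernoff:up} and~\ref{th:chernoff:down} used independence only through this inequality, so both bounds carry over to negatively associated variables, which is exactly the claim.
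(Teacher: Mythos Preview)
The paper does not prove this theorem at all; it is stated in the preliminaries as a standard fact and used as a black box. Your argument is the standard proof of this fact and is correct: the only use of independence in the Chernoff derivation is the MGF factorization, and negative association gives the required inequality $E\bigl[\prod_i e^{sX_i}\bigr]\le\prod_i E[e^{sX_i}]$ by iterated application of Definition~\ref{def:negative}.

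One phrasing point worth tightening: in the lower-tail case your description ``absorbing the sign changes, which cancel in pairs across the product'' is slightly misleading. The clean way to say it is that if $f$ and $g$ are each coordinatewise non-increasing, then $-f$ and $-g$ are coordinatewise non-decreasing, so Definition~\ref{def:negative} applied to $-f,-g$ gives $E[(-f)(-g)]\le E[-f]\,E[-g]$, i.e.\ $E[fg]\le E[f]\,E[g]$; the two signs cancel in a \emph{single} application of the definition, and you then iterate exactly as in the upper-tail case (the remaining product $\prod_{j\ge k}e^{sX_j}$ stays positive and coordinatewise non-increasing at every stage, so the same trick applies). With that clarification the argument is complete.
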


\section{A  Semi-Streaming Algorithm}
\label{sec:sketch}\label{main:sec:stream}

In this section, we present a single-pass semi-streaming  algorithm for the densest subgraph problem. The algorithm requires only $\O(n)$ bits of space, and at the end of the stream outputs a  $(2+\epsilon)$-approximation to the value of the densest subgraph with high probability. On the negative side,  its update time can be as large as $\Omega(n)$. Our result in this section is stated in the theorem below.

\begin{theorem}
\label{main:th:stream:main}
In a single pass, we can process a dynamic stream of  updates in the graph $G$ in $\tilde{O}(n)$ space. With high probability, we can  return a $(2+O(\epsilon))$-approximation of the maximum density $d^* = \max_{S \subseteq V} \rho(S)$  at the end of the stream.
\end{theorem}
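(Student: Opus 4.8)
The plan is to combine the structural result of Corollary~\ref{cor:test:1} with the $\ell_0$-sampling primitive of Lemma~\ref{main:th:l0:sample} to build, for each candidate density $d_k$, an $(\alpha, d_k, L)$-decomposition that can be computed from a small sample of edges. Concretely, I would fix $\alpha$ to be a small constant (morally $\alpha = 1+\epsilon$, since the $(2+O(\epsilon))$ target forces $2\alpha(1+\epsilon)^3 = 2+O(\epsilon)$, so $\alpha$ must be essentially $1$; the slack between $d$ and $\alpha d$ is then only a $(1+\epsilon)$ window, which is exactly what a sample of rate $\approx 1/d$ can resolve up to Chernoff error). Use Lemma~\ref{main:lm:stream:range} to set $\pi = m/n$ roughly and $\sigma \approx n$ so that $K = \tilde\Theta(1)$ values of $d_k = (1+\epsilon)^{k-1}\pi$ suffice, and run $K$ independent copies of the sampling scheme in parallel, for a total of $\tilde O(n)$ space.

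The core sampling step: for a given target density $d$, I would maintain, via $\ell_0$-samplers, a set $E'$ of $\Theta(n \log n / \epsilon^2)$ edges sampled (essentially) uniformly and independently from $E$ — note $d^* \geq m/n$ so when $d \approx d^*$ each node's sampled degree is $\Theta(\log n/\epsilon^2)$ times its true fractional degree, giving Chernoff concentration. Then I would run the peeling process on the \emph{sampled} graph: $Z_1 = V$, and $Z_{i+1}$ is obtained from $Z_i$ by deleting every node whose sampled degree $\dd_v(Z_i, E'_i)$ falls below an appropriately scaled threshold, using a \emph{fresh} independent sample $E'_i$ in each of the $L = 2+\lceil\log_{1+\epsilon} n\rceil$ rounds (hence the $L$ factor; total still $\tilde O(n)$). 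The key lemma to prove is that, with high probability, simultaneously for all rounds $i$ and all nodes $v \in Z_i$, the event $\dd_v(Z_i) > \alpha d$ implies $v$ is kept and $\dd_v(Z_i) < d$ implies $v$ is deleted — i.e., the peeled sets form a genuine $(\alpha, d, L)$-decomposition of $G$. This requires a union bound over $\tilde O(n)$ nodes and $L$ rounds, and crucially relies on the freshness of $E'_i$ so that $Z_i$ (determined by $E'_1, \ldots, E'_{i-1}$) is independent of $E'_i$ when we apply Chernoff to the degrees in round $i$; negative association (Theorem~\ref{th:chernoff:negative}) handles the "without replacement" flavor of $\ell_0$-sampling if needed. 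Having established this, Corollary~\ref{cor:test:1} directly yields that outputting $\max\{ d_k : Z_L(k) \neq \emptyset\}$ (up to the $2(1+\epsilon)$ factor, or reporting $\max_{i,k}\rho(Z_i(k))$ computed on the true graph) gives a $(2+O(\epsilon))$-approximation.

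The main obstacle I expect is the correctness of the peeling-on-a-sample argument, specifically the \emph{adaptivity/independence} issue: the set $Z_i$ that we peel in round $i$ is a random object depending on earlier samples and, through the decomposition, on the whole history, yet to apply a Chernoff bound to $\dd_v(Z_i, E'_i)$ we need $Z_i$ to be fixed relative to $E'_i$. Using a fresh sample per round is the fix, but one must be careful that the union bound is taken over a fixed family of "bad events" (one per (node, round, comparison)) rather than over the random sets, and that the scaled threshold is chosen so that both one-sided Chernoff bounds (Theorems~\ref{th:chernoff:up} and~\ref{th:chernoff:down}) give failure probability $n^{-\Omega(1)}$ with the sample size $\Theta(n\log n/\epsilon^2)$ — this is where $\alpha$ being bounded away from $1$ by a $(1+\epsilon)$ margin is essential, since the gap between the "keep if $>\alpha d$" and "delete if $<d$" thresholds must exceed the additive Chernoff deviation. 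A secondary, more routine obstacle is bookkeeping the $\ell_0$-sampler guarantees (they return a uniform sample, not an independent Bernoulli one) and arguing this discrepancy is harmless, either by taking $\Theta(n\log n)$ independent $\ell_0$-samplers to simulate with-replacement sampling, or invoking negative association; and finally verifying that $\pi, \sigma, K$ can indeed be chosen with only $\tilde O(n)$ total space and that the query can be answered in $\tilde O(n)$ time by re-running the peeling on the stored samples at the end of the stream.
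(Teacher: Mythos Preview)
Your proposal is correct and matches the paper's approach essentially point for point: the paper sets $\alpha = (1+\epsilon)/(1-\epsilon)$, maintains $\tilde\Theta(n)$ mutually independent $\ell_0$-samplers, partitions the samples into $L-1$ fresh groups to break the adaptivity issue exactly as you describe, proves via Chernoff that the peeling on each group yields a valid $(\alpha,d,L)$-decomposition (this is Lemma~\ref{main:lm:stream:1}), and then invokes Corollary~\ref{cor:test:1} over the discretized range $[\pi,\sigma]$ with $\pi = m/(2\alpha n)$, $\sigma = 2(1+\epsilon)n$. The only cosmetic difference is that the paper uses with-replacement sampling (independent $\ell_0$-samplers, working over a multigraph) rather than invoking negative association, and states the per-$d$ sample count as $cm(L-1)\log n/d$ (which is at most $\tilde O(n)$ since $d \geq \pi$) rather than the uniform $\Theta(n\log n/\epsilon^2)$ you quote.
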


We devote the rest of this section to the proof of Theorem~\ref{main:th:stream:main}. Throughout this section, we fix a small constant $\epsilon \in (0,1/2)$ and a sufficiently large constant $c > 1$. Moreover, we set $\alpha \leftarrow (1+\epsilon)/(1-\epsilon)$, $L \leftarrow 2 + \lceil \log_{(1+\epsilon)} n\rceil$.

First, we show that we can construct a $(\alpha, d, L)$-decomposition by sampling $\tilde O(n)$ edges.

\begin{lemma}
\label{main:lm:stream:1}
Fix an integer $d > 0$, and  let $S$ be a collection of $c m (L-1) \log n/d$ mutually independent  random samples (each consisting of one edge) from the edge set $E$ of the input graph $G = (V,E)$. With high probability we can construct from $S$ an $(\alpha, d, L)$-decomposition $(Z_1, \ldots, Z_L)$ of $G$, using only $\tilde{O}((n + m/d))$ bits of space.
\end{lemma}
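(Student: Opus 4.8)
The plan is to show that a sample of $\Theta(m(L-1)\log n / d)$ edges suffices to "simulate" one round of the peeling process that defines an $(\alpha,d,L)$-decomposition, and then take a union bound over all $L-1$ rounds. Let me recall the setup. We have $\alpha = (1+\epsilon)/(1-\epsilon)$, and we want to build $Z_1 \supseteq \cdots \supseteq Z_L$ with $Z_1 = V$ such that at each step every node of (true) degree $>\alpha d$ in $Z_i$ survives into $Z_{i+1}$ and every node of (true) degree $<d$ in $Z_i$ is removed. The idea is to partition the sample $S$ into $L-1$ equal-size batches $S_1, \ldots, S_{L-1}$, each of size $\Theta(m\log n/d)$, using a fresh batch $S_i$ to decide the transition from $Z_i$ to $Z_{i+1}$; using independent batches is what lets us avoid conditioning problems when $Z_i$ itself depends on earlier samples.

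First I would fix the rule: given $Z_i$ (already determined, independent of $S_i$), for each $v \in Z_i$ compute the sampled degree $\hat{d}_v := |\{e \in S_i : e = (u,v),\ u \in Z_i\}|$ (counting multiplicity), and put $v \in Z_{i+1}$ iff $\hat{d}_v \ge \tau$ for a threshold $\tau$ chosen to sit between the expected sampled degrees corresponding to true degree $d$ and true degree $\alpha d$. Concretely, a single edge of $S_i$ lands on a fixed incident pair inside $Z_i$ with probability (number of relevant edges)$/m$, so $\mathbb{E}[\hat d_v] = |S_i| \cdot d_v(Z_i)/m = \Theta(\log n) \cdot d_v(Z_i)/d$. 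Thus if $d_v(Z_i) \ge \alpha d$ then $\mathbb{E}[\hat d_v] \ge \alpha \cdot \Theta(\log n)$, and if $d_v(Z_i) \le d$ then $\mathbb{E}[\hat d_v] \le \Theta(\log n)$; choosing $\tau$ to be $\sqrt{\alpha}\cdot\Theta(\log n)$ (the geometric mean, roughly) and applying the Chernoff bounds of Theorems~\ref{th:chernoff:up} and~\ref{th:chernoff:down} — note the samples contributing to a single $\hat d_v$ are independent indicator-type variables — gives that, with the constant $c$ large enough, $\Pr[v \text{ misclassified}] \le n^{-\Omega(c)}$. A union bound over the at most $n$ nodes in round $i$ and the $L-1 \le 2 + \lceil\log_{1+\epsilon}n\rceil = \tilde O(1)$ rounds keeps the total failure probability polynomially small; since $Z_i$ is independent of $S_i$, we may condition on $Z_i$ before applying the bound in round $i$, which makes the induction over rounds go through cleanly.

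For the space bound: we never need to store $S$ explicitly as a multiset of $\Theta(m(L-1)\log n/d)$ edges — by Lemma~\ref{main:th:l0:sample} each of the $\tilde O(m/d)$ independent $\ell_0$-samplers uses $\tilde O(1)$ space, giving $\tilde O(m/d)$ bits for the samples, plus $\tilde O(n)$ bits to store the current sets $Z_i$ / the levels $\ell(v)$. (Here I am implicitly using that $m(L-1)\log n/d = \tilde O(m/d)$ since $L = \tilde O(1)$.) That totals $\tilde O(n + m/d)$ bits as claimed. The decomposition itself is computed in a post-processing step once the stream ends: scan the batches round by round, maintaining degrees within the current $Z_i$.

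The main obstacle I anticipate is bookkeeping the independence correctly across rounds: $Z_{i+1}$ is a random set depending on $S_1,\dots,S_i$, so when bounding misclassification in round $i$ I must argue conditionally on $(S_1,\dots,S_{i-1})$ — equivalently on $Z_i$ — and use that $S_i$ is fresh. A second, more technical point is the multiplicity issue: with $\ell_0$-sampling "with replacement" a single edge can be drawn several times, so $\hat d_v$ is a sum of independent (over the $|S_i|$ draws) contributions but those contributions are not $\{0,1\}$ per edge of $G$; this is fine for Chernoff since we decompose over the $|S_i|$ draws, each an independent $[0,1]$ indicator of "this draw hit an edge incident to $v$ inside $Z_i$", and that is exactly the hypothesis of Theorems~\ref{th:chernoff:up}–\ref{th:chernoff:down}. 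Everything else is a routine calculation of the threshold and the constant $c$.
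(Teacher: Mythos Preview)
Your proposal is correct and follows essentially the same approach as the paper: partition $S$ into $L-1$ independent batches, use the fresh batch $S_i$ (independent of $Z_i$, which is fixed by $S_1,\dots,S_{i-1}$) to threshold sampled degrees in round $i$, then apply Chernoff and a union bound over nodes and rounds. The only cosmetic differences are the threshold value---the paper uses $(1-\epsilon)\alpha c\log n$, which for $\alpha=(1+\epsilon)/(1-\epsilon)$ equals $(1+\epsilon)c\log n$, essentially your geometric-mean choice---and that the space bound needs no appeal to $\ell_0$-samplers: the lemma hands you $S$, so $|S|=\tilde O(m/d)$ plus $\tilde O(n)$ for the $Z_i$'s already gives $\tilde O(n+m/d)$.
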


\begin{proof}
We partition the samples in $S$ evenly among $(L-1)$ groups $\left\{S_i\right\}, i \in [L-1]$. Thus, each $S_i$ is a collection of $c m \log n/d$ mutually independent random samples from the edge set $E$, and, furthermore, the collections $\left\{S_i\right\}, i \in [L-1]$, themselves are mutually independent. 

Consider any index $i \in \{1, \ldots, L-1\}$. Note that an edge $(u,v) \in E$ can appear multiple times in the collection of samples $S_i$. We will slightly abuse the notation introduced in the beginning of Section~\ref{sec:prelim}, and let $D_v(V', S_i)$ denote the degree of a node $v \in V' \subseteq V$ in the {\em multigraph} induced by the node set $V'$ and the samples in $S_i$. With this notation in hand,  our algorithm works as follows.
\begin{itemize}
\item Set $Z_1 \leftarrow V$. 
\item {\sc For}  $i = 1$ to $(L-1)$: Set $Z_{i+1} \leftarrow \{v \in Z_i : \dd_v(Z_i, S_i) \geq (1-\epsilon) \alpha c \log n \}$.
\end{itemize}
To analyze the correctness of the algorithm, define the (random) sets $A_i = \{ v \in Z_i : \dd_v(Z_i, E) > \alpha d\}$ and $B_i = \{ v \in Z_i : \dd_v(Z_i, E) < d\}$ for all $i \in [L-1]$. Note that for all $i \in [L-1]$, the random sets $Z_i, A_i, B_i$ are completely determined by the outcomes of the samples in $\left\{S_j\right\}, j < i$. In particular, the samples in $S_i$ are chosen independently of the sets $Z_i, A_i, B_i$. Let $\mathcal{E}_i$ be the event that (a) $Z_{i+1} \supseteq A_i$ and (b) $Z_{i+1} \cap B_i = \emptyset$. By Definition~\ref{main:def:partition}, the output $(Z_1, \ldots, Z_L)$ is a valid $(\alpha, d, L)$-decomposition of $G$ iff the event $\bigcap_{i = 1}^{L-1} \mathcal{E}_i$ occurs. Consider any $i \in [L-1]$. Below, we show that the event $\mathcal{E}_i$ occurs with high probability. The lemma follows by taking a union bound over all $i \in [L-1]$.

Fix any instantiation of the random set $Z_i$. Condition on this event, and note that this event completely determines the sets $A_i, B_i$. Consider any node $v \in A_i$. Let $X_{v,i}(j) \in \{0,1\}$ be an indicator random variable for the event that the $j^{th}$ sample in $S_i$ is  of the form $(u,v)$, with $u \in \N_v(Z_i)$. Note that the random variables $\{X_{v,i}(j)\}, j$, are mutually independent. Furthermore, we have $E[X_{v,i}(j)|Z_i] = \dd_v(Z_i)/m > \alpha d /m$ for all $j$. Since there are $cm \log n/d$ such samples in $S_i$, by linearity of expectation we get: $E[\dd_v(Z_i, S_i) | Z_i] = \sum_{j} E[X_{v,i}(j)|Z_i] > (cm \log n/d) \cdot (\alpha d/m) = \alpha c \log n$.  The node $v$ is included in $Z_{i+1}$ iff $\dd_v(Z_i, S_i) \geq (1-\epsilon) \alpha c \log n$, and this event, in turn, occurs with high probability (by Chernoff bound). Taking a union bound over all nodes $v \in A_i$, we conclude that  $\Pr[Z_{i+1} \supseteq A_i \, | \, Z_i] \geq 1 - 1/(\text{poly } n)$.  Using a similar line of reasoning, we get that  $\Pr[Z_{i+1} \cap B_i =  \emptyset \, | \, Z_i] \geq 1 - 1/(\text{poly } n)$.  Invoking a union bound over these  two events, we get $\Pr[\mathcal{E}_i \, | \, Z_i] \geq 1 - 1/(\text{poly } n)$. Since this holds for all possible instantiations of $Z_i$,  the event $\mathcal{E}_i$ itself occurs with high probability.

The space requirement of the algorithm, ignoring poly log factors, is proportional to the number of samples in $S$ (which is $cm(L-1) \log n/d$) plus the number of nodes in $V$ (which is $n$). Since $c$ is a constant and since $L = \O(1)$, we derive that the total space requirement is $O((n+m/d) \text{ poly} \log n)$.
\end{proof}

Now, to turn Lemma~\ref{main:lm:stream:1} into a streaming algorithm, we simply have to invoke Lemma~\ref{main:th:l0:sample} which follows from a well-known result about $\ell_0$-sampling in the streaming model~\cite{JowhariST11}, and a simple (but important) observation in Lemma~\ref{main:lm:stream:range}.

\begin{proof}[Proof of Theorem~\ref{main:th:stream:main}]
Define $\lambda^* = 2 \alpha \cdot (c n (L-1) \log n)$ and $K^* = 2 + \lceil \log_{(1+\epsilon)} (8\alpha n^2) \rceil$. While processing the stream of edge insertions/deletions, we simultaneously run $\lambda^* K^*$ mutually independent copies of the $\ell_0$-sampler as per Lemma~\ref{main:th:l0:sample}. Furthermore, we maintain a counter to keep track of the number of edges in the graph. Initially, the counter is set to zero. After each edge insertion (resp. deletion), the counter is incremented (resp. decremented) by one.  Thus, at the end of the stream, we get $\lambda^* K^*$ mutually independent uniform random samples from the edge set $E$, and a correct estimate of the number of edges in the graph (which is given by $m = |E|$). All these steps can be implemented in $\tilde \Theta(\lambda^* K^*) = \tilde \Theta(n)$ space. Note that if $m = 0$, then clearly the maximum density of a subgraph is also zero. Thus, for the rest of the proof, we assume that $1 \leq m \leq {n \choose 2}$. 

Next, at the end of the stream, we define $\pi = m/(2 \alpha n)$ and $\sigma = 2(1+\epsilon) n$. By Lemma~\ref{main:lm:stream:range}, we have $\alpha \cdot \pi < d^* < \sigma/(2(1+\epsilon))$. Thus, the values of $\pi$ and $\sigma$ satisfies the condition required by Corollary~\ref{cor:test:1}. Hence, following Corollary~\ref{cor:test:1}, we  set $K = 2 + \lceil \log_{(1+\epsilon)} (\sigma/\pi) \rceil$, and discretize the range $[\pi, \sigma]$ in powers of $(1+\epsilon)$ by defining $d_k = (1+\epsilon)^{k-1} \cdot \pi$ for every integer $k \in [K]$. Furthermore, we define $\lambda_k = c m (L-1) \log n/d_k$ for all $k \in [K]$. Our goal is to construct an $(\alpha, d_k, L)$-decomposition of $G$ for every $k \in [K]$. By Lemma~\ref{main:lm:stream:1},  for this we need $\sum_{k = 1}^K \lambda_k$ many mutually independent random samples from $E$. But note that: 
$$\lambda_k \geq \lambda_1 = cm (L-1) \log n/d_1= c m (L-1) \log n/\pi  = 2 \alpha \cdot (c (L-1) n \log n) = \lambda^* \ \ \text{ for all } k \in [K].$$
Next, since $\pi = m/(2\alpha n)$, $\sigma = 2 (1+\epsilon) n$ and $m \geq 1$, we also have $K^* = 2 + \lceil \log_{(1+\epsilon)} (8 \alpha n^2) \rceil \geq 2 + \lceil \log_{(1+\epsilon)} (\sigma/\pi) \rceil = K$. To summarize, we infer the following guarantee.
$$ \sum_{k = 1}^K \lambda_K \leq \lambda^* K^*.$$
In other words, while processing the stream of edge insertions/deletions, we have collected sufficiently many mutually independent random samples from $E$ so as to construct an $(\alpha, d_k, L)$-decomposition for every integer $k \in [K]$. We can now get a $(2 \alpha + \Theta(\epsilon)) = (2 + \Theta(\epsilon))$-approximation to the value of the densest subgraph by invoking Corollary~\ref{cor:test:1}.
\end{proof}

\paragraph{Remark on the update and query times.} The semi-streaming algorithm presented in this section runs  $\tilde{\Theta}(n)$ mutually independent $\ell_0$-samplers as per Lemma~\ref{main:th:l0:sample}. Thus, when an edge is inserted into (resp. deleted from) the graph, the algorithm has to update each of these $\ell_0$-samplers. This requires an update time of $\tilde{\Theta}(n)$. Finally, to answer a query about the maximum density, we have to first construct  $\tilde{\Theta}(1)$ many $(\alpha, d, L)$ decompositions (for different values of $d$) from the sampled edges maintained by the $\ell_0$ samplers, and then invoke Corollary~\ref{cor:test:1}. Thus, the query time is also $\tilde{\Theta}(n)$.

\section{A Dynamic Algorithm with Fast Update and Query Times}
\label{sec:dynamic}

The algorithm in Section~\ref{sec:sketch} maintains a $(2+\epsilon)$-approximation to the value of the densest subgraph and requires only $\O(n)$ space, but its update and query times can be as large as $\Omega(n)$. In this section, we present an algorithm with $\O(1)$ update and query times. This algorithm, however, has to store all the edges in the graph and hence has a space requirement of $\tilde{\Theta}(m+n)$. Furthermore, this algorithm maintains an approximation guarantee of $(4+\epsilon)$.

Throughout this section, we set  $\pi = 1/(4n)$, $\sigma = 4n$, $L = 2 + \lceil \log_{(1+\epsilon)} n \rceil$, and $\alpha = 2+3\epsilon$, where $\epsilon \in (0,1)$ is some small constant. Note that $\alpha \cdot \pi < d^* < \sigma/(2(1+\epsilon))$, where $d^*$ is the optimal density in the input graph.  As in Corollary~\ref{cor:test:1}, we discretize the range $[\pi, \sigma]$ in powers of $(1+\epsilon)$ by defining the values $\{d_k\}, k \in [K]$, by setting $K = 2 + \lceil \log_{(1+\epsilon)} (\sigma/\pi) \rceil$. 

We show in Section~\ref{sec:dynpart} how  to maintain an $(\alpha, d_k, L)$-decomposition of $G$  for each $k \in [K]$ in  $O(L/\eps) = O(\log n/\epsilon^2)$ amortized update time and $\Theta(m+n)$ space (see Theorem~\ref{th:sec:dynpart:main}). Since  $K = O(\log n/\epsilon)$, the total update time for all the $K$ decompositions is $O(K \log n /\eps^2) = \O(1)$ and the total space requirement is also $O(K \cdot (m+n)) = \O(m+n)$.  By Corollary~\ref{cor:test:1}, this gives  a $2\alpha (1+\epsilon)^2 = 4 + O(\eps)$-approximation to the optimal density in $G$, for sufficiently small $\epsilon$. To answer a query about the value of the densest subgraph, the algorithm needs to keep track of the index $k'$ as defined in Corollary~\ref{cor:test:1}. Since there are $O(K) = \tilde{O}(1)$ decompositions to deal with, the time taken for this operation can be subsumed within the $\O(1)$ update time. This gives us a query time of $O(1)$. We thus get the main result of this section, which is summarized below.

\begin{theorem}
\label{th:sec:dynamic:main}
There is a deterministic dynamic algorithm that maintains a $(4 + O(\eps))$-approximation to the value of the densest subgraph. The algorithm requires $\O(m+n)$ space,  has an amortized update time of $\O(1)$ and a query time of $O(1)$. 
\end{theorem}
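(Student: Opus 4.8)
The plan is to reduce the theorem to a single technical statement, namely that one can dynamically maintain an $(\alpha, d, L)$-decomposition in $O(L/\epsilon)$ amortized update time and $\Theta(m+n)$ space (this will be stated as Theorem~\ref{th:sec:dynpart:main} in the next subsection), and then to combine $K$ such decompositions through Corollary~\ref{cor:test:1}. Given that technical statement, the deduction is essentially the bookkeeping already sketched in the paragraph preceding the theorem: with $\pi = 1/(4n)$, $\sigma = 4n$, $\alpha = 2+3\epsilon$, and $L = 2 + \lceil \log_{(1+\epsilon)} n\rceil$, Lemma~\ref{main:lm:stream:range} guarantees $\alpha\pi < d^* < \sigma/(2(1+\epsilon))$, so Corollary~\ref{cor:test:1} applies; running one dynamic $(\alpha,d_k,L)$-decomposition for each of the $K = 2 + \lceil \log_{(1+\epsilon)}(\sigma/\pi)\rceil = O(\log n/\epsilon)$ discretized values $d_k = (1+\epsilon)^{k-1}\pi$ costs $O(K\cdot L/\epsilon) = O(\log^2 n/\epsilon^3) = \tilde O(1)$ amortized update time and $O(K(m+n)) = \tilde O(m+n)$ space. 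The approximation factor is $2\alpha(1+\epsilon)^2 = 2(2+3\epsilon)(1+\epsilon)^2 = 4 + O(\epsilon)$ for small $\epsilon$, as claimed.

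The query is handled by maintaining, for each $k\in[K]$, a bit indicating whether $Z_L(k) = V_L(k) \neq \emptyset$ (equivalently, a counter of $|V_L(k)|$ that is updated whenever a node enters or leaves the top level), and keeping $k' = \max\{k : Z_L(k)\neq\emptyset\}$ in a simple max-structure over $O(K)$ values; returning $d_{k'}/(2(1+\epsilon))$ — or, by the remark after Corollary~\ref{cor:test:1}, the largest $\rho(Z_i(k))$ over all $i,k$ — then takes $O(1)$ time. Updating these counters and the running maximum after each edge update costs $O(K) = \tilde O(1)$, which is subsumed in the update time, so the overall query time is $O(1)$. Correctness of the returned value on every query (not merely at the end of a stream) follows because each of the $K$ decompositions is a valid $(\alpha,d_k,L)$-decomposition of the \emph{current} graph at all times — this is exactly what the dynamic-maintenance theorem promises — and Corollary~\ref{cor:test:1} is a purely structural statement about any such family of decompositions. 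Determinism is inherited from the dynamic-maintenance algorithm, which uses no randomness (unlike the $\ell_0$-sampling approach of Section~\ref{sec:sketch}).

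The real content — and the main obstacle — is of course Theorem~\ref{th:sec:dynpart:main} itself, i.e. showing that a single $(\alpha,d,L)$-decomposition can be maintained in $O(L/\epsilon)$ amortized time. The freedom built into Definition~\ref{main:def:partition} (a node of degree between $d$ and $\alpha d$ may sit on either side) is what makes an amortized bound possible: a node is only forced to change level when its level-degree $\dd_v(Z_{\ell(v)-1})$ crosses out of the ``allowed'' band $[d,\alpha d]$, and because $\alpha \geq 2+\epsilon$ is bounded away from $1$, such band-crossings are rare relative to the number of incident edge updates. The delicate part, as the overview section warns, is the amortized analysis: one must design a potential — here a combination of a per-node potential (charging for a node's current level) and a per-edge potential (charging for the level-difference of its endpoints) — so that the expensive but infrequent operation of moving a node up or down a level, together with the cascade of level-degree changes it induces in its neighbors, is paid for by the potential released. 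I expect the bulk of Section~\ref{sec:dynpart} to be spent (i) specifying precisely when and how the algorithm rebalances a node, (ii) bounding the work of a single rebalancing step, including the propagation to neighbors at adjacent levels, and (iii) verifying that the chosen potential function decreases by enough on each such step while increasing by only $O(L/\epsilon)$ per edge insertion/deletion. The reduction above is routine; that analysis is where the theorem is actually proved.
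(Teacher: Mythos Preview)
Your proposal is correct and follows the paper's approach essentially verbatim: reduce to Theorem~\ref{th:sec:dynpart:main} (dynamic maintenance of a single $(\alpha,d,L)$-decomposition in $O(L/\epsilon)$ amortized time and $O(m+n)$ space), run $K=O(\log n/\epsilon)$ copies for the discretized thresholds $d_k$, and invoke Corollary~\ref{cor:test:1} to get the $2\alpha(1+\epsilon)^{O(1)}=4+O(\epsilon)$ guarantee while tracking $k'$ for $O(1)$ queries. Your anticipation of the potential-function structure for Theorem~\ref{th:sec:dynpart:main} (a per-node term plus a per-edge term depending on endpoint levels, with $O(L/\epsilon)$ increase per update and enough release per level change) is also exactly what the paper does.
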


\subsection{Dynamically maintaining an $(\alpha, d, L)$-decomposition}
\label{sec:dynpart}

We present a deterministic data structure that
is initially given $\alpha$, $d$, $L$,  and a graph $G = (V,E)$ with $|V| = n$, $E = \emptyset$. The data structure maintains an $(\alpha, d, L)$-decomposition of the  graph $G = (V,E)$ at each time-step,  and
 supports the following operations:
\begin{itemize}
\item {\em Insert($u,v$):} Insert the edge $(u,v)$ into the graph.
\item {\em Delete ($u,v$):} Delete the edge $(u,v)$ from the graph.
\end{itemize}
\noindent  Theorem~\ref{th:sec:dynpart:main} summarizes our result. We devote the rest of Section~\ref{sec:dynpart} to its proof.

\begin{theorem} 
\label{th:sec:dynpart:main}
For every polynomially bounded $\alpha \ge 2 + 3\eps$, we can deterministically  maintain an $(\alpha, d, L)$-decomposition  of  $G = (V,E)$. Starting from an empty graph, our data structure  handles a sequence of $t$ update operations (edge insertions/deletions) in total time $O(t L/\eps)$. Thus, we get an amortized update time of $O(L/\eps)$. The space complexity of the data structure at a given time-step is $O(n +m)$, where $m = |E|$ denotes the number of edges in the input graph at that time-step. 
\end{theorem}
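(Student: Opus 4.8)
# Proof Proposal for Theorem~\ref{th:sec:dynpart:main}

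\textbf{Overall approach.} The plan is to maintain an explicit leveling $\ell : V \to [L]$ of the nodes (with $Z_i = \{v : \ell(v) \ge i\}$) and to only move a node when it \emph{violates} a local degree constraint. Concretely, for each node $v$ at level $i = \ell(v)$ we will insist on an invariant of the form: if $v$ is not at the top level, then $\dd_v(Z_i) \le \alpha d$ (so $v$ is not ``too heavy'' for its level), and if $v$ is not at the bottom level, then $\dd_v(Z_{i-1}) \ge d$ (so $v$ is ``heavy enough'' to have risen to level $i$). Any leveling satisfying these invariants yields a valid $(\alpha,d,L)$-decomposition by Definition~\ref{main:def:partition}: a node with $\dd_v(Z_i) > \alpha d$ cannot be at level $i$, so it lies in some $Z_{i+1}$; and a node with $\dd_v(Z_i) < d$ cannot have risen past level $i$. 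The data structure stores, for each node, its level and a bucketed adjacency list allowing $O(1)$-time access to $\dd_v(Z_j)$ for the relevant $j$'s (e.g. counters $\dd_v(Z_j)$ for each $j$, updated lazily). On each edge insertion or deletion we update the affected counters and then run a ``fix-up'' loop that repeatedly promotes a node that is too heavy or demotes a node that is too light, cascading to neighbors whose counters change, until all invariants hold again.

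\textbf{Key steps in order.} First I would pin down the exact invariants and check that they imply the $(\alpha,d,L)$-decomposition property and that they can be restored by single-level moves (a too-heavy node moves up one level; a too-light node moves down one level). Second, I would describe the data structure: each node keeps its level $\ell(v)$ and, for efficiency, the quantities $\dd_v(Z_{\ell(v)})$ and $\dd_v(Z_{\ell(v)-1})$, maintained via adjacency lists grouped by neighbor level; an edge update touches $O(1)$ such counters directly, and each node move touches only the counters of that node's neighbors. Third — the heart of the argument — I would set up the amortized analysis via two potential functions. Following the overview, one potential is defined on nodes (roughly $\Phi_{\text{node}} = c_1 \sum_v \ell(v)$ or a similar linear-in-level term, capturing that demotions release potential) and one on edges (roughly $\Phi_{\text{edge}} = c_2 \sum_{(u,v) \in E} f(\ell(u),\ell(v))$ for a suitable function $f$ penalizing level gaps, capturing that promotions release potential because a promoted heavy node has many neighbors below it). The key inequalities to verify are: (i) an edge insertion/deletion increases $\Phi := \Phi_{\text{node}} + \Phi_{\text{edge}}$ by only $O(L/\eps)$; (ii) each node promotion is ``paid for'' because the node had $\ge d$ neighbors at strictly lower level (so $\Phi_{\text{edge}}$ drops by enough, using that $\alpha \ge 2+3\eps$ forces a genuine gap of $\ge \eps d$ between the $d$-threshold and the $\alpha d$-threshold, and this $\eps$ slack is what makes the amortized cost $O(1/\eps)$ rather than unbounded); and (iii) each node demotion is paid for by the drop in $\Phi_{\text{node}}$. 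Summing, the total work over $t$ updates is $O(tL/\eps)$. Finally, the space bound $O(n+m)$ is immediate since we store only the graph plus $O(1)$ words per node.

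\textbf{Main obstacle.} The delicate part is the potential-function bookkeeping, specifically showing that promotions are non-self-defeating: when a heavy node $v$ at level $i$ is promoted, its degree $\dd_v(Z_{i+1})$ might still exceed $\alpha d$ (forcing another promotion), and meanwhile its neighbors' counters change, possibly triggering their moves. One must argue that the total number of moves triggered by a single edge update is amortized $O(L/\eps)$ against the $\ge \eps d$ ``gap'' each promotion exploits, and that the edge potential $\Phi_{\text{edge}}$ — which must simultaneously (a) not blow up on an edge insertion and (b) strictly decrease by a controlled amount on each promotion — can be chosen to do both; getting the constants $c_1, c_2$ and the shape of $f$ to cohere is where the $\alpha \ge 2+3\eps$ hypothesis is really used. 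I expect the promotion case (the edge potential) to be substantially harder to get right than the demotion case (the node potential), since demotions are straightforwardly charged to the monotone quantity $\sum_v \ell(v)$.
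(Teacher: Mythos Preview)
Your high-level plan --- maintain levels, move dirty nodes one step at a time, amortize via a node potential plus an edge potential --- matches the paper exactly. But the potential-function intuition is inverted in several places, and as stated the argument would not close.

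First, a promoted node $v$ at level $i$ has $\dd_v(Z_i) > \alpha d$, meaning many neighbors at level $i$ \emph{or above}, not ``strictly lower.'' The paper's edge potential is $\Psi(u,v) = 2(L - \min(\ell(u),\ell(v))) + [\ell(u)=\ell(v)]$; when $v$ moves up, $\Psi$ drops by at least one for every neighbor in $Z_i$, so the edge potential falls by $\ge \dd_v(Z_i)$ and directly pays the $O(1+\dd_v(Z_i))$ work. No $\eps$-slack is used here --- promotion is the \emph{easy} case.

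Second, $\sum_v \ell(v)$ is not monotone (promotions raise it), so demotions cannot be ``straightforwardly charged'' to it. Demotion is the delicate case, and it is exactly where $\alpha \ge 2+3\eps$ enters. When $v$ drops from $i$ to $i-1$ we have $\dd_v(Z_{i-1}) < d$; the edge potential \emph{increases} by up to $2\dd_v(Z_{i-1}) < 2d$, and with the paper's degree-sensitive node potential $\Phi(v) = \tfrac{1}{\eps}\sum_{j<\ell(v)} \max(0,\alpha d - \dd_v(Z_j))$ the neighbors of $v$ at higher levels also gain up to $d/\eps$ in total (their $\dd_x(Z_i)$ dropped by one). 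The drop in $\Phi(v)$ itself is at least $(\alpha-1)d/\eps$, and one needs $(\alpha-1)d/\eps - d/\eps - 2d \ge d$, i.e.\ $\alpha \ge 2+3\eps$, to pay for everything. Your simpler $c_1\sum_v(\ell(v)-1)$ can in fact be made to work --- take $c_1 = \Theta(d/\eps)$ and scale the edge potential by $c_2 = \Theta(1/\eps)$ --- but the balancing still lives in the demotion case, and the $\eps$-dependence appears there, not in promotion.
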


\paragraph{Data Structures.} 
We use the following data structures.

\begin{enumerate}
\item Every node $v \in V$ maintains $L$  lists $\text{{\sc Friends}}_i[v]$, for $i \in \{1, \ldots, L\}$. For $i < \ell(v)$, the list $\text{{\sc Friends}}_i[v]$ consists of the neighbors of $v$ that are at level $i$ (given by the set $\NN_v(V_i)$). For $i = \ell(v)$, the set $\text{{\sc Friends}}_i[v]$ consists of the neighbors of $v$ that are at level $i$ or above (given by the set $\NN_v(Z_i)$). For $i > \ell(v)$, the list $\text{{\sc Friends}}_i[v]$ is empty.
Each list is stored in a doubly linked list together with its size, $\text{{\sc Count}}_i[v]$. Using appropriate pointers, we can insert or delete a given node to or from a concerned list in constant time.
\item The counter $\text{{\sc Level}}[v]$ keeps track of the level of the node $v$.
\end{enumerate}

\paragraph{Algorithm.}
If a node   violates one of the conditions
of an $(\alpha, d, L)$-decomposition (see Definition~\ref{def:partition}), then we call the node ``dirty'', else the node is called ``clean''. Specifically a node $y$ at level $\ell(y) = i$ is dirty iff either $\dd_y(Z_i) > \alpha d$ or $\dd_y(Z_{i-1}) < d$. Initially, the input graph $G = (V,E)$ is empty,  every node $v \in V$  is at level $1$, and  every node is clean. 

When an edge $(u,v)$ is inserted/deleted,  we first update the $\text{{\sc Friends}}$ lists of $u$ and
$v$ by adding or removing neighbors in constant time. Next we check whether $u$ or $v$ are dirty.
If so, we run the RECOVER() procedure described in Figure~\ref{fig:dirty:main}. Note that a single iteration of the {\sc While} loop (Steps 01-05) may change the status of some more nodes from clean to dirty (or vice versa). If and when the procedure terminates, however, every node is clean by definition.

\begin{figure}[htbp]
\centerline{\framebox{
\begin{minipage}{5.5in}
\begin{tabbing}
01. \ \ \ \    \=  {\sc While} there exists a dirty node  $y$ \\
02.  \>  \ \ \ \ \ \ \ \= {\sc If} $\dd_y(Z_{\ell(y)}) > \alpha d$ and $\ell(y) < L$, {\sc Then} \\
03.  \> \> \ \ \ \ \ \ \ \ \ \= Increment the level of $y$ by setting $\ell(y) \leftarrow \ell(y)+1$. \\
04.  \> \> {\sc Else if} $\dd_y(Z_{\ell(y)-1}) <  d$ and $\ell(y) > 1$, {\sc Then} \\
05.  \>  \> \> Decrement the level of $y$ by setting $\ell(y) \leftarrow \ell(y)-1$. 
\end{tabbing}
\end{minipage}
}}
\caption{\label{fig:dirty:main} RECOVER().}
\end{figure}

\paragraph{Analyzing the space complexity.} Since each edge in $G$  appears in two linked lists (corresponding to each of its endpoints), the space complexity of the data structure is $O(n+ m)$, where $m = |E|$.

\paragraph{Analysis of the Update Time.}
Each update operation takes constant time plus the time for the RECOVER() procedure. We show below that the total time spent in procedure 
RECOVER() during $t$ update operations is $O(t L/\eps)$.

\paragraph{Potential Function.} To determine the amortized update time we use a potential function $\B$. Let $f(u,v) = 1$ if $l(u) = l(v)$ and let it be 0 otherwise. We define $\B$ and the node and edge potentials $\Phi(v)$ and $\Psi(u,v)$ as follows. 

\begin{eqnarray}
\label{eq:potential:main}
\B & = & \sum_{v \in V} \Phi(v) + \sum_{e \in E} \Psi(e) \\
\Phi(v) & = & {1 \over \eps} \sum_{i = 1}^{\ell(v)-1} \max(0, \alpha d - \dd_v(Z_i)) \ \ \text{ for all nodes } v \in V \label{eq:potential:node} \\
\Psi(u,v) & = &2  (L -  \min(\ell(u), \ell(v))) + f(u,v)\ \ \text{ for all edges } (u,v) \in E \label{eq:potential:edge}
\end{eqnarray}
It is easy to check that all these potentials are nonnegative, and that they are uniquely defined by the partition $V_1, \ldots, V_L$ of the set of nodes $V$. Initially, the input graph $G$ is empty and the total potential $\B$ is zero. We show that (a) insertion/deletion of an edge (excluding subroutine RECOVER())  increases the total potential by at most $3L/\eps$, and (b) for each unit of computation performed by procedure RECOVER() in Figure~\ref{fig:dirty:main}, the total potential decreases by at least $\Omega(1)$. Since the total potential remains always nonnegative, these two conditions together imply an amortized update time of $O(L/\eps)$.

\paragraph{Insertion.} The insertion of  edge $(u,v)$ creates a new potential $\Psi(u,v)$ with value  at most $3L$. Further, the potentials $\Phi(u)$ and $\Phi(v)$ do not increase, and the potentials associated with all other nodes and edges remain unchanged. {Thus, the net increase in the potential $\B$ is at most $3L$.}

\paragraph{Deletion.} The deletion of edge $(u,v)$  destroys the (nonnegative) potential $\Psi(u,v)$. Further, each of the potentials $\Phi(u)$ and $\Phi(v)$ increases by at most $L/\eps$, and the potentials of all other nodes and edges remain unchanged. {Thus, the net increase in the potential $\B$ is at most $2L/\eps$.}

\medskip
\noindent It remains to relate the change in the potential $\B$  with the amount of computation performed.  See Section~\ref{sec:potential:recover}. For ease of exposition, we first describe a high level overview of the  analysis. 

\subsubsection{A high level overview of the potential function based analysis}
The intuition behind this potential function is as follows. We maintain a data structure so that the change of the level of a node $y$ from $i$ to $i+1$ or from $i$ to $i-1$ takes time $O(1 + \dd_y(Z_i))$.
Ignoring the constant factor (as we can multiply the potential function by this constant), we assume in the following that the cost is simply $1 + \dd_y(Z_i)$.
The basic idea is that the insertion or deletion of an edge should increase the potential function in order to pay for all future level changes. To implement this idea (1) each node gets a potential that
increases when an adjacent  edge is deleted and that will pay for future level {\em decreases} of the node, and (2) each edge in $G$ gets a potential that pays for future level {\em increases} of its end points.
We explain how we implement this in more detail next:
We know that when a node moves up to level $i+1$ it has degree at least
$\alpha d$ to nodes in $Z_{i}$, while when it moves back down it has degree at most $d$ to nodes in $Z_i$.  Assuming that the drop in the nodes degree was caused by the deletion of adjacent edges,
the difference of the two, i.e. $(\alpha -1)d$ has to be used to pay for the cost of  a level
decreases of a node, which is $1 + \dd_y(Z_i) \le d$. This is possible if we set $\alpha \ge 2$. The value of $\alpha$ can even be reduced by multiplying the potential of each node by $1/\eps$. Then the drop in 
potential is $(\alpha - 1)d/\eps$ while the cost is only of $d$.

There is, however, an additional complication in this scheme, which forces us to set $\alpha = 2 + \Theta(\eps)$: A node on level $i$ might not only decrease its level because of edge deletions (of edges to nodes on  level $i$ or higher), but also if a node on level $i$ moves down to level $i-1$.
Said differently, the drop of $(\alpha -1)d/\eps$ of the degree of a node $y$  on level $i$ might not only be caused by edge deletions, but also by the level drop of incident nodes.  Thus, when the level of a node $y$ decreases, the potential of all its neighbors
on a larger level has to increase by $1/\eps$ to pay for their future level decrease. Thus the drop of the potential of $y$ by $(\alpha -1)d/\eps$ has to ``pay'' for the increase of the potential of its neighbors, which is in total at most $d/\eps$, {\em and} the cost of the operation, which is $d$. This is possible if we set $\alpha = 2 + \eps$.

\renewcommand{\O}{\tilde{O}}

\subsubsection{Analyzing the subroutine RECOVER().}
\label{sec:potential:recover}
We will analyze any single iteration of the {\sc While} loop in Figure~\ref{fig:dirty:main}. During this iteration, a dirty node $y$ either increments its level by one unit, or decrements its level by one unit. Accordingly, we consider two possible events.

\bigskip
\noindent {\bf Event 1: A dirty node $y$ changes its level from  $i$ to $(i+1)$.} 

\medskip
 First, we upper bound the amount of computation  performed during this event. Our algorithm scans through the list $\text{{\sc Friends}}_i[y]$ and identifies the neighbors of $y$ that are at level $(i+1)$ or above. For every such node $x \in \NN_y \cap Z_{i+1}$, we need to (a) remove $y$ from the list $\text{{\sc Friends}}_i[x]$ and add $y$ to the list $\text{{\sc Friends}}_{i+1}[x]$, (b) increment the counter $\text{{\sc Count}}_{i+1}[x]$ by one unit, (c) add $x$ to the list $\text{{\sc Friends}}_{i+1}[y]$ and remove $x$ from the list $\text{{\sc Friends}}_{i}[y]$, (d) decrement the counter $\text{{\sc Count}}_i[y]$ by one unit and increment the counter $\text{{\sc Count}}_{i+1}[y]$ by one unit. Finally, we set  $\text{{\sc Level}}[y] \leftarrow i+1$. Overall,   $O(1 + \dd_y(Z_i))$ units of computation are performed during this event.

\medskip
\noindent Next, we lower bound the net decrease in the $\B$ due to this event. We first discuss the node potentials.
\begin{itemize}
\item (a)  Since the node $y$ gets promoted to level $i+1$, we must have $\dd_y(Z_i) > \alpha d$, which implies that $\max(0, \alpha d - \dd_y(Z_i)) = 0$, so that the potential $\Phi(y)$ does not change.
\item (b) The potential of a node $x \in \N_y$ can only decrease.
\item (b) The potential of a node $x \notin \{y \cup \N_y\}$ does not change.
\end{itemize}
\noindent Accordingly,  the sum $\sum_{v \in V} \Phi(v)$ does not increase.

\medskip
\noindent Next we consider the edge potentials.  Towards this end, we first consider the edges incident upon $y$. Specifically, consider a node $x \in \NN_y$.
\begin{itemize}
\item If $\ell(x) < i$, the potential of the edge $(x,y)$ does not change.
\item If $\ell(x) = i $, the potential of $(x,y)$ is $2(L - i) + 1$ before the level change and $2(L - i)$ afterwards, i.e., the potential drops by one.
\item If $\ell(x) = i + 1$, the potential of $(x,y)$ is $2(L - i)$ before the level change and $2(L - (i+1)) + 1 = 2(L - i) - 1$ afterwards, i.e., it drops by one.
\item If $\ell(x) > i+1$, the potential of $(x,y)$ is $2(L - i)$ before the level change and $2(L - ( i+1))$ afterwards, i.e., it drops by two. 
\end{itemize}
\noindent The potentials associated with all other edges remain unchanged. Thus, the sum $\sum_{e \in E} \Psi(e)$ drops by at least $\dd_y(Z_i)$. 

\medskip
\noindent We infer that the net decrease in the overall potential $\B$ is at least $\dd_y(Z_i)$. Note that $\dd_y(Z_i) > 0$ (for otherwise the node $y$ would not have been promoted to level $i+1$). It follows that the net decrease in $\B$ is sufficient to pay for the cost of the computation performed, which, as shown above, is $O(1+\dd_y(Z_i))$.

\bigskip
\noindent {\bf Event 2: A dirty node $y$ changes its level from level $i$ to $(i-1)$.}

\medskip
 First, we upper bound the amount of computation  performed during this event. Our algorithm scans through the nodes in the list $\text{{\sc Friends}}_i[y]$. 
For each such node $x \in \NN_y \cap Z_i$, we need to (a) remove $y$ from the list $\text{{\sc Friends}}_i[x]$ and add $y$ to the list $\text{{\sc Friends}}_{i-1}[x]$ and (b) decrement the counter $\text{{\sc Count}}_i[x]$.
 Finally, we need to add all the nodes in $\text{{\sc Friends}}_i[y]$ to the list $\text{{\sc Friends}}_{i-1}[y]$, make $\text{{\sc Friends}}_i[y]$ into an empty list, and set $\text{{\sc Counter}}_i[y]$ to zero. Finally, we set $\text{{\sc Level}}[y] \leftarrow i-1$. Overall, $O(1 + \dd_y(Z_i))$ units of computation are performed during this event.

\medskip
\noindent Next, we lower bound the net decrease in the  overall potential $\B$ due to this event. 
We first consider the changes in the node potentials.
\begin{itemize}
\item (a) Since the node $y$ was demoted to level $i-1$, we must have $\dd_v(Z_{i-1}) < d$. Accordingly, the potential $\Phi(y)$ drops by at least 
$(\alpha  - 1) \cdot (d/\eps)$ units due to the decrease in $\ell(y)$.
\item (b) For every neighbor $x$ of $y$, $\dd_x(Z_i)$ decreases by one while $\dd_x(Z_j)$ for $j \neq i$ is unchanged.  The potential function of a node
$x$ considers only the $\dd_x(Z_j)$ values if $j < \ell(x)$. Thus, only for neighbors $x$ with $\ell(x) > i$ does the potential function change, specifically it
increases by at most $1/\eps$. Thus the sum $\sum_{x \in \N_y} \Phi(x)$ increases by at most $\dd_y(Z_{i+1})/\eps$. Further, note that $\dd_y(Z_{i+1})/\epsilon \leq \dd_y(Z_{i-1})/\epsilon < d/\epsilon$.  The last inequality holds since the node $y$ was demoted from level $i$ to level $(i-1)$.
\item The potential $\Phi(x)$ remains unchanged for every node $x \notin \{y \} \cup \N_y$.
\end{itemize}
\noindent Thus, the sum $\sum_{v \in V} \Phi(v)$ drops by at least $ (\alpha  - 1) \cdot (d/\epsilon)  - (d/ \eps) = (\alpha -2) \cdot (d/\epsilon)$.

\medskip
\noindent We next consider edge potentials.  Towards this end, we first consider the edges incident upon $y$.  Specifically, consider any node $x \in \NN_y$.
\begin{itemize}
\item If $\ell(x) < i -1$, the potential of the edge $(x,y)$ does not change.
\item If $\ell(x) = i-1 $, the potential of $(x,y)$ is $2(L - (i-1))$ before the level change and $2(L - (i-1)) + 1$ afterwards, i.e., the potential increases by one.
\item If $\ell(x) = i$, the potential of $(x,y)$ is $2(L - i) + 1$ before the level change and $2(L - (i-1))  = 2(L - i) +2$ afterwards, i.e., it increases by one.
\item If $\ell(x) \ge i+1$, the potential of $(x,y)$ is $2(L - i)$ before the level change and $2(L - ( i-1))$ afterwards, i.e., it increases by two. 
\end{itemize}
\noindent The potentials associated with all other  edges remain unchanged.  Thus, the sum $\sum_{e \in E} \Psi(e)$  increases by at most $2 \dd_y(Z_{i-1}) < 2 d$.

\medskip
\noindent We infer that the overall potential $\B$ drops by at least $ (\alpha - 2) \cdot (d/\epsilon) - 2d = (\alpha - 2 - 2\epsilon) \cdot (d/\epsilon)$.
Accordingly, for $\alpha \ge 2 + 3 \eps$ this potential drop is at least $d \ge \dd_y(Z_i) + 1$. We conclude that the net drop in the overall potential $\B$ is again sufficient to pay for the cost of the computation performed. This concludes the proof of Theorem~\ref{th:sec:dynpart:main}.


\section{A Semi-Streaming Algorithm with Fast Update and Query times}
\label{main:sec:dynamic-stream}\label{sec:combine}

In Section~\ref{sec:sketch}, we presented a semi-streaming algorithm that maintains a $(2+\epsilon)$-approximation of $\rho^*(G)$. Specifically, the algorithm can process a dynamic  stream of updates (edge insertions/deletions) using only $\tilde O(n)$ bits of space. Unfortunately, however, it has a large update time of $\tilde \Theta(n)$, and it answers a query only at the end of the stream (also in time  $\tilde \Theta(n)$). 

On the other hand, in Section~\ref{sec:dynamic} we presented an algorithm that maintains a $(4+\epsilon)$-approximation of $\rho^*(G)$. This algorithm has the advantage of having very fast (i.e., $\O(1)$) update and query times. Furthermore, it can answer a query at any given time-instant (even in the middle of the stream). But, unlike the algorithm from Section~\ref{sec:sketch} whose space complexity is $\O(n)$, it has to store all the edges  and requires $\tilde \Theta(m+n)$ bits of space.

In this section, we combine the techniques from Sections~\ref{sec:sketch} and~\ref{sec:dynamic} to get a result that captures the best of both worlds. Specifically, we present a new algorithm that maintains a $(4+\epsilon)$-approximation of $\rho^*(G)$ while processing a stream of updates (edge insertions/deletions). We highlight that:
\begin{itemize}
\item The algorithm has very fast (i.e., $\O(1)$) update and query times.
\item It requires very little (i.e., $\O(n)$) space.
\item It  can answer a query at any time-instant (i.e., even in the middle of the stream). 
\end{itemize}

\subsection{An Overview of Our Result} 
\label{sec:new:setting}

We denote the input graph by $G = (V, E)$. It has $|V| = n$ nodes, and in the beginning of our algorithm the graph is empty (i.e.,  $E = \emptyset$).  Subsequently, our algorithm processes a ``stream of updates'' in the graph. Each update consists of an edge insertion/deletion. Specifically, at each ``time-step'', either an edge is inserted into the graph or an already existing edge is deleted from the graph. The  node set of the graph, however, remains unchanged over time.  For any integer $t \geq 0$, we let $G^{(t)} = (V, E^{(t)})$ denote the status of the input graph at time-step $t$ (i.e., after the $t^{th}$ edge insertion/deletion). Thus,  $G^{(0)} = (V, E^{(0)})$ denotes the status of $G$ in the beginning, which implies that  $E^{(0)} = \emptyset$. Further, we let $m^{(t)} = |E^{(t)}|$ denote the number of edges in the graph $G^{(t)}$. Finally, $\text{{\sc Opt}}^{(t)} = \rho(G^{(t)})$ gives the value of the densest subgraph in $G^{(t)}$. We also use the notations and concepts introduced in Section~\ref{sec:prelim}. Our algorithm will maintain a value $\text{{\sc Output}}^{(t)}$ at each time-step $t$. We want  $\text{{\sc Output}}^{(t)}$ to be a $(4+\epsilon)$-approximation to $\text{{\sc Opt}}^{(t)}$.

Throughout Section~\ref{sec:combine}, we  fix the symbols $\epsilon, \alpha, L, c, \lambda$ and $T$ as defined below.
\begin{eqnarray}
\label{eq:new:epsilon}
\epsilon \in (0,1) \text{ has a sufficiently small positive value.} \\
\alpha = 2 + \Theta(\epsilon) = 2 + c^* \cdot \epsilon, \text{ where } c^* \text{ is some constant independent of } \epsilon \text{ (to be decided later).} \label{eq:new:epsilon} \\
L = 2 + \lceil \log_{(1+\epsilon)} n \rceil \label{eq:new:L} \\
\lambda \geq 1 \text{ is any positive constant, and } c \text{ is a constant such that } c >> \lambda. \label{eq:new:lambda} \\
T = \lceil n^{\lambda} \rceil \label{eq:new:T}
\end{eqnarray}
We use the symbols $\O(.)$ and $\tilde \Theta(.)$ to hide $\text{poly} (\log n, 1/\epsilon)$ factors.  We  now  state the main result.

\begin{theorem}
\label{th:new:main}
Define $T$ as in equation~\ref{eq:new:T}. There is an algorithm that processes a stream of  $T$ updates   (starting from an empty graph), and
satisfies the following properties with high probability:
\begin{itemize}
\item It uses only $\O(n)$ bits of space.
\item The total time taken  to process the  $T$ edge insertions/deletions is $\O(T)$. Thus, it has an amortized update time of $\O(1)$. 
\item It maintains a value $\text{{\sc Output}}^{(t)}$ such that for all time-steps $t \in [1, T]$ we have $\text{{\sc Output}}^{(t)} \leq \text{{\sc Opt}}^{(t)} \leq (4+\epsilon) \cdot \text{{\sc Opt}}^{(t)}$. Thus, the algorithm maintains a $(4+\epsilon)$-approximation to the value of the densest subgraph while processing the stream of updates, and the query time is $O(1)$.
\end{itemize}
\end{theorem}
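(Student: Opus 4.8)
The plan is to combine the sketch-based approach of Section~\ref{sec:sketch} with the amortized-cost dynamic machinery of Section~\ref{sec:dynamic}, running the dynamic $(\alpha,d,L)$-decomposition algorithm ``on top of'' a small family of $\ell_0$-sampled edge sets rather than on the real graph. Concretely, for every candidate density $d_k$, $k\in[K]$, with $K=\tilde\Theta(1)$ as in Corollary~\ref{cor:test:1}, and for each of the $L-1$ levels, I would maintain $\Theta(cn\log n)$ independent $\ell_0$-samplers (Lemma~\ref{main:th:l0:sample}), plus an edge counter so that the thresholds $\pi,\sigma$ (hence the $d_k$'s) can be recalibrated as $m^{(t)}$ changes. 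Since $d^*\ge m/n$, the number of samplers needed for threshold $d_k$ is $\tilde O(n)$ uniformly, so the total space over all $k$ and all levels is $\tilde O(n)$ and each stream update touches $\tilde O(n)$ samplers in $\tilde O(1)$ time each --- wait, that is $\tilde\Theta(n)$ update time, which is too slow. So the first real idea must be: we do \emph{not} feed every stream update to every sampler naively; instead, as in Section~\ref{sec:dynamic}, we maintain the decomposition structure (the \textsc{Friends} lists and \textsc{Level} counters) on the current contents of the sampled edge sets, and a stream update only changes $O(1)$ sampled edges per sampler, so the incremental maintenance is what has to be shown to cost $\tilde O(1)$ amortized --- this is exactly where the Section~\ref{sec:dynamic} potential argument has to be re-run in the sampled setting.

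The key steps, in order, are: (1) Set up notation: for each $k$, the sampled edge set $S(k)=\bigcup_i S_i(k)$ evolves as the stream proceeds, because each $\ell_0$-sampler outputs a (possibly changing) single edge after each update; importantly, on a single stream update an $\ell_0$-sampler's output changes with controlled frequency, and we charge those changes. (2) Re-prove Lemma~\ref{main:lm:stream:1}'s correctness in the dynamic setting: at any fixed time $t$, conditioned on the adversary being oblivious, the sets $S_i(k)$ are distributed as independent uniform samples from $E^{(t)}$, so the $(\alpha,d_k,L)$-decomposition built on them is, with high probability, a genuine $(\alpha,d_k,L)$-decomposition of $G^{(t)}$; a union bound over all $t\le T=\lceil n^\lambda\rceil$, all $k\in[K]$, and all $L-1$ levels costs only an extra $\operatorname{poly}\log$ factor in the sampler count because $c\gg\lambda$. (3) Invoke Corollary~\ref{cor:test:1}: taking $\textsc{Output}^{(t)}=d_{k'}/(2(1+\epsilon))$ with $k'=\max\{k:Z_L(k)\ne\emptyset\}$ yields $\textsc{Opt}^{(t)}/(4+O(\epsilon))\le\textsc{Output}^{(t)}\le\textsc{Opt}^{(t)}$ once $\alpha=(1+\epsilon)/(1-\epsilon)=2+\Theta(\epsilon)$ and $\epsilon$ is small; the max over $k$ and the check ``$V_L=\emptyset$?'' is $O(1)$ per query. (4) The amortized time bound: adapt the potential function $\B=\sum_v\Phi(v)+\sum_e\Psi(e)$ from equations~\eqref{eq:potential:node}--\eqref{eq:potential:edge}, but now $\Phi$ and $\Psi$ must be defined per sampled-edge-set $S(k)$, and the subtlety the overview paragraph flags is real: whether node $y$ should move between levels in decomposition $k$ depends on its degree $\dd_y(Z_i(k),S_i(k))$ in \emph{one} sampler group $S_i(k)$, but the cost of moving $y$ (rewiring \textsc{Friends} lists) and the induced potential changes at $y$'s neighbors involve $\dd_y(\cdot,S_j(k))$ for \emph{all} $j$. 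I would handle this by summing the Section~\ref{sec:dynamic} potential over the $L-1$ groups and using that, whp, all $\dd_y(Z_i(k),S_i(k))$ are within a $(1\pm\epsilon)$ factor of $(c\log n/d_k)\cdot\dd_y(Z_i(k),E)$ simultaneously (Chernoff + union bound), so degrees in different groups are comparable up to constants --- this lets the per-group potential-drop vs.\ cost inequality go through with $\alpha=2+\Theta(\epsilon)$ essentially verbatim. (5) Account for the ``sampler churn'': each stream update changes $O(1)$ edges in $O(1)$ samplers \emph{that are actually being used for some $d_k$}... actually, here I need to be careful --- every update feeds into every $\ell_0$-sampler, but the key point is that the \emph{recorded} sampled edge of a given sampler changes on only a small fraction of updates in expectation/whp, and each such change is a single edge insertion/deletion into the maintained decomposition, costing $\tilde O(1)$ amortized by the potential argument; since there are $\tilde O(n)$ samplers feeding $K\cdot(L-1)$ decompositions and $T=\tilde\Theta(n^\lambda)$ updates, the total recomputation work is $\tilde O(T)$ provided we also bound the total number of sampler-output changes by $\tilde O(T)$, which follows because $\ell_0$-sampler output is stable except when the sampled element is deleted.

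The hard part --- and the main obstacle --- is Step (5) combined with Step (4): controlling the \emph{total} number of edge insertions/deletions that get pushed into the maintained $(\alpha,d_k,L)$-decompositions across the whole stream. Naively, a single graph update could change the output of many $\ell_0$-samplers at once (e.g.\ deleting an edge that happens to be the stored sample of $\Theta(n)$ different samplers), which would blow up the ``virtual'' update count fed to the dynamic structure. Resolving this requires either a more careful accounting (each graph-edge is the stored sample of only few samplers \emph{in expectation}, or we can afford the worst case because deletions are rare relative to the sampler count over $T$ steps) or a redesign in which the samplers share randomness cleverly; I expect the paper handles this by choosing $c\gg\lambda$ so that the expected/whp number of sampler-resamples per update is $\tilde O(1)$ and by a global union bound over the $T\le n^\lambda$ steps. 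A secondary obstacle is making sure the recalibration of $\pi,\sigma,\{d_k\}$ as $m^{(t)}$ changes does not force wholesale rebuilding of decompositions: since $\pi=m/(2\alpha n)$ only moves by factors of $(1+\epsilon)$ across $\Theta(\log_{1+\epsilon} n)$ ``epochs,'' one can afford to rebuild from scratch at epoch boundaries, amortizing an $\tilde O(n)$ rebuild over $\tilde\Omega(n)$ updates within an epoch, or more simply keep a fixed generous range $[\pi,\sigma]$ with $\pi=1/(4n)$, $\sigma=4n$ as in Section~\ref{sec:dynamic} so that $K=\tilde\Theta(1)$ decompositions always suffice and no recalibration is needed at all --- this is the cleaner route and I would take it. Everything else is a matter of assembling Theorem~\ref{th:sec:dynpart:main}, Lemma~\ref{main:lm:stream:1}, Lemma~\ref{main:th:l0:sample}, and Corollary~\ref{cor:test:1} with the union bounds tracked carefully.
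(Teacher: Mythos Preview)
Your proposal correctly identifies the architecture --- run the dynamic $(\alpha,d,L)$-decomposition on sampled edges, adapt the potential function to account for per-level sample sets, invoke Corollary~\ref{cor:test:1} --- and your Step~(4) sketch of cross-sample degree comparability is essentially what the paper formalizes as the event~$\mathcal{F}$ (Definition~\ref{main:def:sample:main:2}). But Step~(5) contains a genuine gap that your own hedging does not close. With $\tilde\Theta(n)$ independent $\ell_0$-samplers, every stream update must be \emph{fed into} every sampler so that its internal linear sketch stays correct; this costs $\tilde\Theta(1)$ per sampler and hence $\tilde\Theta(n)$ per update, \emph{regardless of how rarely the samplers' outputs change}. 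Your ``sampler output is stable'' argument only bounds the work pushed into the decomposition data structure, not the cost of maintaining the samplers themselves, and no choice of $c\gg\lambda$ helps with that.

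The paper's fix (Section~\ref{sec:new:new:maintain}) is a structurally different sampling primitive: hash each possible edge into one of $s=\tilde\Theta(n)$ buckets via a $w$-wise independent $h:E^*\to[s]$ (Lemma~\ref{lm:pagh:hashing}), and run \emph{one} $\ell_0$-sampler per bucket. An update to edge $e$ then touches only bucket $h(e)$, i.e.\ exactly one sampler, in $\tilde O(1)$ time; moreover a single graph update changes at most two sampled edges per sample set (Theorem~\ref{main:th:dynamic:sample:dense}, part~3), which is precisely the ``churn'' bound you needed but could not establish, and the resulting indicators are negatively associated, which suffices for all the Chernoff steps. Two further points: your final suggestion to fix $\pi=1/(4n)$ fails in the sampled setting, because for $d_k\approx 1/(4n)$ the required sample count $cm\log n/d_k$ is $\tilde\Theta(mn)$, not $\tilde O(n)$; the paper instead sets $\pi^{(t)}=m^{(t)}/(2\alpha n)$, which makes the sample count $s$ \emph{independent of $t$} (equation~\eqref{eq:overview:dense:st}) and so avoids any epoch rebuilding. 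And when $m^{(t)}=\tilde O(n)$ the bucket trick cannot produce $s$ nonempty buckets, so the paper runs a separate sparse-regime mechanism (Theorem~\ref{main:th:dynamic:sample:sparse}) that recovers the \emph{entire} edge set whp and then applies Section~\ref{sec:dynamic} verbatim; the sparse/dense split (Section~\ref{sec:overview}) is essential to the argument, not a detail you can absorb into constants.
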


Note that the algorithm in Theorem~\ref{th:new:main} works only for polynomially many time-steps (since $T = \Theta(n^{\lambda})$ and $\lambda$ is a constant). In contrast, we did not impose this restriction while presenting our semi-streaming algorithm in Section~\ref{sec:sketch}. To see why this is the case, recall that a semi-streaming algorithm maintains a ``sketch'' of the input while processing the stream of edge insertions/deletions. For our algorithm in Section~\ref{sec:sketch}, the sketch is simply the collection of  random samples from the edge set of the input graph. Let $\text{{\sc Sketch}}^{(t)}$ denote the status of the sketch at time-step $t$ (i.e., it corresponds to the graph $G^{(t)}$). Now, the following condition holds in Section~\ref{sec:sketch}: 
\begin{itemize}
\item (P1) Fix any time-step $t$. With high probability, if we run the procedure in Section~\ref{sec:sketch} on $\text{{\sc Sketch}}^{(t)}$, then this gives us a good approximation to the value of $\text{{\sc Opt}}^{(t)}$.
\end{itemize}
A semi-streaming algorithm typically needs to invoke property (P1) {\em only at the end of the stream}, since it answers a query after processing all the edge insertions/deletions. In this section, however, we want an algorithm that can answer a query at any given time-instant (i.e., even in the middle of the stream). Thus, we want the stronger property stated below.
\begin{itemize}
\item (P2) The following event holds with high probability. For {\em every time-step} $t \in [1, T]$, we can get a good approximation to $\text{{\sc Opt}}^{(t)}$ using $\text{{\sc Sketch}}^{(t)}$. 
\end{itemize}
Intuitively, (P2) follows if we take a union bound over the complement of (P1) for all time-steps $t \in [1, T]$. But this can be done only if the length of the interval $[1, T]$ is bounded by some polynomial in $n$. Thus, we require that $T = \Theta(n^{\lambda})$ for some constant $\lambda$.

\subsubsection{Main technical challenges} 
\label{main:sec:challenge}
At a very high level,  the following approach seems natural  for proving Theorem~\ref{th:new:main}. First, using the techniques from Section~\ref{sec:sketch}, maintain $\O(n)$ uniformly random samples from the edge set of the input graph. Next, using the techniques from Section~\ref{sec:dynamic}, maintain $(\alpha, d, L)$-decompositions on these randomly sampled edges. The first step should ensure that the algorithm requires only $\O(n)$ bits of space, while the second step should ensure that the algorithm requires $\O(1)$ update and query times. Unfortunately, however, to implement this simple idea we need to overcome several intricate technical challenges. They are described below. 
\begin{enumerate}
\item As stated at the end of Section~\ref{sec:sketch}, maintaining the random samples from the edge set $E$ requires $\tilde \Theta(n)$ update time, since to process the insertion/deletion of an edge we have to update $\tilde \Theta(n)$ many $\ell_0$-samplers. So the first challenge is to speed up the update time of the subroutine that maintains the randomly sampled edges. This is done in Section~\ref{sec:new:new:maintain}. 
\item The algorithm in Section~\ref{sec:sketch} uses one crucial observation that is captured in Lemma~\ref{main:lm:stream:range}. Specifically, the maximum density of a subgraph of $G = (V, E)$ lies in the range $[m/n, n]$, where $m = |E|$ and $n = |V|$. Thus,  we set $\pi = m/(2\alpha n)$ and $\sigma = 2 (1+\epsilon) n$, so that we have $\alpha \cdot \pi < d^* < \sigma/(2(1+\epsilon))$, where $d^* = \max_{S \subseteq V} \rho(S)$ is the value of the maximum density of a subgraph of $G$ (see the discussion immediately after the proof of Lemma~\ref{main:lm:stream:1}). Then we discretize the range $[\pi, \sigma]$ in powers of $(1+\epsilon)$, by defining the values $d_k, k \in [K],$ as per Corollary~\ref{cor:test:1}. For each $k \in [K]$, we construct an $(\alpha, d_k, L)$-decomposition. Finally, we approximate the value of $d^*$ by looking at the topmost levels (i.e., the node set $V_L$) of each of these decompositions. To implement this approach in Section~\ref{sec:sketch}, we  wait till the end of the stream to get the  value of $m$  after all insertions/deletions. This is of crucial importance since the degree-threshold $d_k$ (as per Corollary~\ref{cor:test:1}) for the $k^{th}$ $(\alpha, d, L)$-decomposition depends on the value of $\pi$, which, in turn, depends on $m$. In this section, however, we have to maintain a solution at every time-step in the interval $[1, T]$. Consequently,  we have to maintain an $(\alpha, d_k, L)$-decomposition for each $k \in [K]$  throughout the interval $[1, T]$. Hence, the degree-threshold $d_k$ of the $k^{th}$ decomposition changes over time as edges are inserted/deleted into the input graph. Thus, we need to extend the dynamic algorithm from Section~\ref{sec:dynpart} (which maintains an $(\alpha, d, L)$-decomposition for a fixed $d$) so that it can handle the  scenario where $d$  changes over time. See Section~\ref{sec:dense} for further details. 
\item Suppose that we want to construct an $(\alpha, d, L)$-decomposition using $\O(n)$ bits of space. In Section~\ref{sec:sketch}, to achieve this goal we used a collection of sets $\{S_i\}, i \in \{1, \ldots, L-1\}$. Recall  the proof of Lemma~\ref{main:lm:stream:1} for details. Specifically, each $S_i$ consisted of $\Theta(m \log n/d)$ many uniformly random samples from the edge set $E$. Given the subset of nodes $Z_i$, we used the samples  $S_i$ to construct the next subset  $Z_{i+1} \subseteq Z_i$. Thus, we used different collections of sampled edges for different levels of the $(\alpha, d, L)$-decomposition. The reason behind this was as follows: For the proof of Lemma~\ref{main:lm:stream:1} to be valid, it was crucial that the samples used for defining the set $Z_{i+1}$ be chosen independently of the samples used for the sets $Z_1, \ldots, Z_i$. This is in sharp contrast to our dynamic algorithm  in Section~\ref{sec:dynpart}  for maintaining an $(\alpha, d, L)$-decomposition:  that algorithm uses the same edge set $E$ for different levels of the decomposition. Thus, we need to find a way to extend the potential function based analysis from Section~\ref{sec:dynpart} to a setting where different levels of the $(\alpha, d, L)$-decomposition are concerned with different sets of edges (chosen uniformly at random). See Section~\ref{sec:dense} for further details. 
\end{enumerate}

\paragraph{Roadmap for the rest of Section~\ref{sec:combine}.} The rest of this section is organized as follows. 
\begin{itemize}
\item In Section~\ref{sec:new:new:maintain}, we show how to maintain random samples from the edge set of the input graph in $\O(n)$ space and $\O(1)$ update time. For technical reasons, we have to run two separate algorithms for this purpose. Roughly speaking, the first algorithm (as stated in Theorem~\ref{main:th:dynamic:sample:sparse}) maintains the entire edge set of the graph whenever $m^{(t)} = \O(n)$, whereas the second algorithm (as stated in Theorem~\ref{main:th:dynamic:sample:dense}) maintains $\tilde \Theta(n)$ random samples from the edge set of the graph whenever $m^{(t)} = \tilde \Omega(n)$. 
\item In Section~\ref{sec:overview}, we present a high level overview of our main algorithm. The main idea is to classify each time-step as either ``dense'' or ``sparse'', depending on the number of edges in the input graph. This classification is done in such a way that Theorem~\ref{main:th:dynamic:sample:sparse} applies to all sparse time-steps, whereas Theorem~\ref{main:th:dynamic:sample:dense} applies to all dense time-steps. 
\item In Section~\ref{sec:sparse}, we present our algorithm for maintaining a $(4+\epsilon)$-approximation to $\text{{\sc Opt}}^{(t)}$ during all the sparse time-steps (see Theorem~\ref{th:sample:sparse}). This algorithm takes as input the set of edges maintained by the subroutine from Theorem~\ref{main:th:dynamic:sample:sparse}.
\item In Section~\ref{sec:dense}, we present our algorithm for maintaining a $(4+\epsilon)$-approximation to $\text{{\sc Opt}}^{(t)}$ during all the dense time-steps (see Theorem~\ref{th:sample:dense}). This algorithm takes as input the set of edges maintained by the subroutine from Theorem~\ref{main:th:dynamic:sample:dense}.
\item Theorem~\ref{th:new:main} follows from Theorem~\ref{main:th:dynamic:sample:sparse} and Theorem~\ref{main:th:dynamic:sample:dense}.
\end{itemize}

\subsection{Maintaining the randomly sampled edges in $\O(1)$ update time}
\label{sec:new:new:maintain}

Intuitively, we want to maintain  $s$ uniformly random samples from the edge-set $E$ of the input graph $G = (V, E)$, for some $s = \tilde \Theta(n)$. In Section~\ref{sec:sketch}, we achieved this by running $\tilde \Theta(n)$ mutually independent copies of the $\ell_0$-sampler. This ensured a space complexity of $\tilde \Theta(n)$. However, after each edge insertion/deletion in the input graph, we had to update each of the $\ell_0$-samplers. So the update time of our algorithm became $\tilde \Theta(n)$. In this section, we show how to bring down this update time (for maintaining the randomly sampled edges) to $\tilde \Theta(1)$ without compromising on the space complexity. 

To see the high level idea behind our approach, suppose that the input graph has a large number of edges, i.e.,  $s \ll m = |E|$. We maintain $s$ ``buckets'' $B_1, \ldots, B_s$. Whenever an edge $e$ is inserted into the input graph, we insert the edge into a bucket chosen uniformly at random. When the edge gets deleted from the input graph, we also delete it from the bucket it was assigned to. Thus, the buckets $B_1, \ldots, B_s$ give a random partition of the edge set of the input graph $G = (V, E)$. We can maintain this partition using an appropriate hash function. Now, we run $s$ mutually independent copies of the $\ell_0$ sampler, {\em one for each bucket $B_i, i \in \{1, \ldots, s\}$}. Let $S$ be the collection of edges returned by these $\ell_0$-samplers. Thus, we have $S \subseteq E$, $|S| = s$, and any given edge $e \in E$ belongs to $S$ with a probability that is very close to  $s/m$. In other words, the set $S$ is a set of $s$ edges chosen uniformly at random from the edge set $E$ (without replacement). This solves our problem. The space requirement is $\tilde \Theta(s) = \tilde \Theta(n)$ since we run $s$ copies of the $\ell_0$-sampler and each of this samplers needs $\tilde \Theta(1)$ space. Furthermore, unlike our algorithm in Section~\ref{sec:sketch}, here if an edge  insertion/deletion takes place in the input graph $G$, then we only need to update a single $\ell_0$ sampler (the one running on the bucket that edge was assigned to). This improves the update time to $\tilde \Theta(1)$. 

To be more specific, we present two results in this section. In Theorem~\ref{main:th:dynamic:sample:sparse}, we show how to maintain the edge set of the input graph $G = (V, E)$ at all time-steps where it is ``sparse'' (i.e., $m = |E|$ is small). Next, in Theorem~\ref{main:th:dynamic:sample:dense}, we show how to maintain $\tilde \Theta(n)$ uniformly random samples (without replacement) from the edge set $E$ at all time-steps where the input graph is ``dense'' (i.e., $m = |E|$ is large).  The proofs of Theorems~\ref{main:th:dynamic:sample:sparse} and~\ref{main:th:dynamic:sample:dense} appear in Sections~\ref{sub:sec:dynamic:sample:sparse} and~\ref{sub:sec:dynamic:sample:dense} respectively. Both the proofs crucially use well known results on $\ell_0$-sampling in a streaming setting (see Lemma~\ref{main:th:l0:sample}) and $w$-wise independent  hash functions (see Lemma~\ref{lm:pagh:hashing}).

\begin{theorem}
\label{main:th:dynamic:sample:sparse}
Starting from an empty graph, we can process the first $T$ updates in a dynamic stream so as to maintain a random  subset of edges $F^{(t)} \subseteq E^{(t)}$ at each time-step $t \leq T$. This requires $\O(n)$ space and $\O(1)$ worst case update time. Furthermore, the following conditions hold with high probability.
\begin{equation}
F^{(t)} = E^{(t)} \text{ at each time-step } t \leq T \text{ with } m^{(t)} \leq 8 \alpha c^2 n \log^2 n.  \label{eq:th:sparse}
\end{equation}
\end{theorem}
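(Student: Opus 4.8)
The plan is to implement the bucketing scheme sketched just above the theorem, but with a per-bucket \emph{exact recovery} primitive in place of a single $\ell_0$-sampler. Set the number of buckets to $s = \lceil 8\alpha c^2 n \log^2 n \rceil$ and a cap to $\kappa = \Theta(c\log n)$, and use Lemma~\ref{lm:pagh:hashing} with $q = s$ and $w = \Theta(c\log n)$ to draw, once and for all before the stream starts, a $w$-wise independent hash function $h : E^* \to [s]$; since the adversary is oblivious, $h$ is independent of the update sequence, it uses $\O(1)$ bits, and $h(e)$ is evaluated in $O(1)$ time. For each bucket $B_i$, $i \in [s]$, we maintain (a) an exact integer counter of how many current edges map to $B_i$, and (b) a recovery structure fed the substream of those updates $e$ with $h(e) = i$ — concretely, sufficiently many ($\O(1)$) independent $\ell_0$-samplers from Lemma~\ref{main:th:l0:sample} whose reported elements are collected, so that by a coupon-collector argument the full contents of $B_i$ are recovered with high probability whenever $B_i$ holds at most $\kappa$ edges; the randomness of the samplers is independent of that of $h$. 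Each bucket's structure uses $\O(1)$ space and $\O(1)$ worst-case update time. We define $F^{(t)}$ to be the union, over all buckets $B_i$ whose counter is at most $\kappa$, of the set of edges recovered from $B_i$. Since each such bucket contributes at most $\kappa$ edges, $|F^{(t)}| \le s\kappa = \O(n)$, and the total space is $\O(s) = \O(n)$.

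To process an update $\textsc{Insert}(u,v)$ or $\textsc{Delete}(u,v)$: compute $i = h((u,v))$ in $O(1)$ time, adjust the counter of $B_i$, feed the update to the recovery structure of $B_i$ ($\O(1)$ time), and (if one wishes to keep $F$ materialized) re-decode $B_i$ and patch the stored copy of $F$, which changes by at most $O(\kappa)$ edges. Every operation therefore runs in $\O(1)$ worst-case time. The inclusion $F^{(t)} \subseteq E^{(t)}$ is immediate: with high probability any edge decoded from $B_i$ really is a current edge of $B_i$, hence lies in $E^{(t)}$; this is the easy direction.

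It remains to prove \eqref{eq:th:sparse}, for which two high-probability events suffice. Event (i): every per-bucket recovery structure answers correctly at each of the $T = \lceil n^\lambda \rceil$ time-steps; this follows from the guarantee of Lemma~\ref{main:th:l0:sample} by a union bound over $\poly(n)$ time-steps and $s = \poly(n)$ buckets, and is routine. Event (ii): at every time-step $t$ with $m^{(t)} \le 8\alpha c^2 n \log^2 n$, no bucket holds more than $\kappa$ edges — so that, by (i), every bucket is fully recovered and hence $F^{(t)} = E^{(t)}$. Establishing (ii) is the main obstacle, and is the only place where real work is needed: it is a balls-into-bins statement with merely $\Theta(c\log n)$-wise independent placement rather than full independence. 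Since $m^{(t)} \le s$, the load of a fixed bucket is a sum of at most $s$ indicator variables each of mean $1/s$, so its expectation is at most $1$; a $w$-th-moment (Markov) bound for sums of $w$-wise independent indicators, with $w = \Theta(c\log n)$ even, then bounds the probability that this load exceeds $\kappa = \Theta(c\log n)$ by $n^{-\Omega(c)}$. Because $c \gg \lambda$, a union bound over the $s$ buckets and the $T$ time-steps leaves total failure probability $n^{-\Omega(1)}$, which gives (ii). Combining (i) and (ii) yields \eqref{eq:th:sparse}, and together with the space and update-time bounds established above this completes the proof.
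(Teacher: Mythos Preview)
Your proof is correct and follows essentially the same scheme as the paper: hash edges into $s=\Theta(\alpha c^2 n\log^2 n)$ buckets and run $\tilde O(1)$ independent $\ell_0$-samplers per bucket so that any bucket with $O(c\log n)$ edges is fully recovered. The only noteworthy difference is in the hash-independence parameter: the paper takes the hash to be $w$-wise independent with $w=8\alpha c^2 n\log^2 n$, so that whenever $m^{(t)}\le w$ the bucket-load indicators are \emph{fully} independent and Chernoff applies directly, whereas you take $w=\Theta(c\log n)$ and instead use a $w$-th-moment (equivalently, the union bound $\Pr[\text{load}\ge\kappa]\le\binom{m^{(t)}}{\kappa}s^{-\kappa}\le 1/\kappa!$) to get the same $n^{-\Omega(c)}$ tail; both choices fit in $\tilde O(n)$ space and give the claimed guarantee.
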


Note that the proof of Theorem~\ref{main:th:dynamic:sample:sparse} is by no means obvious, for the following reason. It might happen that we have not included an edge $e$ from the input graph in our sample (since the graph currently contains many edges). However, with the passage of time, a lot of edges get deleted from the graph so as to make it sparse, and at that time we might need to recover the edge $e$.

\begin{theorem}
\label{main:th:dynamic:sample:dense}
Fix any positive integer $s \leq 2 \alpha c n \log n$. Starting from an empty graph, we can process the first $T$ updates in a dynamic stream so as to maintain a random subset of edges $S^{(t)} \subseteq E^{(t)}$ at each time-step $t \in [T]$. This requires $\O(n)$ space and $\O(1)$ worst case update time. 
For every edge $e \in E^{(t)}$, let $X_e^{(t)} \in \{0, 1\}$ be an indicator random variable that is set to one iff $e \in S^{(t)}$.  Then we have:
\begin{enumerate}
\item The following condition holds with high probability. 
\begin{equation}
E\left[X_e^{(t)}\right] \in \left[ (1\pm \epsilon) \cdot \frac{s}{m^{(t)}}\right] \text{ for all edges } e \in E^{(t)}, \text{ at each } t \leq T \text{ with } m^{(t)} \geq 4 \alpha c^2 n \log^2 n. \label{eq:th:dense} 
\end{equation}
\item At each time-step $t \leq T$, the random variables $\{ X_e^{(t)} \}, e \in E^{(t)},$ are negatively associated. 
\item  Insertion/deletion of an edge in $G$ leads to at most two insertion/deletions in the set $S$. 
\end{enumerate}
\end{theorem}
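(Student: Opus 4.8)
\textbf{The plan} is to realize $S^{(t)}$ by a two-stage construction: a random hash that scatters the edges of $G$ across $s$ buckets, together with one independent $\ell_0$-sampler (Lemma~\ref{main:th:l0:sample}) run on each bucket. Concretely, using Lemma~\ref{lm:pagh:hashing} I would fix at the outset a $w$-wise independent uniform hash function $h:E^*\to[s]$ with $w=\Theta(\log n)$; this uses $\tilde\Theta(1)$ space and $O(1)$ evaluation time. Maintain $s$ mutually independent copies of the $\ell_0$-sampler, where the $i$-th copy receives exactly those updates $(u,v)$ with $h(u,v)=i$. When an edge $e$ is inserted into or deleted from $G$, evaluate $h(e)$ in $O(1)$ time and route the update only to the $h(e)$-th sampler; the worst-case update time is then $\tilde\Theta(1)$, and the total space is $s\cdot\tilde\Theta(1)=\tilde\Theta(n)$ since $s\le 2\alpha c n\log n$. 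Let $S^{(t)}$ be the set of edges returned by the $s$ samplers at time $t$; the buckets partition $E^{(t)}$, so there are no repetitions and $|S^{(t)}|\le s$. Part~3 is then immediate: an update to $e$ disturbs only the $h(e)$-th sampler, whose output changes by at most discarding one edge and producing another, so at most two members of $S$ change.

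For Part~1, I would condition on a fixed $h$. On the $1-1/\poly(n)$ event that a given sampler succeeds, the $h(e)$-th sampler returns $e$ with conditional probability exactly $1/N^{(t)}_{h(e)}$, where $N^{(t)}_i:=|\{e'\in E^{(t)}:h(e')=i\}|$; thus $E[X_e^{(t)}\mid h]=(1\pm 1/\poly(n))\cdot 1/N^{(t)}_{h(e)}$. It therefore suffices to show that, with high probability over $h$, all bucket loads are balanced: for every $t\le T$ with $m:=m^{(t)}\ge 4\alpha c^2 n\log^2 n$ and every $i\in[s]$, $N^{(t)}_i\in[(1\pm\epsilon/2)\,m/s]$. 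But $N^{(t)}_i=\sum_{e'\in E^{(t)}}\mathbf{1}[h(e')=i]$ is a sum of $w$-wise independent Bernoulli$(1/s)$ variables with mean $m/s\ge 2c\log n=\Omega(\log n)$, so a $w$-wise-independence tail inequality with $w=\Theta(\log n)$ yields this deviation bound with failure probability $1/\poly(n)$, and a union bound over the at most $s\cdot T$ pairs $(i,t)$ finishes it. Combining, $E[X_e^{(t)}\mid h]\in[(1\pm\epsilon)\,s/m]$, which is the asserted (high-probability-over-$h$) guarantee.

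For Part~2, I would condition on $h$, i.e.\ on the partition of $E^{(t)}$ into buckets $B_1,\dots,B_s$. Each sampler selects exactly one edge of its bucket uniformly at random, so $(X_e^{(t)})_{e\in B_i}$ follows the ``exactly-one-winner'' distribution, which is negatively associated, and these vectors are mutually independent across $i$; by closure of negative association under independent unions, $(X_e^{(t)})_{e\in E^{(t)}}$ is negatively associated conditioned on $h$. The residual issue is to pass from this conditional statement to an unconditional one.

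\textbf{The step I expect to be the main obstacle is exactly this last one}, since negative association is not preserved under arbitrary mixtures. I plan to handle it by exploiting that the bucket-membership indicators $\{\mathbf{1}[h(e)=i]\}_{e,i}$ are themselves negatively associated (balls into bins) and composing this with the within-bucket negative association; failing that, the only later use of Part~2 — a Chernoff bound with negative dependence (Theorem~\ref{th:chernoff:negative}) applied to sampled degrees inside an $(\alpha,d,L)$-decomposition — can be run after conditioning on $h$, which is fixed for the whole execution anyway, so the conditional version already suffices. By contrast Part~1 is routine once a suitable $w$-wise-independence tail bound is invoked, which is precisely why we take $w=\Theta(\log n)$; and Part~3 needs nothing beyond the definition of the construction.
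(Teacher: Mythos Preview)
Your construction is the same as the paper's---hash $E^*$ into $s$ buckets via a $w$-wise independent hash and run one independent $\ell_0$-sampler per bucket---and your arguments for Parts~2 and~3 match the paper's (indeed your discussion of the conditional-versus-unconditional issue in Part~2 is more honest than the paper's: the paper's two-line justification is really only a conditional-on-$h$ argument, and the theorem should be read that way, since already Part~1's ``$E[X_e^{(t)}]\in[(1\pm\epsilon)s/m^{(t)}]$ with high probability'' only makes sense as a conditional expectation given $h$).

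The one substantive difference is in the concentration step for Part~1. You take $w=\Theta(\log n)$ and invoke a limited-independence tail bound on each bucket load $N_i^{(t)}$. The paper instead takes $w=2cs\log n=\tilde\Theta(n)$ and argues as follows: since $m^{(t)}\ge 2cs\log n = w$, partition $E^{(t)}$ into groups $H_{j'}$ each of size between $cs\log n$ and $2cs\log n\le w$; within any such group the hash values are \emph{fully} independent (the group has at most $w$ edges), so the ordinary Chernoff bound applies to each $|H_{j'}\cap B_j|$; summing over $j'$ gives $|B_j^{(t)}|\in[(1\pm\epsilon)m^{(t)}/s]$. Both routes fit comfortably in the $\tilde O(n)$ space budget (the paper's hash costs $\tilde O(n)$ bits by Lemma~\ref{lm:pagh:hashing}, yours $\tilde O(1)$, but the $s$ samplers dominate anyway). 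The trade-off is that the paper avoids citing any limited-independence concentration inequality, at the price of a much larger hash family; your route keeps the hash tiny but needs that extra tool.
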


\subsubsection{Proof of Theorem~\ref{main:th:dynamic:sample:sparse}}
\label{sub:sec:dynamic:sample:sparse}

\paragraph{The algorithm.} We define $w = q = 8\alpha c^2 n \log^2 n$,   and build a $w$-wise independent hash function $h : E^* \rightarrow [w]$ as per Lemma~\ref{lm:pagh:hashing}. This  requires $\O(n)$ space, and the hash value $h(e)$ for any given $e \in E^*$ can be evaluated in $O(1)$ time. For all $t \in [T]$ and $i \in [w]$, let $B_i^{(t)}$ denote the  set of edges $e \in E^{(t)}$ with $h(e) = i$. So the edge set $E^{(t)}$ is partitioned into $w$ random subsets $B_1^{(t)}, \ldots, B_w^{(t)}$. 

As per Lemma~\ref{main:th:l0:sample}, for each $i \in [w]$ we run $r = c^2 \log^2 n$ copies of a subroutine called {\sc Streaming-Sampler}. Specifically, for every $i \in [w]$ and $j \in [r]$, the subroutine {\sc Streaming-Sampler}$(i,j)$ maintains a uniformly random sample from the set $B_i^{(t)}$ in $\O(1)$  space and $\O(1)$ worst case update time. Furthermore, the subroutines \{{\sc Streaming-Samplers}$(i,j)$\}, $i \in [w], j \in [r]$, use mutually independent random bits. Let $Y^{(t)}$ denote the collection of the random samples maintained by all these {\sc Streaming-Sampler}s. Since we have multiple {\sc Streaming-Samplers} running on the same set $B_i^{(t)}, i \in [w]$, a given edge can occur multiple times in $Y^{(t)}$. We define $F^{(t)} \subseteq E^{(t)}$  to be the collection of those edges in $E^{(t)}$ that appear at least once in $Y^{(t)}$. Our algorithm maintains the subset $F^{(t)}$ at each time-step $t \in T$. 

\paragraph{Update time.} Suppose that an edge $e$ is inserted into (resp. deleted from) the graph $G = (V,E)$ at  time-step $t \in [T]$. To handle this edge insertion (resp. deletion), we first compute the value $i = h(e)$ in constant time. Then we insert (resp. delete) the edge to the set $B_i^{(t)}$, and call the subroutines {\sc Streaming-Sampler}$(i,j)$ for all $j \in [r]$ so that they can accordingly update the random samples maintained by them. Each {\sc Streaming-Sampler} takes $\O(1)$ time in the worst case to handle an update. Since $r = O(\log^2 n)$, the total time taken by our algorithm to handle an edge insertion/deletion is  $O(r \text{ poly} \log n) = \O(1)$.

\paragraph{Space complexity.} We need $\O(1)$ space to implement each {\sc Streaming-Sampler}$(i,j)$, $i \in [w], j \in [r]$. Since $w = O(n \log^2 n)$ and $r = O(\log^2 n)$, the total space required by all the streaming samplers is $O(w r \text{ poly} \log n) = \O(n)$. Next, note that we can construct the hash function $h$ using $\O(n)$ space. These observations imply that the total space requirement of our scheme is  $\O(n)$. 

\paragraph{Correctness.} It remains to show that with high probability, at each  time-step $t \in [T]$ with $m^{(t)} \leq 8 \alpha c^2  n \log^2 n$, we have $F^{(t)} = E^{(t)}$. 

\medskip
Fix any  time-step $t \in [T]$ with $m^{(t)} \leq 8 \alpha c^2  n \log^2 n$. Consider any $i \in [w]$. The probability that any given edge $e \in E^{(t)}$ has $h(e) = i$  is equal to $1/w$. Since $w = 8 \alpha c^2 n \log^2 n$, the linearity of expectation implies that  $E\left[|B_i^{(t)}|\right] = m^{(t)}/w \leq 1$. Since the hash function $h$ is $w$-wise independent, and since  $m^{(t)} \leq  w$, we can apply the Chernoff bound and infer that    $|B_i^{(t)}| \leq c \log n$ with high probability. Now, a union bound over all $i \in [w]$ shows that with high probability, we have $|B_i^{(t)}| \leq c \log n$ for all $i \in [w]$. Let us call this event $\mathcal{E}^{(t)}$.

Condition on the event $\mathcal{E}^{(t)}$. Fix any edge $e \in E^{(t)}$. Let  $h(e) = i$, for some $i \in [w]$. We know that  $e \in B_i^{(t)}$, that there are at most $c \log n$ edges in $B_i^{(t)}$, and that  our algorithm runs $r = c^2 \log^2 n$  many {\sc Streaming-Sampler}s on $B_i^{(t)}$. Each such {\sc Streaming-Sampler} maintains (independently of others) a uniformly random sample from $B_i^{(t)}$. Consider the event where the edge $e$ is not picked in any of these random samples. This event occurs with probability at most $(1-1/(c\log n))^{c^2 \log^2 n} \leq 1/n^c$. 

In other words, conditioned on the event $\mathcal{E}^{(t)}$,  an edge $e \in E^{(t)}$ appears in $F^{(t)}$ with high probability. Taking a union bound over all $e \in E^{(t)}$, we infer that   $F^{(t)} = E^{(t)}$ with high probability, conditioned on the event $\mathcal{E}^{(t)}$. Next, we recall that the event $\mathcal{E}^{(t)}$ itself occurs with high probability. Thus, we get that the event $F^{(t)} = E^{(t)}$ also occurs with high probability. To conclude the proof, we take a union bound over all time-steps $t \in [T]$ with $m^{(t)} \leq 8 \alpha c^2 n \log^2 n$.

\subsubsection{Proof of Theorem~\ref{main:th:dynamic:sample:dense}}
\label{sub:sec:dynamic:sample:dense}


We define $w = 2 cs\log n$,  $q = s$, and build a $w$-wise independent hash function $h : E^* \rightarrow [s]$ as per Lemma~\ref{lm:pagh:hashing}. This  requires $\O(n)$ space, and the hash value $h(e)$ for any given $e \in E^*$ can be evaluated in $O(1)$ time. 

This hash function  partitions  the edge set $E^{(t)}$ into $s$ mutually disjoint buckets $\{B_j^{(t)}\}, j \in [s]$, where the bucket $B_j^{(t)}$ consists of those edges $e \in E^{(t)}$ with $h(e) =j$. For each $j \in [s]$, we  run an independent copy of $\ell_0$-{\sc Sampler}, as per Lemma~\ref{main:th:l0:sample}, that  maintains a uniformly random sample from $B_j^{(t)}$.  The set $S^{(t)}$ consists of the collection of outputs of all these $\ell_0$-{\sc Samplers}. Note that (a) for each $e \in E^*$, the hash value $h(e)$ can be evaluated in constant time~\cite{Pagh}, (b) an edge insertion/deletion affects  exactly one  of the buckets, and (c) the $\ell_0$-{\sc Sampler}  of the affected bucket can be updated in $\O(1)$ time. Thus, we infer that this procedure  handles an edge insertion/deletion in the input graph in $\O(1)$ time, and furthermore, since $s = \O(n)$, the procedure can be implemented in $\O(n)$ space. We now show that this algorithm satisfies the three properties stated in Theorem~\ref{main:th:dynamic:sample:dense}.

\begin{enumerate}
\item Fix any  time-step $t \in [1, T]$ where $m^{(t)} \geq 4 \alpha c^2 n \log^2 n$. Since $s \leq 2 \alpha c n \log n$, we infer that  $m^{(t)} = |E^{(t)}| \geq 2 c s \log n$. Hence, we can  partition (purely as a thought experiment) the edges in $E^{(t)}$ into at most polynomially many groups $\left\{H_{j'}^{(t)}\right\}$, in such a way that the size of each group lies  between $c s \log n$ and $2c s \log n$. Thus, for any $j \in [s]$ and any $j'$, we have $|H_{j'}^{(t)} \cap B_j^{(t)}| \in [c\log n, 2c\log n]$ in expectation. Since the hash function $h$ is $(2cs \log n)$-wise independent, by applying a Chernoff  bound  we infer that with high probability, the value $|H_{j'}^{(t)} \cap B_j^{(t)}|$ is within a  $(1\pm \epsilon)$ factor of its expectation. Applying the union bound over all $j, j'$, we infer that with high probability,  the sizes of all the sets $\left\{H_{j'}^{(t)} \cap B_j^{(t)}\right\}$ are within a $(1\pm \epsilon)$ factor of  their expected values -- let us call this event $\mathcal{R}^{(t)}$. Note that $E[|B_j^{(t)}|] = m^{(t)}/s$ and $|B_j^{(t)}| = \sum_{j'} |B_j^{(t)} \cap H_{j'}^{(t)}|$. Hence, under the event $\mathcal{R}^{(t)}$, for all $j \in [s]$ the quantity  $|B_j^{(t)}|$ is within a  $(1\pm \epsilon)$ factor of $m^{(t)}/s$. Under the same event $\mathcal{R}^{(t)}$, due to the $\ell_0$-{\sc Sampler}s, the probability that a given edge $e \in E^{(t)}$ with $h(e) = j$ (say) becomes part of $S^{(t)}$ is  within a $(1\pm \epsilon)$ factor of  $1/|B_j^{(t)}|$, which, in turn, is within a $(1\pm \epsilon)$ factor of $s/m^{(t)}$. This implies that for any given edge $e \in E^{(t)}$, we have $E\left[X_e^{(t)}\right] = \Pr\left[e \in S^{(t)}\right] \in \left[(1\pm \epsilon) \cdot s / m^{(t)}\right]$ with high probability. 
\item The property of negative association follows from the facts that (a) if two edges are hashed to different buckets, then they are included in $S_i^{(t)}$ in a mutually independent manner, and (b) if they are hashed to the same bucket, then they are never simultaneously included in $S_i^{(t)}$. 
\item Finally, when an edge $e$ is inserted into (resp. deleted from) the input graph, only the $\ell_0$-sampler running on the bucket $B_j$, for $j = h(e)$ gets affected. This implies that a single edge insertion/deletion in the input graph leads to at most two edge insertions/deletions in the random subset of edges $S \subseteq E$. 
\end{enumerate}

\subsection{A high level overview of our algorithm: Sparse and Dense Intervals}
\label{sec:overview}

Our algorithm for Theorem~\ref{th:new:main} will consist of four different components. They are described below.
\begin{itemize}
\item (P1) This subroutine implements an algorithm as per Theorem~\ref{main:th:dynamic:sample:sparse}. 
\item (P2) This subroutine implements $\tilde \Theta(1)$  independent copies of the algorithm  in Theorem~\ref{main:th:dynamic:sample:dense}. 
\item (P3) For each $t \in [1, T]$, this subroutine classifies time-step $t$ as either ``dense'' or ``sparse''. This is done on the fly, i.e., immediately after receiving the $t^{th}$ edge insertion/deletion in $G$. Consequently,  the range $[1,T]$ is partitioned into ``dense'' and ``sparse'' ``intervals'', where a dense (resp. sparse) interval is  a maximal and contiguous block of dense (resp. sparse) time-steps. For example, we say that $[t_0, t_1] \subseteq [1, T]$ is a dense interval iff (a) time-step $t$ is dense for all $t \in [t_0,t_1]$, (b) either $t_0 = 1$ or time-step $(t_0-1)$ is sparse, and (c) either $t_1 = T$ or time-step $(t_1+1)$ is sparse. The sparse time-intervals are defined analogously.  The subroutine ensures the following properties.
\begin{enumerate}
\item We have $m^{(t)} \leq 8 \alpha c^2 n \log^2 n$ for every sparse time-step $t \in [1, T]$. In other words, the input graph has a small number of edges in a sparse time-step. Note that  $8 \alpha c^2 n \log^2 n$ is also the threshold used in Theorem~\ref{main:th:dynamic:sample:sparse}. Thus, with high probability, equation~\ref{eq:th:sparse} holds at all sparse time-steps.
\item We have $m^{(t)} \geq 4 \alpha c^2 n \log^2 n$ for every dense time-step $t \in [1, T]$.   In other words, the input graph has a large number of edges in a dense time-step. Note that $4 \alpha c^2 n \log^2 n$ is also the threshold used in Theorem~\ref{main:th:dynamic:sample:dense} (part 1). Thus, with high probability, equation~\ref{eq:th:dense} holds at all dense time-steps.
\item If a dense interval begins at a time-step $t$, then we have $m^{(t)} = 1+8 \alpha c^2 n \log^2 n$.
\item Every dense (resp. sparse) interval  spans at least $4 \alpha c^2 n \log^2 n$  time-steps, unless it is the interval ending at  $T$.\end{enumerate}
The classification of a time-step as dense or sparse is done according to the procedure outlined in Figure~\ref{fig:sample:classify}. This procedure can be easily implemented in $\tilde \Theta(1)$ space and $\Theta(1)$ update time, since all we need is a counter that keeps track of the number of edges in the input graph while processing the stream of updates. Furthermore, it is easy to check that the procedure in Figure~\ref{fig:sample:classify} ensures all the four properties described above. 
\item (P4) This subroutine takes as input the set of edges maintained by (P1). Furthermore, it also has access to the output of the subroutine (P3). At all sparse time-steps, it maintains a value $\text{{\sc Output}}^{(t)}$ that gives a $(4+\epsilon)$-approximation of $\text{{\sc Opt}}^{(t)}$. See Section~\ref{sec:sparse} for further details.
\item (P5) This subroutine takes as input the  subsets of edges maintained by (P2). Furthermore, it also has access to the output of the subroutine (P3). At all dense time-steps, it maintains a value $\text{{\sc Output}}^{(t)}$ that gives a $(4+\epsilon)$-approximation of $\text{{\sc Opt}}^{(t)}$. See Section~\ref{sec:dense} for further details.
\item Theorem~\ref{th:new:main} follows from Theorem~\ref{th:sample:sparse} and Theorem~\ref{th:sample:dense}. 
\end{itemize}

\begin{figure}[htbp]
\centerline{\framebox{
\begin{minipage}{5.5in}
\begin{tabbing}
01. \ \ \ \ \= The time-step $1$ is classified as sparse. \\
02. \> {\sc For } $t = 2$ to $T$\\
03. \> \ \ \ \ \ \ \ \ \ \= {\sc If} time-step $(t-1)$ was sparse, {\sc Then} \\
04. \> \> \ \ \ \ \ \ \ \ \= {\sc If} $m^{(t)} \leq 8 \alpha c^2 n \log^2 n$, {\sc Then} \\
05. \> \> \> \ \ \ \ \ \ \ \ \= Classify  time-step $t$ as sparse. \\
06. \> \> \> {\sc Else if} $m^{(t)} > 8 \alpha c^2 n \log^2 n$, {\sc Then} \\
07. \> \> \> \> Classify  time-step $t$ as dense. \\
08. \> \> {\sc Else if} time-step $(t-1)$ was dense, {\sc Then} \\
09. \> \> \> {\sc If} $m^{(t)} \geq 4 \alpha c^2 n \log^2 n$, {\sc Then} \\
10. \> \> \> \> Classify time-step $t$ as dense. \\
11. \> \> \> {\sc Else if} $m^{(t)} < 4 \alpha c^2 n \log^2 n$, {\sc Then} \\
12. \> \> \> \> Classify time-step $t$ as sparse. 
\end{tabbing}
\end{minipage}
}}
\caption{\label{fig:sample:classify} CLASSIFY-TIME-STEPS$(.)$.}
\end{figure}

\subsection{Algorithm for sparse intervals}
\label{sec:sparse}

In this section, we show how to maintain a $(4+\epsilon)$-approximation to the value of the densest subgraph during the sparse time-intervals. Specifically, we prove the following theorem.

\begin{theorem}
\label{th:sample:sparse}
There is an algorithm that uses $\O(n)$  space and maintains a value $\text{{\sc Output}}^{(t)}$ at every sparse time-step $t \leq T$. The algorithm  gives the following two guarantees with high probability.
\begin{itemize}
\item   $\text{{\sc Opt}}^{(t)}/(4+\epsilon) \leq \text{{\sc Output}}^{(t)} \leq \text{{\sc Opt}}^{(t)}$ at every sparse time-step $t \leq T$.
\item  The algorithm takes $\O(T)$ time to process the stream of $T$ updates in $G$. In other words, the amortized update time is $\O(1)$. 
\end{itemize}
\end{theorem}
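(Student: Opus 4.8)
The plan is to run the fully dynamic data structure of Section~\ref{sec:dynamic} \emph{on top of} the edge set maintained by subroutine (P1). Concretely: subroutine (P1) (Theorem~\ref{main:th:dynamic:sample:sparse}) maintains a set $F^{(t)}\subseteq E^{(t)}$ in $\O(n)$ space and $\O(1)$ worst-case update time, and I run a single copy of the deterministic algorithm of Theorem~\ref{th:sec:dynamic:main} whose ``input graph'' is taken to be the dynamic graph $(V,F^{(t)})$, reporting as $\text{{\sc Output}}^{(t)}$ whatever value that copy reports. Each $G$-update touches exactly one hash bucket, hence only the $r=c^2\log^2 n$ streaming samplers attached to that bucket, and each of those samplers changes its output by at most one edge; if I store $F^{(t)}$ together with, for each of its edges, the number of samplers currently outputting it, then I can identify the resulting $O(r)=\O(1)$ insertions/deletions into $F$ in $\O(1)$ time (independently of bucket sizes) and feed exactly those into the Section~\ref{sec:dynamic} structure.

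For correctness I condition on the high-probability event, guaranteed by Theorem~\ref{main:th:dynamic:sample:sparse}, that $F^{(t)}=E^{(t)}$ at every time-step $t\le T$ with $m^{(t)}\le 8\alpha c^2 n\log^2 n$; by property~1 of subroutine (P3) this includes every sparse time-step. On this event, at every sparse $t$ the graph fed to the Section~\ref{sec:dynamic} structure is exactly $G^{(t)}$, and since that structure is deterministic and by Theorem~\ref{th:sec:dynamic:main} always maintains a $(4+O(\epsilon))$-approximation to the density of the densest subgraph of whatever graph it has been fed, I get $\text{{\sc Opt}}^{(t)}/(4+\epsilon)\le \text{{\sc Output}}^{(t)}\le \text{{\sc Opt}}^{(t)}$ at every sparse $t$ after the usual rescaling of $\epsilon$; reading off the stored value is $O(1)$ time.

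For the resource bounds: $F^{(t)}$ is the support (with multiplicity) of $wr=\O(n)$ streaming samplers, so $|F^{(t)}|=\O(n)$ at \emph{every} time-step, and hence by Theorem~\ref{th:sec:dynamic:main} the Section~\ref{sec:dynamic} structure always uses $\O(|F^{(t)}|+n)=\O(n)$ space; together with the $\O(n)$ space of (P1) and the $\O(1)$ space of (P3) this is $\O(n)$ overall. For time, (P1) spends $\O(1)$ worst-case per update and produces only $\O(1)$ elementary changes to $F$ per $G$-update, hence $\O(T)$ such changes in total; by Theorem~\ref{th:sec:dynpart:main} each is absorbed by $\O(1)$ amortized work of the Section~\ref{sec:dynamic} structure, so the total work over the $T$ updates is $\O(T)$, i.e.\ $\O(1)$ amortized. (An equivalent alternative would be to rebuild the Section~\ref{sec:dynamic} structure from scratch at the start of each sparse interval from the explicitly stored set $F^{(t_0)}=E^{(t_0)}$ of $\O(n)$ edges, amortizing the $\O(n)$ rebuild against the $\tilde\Omega(n)$ interval length guaranteed by property~4 of (P3); I prefer the first variant since it needs no interval-length bound.)

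The one point that requires care, and which I expect to be the only genuinely delicate step, is that the fully dynamic structure keeps running \emph{through} the dense intervals on a graph $(V,F^{(t)})$ that there bears no relation to $G^{(t)}$. This is harmless: we never query it at dense time-steps, its space stays $\O(n)$ because $|F^{(t)}|=\O(n)$ unconditionally, and — crucially — its amortized guarantee in Theorem~\ref{th:sec:dynpart:main} is worst-case over the sequence of edge insertions/deletions it receives, so feeding it this ``meaningless'' sequence during dense intervals cannot break either the $\O(1)$ amortized update bound or the $\O(n)$ space bound. What makes the whole argument go through is precisely that $|F^{(t)}|=\O(n)$ at all times and that a single $G$-update perturbs $F$ in only $\O(1)$ places.
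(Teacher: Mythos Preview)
Your proof is correct but takes a different route from the paper. The paper \emph{restarts} the Section~\ref{sec:dynamic} data structure at the beginning of every sparse interval: it rebuilds from the explicitly stored set $F^{(t_0)}=E^{(t_0)}$ of $\O(n)$ edges, then feeds it one update per $G$-update for the rest of the interval, so the cost of each sparse interval $[t_0,t_1]$ is $\O(n+(t_1-t_0))$; it then invokes property~4 of (P3) (each interval has length $\tilde\Omega(n)$, except possibly the last) to amortize the $\O(n)$ rebuild. This is exactly the ``equivalent alternative'' you sketch at the end. Your primary argument instead runs a single copy continuously on $(V,F^{(t)})$ through both sparse and dense phases, exploiting two observations the paper does not use here: that $|F^{(t)}|\le wr=\O(n)$ holds \emph{unconditionally} at every time-step (so space stays $\O(n)$ even when $F^{(t)}$ is meaningless), and that a single $G$-update perturbs only the $r=\O(1)$ samplers in one bucket and hence causes only $\O(1)$ insertions/deletions in $F$. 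What your approach buys is that it never needs property~4 of (P3); what the paper's approach buys is that it avoids having to track the per-update changes to $F$ and reason about the structure during dense phases at all. Both are sound.
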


\noindent The algorithm for Theorem~\ref{th:sample:sparse} consists of two major ingredients. 
\begin{itemize}
\item First, we run a subroutine  as per Theorem~\ref{main:th:dynamic:sample:sparse} while processing the stream of $T$ updates.\footnote{Note that this is the same subroutine (P1) from Section~\ref{sec:overview}.} This ensures that with high probability, we  maintain a subset $F^{(t)} \subseteq E^{(t)}$ such that $F^{(t)} = E^{(t)}$ at every sparse time-step $t \leq T$. The worst case update time and space complexity of this subroutine are $\O(1)$ and $\O(n)$ respectively. 
\item Second, we run our dynamic algorithm -- which we refer to as {\sc Dynamic-algo} -- from Section~\ref{sec:dynamic} on the graph $(V, F^{(t)})$ during every sparse time-interval.\footnote{We can identify each sparse time-interval using the subroutine (P3) from Section~\ref{sec:overview}.} Since $F^{(t)} = E^{(t)}$ throughout the duration of such an interval (with high probability),  this allows us to maintain an $\text{{\sc Output}}^{(t)} \in \left[\text{{\sc Opt}}^{(t)}/ (4+\epsilon), \text{{\sc Opt}}^{(t)}\right]$ at every sparse time-step $t \leq T$. As the input graph has $\O(n)$  edges at every sparse  time-step,  the space complexity of {\sc Dynamic-algo} is   $\O(n)$. 
\end{itemize}

\medskip
\noindent It remains to analyze the amortized  update time of {\sc Dynamic-algo}. Towards this end, fix any sparse time-interval $[t_0, t_1]$, and let $C(t_0,t_1)$ denote the  amount of computation performed during this interval by the subroutine {\sc Dynamic-algo}. Consider two possible cases.
\begin{itemize} 
\item  {\em Case 1.} ($t_1 < T$)

In this case,  our analysis from  Section~\ref{sec:dynamic} implies that $C(t_0,t_1) = \O(n  + (t_1-t_0))$. Since $t_1 < T$, the subroutine (P3) from Section~\ref{sec:overview} ensures that  $(t_1-t_0) = \Omega(n)$.\footnote{See item 4 in the description of the subroutine (P3).} This gives us the guarantee that  $C(t_0,t_1) = \O(t_1-t_0)$.
\item {\em Case 2.} ($t_1 = T$)

In this case,  the sparse time-interval under consideration  ends at $T$. Thus, if  $(t_1 - t_0) =  o(n)$, then we would have  $C(t_0,t_1) = \O(n)$. Else if $(t_1 - t_0) = \Omega(n)$, then applying a similar argument as in Case 1, we get $C(t_0, t_1) = \tilde O(t_1 - t_0)$.
\end{itemize}

\noindent Let $\left[t_0^i, t_1^i\right]$ denote the $i^{th}$ sparse time-interval, and let $C$ denote the amount of computation performed by {\sc Dynamic-algo} during the entire time-period $[1,T]$. Since the sparse time-intervals are mutually disjoint, and since there can be at most one sparse time-interval that ends at $T$, we get the following guarantee.
\begin{eqnarray}
\label{eq:2:thought}
C & = & \left(\sum_{i} \O((t_1^i - t_0^i))\right) + \O(n)  \leq  \O(T) + \O(n)  =  \O(T)
\end{eqnarray}
The last equality holds as $T = \Theta(n^{\lambda})$ and  $\lambda \geq 1$ (equation~\ref{eq:new:lambda}). 
This shows that the amortized update time of the algorithm is $\O(1)$, thereby concluding the proof of Theorem~\ref{th:sample:sparse}.

\subsection{Algorithm for dense intervals}
\label{sec:dense}

In this section, we show how to maintain a $(4+\epsilon)$-approximation to the value of the densest subgraph during the dense time-intervals. Specifically, we prove the following theorem.

\begin{theorem}
\label{th:sample:dense}
There is an algorithm that uses $\O(n)$  space and maintains a value $\text{{\sc Output}}^{(t)}$ at every dense time-step $t \leq T$. The algorithm  gives the following two guarantees with high probability.
\begin{itemize}
\item   $\text{{\sc Opt}}^{(t)}/(4+\epsilon) \leq \text{{\sc Output}}^{(t)} \leq \text{{\sc Opt}}^{(t)}$ at every dense time-step $t \leq T$.
\item  The algorithm takes $\O(T)$ time to process the stream of $T$ updates in $G$. In other words, the amortized update time is $\O(1)$. 
\end{itemize}
\end{theorem}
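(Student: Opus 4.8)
The plan is to follow the template of Section~\ref{sec:sparse}, replacing the exact edge set by the $\O(1)$ random samples produced by subroutine (P2) and replacing the fixed-threshold dynamic decomposition of Section~\ref{sec:dynpart} by a variant that (i) runs a \emph{different} sampled edge set at each level and (ii) uses sample-degree thresholds which, through the random sampling, automatically track the current value of $m^{(t)}$. Concretely, for each $k\in[K]$ with $K=\O(1)$ I would keep $L-1$ mutually independent copies of the sampler of Theorem~\ref{main:th:dynamic:sample:dense}, all with the same sample size $s=\tilde\Theta(\alpha c n\log n)$, producing edge sets $S_1(k),\dots,S_{L-1}(k)\subseteq E^{(t)}$, and on top of these I would maintain a tuple $(Z_1(k),\dots,Z_L(k))$ in which the passage from $Z_i(k)$ to $Z_{i+1}(k)$ is governed by comparing $\dd_v(Z_i(k),S_i(k))$ against a \emph{fixed} threshold $\tau_k=\Theta((1+\epsilon)^{k-1}\log n)$ (with a move-up threshold of order $\alpha\tau_k$, exactly as in Section~\ref{sec:dynpart}). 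Since an edge of $G^{(t)}$ lands in $S_i(k)$ with probability $(1\pm\epsilon)\,s/m^{(t)}$, a $\tau_k$-threshold in the sample corresponds to a $G^{(t)}$-degree threshold $d_k^{(t)}=\Theta(\tau_k m^{(t)}/s)$, and the $\tau_k$'s are spaced so that the $d_k^{(t)}$'s form a $(1+\epsilon)$-geometric grid covering $[\Theta(m^{(t)}/n),\Theta(n)]$ for \emph{every} admissible $m^{(t)}$; a single $K=\O(1)$ suffices precisely because at dense time-steps $m^{(t)}\geq 4\alpha c^2 n\log^2 n$, so the grid $\{d_k^{(t)}\}$ slides with $m^{(t)}$ by itself, which is how we deal with the fact (flagged in Section~\ref{main:sec:challenge}) that the relevant density threshold changes over time. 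As in Section~\ref{sec:sparse}, at the start of each dense interval I would rebuild these $\O(1)$ decompositions from scratch on the current samples at cost $\O(n)$, amortized against the $\tilde\Omega(n)$ length of the interval guaranteed by (P3) --- except for the interval ending at $T$, which adds only $\O(n)=\O(T)$.

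The first thing to establish is correctness: with high probability, at every dense time-step and every $k$, the maintained tuple is a genuine $(\alpha',d_k^{(t)},L)$-decomposition of $G^{(t)}$ in the sense of Definition~\ref{def:partition}, where $\alpha'=\alpha\cdot(1+\Theta(\epsilon))$ absorbs the multiplicative sampling error (so $\alpha'$ is still $2+\Theta(\epsilon)$). The crucial structural observation is that $Z_i(k)$ is determined only by the samples $S_1(k),\dots,S_{i-1}(k)$ --- a node enters or leaves $Z_i(k)$ only through a transition between levels $i-1$ and $i$, which consults $S_{i-1}(k)$ --- so $S_i(k)$ is independent of $Z_i(k)$. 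Conditioning on $Z_i(k)$, the quantity $\dd_v(Z_i(k),S_i(k))$ is a sum of negatively associated indicators (Theorem~\ref{main:th:dynamic:sample:dense}(2)) with mean within a $(1\pm\epsilon)$ factor of $(s/m^{(t)})\dd_v(Z_i(k),E^{(t)})$, so the Chernoff bound for negatively associated variables (Theorem~\ref{th:chernoff:negative}), together with $m^{(t)}$ being large enough to make the relevant means $\Omega(\log n)$, shows --- essentially verbatim from the proof of Lemma~\ref{main:lm:stream:1} --- that after the {\sc Recover} of step $t$ terminates no clean node can violate the $(\alpha',d_k^{(t)},L)$-decomposition conditions. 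A union bound over the $n$ nodes, the $\O(1)$ pairs $(k,i)$, and the $T=\poly(n)$ time-steps finishes this part. Given this, Corollary~\ref{cor:test:1} (with $\pi,\sigma$ rescaled to the current $m^{(t)}$) guarantees that tracking $k'=\max\{k:Z_L(k)\neq\emptyset\}$ and outputting a suitable constant multiple of $d_{k'}^{(t)}$ gives $\text{{\sc Opt}}^{(t)}/(2\alpha'(1+\epsilon)^3)\leq\text{{\sc Output}}^{(t)}\leq\text{{\sc Opt}}^{(t)}$, i.e.\ a $(4+\epsilon)$-approximation after reparametrizing $\epsilon$, with $O(1)$ query time.

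For the time and space bounds I would keep, for every node $v$, every $k\in[K]$, and every $i\in[L-1]$, the doubly linked {\sc Friends} lists of the neighbours of $v$ in $S_i(k)$ at the appropriate levels, exactly as in Section~\ref{sec:dynpart}; the total space is then $\O(n)$, dominated by the $\O(1)$ samplers and the $\O(n)$ sampled edges they hold. By Theorem~\ref{main:th:dynamic:sample:dense}(3), one update to $G$ causes at most two insertions or deletions in each $S_i(k)$, hence $\O(1)$ elementary sampled-edge updates, each triggering a {\sc Recover} run in the corresponding decomposition. To amortize the {\sc Recover} cost I would use, for each $k$, a potential $\B_k=\sum_{v}\Phi_k(v)+\sum_{i}\sum_{e\in S_i(k)}\Psi_{k,i}(e)$, where $\Phi_k(v)=\tfrac1\epsilon\sum_{i<\ell(v)}\max\!\left(0,\alpha\tau_k-\dd_v(Z_i(k),S_i(k))\right)$ and $\Psi_{k,i}(e)$ is the level-$i$ analogue of the edge potential~\eqref{eq:potential:edge}, depending on $\min(\ell(u),\ell(v))$ relative to $i$ and paying for level increases of the endpoints of $e$ across the boundary between levels $i$ and $i+1$. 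Replaying the case analysis of Section~\ref{sec:potential:recover} at level $i$, a down-move of a node $y$ from $i$ to $i-1$ releases about $(\alpha-1)\tau_k/\epsilon$ from $\Phi_k(y)$, which pays for both the $O(1+\dd_y(Z_i(k),S_i(k)))$ cost of the move and the increase of $\Phi_k$ on the up-neighbours of $y$, while a single $G$-update raises $\sum_k\B_k$ by only $\O(1)$ (it touches $\O(1)$ sampled edges, each worth $O(L)$ in an edge potential, plus $O(L/\epsilon)$ per endpoint in the node potentials); hence the total potential increase over the $T$ updates, and therefore the total {\sc Recover} work, is $\O(T)$, which gives the claimed $\O(1)$ amortized update time.

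The step I expect to be the main obstacle is this last amortized analysis, for the reason flagged in Section~\ref{main:sec:challenge}: the degree that \emph{triggers} the down-move of $y$, namely $\dd_y(Z_{i-1}(k),S_{i-1}(k))$, lives in the set $S_{i-1}(k)$, whereas the cost of the move and the up-neighbour potential increases are measured against $S_i(k)$ and $S_{i+1}(k)$, which are \emph{different} random sets, so the clean cancellation of Section~\ref{sec:dynpart} does not happen verbatim. I would close this gap with the same concentration statement used for correctness: with high probability, for every relevant $v$ and all $i,j$, both $\dd_v(Z_i(k),S_i(k))$ and $\dd_v(Z_j(k),S_j(k))$ are within a $(1\pm\epsilon)$ factor of $(s/m^{(t)})$ times the corresponding true degree, hence agree up to a $(1+\Theta(\epsilon))$ factor; propagating this extra slack through the bookkeeping of Section~\ref{sec:potential:recover} replaces the requirement $\alpha\geq 2+\Theta(\epsilon)$ there by $\alpha=2+c^*\epsilon$ for a somewhat larger constant $c^*$, which is exactly why~\eqref{eq:new:epsilon} fixes $\alpha$ in this form. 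Putting this together with the interval-rebuild accounting above proves Theorem~\ref{th:sample:dense}; Theorem~\ref{th:new:main} then follows by combining Theorem~\ref{th:sample:sparse}, Theorem~\ref{th:sample:dense}, and the classification procedure (P3).
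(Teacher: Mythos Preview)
Your plan is essentially the paper's: $K=\O(1)$ sample-based decompositions indexed by $k$, each built from $L-1$ mutually independent samplers with a common target size, so that a \emph{fixed} sample-degree threshold $\tau_k$ automatically corresponds to the sliding true-degree threshold $d_k^{(t)}$; correctness via the independence of $Z_i$ from $S_i,\ldots,S_{L-1}$ (your transition argument is exactly Lemma~\ref{main:lm:independence}); and amortization via a node potential plus per-sampler edge potentials, closed with a cross-sampler concentration event. One stylistic difference: the paper runs a batch, level-by-level \textsc{Recover-Sample} (Figure~\ref{main:fig:sample:recover}) rather than your Section-\ref{sec:dynpart}-style dirty-node loop, but either version preserves the independence you need.

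There is, however, a genuine gap in your amortization. You assert that a down-move of $y$ from level $i$ to $i-1$ costs $O\bigl(1+\dd_y(Z_i(k),S_i(k))\bigr)$ and is paid for by a drop of roughly $(\alpha-1)\tau_k/\epsilon$ in $\Phi_k(y)$. But with the data structures you describe --- one \textsc{Friends} list per sampler, per level --- a level change forces updates to the lists of \emph{every} sampler, so the true cost is $\sum_{i'\ge i}O\bigl(1+\dd_y(Z_{i-1}(k),S_{i'}(k))\bigr)=\Theta(L\tau_k)$ under your concentration event; this is Claim~\ref{main:cl:sample:runtime:up} in the paper, and you partly acknowledge it when you later say the cost is ``measured against $S_i(k)$ and $S_{i+1}(k)$'' --- in fact it is against all $S_{i'}$ with $i'\ge i$. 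The edge potentials likewise rise by $\Theta(L\tau_k)$ on a down-move. Against this, your node-potential drop of order $\tau_k/\epsilon$ closes the books only if $\alpha\ge 2+\Theta(\epsilon L)=2+\Theta(\log n)$, which destroys the approximation; so the issue is not absorbed by ``a somewhat larger constant $c^*$''. The paper's fix is to scale the node potential by $L/\epsilon$ rather than $1/\epsilon$ (equation~\eqref{main:eq:potential:node}): a single down-move then releases $\Theta(L\tau_k/\epsilon)$, which for $\alpha=2+c^*\epsilon$ with suitable constant $c^*$ dominates both the $\Theta(L\tau_k)$ cost and the $\Theta(L\tau_k)$ edge-potential increase, while the per-update rise in $\B_k$ becomes $O(L^2/\epsilon)=\O(1)$. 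With that single correction to $\Phi_k$, your argument goes through and matches Lemma~\ref{main:lm:dynamic:stream:runtime}.
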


\subsubsection{Basic building block for proving Theorem~\ref{th:sample:dense}}
\label{sec:new:building:block}

For every time-step $t \in [1, T]$, we define:
\begin{equation}
\label{eq:building:block:pi}
\pi^{(t)} = m^{(t)}/(2\alpha n), \text{ and } \sigma = 2(1+\epsilon)n
\end{equation}
Accordingly,  Lemma~\ref{main:lm:stream:range} implies that:  
\begin{equation}
\label{eq:building:block:range}
\alpha \cdot \pi^{(t)} < \text{{\sc Opt}}^{(t)} <  \sigma/(2(1+\epsilon))
\end{equation} 
We now discretize the range $[\pi^{(t)}, \sigma]$ in powers of $(1+\epsilon)$ as in Corollary~\ref{main:cor:test:1}. Specifically, we define:
\begin{eqnarray}
\label{eq:building:block:K}
K = 2 + \lceil \log_{(1+\epsilon)} (\sigma \cdot (2\alpha n)) \rceil \\
d_{k}^{(t)} = (1+\epsilon)^{k-1} \cdot \pi^{(t)} \text{ for all } k \in [K] \label{eq:building:block:dk}
\end{eqnarray}
Note that $m^{(t)} \geq 1$ at every dense time-step $t \leq T$. This ensures that $\pi^{(t)} \geq 1/(2\alpha n)$, and hence we have $K \geq 2 + \lceil \log_{(1+\epsilon)} (\sigma/\pi^{(t)}) \rceil$ during those time-steps. In other words, the values of $\pi^{(t)}, \sigma$ and $K$ satisfy the condition stated in Corollary~\ref{cor:test:1} during the dense time-intervals. Also note that the value of $K$ {\em does not} depend on the time-step $t$ under consideration.\footnote{This is especially important since we want to maintain an  $(\alpha, d_k^{(t)}, L)$-decomposition for each $k \in [K]$ during the dense time-intervals. If  $K$ were a function of $t$, then  the number of such decompositions would  have varied over time. } Furthermore, we have $K = \O(1)$.

We want to maintain a $2\alpha (1+\epsilon)^3 = (4+\Theta(\epsilon))$-approximation of the maximum density during the dense time-intervals.  For this purpose it suffices to maintain an $(\alpha, d_{k}^{(t)}, L)$-decomposition of $G^{(t)}$ for each $k \in [K]$ (see Corollary~\ref{cor:test:1}). 
 
 \medskip
 Thus, we conclude that Theorem~\ref{th:sample:dense}  follows from Theorem~\ref{main:th:dynamic:stream:special}. Accordingly, for the rest of Section~\ref{sec:dense}, we fix an index $k \in [K]$, and focus on proving Theorem~\ref{main:th:dynamic:stream:special}.

\begin{theorem}
\label{main:th:dynamic:stream:special}
Fix any integer $k \in [K]$. There is a dynamic algorithm that uses $\O(n)$ bits of space and maintains $L$ subsets of nodes $V = Z_1^{(t)} \supseteq \cdots \supseteq Z_L^{(t)}$  at every dense time-step $t \in [T]$. The algorithm is randomized and gives the following two guarantees with high probability.
\begin{itemize}
\item  The tuple $(Z_1^{(t)} \ldots Z_L^{(t)})$ is an $(\alpha,  d_k^{(t)}, L)$-decomposition of $G^{(t)}$ at every dense time-step $t \in [T]$.
\item  The algorithm takes $\O(T)$ time to process the stream of $T$ updates. In other words, the algorithm has an amortized update time of $\O(1)$. 
\end{itemize}
\end{theorem}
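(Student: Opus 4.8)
The plan is to combine the sampling subroutine from Theorem~\ref{main:th:dynamic:sample:dense} (with $s = 2\alpha c n \log n$, so that $s/d_k^{(t)} = \tilde\Theta(m^{(t)}/d_k^{(t)})$ samples are maintained) with the dynamic $(\alpha,d,L)$-decomposition machinery of Section~\ref{sec:dynpart}, run ``on top of'' the sampled edges rather than the real graph. Concretely, I would maintain $L-1$ independent copies of the dense-sampler, giving sets $S_1^{(t)},\ldots,S_{L-1}^{(t)}$, each a near-uniform sample of the edge set; the $i$-th copy is used to decide, given $Z_i^{(t)}$, which nodes survive into $Z_{i+1}^{(t)}$. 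A node $v \in Z_i^{(t)}$ is kept iff its degree $\dd_v(Z_i^{(t)}, S_i^{(t)})$ into $Z_i^{(t)}$ measured in the $i$-th sample exceeds the scaled threshold $(1-\epsilon)\alpha c \log n \cdot (d_k^{(t)} \cdot \text{scaling})$ — i.e. the same rule as in the proof of Lemma~\ref{main:lm:stream:1}, but now maintained dynamically. The first step is to argue correctness at a \emph{fixed} dense time-step $t$: conditioning on $Z_i^{(t)}$ (which depends only on $S_{<i}^{(t)}$ and hence is independent of $S_i^{(t)}$), a Chernoff bound over the negatively-associated indicators from Theorem~\ref{main:th:dynamic:sample:dense}(2), exactly as in Lemma~\ref{main:lm:stream:1}, shows that every node of true degree $>\alpha d_k^{(t)}$ is kept and every node of true degree $<d_k^{(t)}$ is dropped, with high probability; a union bound over $i\in[L-1]$ and then over all $T=\Theta(n^\lambda)$ dense time-steps (the point of property (P2) in Section~\ref{sec:new:setting}) gives the decomposition guarantee throughout.

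The second, harder step is the amortized running time. Here I would adapt the potential-function argument of Section~\ref{sec:dynpart}, but with the complication flagged in challenge~3 of Section~\ref{main:sec:challenge}: level $i$ now ``sees'' the edge set $S_i^{(t)}$, while the cost of moving a node $y$ between levels $i$ and $i\pm1$ is governed by $y$'s \emph{total} degree across the relevant $\text{{\sc Friends}}$ lists, which involve several of the $S_j$. The node potential $\Phi(v)$ should be redefined as $(1/\epsilon)\sum_{i<\ell(v)}\max(0,\alpha d_k^{(t)} - \dd_v(Z_i^{(t)},S_i^{(t)}))$ and the edge potential $\Psi(e)$ should be summed over edges $e$ lying in $\bigcup_i S_i^{(t)}$, counted with the level-based weights of equation~\ref{eq:potential:edge}. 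The crux is that although whether $y$ should move depends on one sample set, the drop in $\Phi(y)$ upon a level decrease — which is at least $(\alpha-1)d_k^{(t)}/\epsilon$ because $y$ was kept at level $i$ (so $\dd_y(Z_{i-1}^{(t)},S_{i-1}^{(t)})\ge(1-\epsilon)\alpha c\log n$ in the appropriate scaling) but is now below threshold — still dominates the sum of (i) the cost of the move, $O(1+|\text{{\sc Friends}}_i[y]|)$, and (ii) the potential increase charged to neighbours at higher levels. The key quantitative input making this go through is that, with high probability, all degrees $\dd_v(\cdot, S_i^{(t)})$ are simultaneously within a $(1\pm\epsilon)$ factor of $d_k^{(t)}/m^{(t)}$ times the corresponding true degrees (Theorem~\ref{main:th:dynamic:sample:dense}(1) plus a Chernoff/union bound over the $\tilde\Theta(n)$ relevant degree counts and all $T$ steps), so the thresholds across different $S_i$ are mutually consistent up to $(1\pm O(\epsilon))$; absorbing this slack into $\alpha = 2 + c^*\epsilon$ for a suitable constant $c^*$ preserves the $\alpha\ge 2+3\epsilon$ condition needed in Section~\ref{sec:dynpart}. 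I also need the handles-per-update bound: by Theorem~\ref{main:th:dynamic:sample:dense}(3) each real update causes $O(1)$ insertions/deletions in each $S_i$, so the ``external'' potential increase per update is $O(L/\epsilon)$ just as before.

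The remaining ingredient is that $d_k^{(t)}$ itself drifts as $m^{(t)}$ changes within a dense interval (challenge~2). I would handle this the same way the rest of the paper does: treat a change of $d_k^{(t)}$ by a factor $(1+\epsilon)$ as a batch of ``virtual'' updates and re-run RECOVER(); since within a dense interval $m^{(t)}$ ranges over $[\,4\alpha c^2 n\log^2 n,\ \tilde O(n^2)\,]$, the threshold $d_k^{(t)}$ takes only $O(\log n)$ distinct dyadic-in-$(1+\epsilon)$ values, each change touches $\tilde O(n)$ sampled edges, and a dense interval lasts $\Omega(n\log^2 n)$ steps (property~(P3)), so this cost is $\tilde O(n)$ per interval and amortizes to $\tilde O(1)$ — and, as in Section~\ref{sec:sparse} equation~\ref{eq:2:thought}, the single dense interval that may end at $T$ contributes only an additive $\tilde O(n) = \tilde O(T)$. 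Putting the two steps together: the space is $\tilde\Theta(1)$ per sampler times $\tilde\Theta(n)$ samplers plus the $\tilde O(n)$-space hash functions, i.e. $\tilde O(n)$; the total work is $\tilde O(T)$; and the decomposition is correct at every dense step w.h.p., which is exactly Theorem~\ref{main:th:dynamic:stream:special}, whence Theorem~\ref{th:sample:dense} and ultimately Theorem~\ref{th:new:main}. I expect the potential-function bookkeeping in the mixed-sample-sets setting — verifying that the $(\alpha-1)d/\epsilon$ drop still covers move cost plus neighbour-charging once all the $(1\pm\epsilon)$ sampling errors are in play — to be the main obstacle.
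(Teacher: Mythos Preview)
Your high-level strategy matches the paper: maintain $L-1$ independent samplers, use sample $S_i$ to decide membership in $Z_{i+1}$ via a scaled threshold, prove correctness by Chernoff plus a union bound over levels and time-steps, and adapt the potential-function argument of Section~\ref{sec:dynpart} to bound the update time. Two places where your proposal diverges from, or is weaker than, the paper's actual argument deserve comment.

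First, the drift of $d_k^{(t)}$ (your third paragraph) is a non-issue in the paper, and your proposed virtual-update workaround is unnecessary. The key observation is equation~\ref{eq:overview:dense:st}: with $s = 2\alpha c n\log n/(1+\epsilon)^{k-1}$ one has $s = cm^{(t)}\log n/d_k^{(t)}$ identically at every dense time-step, so the \emph{sampled-degree} thresholds $(1-\epsilon)^2\alpha c\log n$ and $(1+\epsilon)^2 c\log n$ used in RECOVER-SAMPLE (Figure~\ref{main:fig:sample:recover}) are fixed constants independent of $t$. Although $d_k^{(t)}$ drifts with $m^{(t)}$, the algorithm's decision rule never changes; the output is automatically an $(\alpha, d_k^{(t)}, L)$-decomposition at each $t$ precisely because the sampling rate tracks $m^{(t)}$. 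The paper does \emph{not} re-run anything when $d_k^{(t)}$ crosses a power of $(1+\epsilon)$.

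Second, your quantitative input---``all degrees $\dd_v(\cdot,S_i^{(t)})$ are within $(1\pm\epsilon)$ of their expectations''---is false as stated: a node of true degree $o(d_k^{(t)})$ has expected sampled degree $o(\log n)$, so Chernoff gives no multiplicative concentration there. The paper sidesteps this via the event $\mathcal{F}$ (Definition~\ref{main:def:sample:main:2}), which only asserts that nodes in $A_i^{(t)}$ (those whose sampled degree in $S_i$ exceeds the upper threshold) have high sampled degree in every other $S_{i'}$, and symmetrically for $B_i^{(t)}$. Proving this (Lemma~\ref{main:cl:sample:main:2}) is a two-step argument: first show that w.h.p.\ every node in $A_i^{(t)}$ has \emph{true} degree at least roughly $(1-\epsilon)^2\alpha d_k^{(t)}/(1+\epsilon)^2$ (otherwise Chernoff says it would not have crossed the threshold in $S_i$), and then show such high-true-degree nodes concentrate in every $S_{i'}$. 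This is the ``mutually consistent'' statement you correctly flag as the main obstacle, but restricted to the nodes that actually change level---which is exactly what the potential argument needs. The potential function also differs in details you should expect to matter: the paper scales $\Phi$ by $L/\epsilon$ rather than $1/\epsilon$, and the edge potentials $\Psi_i$ are indexed per sample set with a cap at level $i$ (equation~\ref{main:eq:potential:edge}), which is what makes the drop in Case~1 of Section~\ref{sec:extreme:new:1} cover the cost summed over all $i' > i$.
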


\subsubsection{Overview of our algorithm for Theorem~\ref{main:th:dynamic:stream:special}} 
\label{sec:dynamic:stream:overview}

Our algorithm for proving Theorem~\ref{main:th:dynamic:stream:special} consists of two major ingredients.
\begin{itemize}
\item First,  at each dense time-step $t \leq T$ we maintain a collection of  $(L-1)$ random subsets of edges $S_1^{(t)}, \ldots, S_{L-1}^{(t)} \subseteq E^{(t)}$. To maintain these random subsets, we need to run $(L-1)$ mutually independent copies of the algorithm in Theorem~\ref{main:th:dynamic:sample:dense} (for an appropriate value of $s$).
\item Second, using the random subsets $S_1^{(t)}, \ldots, S_{L-1}^{(t)} \subseteq E^{(t)}$, we maintain an $(\alpha, d_k^{(t)}, L)$ decomposition $(Z_1^{(t)}, \ldots, Z_L^{(t)})$ during the dense time-steps. Specifically, we set $Z_1^{(t)} = V$. Next,  for $i = 1$ to $(L-1)$, we construct the node set $Z_{i+1}^{(t)} \subseteq Z_i^{(t)}$ by looking at the degrees $D_v(Z_i^{(t)}, S_i^{(t)})$ of the nodes  $v \in Z_i^{(t)}$ among the edges $e \in S_i^{(t)}$. This follows the spirit of the construction in Section~\ref{sec:sketch} (but note that there we did not concern ourselves with  the update time). 

From the proof of Lemma~\ref{main:lm:stream:1}, we infer that  each  $S_i^{(t)}$ should contain $s^{(t)} = c m^{(t)} \log n/d_k^{(t)}$ random samples from the edge set $E^{(t)}$. Equations~\ref{eq:building:block:pi} and~\ref{eq:building:block:dk} ensure that $s^{(t)} = 2 \alpha c n \log n/(1+\epsilon)^{k-1}$, which means that the value of $s^{(t)}$ is independent of the time-step $t$ under consideration. Accordingly, for the rest of Section~\ref{sec:dense}, we omit the superscript $t$ from $s^{(t)}$ and define:
\begin{equation}
\label{eq:overview:dense:st}
s = 2 \alpha c n \log n/(1+\epsilon)^{k-1} = c m^{(t)} \log n/d_k^{(t)} \text{ for each dense time-step } t \leq T.
\end{equation}
Furthermore, since $k \geq 1$, we observe that:
\begin{equation}
\label{eq:overview:dense:s}
s \leq 2 \alpha c n \log n
\end{equation}
Thus, the value assigned to $s$ satisfies the condition dictated by Theorem~\ref{main:th:dynamic:sample:dense}.
\end{itemize}
To summarize, our first subroutine maintains $(L-1)$ mutually independent copies of the algorithm in Theorem~\ref{main:th:dynamic:sample:dense} (for the value of $s$ as defined by equations~\ref{eq:overview:dense:st},~\ref{eq:overview:dense:s}). The random subsets of edges maintained by them are denoted as $S_1^{(t)}, \ldots, S_{L-1}^{(t)}$. Since $L = \tilde \Theta(1)$ (see equation~\ref{eq:new:L}),  Theorem~\ref{main:th:dynamic:sample:dense} ensures that this subroutine has $\O(1)$ worst case update time and $\O(n)$ space complexity.

Next, we present another subroutine {\sc Dynamic-stream} (see Section~\ref{sec:dynamic:stream:describe}) that is invoked only during the dense time-intervals.\footnote{The dense time-intervals can be easily identified using the subroutine (P3) from Section~\ref{sec:overview}.} This subroutine can access  the random subsets $\left\{S_i^{(t)}\right\}, i \in [L-1]$, and it maintains $L$ subsets of nodes $V = Z_1^{(t)} \supseteq \cdots \supseteq Z_L^{(t)}$ at each dense time-step $t \in [T]$. 

\paragraph{Roadmap.} The rest of Section~\ref{sec:dense} is organized as follows. 
\begin{itemize}
\item In Section~\ref{sec:dynamic:stream:describe}, we fix a dense time-interval $[t_0, t_1]$, and describe how the subroutine {\sc Dynamic-stream} processes the edge insertions/deletions in the input graph during this interval. We show that {\sc Dynamic-stream} maintains an $(\alpha, d_k^{(t)}, L)$-decomposition of the input graph $G^{(t)}$ throughout the duration of this dense time-interval (see Lemma~\ref{main:lm:performance}).
\item Section~\ref{main:sec:dynamic:implement} presents the data structures for implementing the subroutine {\sc Dynamic-stream}. 
\item In Section~\ref{sec:main:old}, we make some preliminary observations about the running time of the subroutine {\sc Dynamic-stream}, and analyze its space complexity (see Lemma~\ref{main:lm:sample:space}). 
\item In Section~\ref{main:sec:dynamic:update-time}, we analyze the amortized update time of the subroutine {\sc Dynamic-stream} (see Corollary~\ref{main:cor:dynamic:stream:runtime}). Theorem~\ref{main:th:dynamic:stream:special} follows from Lemma~\ref{main:lm:performance}, Lemma~\ref{main:lm:sample:space} and Corollary~\ref{main:cor:dynamic:stream:runtime}. 
\item In Sections~\ref{sec:extreme:new} and~\ref{sec:extreme:new:1} are devoted to the proof of two lemmas stated in Section~\ref{main:sec:dynamic:update-time}.
\end{itemize}

\subsubsection{The subroutine {\sc Dynamic-stream} for a dense time-interval $[t_0, t_1]$}
\label{sec:dynamic:stream:describe}
\label{main:sec:dynamic:algo}

Fix any dense time-interval $[t_0,t_1] \subseteq [1, T]$. We will  maintain $L$ subsets of nodes $V = Z_1^{(t)} \supseteq \cdots \supseteq Z_L^{(t)}$ at each time-step $t \in [t_0, t_1]$. With high probability, we will  prove that throughout the interval the tuple $(Z_1^{(t)}, \ldots, Z_L^{(t)})$ remains a valid $(\alpha,  d_k^{(t)}, L)$-decomposition of  $G^{(t)} = (V, E^{(t)})$.

We run $(L-1)$ mutually independent copies of the algorithm stated in Theorem~\ref{main:th:dynamic:sample:dense} (for the value of $s$ as defined by equations~\ref{eq:overview:dense:st},~\ref{eq:overview:dense:s}). Hence, at each time-step $t \in [t_0, t_1]$,   we can access the mutually independent random subsets of edges $S_1^{(t)}, \ldots, S_{L-1}^{(t)} \subseteq E^{(t)}$ as defined in Section~\ref{sec:dynamic:stream:overview}. 

\paragraph{Initialization in the beginning of the dense time-interval $[t_0, t_1]$.} \

\medskip
\noindent
Just before time-step $t_0$, we perform the initialization step outlined in Figure~\ref{fig:sample:init}. It ensures that $Z_1^{(t_0-1)} = V$ and $Z_i^{(t_0-1)} = \emptyset$ for all $i \in \{2, \ldots, L\}$.

\begin{figure}[htbp]
\centerline{\framebox{
\begin{minipage}{5.5in}
\begin{tabbing}
01. \ \ \ \ \= Set $Z_1^{(t_0-1)} \leftarrow V$. \\
02. \> {\sc For } $i = 2$ to $L$\\
03. \> \ \ \ \ \ \ \ \ \ \= Set  $Z_i^{(t_0-1)} \leftarrow \emptyset$. 
\end{tabbing}
\end{minipage}
}}
\caption{\label{fig:sample:init} INITIALIZE$(.)$.}
\end{figure}

\paragraph{Handling an update in $G$ during the dense time-interval $[t_0, t_1]$.}  \

\medskip
\noindent Consider  an edge insertion/deletion  in the input graph during time-step $t \in [t_0, t_1]$. The edge set $E^{(t)}$ is different from the edge set $E^{(t-1)}$. Accordingly, for all $i \in [L-1]$, the subset of edges $S_i^{(t)}$ may  differ from the subset of edges $S_i^{(t-1)}$. Therefore, at this time we call the subroutine RECOVER-SAMPLE$(t)$ in Figure~\ref{main:fig:sample:recover}. Its input is the old decomposition $(Z_1^{(t-1)}, \ldots, Z_L^{(t-1)})$. Based on this old decomposition, and the new samples $\left\{S_i^{(t)}\right\}, i \in [L-1]$, the   subroutine constructs a new decomposition $(Z_1^{(t)}, \ldots, Z_L^{(t)})$.


As in Section~\ref{sec:sketch}, we want to ensure that the node set $Z_i$ is completely determined by the outcomes of the random samples  in $\{S_j\}, j < i$ (see the proof of Lemma~\ref{main:lm:stream:1}).  Towards this end, we observe the following lemma.

\begin{figure}[htbp]
\centerline{\framebox{
\small
\begin{minipage}{5.5in}
\begin{tabbing}
01. \ \ \ \ \= Set $Z_1^{(t)} \leftarrow V$. \\
02. \> {\sc For } $i = 1$ to $L$\\
03. \> \ \ \ \ \ \ \ \ \ \= Set  $Y_i \leftarrow Z_i^{(t-1)}$. \\
04. \> {\sc For } $i = 1$ to $(L-1)$ \\
05.\>\ \ \ \ \ \ \ \ \= Let $A_i^{(t)}$ be the set of nodes  $y \in Z_i^{(t)}$ having $\dd_{y}(Z_i^{(t)}, S_i^{(t)}) > (1-\epsilon)^2 \alpha c \log n$. \\
06.\>\> Let $B_i^{(t)}$ be the set of nodes $y \in Z^{(t)}_{i}$ having $\dd_{y}(Z_i^{(t)}, S_i^{(t)}) < (1+\epsilon)^2 c \log n$. \\
07. \> \> Set $Y_{i+1} \leftarrow Y_{i+1} \cup A_i^{(t)}$. \\
08.\> \> {\sc For all} $j = (i+1)$ to $(L-1)$ \\
09.\> \> \ \ \ \ \  \ \ \ \= Set $Y_{j} \leftarrow Y_{j}  \setminus B_i^{(t)}$. \\
10. \> \> Set $Z_{i+1}^{(t)} \leftarrow Y_{i+1}$.
\end{tabbing}
\end{minipage}
}}
\caption{\label{main:fig:sample:recover} RECOVER-SAMPLE$(t)$. }
\end{figure} 

 \begin{lemma}
\label{main:lm:independence}
Fix any time-step  $t \in [t_0,t_1]$,  any index $i \in [L-1]$, and consider the set $Z_i^{(t)}$ as defined by the procedure in  Figure~\ref{main:fig:sample:recover}.
\begin{enumerate}
\item  The node set $Z_{i}^{(t)}$ is completely determined by the contents of  the sets $\left\{S_{j}^{(t)}\right\}, j < i$.
\item  The contents of the random sets  $\left\{S_j^{(t)}\right\}, j \geq i$, are independent  of the contents of the  set $Z_i^{(t)}$.
\end{enumerate}
\end{lemma}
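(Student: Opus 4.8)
The plan is to prove Part 1 in the slightly sharper form that $Z_i^{(t)}$ is a deterministic function of the random samples $\{S_j^{(\tau)} : 1 \le j \le i-1,\ t_0 \le \tau \le t\}$; this clearly implies Part 1 as stated, and it is exactly what makes Part 2 a one-line consequence of the fact that $S_1^{(\cdot)},\dots,S_{L-1}^{(\cdot)}$ are maintained by $L-1$ \emph{mutually independent} copies of the algorithm of Theorem~\ref{main:th:dynamic:sample:dense}.

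For Part 1 I would run a double induction: an outer induction on the level $i$, and, for each fixed $i$, an inner induction on the time-step $t$ ranging over $[t_0,t_1]$. The base case $i=1$ is immediate because line 01 of \textsc{Recover-Sample} sets $Z_1^{(t)}\leftarrow V$ irrespective of the samples, and the inner base case $t=t_0$ is handled by \textsc{Initialize} (Figure~\ref{fig:sample:init}), which guarantees $Z_{i}^{(t_0-1)}=\emptyset$ for $i\ge 2$. For the inductive step the key is to read off from Figure~\ref{main:fig:sample:recover} precisely which quantities can influence the final value of $Y_{i+1}$, which equals $Z_{i+1}^{(t)}$ by line 10. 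Tracing the main loop: $Y_{i+1}$ is initialized (line 03) to $Z_{i+1}^{(t-1)}$; in each iteration $\iota\le i-1$ it is updated only by line 09 (with $j=i+1$) to $Y_{i+1}\setminus B_\iota^{(t)}$; in iteration $\iota=i$ it is updated by line 07 and then line 09 to $(Y_{i+1}\cup A_i^{(t)})\setminus B_i^{(t)}$; and iterations $\iota\ge i+1$ leave $Y_{i+1}$ untouched (line 07 now hits $Y_{\iota+1}\ne Y_{i+1}$, and the line-08 loop starts at $j=\iota+1>i+1$). Hence, for $i+1\le L-1$ (and with the top-level subtraction by $B_i^{(t)}$ simply absent when $i+1=L$),
\[
Z_{i+1}^{(t)}=\Bigl(\bigl(\cdots\bigl(Z_{i+1}^{(t-1)}\setminus B_1^{(t)}\bigr)\cdots\setminus B_{i-1}^{(t)}\bigr)\cup A_i^{(t)}\Bigr)\setminus B_i^{(t)}.
\]
By lines 05--06, each $A_\iota^{(t)},B_\iota^{(t)}$ with $\iota\le i$ is a function of $Z_\iota^{(t)}$ and $S_\iota^{(t)}$; the outer induction hypothesis makes $Z_\iota^{(t)}$ a function of $\{S_j^{(\tau)}:j\le\iota-1\le i-1,\ \tau\le t\}$, so $A_\iota^{(t)},B_\iota^{(t)}$ is a function of $\{S_j^{(\tau)}:j\le i,\ \tau\le t\}$; and the inner induction hypothesis makes $Z_{i+1}^{(t-1)}$ a function of $\{S_j^{(\tau)}:j\le i,\ \tau\le t-1\}$. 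Substituting into the displayed identity shows $Z_{i+1}^{(t)}$ is a function of $\{S_j^{(\tau)}:j\le i,\ \tau\le t\}$, closing the induction.

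Part 2 then follows immediately: $\{S_j^{(\cdot)}\}_{j\ge i}$ is determined by the random bits of sampler copies $i,\dots,L-1$, whereas by Part 1 the set $Z_i^{(t)}$ is determined by the random bits of copies $1,\dots,i-1$; since these two groups of copies are independent, $Z_i^{(t)}$ is independent of $\{S_j^{(\tau)}\}_{j\ge i}$, in particular of $\{S_j^{(t)}\}_{j\ge i}$. I expect the only genuinely delicate point to be the bookkeeping inside the inductive step of Part 1 --- namely verifying that no $A_\iota^{(t)}$ or $B_\iota^{(t)}$ with $\iota\ge i+1$ can reach $Y_{i+1}$ and that the index ranges in lines 07--09 line up as claimed --- together with keeping the two nested inductions (over $i$ and over $t$, anchored by \textsc{Initialize}) straight; once that is in place, the probabilistic content of the lemma is essentially free.
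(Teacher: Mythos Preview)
Your argument is correct and follows the only possible approach---tracing the procedure in Figure~\ref{main:fig:sample:recover}---but it is far more detailed than the paper's own proof, which reads in its entirety: ``Follows from the description of the procedure in Figure~\ref{main:fig:sample:recover}.'' Your careful trace of how $Y_{i+1}$ evolves through the iterations, the resulting closed form for $Z_{i+1}^{(t)}$, and the double induction (over $i$ and over $t$, anchored at $Z_i^{(t_0-1)}=\emptyset$ via \textsc{Initialize}) constitute a genuine proof where the paper supplies none.

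Two minor remarks. First, your ``slightly sharper form''---that $Z_i^{(t)}$ is determined by $\{S_j^{(\tau)}:j<i,\ \tau\le t\}$---is in fact the \emph{only} viable reading of Part~1: read literally as ``determined by $\{S_j^{(t)}\}_{j<i}$ at time $t$ alone,'' Part~1 would be false, since a node lying in neither $A_i^{(t)}$ nor $B_i^{(t)}$ inherits its membership in $Z_{i+1}^{(t)}$ from $Z_{i+1}^{(t-1)}$. So your version is not sharper than Part~1 but rather its intended meaning, and it is precisely what Part~2 (and its use in Lemma~\ref{main:lm:performance}) requires; the phrase ``this clearly implies Part~1 as stated'' should be softened accordingly. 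Second, your parenthetical about the case $i+1=L$ is slightly off: since the loop on line~08 ranges only over $j\le L-1$, the variable $Y_L$ is never touched by line~09 in \emph{any} iteration, so for $Z_L^{(t)}$ all of the $B_\iota^{(t)}$ subtractions are absent (not just the top-level one), yielding simply $Z_L^{(t)}=Z_L^{(t-1)}\cup A_{L-1}^{(t)}$. This is irrelevant, however, since the lemma concerns only $i\in[L-1]$, and your displayed formula for $i+1\le L-1$ is correct and suffices.
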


\begin{proof}
Follows from the description of the procedure in Figure~\ref{main:fig:sample:recover}. 
\end{proof}

We now prove the correctness of our algorithm. Specifically, we show that with high probability the tuple $(Z_1^{(t)}, \ldots, Z_L^{(t)})$ remains a valid $(\alpha, d_k^{(t)}, L)$-decomposition of the input graph throughout the dense time-interval under consideration.

\begin{lemma}
\label{main:lm:performance}
With high probability, at each time-step $t \in [t_0,t_1]$ the tuple $(Z_1^{(t)} \ldots Z_L^{(t)})$ maintained by the subroutine {\sc Dynamic-stream} gives an $(\alpha,  d_k^{(t)}, L)$-decomposition of the input graph $G^{(t)}$.
\end{lemma}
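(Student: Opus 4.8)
The plan is to prove, for a single fixed index $k$ and a fixed dense time-interval $[t_0,t_1]$, that with high probability the tuple $(Z_1^{(t)},\ldots,Z_L^{(t)})$ produced by RECOVER-SAMPLE is a valid $(\alpha,d_k^{(t)},L)$-decomposition of $G^{(t)}$ at \emph{every} $t\in[t_0,t_1]$. By Definition~\ref{main:def:partition}, we need two things at each level $i\in[L-1]$: (a) every node $v\in Z_i^{(t)}$ with $\dd_v(Z_i^{(t)},E^{(t)})>\alpha d_k^{(t)}$ lies in $Z_{i+1}^{(t)}$, and (b) no node $v\in Z_i^{(t)}$ with $\dd_v(Z_i^{(t)},E^{(t)})<d_k^{(t)}$ lies in $Z_{i+1}^{(t)}$. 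The key point is that Lemma~\ref{main:lm:independence} gives us exactly the independence structure needed to run the Chernoff argument from Lemma~\ref{main:lm:stream:1} level by level: $Z_i^{(t)}$ depends only on $\{S_j^{(t)}\}_{j<i}$, so the sample $S_i^{(t)}$ is independent of $Z_i^{(t)}$ (and hence of the "true-degree" sets $A_i^{(t),\mathrm{true}}=\{v\in Z_i^{(t)}:\dd_v(Z_i^{(t)},E^{(t)})>\alpha d_k^{(t)}\}$ and $B_i^{(t),\mathrm{true}}=\{v\in Z_i^{(t)}:\dd_v(Z_i^{(t)},E^{(t)})<d_k^{(t)}\}$).

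The core estimate is as follows. Fix $t$, $i$, condition on $Z_i^{(t)}$, and take a node $v$. By Theorem~\ref{main:th:dynamic:sample:dense}(1), since $t$ is dense we have $m^{(t)}\ge 4\alpha c^2 n\log^2 n$, so $\Pr[e\in S_i^{(t)}]=E[X_e^{(t)}]\in[(1\pm\epsilon)s/m^{(t)}]$ for every edge $e$, and by part (2) the variables $\{X_e^{(t)}\}$ are negatively associated, so Theorem~\ref{th:chernoff:negative} applies. Recall from equation~\ref{eq:overview:dense:st} that $s/m^{(t)} = c\log n/d_k^{(t)}$. Hence $E[\dd_v(Z_i^{(t)},S_i^{(t)})]=\sum_{u\in\NN_v(Z_i^{(t)})}E[X_{(u,v)}^{(t)}]\in[(1\pm\epsilon)\cdot\dd_v(Z_i^{(t)},E^{(t)})\cdot c\log n/d_k^{(t)}]$. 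If $v\in A_i^{(t),\mathrm{true}}$, i.e.\ $\dd_v(Z_i^{(t)},E^{(t)})>\alpha d_k^{(t)}$, then the expectation exceeds $(1-\epsilon)\alpha c\log n$, and a Chernoff upper-deviation bound (Theorem~\ref{th:chernoff:down} in the negatively-associated form) shows $\dd_v(Z_i^{(t)},S_i^{(t)})>(1-\epsilon)^2\alpha c\log n$ with probability $1-1/\mathrm{poly}(n)$ (taking $c$ large enough relative to $\lambda$), so $v$ is placed into $A_i^{(t)}$ in line~05 and therefore into $Z_{i+1}^{(t)}$ via line~07. Symmetrically, if $\dd_v(Z_i^{(t)},E^{(t)})<d_k^{(t)}$, then the expectation is below $(1+\epsilon)c\log n$, and a Chernoff lower-deviation bound gives $\dd_v(Z_i^{(t)},S_i^{(t)})<(1+\epsilon)^2 c\log n$ w.h.p., so $v\in B_i^{(t)}$ and is removed from $Y_j$ for all $j\ge i+1$ in lines~08--09, hence $v\notin Z_{i+1}^{(t)}$. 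One must also check that the "removed" set $B_i^{(t)}$ never re-enters: once $v$ is deleted from $Y_{i+1},\ldots,Y_{L-1}$ at stage $i$, later stages $i'>i$ only add nodes from $A_{i'}^{(t)}\subseteq Z_{i'}^{(t)}$, and $v\notin Z_{i'}^{(t)}$ since $v$ was never re-added, so the two defining conditions are consistent; this is a short monotonicity check on the pseudocode.

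After the per-$(t,i,v)$ bound, I would union-bound: over the at most $n$ choices of $v$, the $L-1=\tilde O(1)$ levels, and the $\le T=\lceil n^\lambda\rceil$ time-steps. Since each failure probability is $1/\mathrm{poly}(n)$ with an exponent we can make exceed $\lambda+2$ by choosing the constant $c$ large enough (this is why equation~\ref{eq:new:lambda} stipulates $c\gg\lambda$), the total failure probability is $1/\mathrm{poly}(n)$, and on the complementary event $(Z_1^{(t)},\ldots,Z_L^{(t)})$ is a valid $(\alpha,d_k^{(t)},L)$-decomposition at every dense $t\in[t_0,t_1]$, and in particular across all dense intervals simultaneously. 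One subtlety worth flagging: the expectation bound in Theorem~\ref{main:th:dynamic:sample:dense}(1) is itself only "with high probability" (it holds on the event $\mathcal{R}^{(t)}$ that the bucket sizes concentrate), so strictly I should condition on the intersection of $\mathcal{R}^{(t)}$ over all dense $t$ and all $L-1$ copies of the sampler first, which again is a high-probability event by a union bound, and then run the argument above conditionally.

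I expect the main obstacle to be bookkeeping the nested conditioning correctly: the sampler's guarantee is probabilistic, $Z_i^{(t)}$ is a random set that determines which nodes we even need to control, and the samples $S_i^{(t)}$ at consecutive time-steps $t$ are correlated (an edge update changes $S_i^{(t)}$ only slightly, per Theorem~\ref{main:th:dynamic:sample:dense}(3)). The clean way around the last point is \emph{not} to track $S_i^{(t)}$ across time, but to prove the decomposition-validity claim \emph{pointwise in $t$} — for each fixed $t$ the sets $S_1^{(t)},\ldots,S_{L-1}^{(t)}$ have the right marginal and independence-across-levels structure by Lemma~\ref{main:lm:independence} and Theorem~\ref{main:th:dynamic:sample:dense}, which is all the Chernoff argument needs — and only then union-bound over $t$. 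The correlation across time-steps is irrelevant for correctness (it will matter only for the running-time analysis in later sections). With that framing, the proof reduces to the level-by-level Chernoff computation sketched above, essentially a re-run of the proof of Lemma~\ref{main:lm:stream:1} with the exact multiplicative-error sampler of Theorem~\ref{main:th:dynamic:sample:dense} in place of the idealized uniform sampler, and with the thresholds $(1-\epsilon)^2\alpha c\log n$ and $(1+\epsilon)^2 c\log n$ chosen precisely so that the $(1\pm\epsilon)$ sampler error composes with the $(1\pm\epsilon)$ Chernoff slack.
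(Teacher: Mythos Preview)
Your proposal is correct and follows essentially the same approach as the paper's proof: fix $t$ and $i$, condition on $Z_i^{(t)}$, use Lemma~\ref{main:lm:independence} to decouple $S_i^{(t)}$ from $Z_i^{(t)}$, apply the negatively-associated Chernoff bound with the sampling rate $s/m^{(t)}=c\log n/d_k^{(t)}$ from Theorem~\ref{main:th:dynamic:sample:dense} and equation~\ref{eq:overview:dense:st}, and then union-bound over nodes, levels, and time-steps. If anything you are more careful than the paper in flagging the conditioning on the sampler's own high-probability event and in verifying that nodes removed at stage $i$ cannot re-enter at later stages; the paper leaves both of these implicit.
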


\begin{proof}
For each time-step $t \in [t_0, t_1]$ and index  $i \in [L-1]$, we define an event  $\mathcal{E}_i^{(t)}$ as follows.
\begin{itemize}
\item The event $\mathcal{E}_i^{(t)}$ occurs iff the following two conditions hold simultaneously. 
\begin{itemize}
\item  $Z_{i+1}^{(t)} \supseteq \{ v \in Z_i^{(t)} : \dd_v(Z_i^{(t)}, E^{(t)}) > \alpha d_k^{(t)} \}$, and 
\item  $Z_{i+1}^{(t)} \cap \{ v \in Z_i^{(t)} : \dd_v(Z_i^{(t)}, E^{(t)}) <  d_k^{(t)} \} = \emptyset$.
\end{itemize}
\end{itemize} 
We now define another event $\mathcal{E}^{(t)}$ for each time-step $t \in [t_0, t_1]$. 
\begin{itemize}
\item $\mathcal{E}^{(t)} = \bigcap_{i=1}^{L-1} \mathcal{E}_i^{(t)}$.
\end{itemize}
\noindent By Definition~\ref{main:def:partition}, the tuple $(Z_1^{(t)} \ldots Z_L^{(t)})$ is an $(\alpha, d_k^{(t)}, L)$-decomposition of $G^{(t)}$ iff the event $\mathcal{E}^{(t)}$ occurs. Below, we show that $\Pr\left[\mathcal{E}_i^{(t)}\right] \geq 1-1/(\text{poly }n)$ for any given $i \in \{1, \ldots, L-1\}$ and $t \in [t_0, t_1]$.  Taking a union bound over all $i \in \{1, \ldots, L-1\}$, we get $\Pr\left[\mathcal{E}^{(t)}\right] \geq 1-1/(\text{poly }n)$ at each time-step $t \in [t_0, t_1]$.  Hence, the lemma follows by taking a union bound over all $t \in [t_0, t_1]$.

\medskip
\noindent For the rest of the proof, fix any time-step $t \in [t_0, t_1]$ and index $i \in \{1, \ldots, L-1\}$.
\begin{itemize}
\item Fix any instance of the random set $Z_i^{(t)}$ and condition on this event. 
\begin{itemize}
\item  By Theorem~\ref{main:th:dynamic:sample:dense} and equation~\ref{eq:overview:dense:st}, each edge $e \in E^{(t)}$  appears in $S_i^{(t)}$ with probability $(1\pm \epsilon) s/m^{(t)} = (1 \pm \epsilon) c \log n/d_k^{(t)}$. Furthermore, these events are negatively associated (see Section~\ref{sec:prelim:concentration}). \\

Consider any node $v \in Z_i^{(t)}$ with $\dd_v(Z_i^{(t)}, E^{(t)}) > \alpha d_k^{(t)}$. By linearity of expectation: 
$$E\left[\dd_v(Z_i^{(t)}, S_i^{(t)})\right] \geq \alpha d_k^{(t)} \cdot (1-\epsilon) c \log n/d_k^{(t)} =  (1-\epsilon) \alpha c \log n.$$ 
Since  the contents of the random  set $S_i^{(t)}$ are independent of the contents of $Z_i^{(t)}$ (see Lemma~\ref{main:lm:independence}), we can apply a Chernoff bound on this expectation, and  derive that: 
\begin{equation}
\Pr\left[\dd_v(Z_i^{(t)}, S_i^{(t)}) > (1-\epsilon)^2 \alpha c \log n \, | \, Z_i^{(t)}\right] \geq 1- 1/(\text{poly }n) \label{eq:last:one:1}
\end{equation}
Now,  Figure~\ref{main:fig:sample:recover} implies that if $\dd_v(Z_i^{(t)}, S_i^{(t)}) > (1-\epsilon)^2 \alpha c \log n$, then the node $v$ becomes part of $Z_{i+1}^{(t)}$. Thus, applying equation~\ref{eq:last:one:1} we get:
\begin{eqnarray}
\Pr\left[v \in Z_{i+1}^{(t)} \, | \, Z_i^{(t)}\right]  \geq  \Pr\left[\dd_v(Z_i^{(t)}, S_i^{(t)}) > (1-\epsilon)^2 \alpha c \log n \, | \, Z_i^{(t)}\right]  \geq 1 - 1/(\text{poly } n) \label{eq:last:one:2}
\end{eqnarray}
Note that equation~\ref{eq:last:one:2} would have been true even if $v$ did not belong to $Z_i^{(t)}$. Furthermore, equation~\ref{eq:last:one:2} holds regardless of the event $\mathcal{E}^{(t-1)}$. 

\medskip
Next, consider any node $u \in Z_i^{(t)}$ with $\dd_u(Z_i^{(t)}, E^{(t)}) < d_k^{(t)}$. 
A similar argument shows: 
\begin{equation}
\label{eq:last:one:3}\Pr\left[u \notin Z_{i+1}^{(t)} \, | \, Z_i^{(t)} \right] \geq \Pr\left[\dd_u(Z_i^{(t)}, E^{(t)}) < (1+\epsilon)^2 c \log n \, | \, Z_i^{(t)}\right] \geq 1- 1/(\text{poly }n)
\end{equation} 
Note that equation~\ref{eq:last:one:3} would have been true even if $u$ did not belong to $Z_i^{(t)}$. Furthermore, equation~\ref{eq:last:one:3} holds regardless of the event $\mathcal{E}^{(t-1)}$. 

\medskip
 Thus, applying a union bound on equations~\ref{eq:last:one:2},~\ref{eq:last:one:3} over all  nodes in $Z_i^{(t)}$, we infer that: 
$$\Pr\left[\mathcal{E}_i^{(t)} \, | \, Z_i^{(t)}\right] \geq 1-1/(\text{poly }n).$$ 
\end{itemize}
\end{itemize}
Since the  guarantee $\Pr\left[\mathcal{E}_i^{(t)} \, | \, Z_i^{(t)}\right] \geq 1-1/(\text{poly }n)$ holds for every possible instantiation  of $Z_i^{(t)}$, we get $\Pr\left[\mathcal{E}_i^{(t)}\right] \geq 1-1/(\text{poly }n)$. Taking a union bound over all indices $i \in \{1, \ldots, L-1\}$, we infer that $\Pr\left[\mathcal{E}^{(t)}\right] = \bigcap_{i=1}^{L-1} \mathcal{E}_i^{(t)} \geq 1-1/(\text{poly }n)$.

 In other words,  at every time-step $t \in [t_0, t_1]$ the event $\mathcal{E}^{(t)}$ occurs with high probability, and this holds {\em regardless of the  past events $\mathcal{E}^{(t')}, t' < t$}. Hence, taking a union bound over all time-steps in the interval $[t_0, t_1]$, we get: With high probability, for all $t \in [t_0, t_1]$ the tuple $(Z_1^{(t)} \ldots Z_L^{(t)})$ maintained by the subroutine {\sc Dynamic-stream} gives an $(\alpha,  d_k^{(t)}, L)$-decomposition of the input graph $G^{(t)}$. This concludes the proof of the lemma. 
 \end{proof}

\subsubsection{Data structures for implementing the procedure in Figure~\ref{main:fig:sample:recover}} 
\label{main:sec:dynamic:implement}

Recall the notations introduced in Section~\ref{sec:prelim}.  
\begin{itemize}
\item Consider any node $v \in V$ and any $i \in \{1, \ldots, L-1\}$.  We maintain the doubly linked lists $\left\{\text{{\sc Friends}}_i[v,j]\right\}, 1 \leq  j \leq L-1$ as defined below. Each of these lists is defined by the neighborhood of $v$ induced by the sampled edges in $S_i$. Recall  Definition~\ref{main:def:partition}.
\begin{itemize}
\item If $i \le \ell(v)$, then we have:
\begin{itemize}
\item $\text{{\sc Friends}}_i[v, j]$ is empty for all $j > i$.
\item $\text{{\sc Friends}}_{i}[v,j] = \NN_v(Z_j, S_{i})$ for $j = i$. 
\item $\text{{\sc Friends}}_{i}[v,j] = \NN_v(V_j, S_{i})$ for all $j < i$.
\end{itemize}
\item Else if $i > \ell(v)$, then we have:
\begin{itemize}
\item $\text{{\sc Friends}}_i[v,j]$ is empty for all $j > \ell(v)$.
\item $\text{{\sc Friends}}_{i}[v,j] = \NN_v(Z_{j}, S_{i})$ for $j = \ell(v)$.
\item  $\text{{\sc Friends}}_i[v,j] = \NN_v(V_j, S_{i})$ for all $j < \ell(v)$.
\end{itemize}
\end{itemize}
\end{itemize}

\noindent For every node $v \in V$, we maintain a counter $\text{{\sc Degree}}_i[v]$, which keeps track of  the number of nodes in $\text{{\sc Friends}}_i[v,i]$. Note that if $\ell(v) < i$, then this counter equals zero. Further, we maintain a doubly linked list $\text{{\sc Dirty-Nodes}}[i]$. This   consists of all the nodes $v \in V$ having either  \\  $\big\{$$\text{{\sc Degree}}_i[v] > (1-\epsilon)^2 \alpha c \log n$ and $\ell(v) = i$$\big\}$ or $\big\{$$\text{{\sc Degree}}_i[v]$ $< (1+\epsilon)^2 c \log n$ and $\ell(v) > i$$\big\}$.

\paragraph{Remark.} Note that for any given index $i \in \{1, \ldots, L-1\}$ and any time-step $t \leq T$, an edge $e \in E^{(t)}$ of the input graph appears at most once among the samples in $S_i^{(t)}$ (see Theorem~\ref{main:th:dynamic:sample:dense}). Thus, the number of occurrences of an edge among the samples $S_1^{(t)}, \ldots, S_{L-1}^{(t)}$ is at most $(L-1)$. 

\subsubsection{Implementing the procedure in Figure~\ref{main:fig:sample:recover} during a dense time-interval $[t_0, t_1]$}
\label{sec:main:old} Consider any dense time-interval $[t_0, t_1] \subseteq [1, T]$, and fix any time-step $t \in [t_0,t_1]$. The $t^{th}$ edge insertion/deletion in the input graph might lead to some changes in the random subsets of edges $S_1, \ldots, S_{L-1} \subseteq E$. However, Theorem~\ref{main:th:dynamic:sample:dense} implies that an edge insertion/deletion in $G$ can lead to at most two edge insertions/deletions in $S_i$, for all $i \in \{1, \ldots, L-1\}$. Thus, due to the $t^{th}$ update in the stream, there can be at most $O(L) = \O(1)$ insertions/deletions in the random sets $S_1, \ldots, S_{L-1}$ (see equation~\ref{eq:new:L}). After each such edge insertion/deletion in an $S_i, i \in \{1, \ldots, L-1\}$, we update the relevant data structures described in Section~\ref{main:sec:dynamic:implement}. Since an edge  $(u,v) \in S_i$ can potentially appear only in the lists $\text{{\sc Friends}}_i[x, j]$ with $x \in \{u,v\}$ and $j \in \{1, \ldots, L-1\}$ (see Section~\ref{main:sec:dynamic:implement}),  we reach the following conclusion:
\begin{itemize}

\item When an edge insertion/deletion in $G$ leads to changes in the random edge sets $S_i, \ldots, S_{L-1} \subseteq E$, we can update the {\sc Friends} and {\sc Dirty-Nodes} lists from Section~\ref{main:sec:dynamic:implement} in $\O(1)$ time.  
\end{itemize}
After updating the edge sets $S_1, \ldots, S_{L-1}$ and the {\sc Friends} and {\sc Dirty-Nodes} lists, we run the procedure described in Figure~\ref{main:fig:sample:recover}. Now, consider the $i^{th}$ iteration of the main {\sc For} loop (Steps~05-10)  in Figure~\ref{main:fig:sample:recover}, for some index $i \in \{1, \ldots, L-1\}$. The purpose of this iteration is to construct the set $Z_{i+1}^{(t)}$, based on the sets $Z_i^{(t)}$ and $S_i^{(t)}$. Below, we state an alternate way of visualizing this iteration.

We scan through the list of nodes $u$ with $\ell(u) = i$ and $\dd_u(Z_i^{(t)}, S_i^{(t)}) > (1-\epsilon)^2 \alpha c \log n$.   While considering each such node $u$, we increment its level from $i$ to $(i+1)$. This takes care of the Steps~(05) and~(07).  Next, we scan through the list of nodes $v$ with $\ell(v) > i$ and $\dd_v(Z_i^{(t)}, S_i^{(t)}) < (1+\epsilon)^2 c \log n$. While considering any such node $v$ at level $\ell(v) = j_v > i$ (say), we decrement its level from $j_v$ to $i$. This takes care of the Steps~(06),~(08) and~(09).

Note that the nodes undergoing a level-change in the preceding paragraph are precisely the ones that appear  in the list $\text{{\sc Dirty-Nodes}}[i]$ just before the $i^{th}$ iteration of the main {\sc For} loop. Thus, we can implement Steps~(05-10) as follows: Scan through the nodes $y$ in $\text{{\sc Dirty-Nodes}}[i]$ one after another. While considering any such  node $y$, change its level as per Figure~\ref{main:fig:sample:recover}, and then update the relevant data structures to reflect this change.

The next lemma states the space complexity of this procedure. 

\begin{lemma}
\label{main:lm:sample:space}
Our algorithm  in Figure~\ref{main:fig:sample:recover} can be implemented in $\O(n)$ space.
\end{lemma}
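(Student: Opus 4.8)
The plan is to bound the space used by each of the data structures described in Section~\ref{main:sec:dynamic:implement}, and observe that their sum is $\O(n)$. The key point is that the subroutine {\sc Dynamic-stream} never stores the input graph itself — it only has access to the random edge sets $S_1^{(t)}, \ldots, S_{L-1}^{(t)}$, and each of these is of size $s = 2\alpha c n \log n/(1+\epsilon)^{k-1} = \O(n)$ by equation~\ref{eq:overview:dense:st}. So the total number of sampled edges across all $L-1$ copies is $(L-1) \cdot s = \O(n)$, since $L = \O(1)$ by equation~\ref{eq:new:L}.

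First I would account for the {\sc Friends} lists. For a fixed index $i$, the lists $\{\text{{\sc Friends}}_i[v,j]\}_{v \in V, j \in [L-1]}$ together contain, for each endpoint of each edge $e \in S_i$, at most one entry in one of the $L-1$ lists attached to that endpoint; more crudely, by the Remark at the end of Section~\ref{main:sec:dynamic:implement} an edge $e \in E^{(t)}$ appears at most once in $S_i^{(t)}$, and an edge $(u,v) \in S_i$ contributes entries only to lists of the form $\text{{\sc Friends}}_i[x,j]$ with $x \in \{u,v\}$. Hence the total length of all {\sc Friends} lists for a fixed $i$ is $O(|S_i|) = O(s)$, and summing over $i \in [L-1]$ gives $O((L-1)s) = \O(n)$ entries; each entry together with its doubly-linked-list pointers takes $O(\log n)$ bits, so this is $\O(n)$ bits. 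Next, the counters $\text{{\sc Degree}}_i[v]$ number $(L-1)n = \O(n)$ and each uses $O(\log n)$ bits. The lists $\text{{\sc Dirty-Nodes}}[i]$ for $i \in [L-1]$ each contain at most $n$ nodes, for a total of $O((L-1)n) = \O(n)$ entries. Finally, the node sets $Z_1^{(t)} \supseteq \cdots \supseteq Z_L^{(t)}$ can be represented implicitly via the counter $\text{{\sc Level}}[v]$ for each $v \in V$ (one per node, $O(\log L)$ bits), which costs $O(n \log L) = \O(n)$ bits; the auxiliary sets $Y_1, \ldots, Y_L$ used inside RECOVER-SAMPLE can likewise be represented by per-node markers and reuse $O(n)$ space. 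Adding these contributions, the total space is $\O(n)$ bits.

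The main thing to be careful about — and what I expect to be the only mildly subtle point — is verifying that the {\sc Friends} lists really have total size proportional to $\sum_i |S_i|$ rather than to $m$: one must use the fact (from Theorem~\ref{main:th:dynamic:sample:dense}) that each $S_i$ is of size exactly $s$ and that every list entry corresponds to an endpoint of a sampled edge, so no edge outside $\bigcup_i S_i$ ever contributes to storage. Everything else is a routine counting of $\O(n)$-many objects of $O(\text{poly}\log n)$ size each. Since the space taken to store and evaluate the $w$-wise independent hash functions of the underlying sampling subroutines is already $\O(n)$ by Theorem~\ref{main:th:dynamic:sample:dense} (and is not re-counted here), the lemma follows.
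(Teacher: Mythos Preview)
Your proposal is correct and follows the same core idea as the paper: the space is dominated by the total number of sampled edges, which is $(L-1)\cdot s = \O(n)$. The paper's own proof is a one-liner to exactly this effect, whereas you go further and carefully itemize the contribution of each data structure (the {\sc Friends} lists, the {\sc Degree} counters, the {\sc Dirty-Nodes} lists, and the level markers); this extra bookkeeping is sound and simply makes explicit what the paper leaves implicit. One very minor quibble: each $S_i^{(t)}$ has size at most $s$ (some buckets of the $\ell_0$-samplers could be empty), not exactly $s$, but this only helps your bound.
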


\begin{proof}
The amount of space needed is dominated by the number of edges in $\left\{S_i^{(t)}\right\}, i \in [L-1]$. Since $|S_i^{(t)}| \leq s$ for each $i \in [L-1]$, the space complexity is $(L-1) \cdot s = \O(n)$ (see equations~\ref{eq:new:L},~\ref{eq:overview:dense:s}).
\end{proof}

The claim below  bounds  the time taken by a single iteration of the main {\sc For} loop in Figure~\ref{main:fig:sample:recover}. This  will be crucially used to analyze the overall update time of our algorithm in Section~\ref{main:sec:dynamic:update-time}. 

\begin{claim}
\label{main:cl:sample:runtime:up}
Fix any time-step $t \in [t_0, t_1]$, and consider the $i^{th}$ iteration of the main {\sc For} loop in Figure~\ref{main:fig:sample:recover} for some $i \in \{1, \ldots, L-1\}$. Consider  two nodes $u, v \in Z_i^{(t)}$ such that:
\begin{itemize}
\item (a) the level of $u$ is increased from $i$ to $(i+1)$ in Step~(07), and 
\item (b) the level of $v$ is decreased to $i$ in Steps~(08--09). 
\end{itemize}
Updating the relevant data structures for this step require $\sum_{i' > i} O(1+\dd_y(Z_i^{(t)}, S_{i'}^{(t)}))$ time, where $y = u$ (resp. $v$) in the former (resp. latter) case.
\end{claim}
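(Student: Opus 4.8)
The plan is to carefully account for which data structures need to be touched when a single node $y$ changes its level in the $i^{th}$ iteration, and to observe that these touches are confined to the lists $\text{{\sc Friends}}_{i'}[\cdot,\cdot]$ for $i' \geq i$ (together with the associated counters and the {\sc Dirty-Nodes} lists). First I would treat the two cases separately but in parallel, since they are symmetric. In the promotion case ($u$ moves from level $i$ to $i+1$ in Step~(07)), recall that $\ell(u)$ was $i$, so by the data-structure invariants in Section~\ref{main:sec:dynamic:implement} the only lists of the form $\text{{\sc Friends}}_{i'}[u,\cdot]$ that are nonempty are those with $i' \geq i$ (for $i' < \ell(u)=i$ the list is $\NN_u(V_{i'}, S_{i'})$, which does not change when $\ell(u)$ changes, so these need no update). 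For each $i' \geq i$, the change of $\ell(u)$ from $i$ to $i+1$ affects $\text{{\sc Friends}}_{i'}[u,j]$ only for $j \in \{i, i+1\}$ (the entries $\NN_u(Z_i,S_{i'})$ and $\NN_u(Z_{i+1},S_{i'})$ get redistributed), and symmetrically, for every neighbour $x$ of $u$ in the sampled graph $S_{i'}$ with $\ell(x)\ge i+1$, the membership of $u$ in $\text{{\sc Friends}}_{i'}[x,\cdot]$ shifts between two adjacent buckets. The number of such neighbours is at most $\dd_u(Z_i^{(t)}, S_{i'}^{(t)})$. Each individual pointer update is $O(1)$ using the doubly linked list structure, and we also update $\text{{\sc Degree}}_{i'}[u]$ and the membership of $u$ (and of affected neighbours) in $\text{{\sc Dirty-Nodes}}[i']$ in $O(1)$ per affected node. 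Summing over $i'$, the cost is $\sum_{i'\ge i} O(1 + \dd_u(Z_i^{(t)}, S_{i'}^{(t)}))$, where the additive $O(1)$ per $i'$ accounts for the bookkeeping on $u$ itself (its own counters and {\sc Dirty-Nodes} membership at each level $i'$) even when it has no neighbours at that level.

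For the demotion case ($v$ moves down to level $i$ in Steps~(08--09)), the reasoning is analogous: before the demotion $\ell(v) = j_v > i$, and for each $i' \geq i$ the list $\text{{\sc Friends}}_{i'}[v,\cdot]$ changes only in a bounded number of buckets (those indexed between $i$ and $j_v$ get coalesced/split appropriately), while for each neighbour $x$ of $v$ in $S_{i'}$ with $\ell(x) > i$ the entry of $v$ moves between adjacent buckets of $\text{{\sc Friends}}_{i'}[x,\cdot]$. Here I must be slightly careful: the number of neighbours whose lists need touching is bounded by $\dd_v(Z_i^{(t)}, S_{i'}^{(t)})$ (not by $\dd_v$ in the full graph), because only neighbours lying in $Z_i^{(t)}$ can be in the relevant buckets after the move, and the invariant lets us enumerate exactly those by walking $\text{{\sc Friends}}_{i'}[v, i']$ or a constant number of such lists. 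As in the promotion case, each touched node costs $O(1)$, there are $O(1)$ bookkeeping operations on $v$ itself per level $i'$, and summing over $i' \ge i$ gives $\sum_{i'\ge i} O(1 + \dd_v(Z_i^{(t)}, S_{i'}^{(t)}))$ total.

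The main obstacle — and the point that requires the most care in the write-up — is justifying that we only pay for levels $i' \geq i$ and not for all $L-1$ sampled graphs: this hinges on the precise invariant that $\text{{\sc Friends}}_{i'}[v, j]$ for $i' < \ell(v)$ records a partition by the \emph{permanent} level sets $V_j$ rather than by the current suffixes $Z_j$, so that a level change of $v$ within the range $[i, j_v]$ leaves every $\text{{\sc Friends}}_{i'}$ with $i' < i$ untouched (those refer to levels strictly below the ones being modified). I would also note the subtlety that the iteration of the {\sc For} loop processes $Z_i^{(t)}$ using only $S_i^{(t)}$ to decide \emph{who} moves, but the \emph{cost} of a move touches all higher-indexed sample sets; this is exactly the phenomenon flagged in Section~\ref{main:sec:challenge}, and the claim isolates it cleanly so that the potential-function argument in Section~\ref{main:sec:dynamic:update-time} can charge the $\dd_y(Z_i^{(t)}, S_{i'}^{(t)})$ terms against the (edge) potentials living in those higher sample sets. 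Everything else is routine pointer-chasing accounting, so I would keep the proof short: state the invariant being used, bound the affected-neighbour count by $\dd_y(Z_i^{(t)}, S_{i'}^{(t)})$ for each $i'$, add $O(1)$ per level for $y$'s own bookkeeping, and sum.
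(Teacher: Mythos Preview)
Your approach is essentially the same as the paper's: identify which $\text{{\sc Friends}}_{i'}[x,j]$ lists are touched when $y$ changes level, observe that these are confined to $x \in \{y\} \cup \N_y(Z_i^{(t)}, S_{i'}^{(t)})$ with $i'$ above $i$ and a bounded range of $j$'s, and charge $O(1)$ per touched list. The paper compresses this into one sentence; your version is considerably more detailed but follows the same reasoning.

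One point to tighten: you write the sum over $i' \geq i$, but the claim (and the downstream use in equations~(\ref{main:eq:up:1}) and~(\ref{main:eq:down:1})) requires strictly $i' > i$. You should note explicitly that the case $i' = i$ needs no $\text{{\sc Friends}}_i$ update at all --- when $\ell(y)$ moves between $i$ and a level above $i$, the bucket structure of $\text{{\sc Friends}}_i[\cdot,\cdot]$ depends only on $Z_j, V_j$ for $j \leq i$, and these sets do not change --- so only an $O(1)$ adjustment to $\text{{\sc Dirty-Nodes}}[i]$ is needed there. Also, your parenthetical ``for $i' < \ell(u)=i$ the list is $\NN_u(V_{i'}, S_{i'})$'' misstates the invariant (each $\text{{\sc Friends}}_{i'}[u,\cdot]$ is a \emph{family} of lists indexed by $j$, with bucket $j=i'$ holding $\NN_u(Z_{i'},S_{i'})$), though your conclusion that nothing there changes is correct.
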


\begin{proof}
Follows from the fact that we only need to update the lists $\text{{\sc Friends}}_{i'}[x, j]$ where $i' > i$,  $x \in \{y \} \cup \N_y(Z_i^{(t)}, S_{i'}^{(t)})$ and $j \in \{i,i+1\}$.
\end{proof}

\subsubsection{The amortized update time of \text{{\sc Dynamic-Stream}} during a dense time-interval}
\label{main:sec:dynamic:update-time}

In this section, we bound the amortized update time of the subroutine {\sc Dynamic-stream} during a dense time-interval $[t_0, t_1]$.  Recall that the subroutine {\sc Dynamic-stream}  maintains an $(\alpha, d_k^{(t)}, L)$-decomposition $(V_1^{(t)}, \ldots, V_L^{(t)})$ of the input graph $G^{(t)}$ throughout the duration of such an interval.
To bound the amortized update time, we use a potential function $\B$ as defined in equation~\ref{main:eq:potential:main}. Note that the potential $\B$ is uniquely determined by the assignment of the nodes $v \in V$ to the levels $\{1, \ldots, L\}$ and by the contents of the random sets $S_1, \ldots, S_{(L-1)}$.  For all nodes $v \in V$, we define:
\begin{eqnarray}
\Gamma_i(v) & = & \max(0, (1-\epsilon)^2 \alpha c \log n - \dd_v(Z_i, S_i)) \text{ for all } i \in \{1, \ldots, \ell(v) -1\} \label{main:eq:potential:new} \\
\Phi(v)  & = &  (L / \eps) \cdot \sum_{i = 1}^{\ell(v)-1} \Gamma_i(v)  \label{main:eq:potential:node} 
\end{eqnarray}
For all $u, v \in V$, let $f(u,v) = 1$ if $\ell(u) = \ell(v)$ and $0$ otherwise.
Also, let $r_{uv}=\min(\ell(u), \ell(v))$.  For all $i \in \{1, \ldots, L-1\}$, $(u,v) \in S_i$, we define:
\begin{eqnarray}
\Psi_i(u,v)  &=&  \begin{cases} 0 & \text{if }  r_{uv}  \geq i;  \\
2 \cdot \left(i - r_{uv} \right) +   f(u,v) & \text{ otherwise.}
\end{cases}, \label{main:eq:potential:edge}
\end{eqnarray}
The potential $\B$ is defined as the sum of the potentials associated with all the nodes and edges. 
\begin{eqnarray}
\B  & =&  \sum_{v \in V} \Phi(v) + \sum_{i=1}^{(L-1)} \sum_{e \in S_i} \Psi_i(e)   \label{main:eq:potential:main} 
\end{eqnarray}
It might be instructive to contrast this potential function with the one used to analyze the dynamic algorithm in Section~\ref{sec:dynamic}.

\paragraph{Roadmap.} Our analysis works in three steps. 
\begin{enumerate}
\item In Definition~\ref{main:def:sample:main:2}, we  describe an event $\mathcal{F}$.  To understand the intuition behind this definition, recall the discussion on the third technical challenge in Section~\ref{main:sec:challenge}: We have to overcome the apparent obstacle that different levels of the $(\alpha, d_{k}^{(t)}, L)$-decomposition are constructed using different subsets of randomly sampled edges. Intuitively, the event $F$ guarantees that the degrees of a node among these different subsets of edges are approximately  the same with high probability. This helps in extending the ideas from the potential function based analysis in Section~\ref{sec:dynamic} to the current setting.  
\item In Lemma~\ref{main:cl:sample:main:2}, we show that the event $\mathcal{F}$ holds with high probability. The proof of Lemma~\ref{main:cl:sample:main:2} appears in Section~\ref{sec:extreme:new}.
\item Conditioned on the event $\mathcal{F}$, we show that our algorithm has $\O(1)$ amortized update time (see Lemma~\ref{main:lm:dynamic:stream:runtime} and Corollary~\ref{main:cor:dynamic:stream:runtime}). The proof of Lemma~\ref{main:lm:dynamic:stream:runtime} appears in Section~\ref{sec:extreme:new:1}.
\end{enumerate}

\begin{definition}
\label{main:def:sample:main:2}
Recall the procedure in Figure~\ref{main:fig:sample:recover}. For all levels $i, i' \in \{1, \ldots, L-1\}$ with $i < i'$, and time-steps $t \in [t_0, t_1]$, we define an  $\mathcal{F}_{i,i'}^{(t)}$ as follows.
\begin{itemize}
\item The event $\mathcal{F}_{i,i'}^{(t)}$ occurs iff the following conditions are satisfied.
\begin{itemize}
\item $\big\{$$\dd_v(Z_i^{(t)}, S_{i'}^{(t)}) \geq \frac{(1-\epsilon)^4}{(1+\epsilon)^2} \cdot (\alpha c \log n)$ for all $v \in A_i^{(t)}$$\big\}$, and 
\item $\big\{$$\dd_v(Z_i^{(t)}, S_{i'}^{(t)}) \leq \frac{(1+\epsilon)^4}{(1-\epsilon)^2} \cdot c \log n$ for all $v \in B_i^{(t)}$$\big\}$.
\end{itemize} 
\end{itemize}
Now, define the event $\mathcal{F}^{(t)} = \bigcap_{i, i'} \mathcal{F}_{i,i'}^{(t)}$ and the event $\mathcal{F} = \bigcap_{t=T'}^T \mathcal{F}^{(t)}$.\end{definition}

\begin{lemma}
\label{main:cl:sample:main:2}
The event $\mathcal{F}$ holds with high probability. 
\end{lemma}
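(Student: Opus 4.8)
The plan is to reduce everything to a per-triple estimate: for each fixed time-step $t$ and pair of levels $i < i'$ I would show $\Pr[\mathcal{F}_{i,i'}^{(t)}] \ge 1 - 1/\poly(n)$, and then union-bound over all such triples. Since $T = \Theta(n^{\lambda})$ (equation~\eqref{eq:new:T}) and $L = \tilde\Theta(1)$ (equation~\eqref{eq:new:L}), there are only $\poly(n)$ of them, and since $c$ is a constant chosen large relative to $\lambda$ (equation~\eqref{eq:new:lambda}), a per-triple failure probability of the form $n^{-\Theta(\epsilon^2 c)}$ will comfortably survive this union bound.

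For a fixed triple $(t,i,i')$ I would first condition on an arbitrary instance $Z_i^{(t)} = Z$. By Lemma~\ref{main:lm:independence}, $Z_i^{(t)}$ is a function of $\{S_j^{(t)}\}_{j<i}$ alone, so this conditioning leaves $(S_i^{(t)}, S_{i'}^{(t)})$ with its original joint distribution; and because $S_i^{(t)}$ and $S_{i'}^{(t)}$ come from two mutually independent copies of the sampler of Theorem~\ref{main:th:dynamic:sample:dense}, they stay mutually independent given $Z$. I would also absorb into the conditioning the high-probability events supplied by part~1 of Theorem~\ref{main:th:dynamic:sample:dense} (one per copy per time-step, so $\poly(n)$ of them), namely that each edge $e \in E^{(t)}$ lies in $S_j^{(t)}$ with probability $p_j \in [(1\pm\epsilon)\, s/m^{(t)}] = [(1\pm\epsilon)\, c\log n/d_k^{(t)}]$ by equation~\eqref{eq:overview:dense:st}; part~2 of that theorem then licenses Chernoff bounds via Theorem~\ref{th:chernoff:negative}, since the relevant membership indicators are negatively associated.

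The heart of the argument is a case split, for each node $v$, on the \emph{deterministic} quantity $D := \dd_v(Z, E^{(t)})$ (deterministic once $Z$ is fixed), noting $E[\dd_v(Z, S_j^{(t)}) \mid Z] = p_j D \in [(1\pm\epsilon)\, cD\log n/d_k^{(t)}]$ for $j \in \{i, i'\}$. For the first bullet of $\mathcal{F}_{i,i'}^{(t)}$: if $D < \tfrac{(1-\epsilon)^3}{1+\epsilon}\alpha d_k^{(t)}$, a Chernoff upper-tail bound (Theorem~\ref{th:chernoff:up}) with cap $\mu = (1-\epsilon)^3\alpha c\log n = \Theta(c\log n)$ and deviation $\Theta(\epsilon)$ shows $v \notin A_i^{(t)}$ with probability $\ge 1 - 1/\poly(n)$; if instead $D \ge \tfrac{(1-\epsilon)^3}{1+\epsilon}\alpha d_k^{(t)}$, then $E[\dd_v(Z,S_{i'}^{(t)}) \mid Z] \ge \tfrac{(1-\epsilon)^4}{1+\epsilon}\alpha c\log n$ and a Chernoff lower-tail bound (Theorem~\ref{th:chernoff:down}) gives $\dd_v(Z, S_{i'}^{(t)}) \ge \tfrac{(1-\epsilon)^4}{(1+\epsilon)^2}\alpha c\log n$ with probability $\ge 1 - 1/\poly(n)$ --- exactly the threshold in Definition~\ref{main:def:sample:main:2}. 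Because $S_i^{(t)}$ and $S_{i'}^{(t)}$ are conditionally independent, the two sub-cases compose cleanly. The second bullet is symmetric: split on whether $D \ge \tfrac{(1+\epsilon)^3}{1-\epsilon} d_k^{(t)}$ (then $v \notin B_i^{(t)}$ with high probability, by a lower-tail bound) or not (then $\dd_v(Z, S_{i'}^{(t)}) \le \tfrac{(1+\epsilon)^4}{(1-\epsilon)^2} c\log n$ with high probability, by an upper-tail bound). Finally I would union-bound over the $n$ nodes $v$, remove the conditioning on $Z$ by averaging (the estimate held for every instance $Z$), and union-bound over the $\poly(n)$ triples and auxiliary hash events.

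The step I expect to be the main obstacle is the conditioning hygiene around the sets $A_i^{(t)}, B_i^{(t)}$: membership of a node in $A_i^{(t)}$ (resp.\ $B_i^{(t)}$) is itself determined by $S_i^{(t)}$, so one cannot naively ``condition on $v \in A_i^{(t)}$'' and treat $S_{i'}^{(t)}$ as fresh randomness with a clean conditional law --- this is precisely what forces the constants $\tfrac{(1-\epsilon)^4}{(1+\epsilon)^2}$ and $\tfrac{(1+\epsilon)^4}{(1-\epsilon)^2}$ in Definition~\ref{main:def:sample:main:2}. Splitting on the deterministic $D = \dd_v(Z_i^{(t)}, E^{(t)})$ is the device that resolves it: each bad scenario is routed either to ``$v$ should never have entered $A_i^{(t)}/B_i^{(t)}$'' (an event about $S_i^{(t)}$ only) or to ``$\dd_v(Z_i^{(t)}, S_{i'}^{(t)})$ strayed from its mean'' (an event about the independent $S_{i'}^{(t)}$ only), and each is a one-sided Chernoff estimate with $\Theta(\epsilon)$ slack on a mean of order $c\log n$, hence $n^{-\Theta(\epsilon^2 c)}$.
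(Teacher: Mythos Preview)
Your proposal is correct and follows essentially the same approach as the paper's own proof: condition on an instance of $Z_i^{(t)}$, case-split each node on its \emph{deterministic} degree $\dd_v(Z_i^{(t)}, E^{(t)})$ (the paper encodes this split via an auxiliary set $W_i^{(t)}$ of low-degree nodes, with threshold $\tfrac{(1-\epsilon)^2}{(1+\epsilon)^2}\alpha d_k^{(t)}$ rather than your $\tfrac{(1-\epsilon)^3}{1+\epsilon}\alpha d_k^{(t)}$, but either threshold works), apply a Chernoff bound to $S_i^{(t)}$ in one case and to the independent $S_{i'}^{(t)}$ in the other, then union-bound over nodes, indices, and time-steps. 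Your explicit discussion of why one cannot simply condition on $v \in A_i^{(t)}$ and must instead route through the deterministic degree is exactly the point the paper handles silently via $W_i^{(t)}$.
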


\begin{lemma}
\label{main:lm:dynamic:stream:runtime}
Conditioned on the event $\mathcal{F}$, we have:
\begin{itemize}
\item (a) $0 \leq \B = \O(n)$ at each time-step $t \in [t_0, t_1]$.
\item (b) Insertion/deletion of an edge in $G$ (ignoring the call to the procedure in Figure~\ref{main:fig:sample:recover}) changes the potential $\B$ by $\O(1)$. 
\item (c) For every constant amount of computation performed while implementing the procedure in  Figure~\ref{main:fig:sample:recover}, the potential $\B$ drops by  $\Omega(1)$.  
\end{itemize}
\end{lemma}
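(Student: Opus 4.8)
The plan is to verify the three claims (a), (b), (c) by closely mirroring the potential-function analysis of Section~\ref{sec:dynpart}, but tracking the fact that a node's level change is triggered by its degree in one sampled set $S_i$ while the cost of the change, and the potential bookkeeping, involve its degrees in the sets $S_{i'}$ with $i' > i$. The event $\mathcal F$ is exactly what lets us translate between these: on $\mathcal F$, for a node $y$ that moves up from level $i$ (so $y \in A_i^{(t)}$) we have $\dd_y(Z_i^{(t)}, S_{i'}^{(t)}) \geq \frac{(1-\epsilon)^4}{(1+\epsilon)^2}\alpha c\log n$ for all $i'>i$, and for a node that moves down to level $i$ (so $y\in B_i^{(t)}$) we have $\dd_y(Z_i^{(t)}, S_{i'}^{(t)}) \leq \frac{(1+\epsilon)^4}{(1-\epsilon)^2} c\log n$ for all $i'>i$. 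Throughout I will use $\dd_y(Z_i,S_{i'})$ as the ``cost unit'' for work touching level $i'$ when $y$'s level crosses $i$, since by Claim~\ref{main:cl:sample:runtime:up} the total work for one level change of $y$ at level $i$ is $\sum_{i'>i} O(1+\dd_y(Z_i^{(t)}, S_{i'}^{(t)}))$.

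For part (a): non-negativity is immediate from the definitions in equations~\ref{main:eq:potential:new}–\ref{main:eq:potential:main} since $\Gamma_i(v)$, $\Phi(v)$, and each $\Psi_i(u,v)$ are manifestly $\ge 0$. For the $\O(n)$ upper bound, bound $\sum_v \Phi(v)$ by noting $\Gamma_i(v) \le (1-\epsilon)^2\alpha c\log n = \O(1)$ and there are at most $L-1 = \O(1)$ terms per node, giving $\sum_v\Phi(v) = \O(n)$; and bound $\sum_{i}\sum_{e\in S_i}\Psi_i(e)$ by noting $\Psi_i(e) \le 2(L-1)+1 = \O(1)$ and $\sum_i |S_i| \le (L-1)s = \O(n)$ by equations~\ref{eq:new:L},\ref{eq:overview:dense:s}. (The event $\mathcal F$ is not even needed for (a), but stating it as a hypothesis does no harm.)

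For part (b): an edge insertion/deletion in $G$ causes, by Theorem~\ref{main:th:dynamic:sample:dense}(3), at most two insertions/deletions in each $S_i$, hence $O(L)=\O(1)$ changes across all sampled sets; each such change to a single edge $(u,v)\in S_i$ alters $\Psi_i(u,v)$ by $\O(1)$ and alters $\dd_u(Z_j,S_i),\dd_v(Z_j,S_i)$ by one for the relevant $j$, hence changes $\Gamma_j(u),\Gamma_j(v)$ by at most one and thus $\Phi(u),\Phi(v)$ by at most $L/\epsilon=\O(1)$. Summing over the $\O(1)$ affected edges gives a total change of $\O(1)$, proving (b). Part (c) is the crux: I will split into the two events used in Section~\ref{sec:dynpart}. \emph{Level increase of $y$ from $i$ to $i+1$}: here $\dd_y(Z_i,S_i) > (1-\epsilon)^2\alpha c\log n$ so $\Gamma_i(y)=0$ and $\Phi(y)$ does not rise; $\Phi(x)$ for neighbours can only drop; and for the edge potentials, each edge of $S_{i'}$ (for $i'>i$) incident to $y$ has $r_{uv}$ change so that $\Psi_{i'}(y,\cdot)$ drops by at least one, giving a total drop of at least $\sum_{i'>i}\dd_y(Z_i,S_{i'})$, which on $\mathcal F$ is $\ge \frac{(1-\epsilon)^4}{(1+\epsilon)^2}\alpha c\log n$ per level $i'$ — enough, after scaling the potential by a suitable constant, to pay the $\sum_{i'>i}O(1+\dd_y(Z_i,S_{i'}))$ work. \emph{Level decrease of $y$ from $i$ to $i-1$}: here $\dd_y(Z_{i-1},S_{i-1}) < (1+\epsilon)^2 c\log n$, so $\Gamma_{i-1}(y)$ picks up at least $(1-\epsilon)^2\alpha c\log n - (1+\epsilon)^2 c\log n$ and $\Phi(y)$ drops by at least $(L/\epsilon)$ times that; against this we must charge (i) the increase in $\sum_x\Phi(x)$ over neighbours $x$ with $\ell(x)>i$, which is at most $(1/\epsilon)\cdot(L/\epsilon)\cdot$(number of such neighbours in each $S_{i'}$), bounded on $\mathcal F$ by $(L/\epsilon^2)\cdot\frac{(1+\epsilon)^4}{(1-\epsilon)^2} c\log n$ per $i'$, (ii) the increase in edge potentials $\Psi_{i'}$ for $i'>i$ incident to $y$, at most $2\dd_y(Z_{i-1},S_{i'}) \le 2\frac{(1+\epsilon)^4}{(1-\epsilon)^2}c\log n$ per $i'$ on $\mathcal F$, and (iii) the work $\sum_{i'>i}O(1+\dd_y(Z_i,S_{i'}))$. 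Because $\alpha = 2+c^*\epsilon$ with $c^*$ chosen large enough, and because the node potential carries the extra $L/\epsilon$ factor, the drop from $\Phi(y)$ dominates the sum (i)+(ii)+(iii) with room to spare, exactly as in the $\alpha\ge 2+3\epsilon$ calculation of Section~\ref{sec:dynpart}. I expect the main obstacle to be getting the constants in the level-decrease case to line up: one must verify that the coefficient $\frac{(1-\epsilon)^4}{(1+\epsilon)^2}\alpha - \frac{(1+\epsilon)^4}{(1-\epsilon)^2}$ (the effective ``$\alpha-1$'' after the sampling slack) still exceeds the ``$1+\epsilon$'' worth of neighbour-and-edge charges by a positive constant, which pins down how large $c^*$ must be; this is a routine but careful $\epsilon$-bookkeeping argument given $\mathcal F$.
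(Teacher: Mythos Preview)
Your treatment of parts (a) and (b) and of the level-\emph{increase} case in (c) matches the paper's argument and is fine. The gap is in the level-\emph{decrease} case of (c): you model it as a single-step decrement ``from $i$ to $i-1$,'' exactly as in Section~\ref{sec:dynpart}. But that is not what RECOVER-SAMPLE (Figure~\ref{main:fig:sample:recover}) does. During iteration $i$, every node $y\in B_i^{(t)}$ is removed from \emph{all} $Y_j$ with $j>i$ (Steps~08--09); in the implementation of Section~\ref{sec:main:old} this means a node at level $j_y>i$ drops \emph{directly} to level $i$ in one shot. So the relevant event is ``$y$ goes from level $j$ to level $i$'' with $j$ possibly much larger than $i+1$, and Claim~\ref{main:cl:sample:runtime:up} gives the cost of this single multi-level drop as $\sum_{i'>i}O(1+\dd_y(Z_i^{(t)},S_{i'}^{(t)}))$, which is independent of $j$.

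This multi-level structure changes the bookkeeping in two places. First, the drop in $\Phi(y)$ is the sum of $(j-i)$ terms $\Gamma_{i'}(y)$ for $i'\in[i,j-1]$, not one term. To lower-bound each of these you cannot use the trigger $\dd_y(Z_i,S_i)<(1+\epsilon)^2 c\log n$ directly, since $\Gamma_{i'}(y)$ involves $\dd_y(Z_{i'},S_{i'})$ with $i'\neq i$; instead you must combine the event $\mathcal F$ (which bounds $\dd_y(Z_i,S_{i'})\le\gamma c\log n$ for $i'>i$, where $\gamma=(1+\epsilon)^4/(1-\epsilon)^2$) with the nesting $Z_{i'}\subseteq Z_i$ to get $\dd_y(Z_{i'},S_{i'})\le\gamma c\log n$. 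Second, both the neighbour-potential increase and the edge-potential increase scale with $(j-i)$: for each $i'\in(i,j]$, the removal of $y$ from $Z_{i'}$ bumps $\Gamma_{i'}(u)$ for neighbours $u$ in $S_{i'}$, and edge potentials $\Psi_{i'}(y,\cdot)$ increase by up to $3(j-i)$ each. The paper's calculation therefore shows a net drop of at least $(j-i)(L/\epsilon)\bigl((1-\epsilon)^2\alpha-2\gamma-3\epsilon\gamma\bigr)c\log n$, and the condition on $\alpha$ that makes this dominate the cost is $\alpha\ge(\epsilon+(2+3\epsilon)\gamma)/(1-\epsilon)^2=2+\Theta(\epsilon)$. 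Your single-step account misses this $(j-i)$ scaling and the use of $\mathcal F$ to bound the intermediate $\Gamma$ terms; the ``routine $\epsilon$-bookkeeping'' you anticipate is not the obstacle---correctly modelling the multi-level drop is. (As a minor aside, the factor $(1/\epsilon)\cdot(L/\epsilon)$ you write for the neighbour increase should be $L/\epsilon$: a single unit change in some $\Gamma_{i'}(u)$ contributes $L/\epsilon$ to $\Phi(u)$, not $L/\epsilon^2$.)
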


\begin{corollary}
\label{main:cor:dynamic:stream:runtime}
 With high probability, subroutine {\sc Dynamic-stream} spends $\tilde O(n+ (t_1-t_0))$ time during the dense time-interval $[t_0, t_1]$. So its amortized update time is $\tilde O(1)$ with high probability.
\end{corollary}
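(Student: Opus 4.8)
The plan is to combine the potential-function bounds of Lemma~\ref{main:lm:dynamic:stream:runtime} with the high-probability guarantee of Lemma~\ref{main:cl:sample:main:2} in a standard amortized-analysis wrap-up. First I would condition on the event $\mathcal{F}$ of Definition~\ref{main:def:sample:main:2}, which holds with high probability by Lemma~\ref{main:cl:sample:main:2}, together with the high-probability events of Theorem~\ref{main:th:dynamic:sample:dense} guaranteeing that each edge insertion/deletion in $G$ triggers at most two insertions/deletions in each $S_i^{(t)}$, hence at most $O(L)=\tilde O(1)$ changes in total across the $(L-1)$ sampled edge sets (equation~\ref{eq:new:L}). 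Under these events Lemma~\ref{main:lm:dynamic:stream:runtime} gives us the three facts we need: (a) $0\le\B=\tilde O(n)$ throughout $[t_0,t_1]$; (b) each change to a sampled edge set changes $\B$ by $\tilde O(1)$; and (c) every unit of computation inside the procedure of Figure~\ref{main:fig:sample:recover} decreases $\B$ by $\Omega(1)$.

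Given this, the accounting is as follows. Just before time-step $t_0$ we run INITIALIZE (Figure~\ref{fig:sample:init}) and (re)build the data structures of Section~\ref{main:sec:dynamic:implement}; since each $|S_i^{(t_0-1)}|\le s=\tilde O(n)$ and there are $L-1=\tilde O(1)$ such sets, this costs $\tilde O(n)$ time and, by (a), leaves $\B=\tilde O(n)$. During the interval, the per-update bookkeeping outside RECOVER-SAMPLE — updating the $\ell_0$-samplers of Theorem~\ref{main:th:dynamic:sample:dense} and the {\sc Friends}/{\sc Dirty-Nodes} lists as described in Section~\ref{sec:main:old} — costs $\tilde O(1)$ per update, hence $\tilde O(t_1-t_0)$ in total. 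It remains to bound the work done inside RECOVER-SAMPLE: by (b) the total increase of $\B$ over the interval caused by the at most $\tilde O(t_1-t_0)$ changes to the sampled edge sets is $\tilde O(t_1-t_0)$, while by (a) the potential starts at $\tilde O(n)$ and never goes below $0$, so by (c) the total amount of computation performed inside RECOVER-SAMPLE during $[t_0,t_1]$ is at most $\tilde O(n)+\tilde O(t_1-t_0)=\tilde O(n+(t_1-t_0))$. Summing the three contributions yields the claimed bound $\tilde O(n+(t_1-t_0))$ on the running time of {\sc Dynamic-stream} during the interval.

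For the amortized statement I would split on whether the interval ends at $T$. If $t_1<T$, item~4 in the description of subroutine (P3) in Section~\ref{sec:overview} guarantees $t_1-t_0=\Omega(4\alpha c^2 n\log^2 n)=\Omega(n)$, so $\tilde O(n+(t_1-t_0))=\tilde O(t_1-t_0)$, and dividing by the number of updates handled gives amortized update time $\tilde O(1)$. If $t_1=T$, the additive $\tilde O(n)$ term is incurred only once, and since $T=\Theta(n^{\lambda})$ with $\lambda\ge 1$ (equations~\ref{eq:new:lambda},~\ref{eq:new:T}) we have $n=\tilde O(T)$, so this term is absorbed when amortized over the whole stream, exactly as in the treatment of the sparse intervals around equation~\ref{eq:2:thought}.

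All the substance here is in the two lemmas being invoked; the corollary itself is bookkeeping. The one point that needs care — and is the main obstacle for the corollary as stated — is that a single update in $G$ does not cause one change in the sampled edge sets but up to $O(L)$ of them (one per $S_i$), so the per-$G$-update increase of $\B$ is $\tilde O(1)$ rather than $O(1)$, and the per-update data-structure maintenance must be tallied across all $L-1$ samplers; this does not affect the final $\tilde O$ bound but it is what prevents a naive $O(1)$ claim. The interval ending at $T$ is the other point to handle explicitly, and it is dispatched by the same polynomial-length-of-stream argument used for the sketch-based algorithm and for the sparse case.
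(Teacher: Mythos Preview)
Your proposal is correct and follows essentially the same approach as the paper: condition on $\mathcal{F}$, invoke Lemma~\ref{main:lm:dynamic:stream:runtime} for the potential-function bounds, and finish with the same interval-length case split referencing subroutine~(P3) and the $T=\Theta(n^{\lambda})$ bound. In fact you spell out the standard amortized-analysis accounting (initial potential $\tilde O(n)$, total increase $\tilde O(t_1-t_0)$, nonnegative throughout, hence total RECOVER-SAMPLE work $\tilde O(n+(t_1-t_0))$) more explicitly than the paper does, which just asserts that Lemma~\ref{main:lm:dynamic:stream:runtime} ``implies'' the bound.
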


\begin{proof}
Condition on the event $\mathcal{F}$ (which occurs with high probability by Lemma~\ref{main:cl:sample:main:2}). At  each time-step $t \in [1, T]$, we maintain the random sets of edges $\{S_i^{(t)}\}$ as per Theorem~\ref{main:th:dynamic:sample:dense}. This takes $\O(1)$ worst case update time.  Further, a single edge insertion/deletion in the input graph leads to at most two edge insertions/deletions in each of these sets $\{S_i^{(t)}\}, i \in \{1, \ldots, L-1\}$. 

Now, using the random sets $\{S_i^{(t)}\}$, at each time-step $t \in [t_0, t_1]$ we maintain an $(\alpha, d_k^{(t)}, L)$-decomposition of the input  graph (see Lemma~\ref{main:lm:performance}).  Lemma~\ref{main:lm:dynamic:stream:runtime} implies that with high probability, this requires a total update time of $\O(n + (t_1 - t_0))$ for the entire duration of the interval $[t_0, t_1]$. 

Finally, recall that either the dense time-interval spans $\tilde \Omega(n)$ time-steps, or it ends at time-step $T$ (see the discussion on the subroutine (P3) in Section~\ref{sec:overview}). Hence, applying an argument similar to the one used in Section~\ref{sec:sparse} (see the discussions preceding equation~\ref{eq:2:thought}), we conclude that with high probability the subroutine {\sc Dynamic-stream} spends $\O(T)$ total time during the first $T$ updates in the dynamic stream. In other words, with high probability the subroutine {\sc Dynamic-stream} has an amortized update time of $\O(1)$. 
\end{proof}

\subsubsection{Proof of Lemma~\ref{main:cl:sample:main:2}}
\label{sec:extreme:new}
Fix any  $1 \leq i < i' \leq L-1$, and any time-step $t \in [t_0,t_1]$. 
\begin{itemize}
\item Condition on an instantiation  of the random set $Z_i^{(t)}$. Note that this also determines the sets $A_i^{(t)}$ and $B_i^{(t)}$ (see Figure~\ref{main:fig:sample:recover}). 
\begin{itemize}
\item 
 Let $W_i^{(t)} \subseteq Z_i^{(t)}$ be the subset of nodes $v$ with small degrees $\dd_v(Z_i^{(t)}, E^{(t)})$. Specifically,
 \begin{eqnarray}
 \label{eq:new:old:new:3}
 W_i^{(t)} = \left\{ v \in Z_i^{(t)} : \dd_v(Z_i^{(t)}, E^{(t)})  < \frac{(1-\epsilon)^2}{(1+\epsilon)^2} \cdot \alpha d_k^{(t)}\right\}
 \end{eqnarray}
The node set $W_i^{(t)}$ is uniquely determined by the node set $Z_i^{(t)}$ (which we are conditioning upon) and the edge set $E^{(t)}$ (which is given by the stream of updates in the input graph). \\

Now, consider any node $v \in W_i^{(t)}$.  By Lemma~\ref{main:lm:independence}, the contents of the random set $S_{i}^{(t)}$ are  independent of $Z_i^{(t)}$. By Theorem~\ref{main:th:dynamic:sample:dense}, each edge $(u,v) \in E^{(t)}$ is included in $S_i^{(t)}$ with probability $(1\pm \epsilon) \cdot (s/m^{(t)})$.  Applying Linearity of expectation and equation~\ref{eq:overview:dense:st}, we get:
 \begin{eqnarray}
 \label{eq:new:old:new:4}
 E\left[\dd_v(Z_i^{(t)}, S_{i}^{(t)})\right]  & \leq &   (1+ \epsilon) \cdot (c \log n/d_{k}^{(t)}) \cdot \dd_v(Z_i^{(t)}, E^{(t)}) \\
 & \leq & \frac{(1-\epsilon)^2}{(1+\epsilon)} \cdot (\alpha c \log n) \label{eq:new:old:new:5}
 \end{eqnarray}
 Equation~\ref{eq:new:old:new:5} follows from equations~\ref{eq:new:old:new:3} and~\ref{eq:new:old:new:4}. Next, for each edge $(u, v) \in E^{(t)}$ incident upon the node $v$, consider the random event that $(u, v) \in S_i^{(t)}$. By Theorem~\ref{main:th:dynamic:sample:dense}, these random events are negatively associated (see Section~\ref{sec:prelim:concentration}).  Thus, applying Chernoff bound on equation~\ref{eq:new:old:new:5} and recalling the definition of $A_i^{(t)}$ from Figure~\ref{main:fig:sample:recover}, we get:
 \begin{eqnarray}
 \label{eq:new:old:new:6}
 \Pr\left[v \in A_i^{(t)}\right] = \Pr\left[\dd_v(Z_i^{(t)}, S_{i}^{(t)}) > (1-\epsilon)^2 \cdot (\alpha c \log n)\right] \leq 1/(\text{poly } n)
  \end{eqnarray}
  Applying a union bound on equation~\ref{eq:new:old:new:6} over all the nodes in $W_i^{(t)}$, we get:
  \begin{eqnarray}
 \label{eq:new:old:new:7}
 \Pr\left[W_i^{(t)} \bigcap A_i^{(t)} \neq \emptyset \right] \leq 1/(\text{poly } n)
 \end{eqnarray}  
 In other words, with high probability no node in $W_i^{(t)}$ belongs to the set $A_i^{(t)}$.  \\

We will now bound the degrees of the nodes in $Z_i^{(t)} \setminus W_i^{(t)}$ with respect to the random edge set $S_{i'}^{(t)}$. Towards this end, consider any node $x \in Z_i^{(t)} \setminus W_i^{(t)}$.  By Lemma~\ref{main:lm:independence}, the contents of the random set $S_{i'}^{(t)}$ are independent of $Z_i^{(t)}$. By Theorem~\ref{main:th:dynamic:sample:dense}, each edge $(u,v) \in E^{(t)}$ is included in $S_{i'}^{(t)}$ with probability $(1\pm \epsilon) \cdot (s/m^{(t)})$.  Applying Linearity of expectation and equations~\ref{eq:overview:dense:st},~\ref{eq:new:old:new:3} we get:
 \begin{eqnarray}
 \label{eq:new:old:new:8}
 E\left[\dd_x(Z_i^{(t)}, S_{i'}^{(t)})\right]  & \geq &   (1- \epsilon) \cdot (c \log n/d_{k}^{(t)}) \cdot \dd_v(Z_i^{(t)}, E^{(t)}) \\
 & \geq & \frac{(1-\epsilon)^3}{(1+\epsilon)^2} \cdot (\alpha c \log n) \label{eq:new:old:new:9}
 \end{eqnarray}
Equation~\ref{eq:new:old:new:9} follows from equations~\ref{eq:new:old:new:3} and~\ref{eq:new:old:new:8}. Next, for each edge $(u, x) \in E^{(t)}$ incident upon the node $x$, consider the random event that $(u, x) \in S_{i'}^{(t)}$. By Theorem~\ref{main:th:dynamic:sample:dense}, these random events are negatively associated (see Section~\ref{sec:prelim:concentration}).  Thus, applying Chernoff bound on equation~\ref{eq:new:old:new:9}, we get:
\begin{eqnarray}
 \label{eq:new:old:new:10}
 \Pr\left[\dd_x(Z_i^{(t)}, S_{i'}^{(t)}) <  \frac{(1-\epsilon)^4}{(1+\epsilon)^2} \cdot (\alpha c \log n)\right]  \leq 1/(\text{poly } n)
 \end{eqnarray}
 Now, taking an union bound on equation~\ref{eq:new:old:new:10} over all the nodes in $Z_i^{(t)} \setminus W_i^{(t)}$, we get:
 \begin{eqnarray}
 \label{eq:new:old:new:11}
 \Pr\left[\dd_x(Z_i^{(t)}, S_{i'}^{(t)}) <  \frac{(1-\epsilon)^4}{(1+\epsilon)^2} \cdot (\alpha c \log n) \text{ for some } x \in Z_{i}^{(t)} \setminus W_i^{(t)}\right]  \leq 1/(\text{poly } n)
 \end{eqnarray}
 In other words, with high probability every node  $x \in Z_i^{(t)} \setminus W_i^{(t)}$ has a high degree $\dd_x(Z_i^{(t)}, S_{i'}^{(t)})$. Now, taking an union bound over equations~\ref{eq:new:old:new:7} and~\ref{eq:new:old:new:11}, we conclude that:
 \begin{eqnarray}
 \label{eq:new:old:new:12}
 \text{With high probability,  } \dd_x(Z_i^{(t)}, S_{i'}^{(t)}) \geq  \frac{(1-\epsilon)^4}{(1+\epsilon)^2} \cdot (\alpha c \log n) \text{ for every node } x \in A_i^{(t)}.
 \end{eqnarray}
 Using a similar argument for the node set $B_i^{(t)}$, we can infer that:
 \begin{eqnarray}
 \label{eq:new:old:new:13}
 \text{With high probability,  } \dd_x(Z_i^{(t)}, S_{i'}^{(t)}) \leq  \frac{(1+\epsilon)^4}{(1-\epsilon)^2} \cdot (c \log n) \text{ for every node } x \in B_i^{(t)}.
 \end{eqnarray}
 Taking an union bound over equations~\ref{eq:new:old:new:12} and~\ref{eq:new:old:new:13}, we infer that:
 \begin{equation}
 \label{eq:new:old:new:14}
 \text{Given any  instantiation of } Z_i^{(t)}, \text{ the event } \mathcal{F}_{i, i'}^{(t)} \text{ occurs with high probability.}
 \end{equation} 
\end{itemize}
\item From equation~\ref{eq:new:old:new:14}, we infer that:
\begin{equation}
 \label{eq:new:old:new:15}
\text{The event } \mathcal{F}_{i, i'}^{(t)} \text{ occurs with high probability.}
 \end{equation} 
\end{itemize}
The lemma follows by  applying an union bound on equation~\ref{eq:new:old:new:15} over all indices $i, i' \in \{1, \ldots, L-1\}$ with $i < i'$ and time-steps  $t \in [t_0, t_1]$.

\subsubsection{Proof of Lemma~\ref{main:lm:dynamic:stream:runtime}}
\label{sec:extreme:new:1}
Part (a) and part (b) of the lemma hold independently of the event $\mathcal{F}$. It is only during the proof of part (c) that we have to condition of the event $\mathcal{F}$. 

\medskip

\noindent {\bf Proof of part (a).} Fix any time-step $t \in [t_0, t_1]$. The proof follows from three facts.
\begin{enumerate} 
\item We have $0 \leq \Phi(v) \leq (L/\epsilon) \cdot L \cdot (1-\epsilon)^2 \alpha c \log n = \O(1)$ for all $v \in V$. 
\item  We have $0 \leq \Psi_i(u,v) \leq 3L = \O(1)$ for all $i \in [L-1], (u,v) \in S_i^{(t)}$.  
\item We have  $|S_i^{(t)}| \leq s = \O(n)$ for all $i \in [L-1]$. This follows from equation~\ref{eq:overview:dense:s}. 
\end{enumerate}

\medskip
\noindent {\bf Proof of part (b).} By Theorem~\ref{main:th:dynamic:sample:dense}, insertion/deletion of an edge in $G$ leads to at most two  insertions/deletions in each of the random sets $S_1^{(t)}, \ldots, S_{L-1}^{(t)} \subseteq E^{(t)}$. Since $L = \O(1)$ (see equation~\ref{eq:new:L}), it suffices to show that for a single edge insertion/deletion in any given $S_i^{(t)}$, the potential $\B$ changes by at most $\O(1)$ (ignoring the call to the procedure in  Figure~\ref{main:fig:sample:recover}).

Towards this end, fix any $i \in [L-1]$, and suppose that a single edge $(u,v)$ is inserted  into (resp. deleted from)  $S_i^{(t)}$. For each endpoint $x \in \{u, v\}$, this changes the potential $\Phi(x)$ by at most $O(L/\epsilon)$. The potentials $\Phi(y)$ for all other nodes $y \in V \setminus \{u,v\}$ remain unchanged. Additionally, the potential $\Psi_i(u,v) \in [0,3L]$ is created (resp. destroyed). Thus, we infer that the absolute value of the change in the overall potential $\B$ is at most $O(3L+ 2 L/\epsilon) = \O(1)$.

\medskip
\noindent {\bf Proof of part (c).} Fix any time-step $t \in [t_0, t_1]$, and any iteration of the  {\sc For} loop in Figure~\ref{main:fig:sample:recover} while processing the update in time-step $t$. Consider two possible events.

\medskip
\noindent {\bf Case 1: A node $v \in Z_i^{(t)}$ is promoted from level $i$ to level $(i+1)$ in Step~07 of Figure~\ref{main:fig:sample:recover}.} 

\smallskip
\noindent This  happens only if $v \in A^{(t)}_i$. Let $C$ be the amount of computation performed during this step. By Claim~\ref{main:cl:sample:runtime:up}, we have:
 \begin{eqnarray} 
\label{main:eq:up:1}
 C &=& \sum_{i' = (i+1)}^{(L-1)}O\left(1+\dd_v(Z_i^{(t)},S_{i'}^{(t)})\right)
\end{eqnarray} 

\noindent Let $\Delta$ be the net decrease in the overall potential $\B$ due to this step. We observe that:
\begin{enumerate}
\item Consider any $i' > i$. For each edge  $(u,v) \in S_{i'}^{(t)}$ with $u \in Z_i^{(t)}$, the potential $\Psi_{i'}(u,v)$ decreases by at least one.  For every other edge $e \in S_{i'}^{(t)}$, the potential $\Psi_{i'}(e)$ remains unchanged.
\item For each $i' \in [i]$ and each edge $e \in S_{i'}^{(t)}$, the potential $\Psi_{i'}(e)$ remains unchanged. 
\item  Since the node $v$ is being promoted to level $(i+1)$, we have $\dd_v(Z_i^{(t)}, S_i^{(t)}) \geq (1-\epsilon)^2 \alpha c \log n$. Thus, the potential $\Phi(v)$ remains unchanged.  For each node $u \neq v$, the potential $\Phi(u)$ can only decrease (this holds since the degree $D_u(Z_j, S_j)$, for any level $j$, can only increase as node $v$ increases its level from $i$ to $i+1$). 
\end{enumerate}
\noindent Taking into account all these observations, we infer the following inequality.
 \begin{eqnarray}
\label{main:eq:up:2}
\Delta &\geq& \sum_{i' = (i+1)}^{(L-1)} \dd_v(Z_i^{(t)}, S_{i'}^{(t)})
 \end{eqnarray} 
 Since  $v \in A_i^{(t)}$, and since we have conditioned on the event $\mathcal{F}^{(t)}$ (see Definition~\ref{main:def:sample:main:2}), we get:
 \begin{eqnarray}
\label{main:eq:up:3}
\dd_v(Z_i^{(t)}, S_{i'}^{(t)}) &>& 0  \ \ \text{ for all } i' \in [i+1, L-1].
 \end{eqnarray} 
 
\noindent Eq. (\ref{main:eq:up:1}),~(\ref{main:eq:up:2}),~(\ref{main:eq:up:3}) imply that the  decrease in  $\B$ is sufficient to pay for  the computation performed.

\medskip
\noindent {\bf Case 2: A node $v \in Z_i^{(t)}$ is demoted from level $j$ to  level $i$ in Steps~(08-09) of Figure~\ref{main:fig:sample:recover}.}

\smallskip
\noindent This can happen only if $j > i$ and $v \in B^{(t)}_i$. Let $C$ denote the amount of computation performed during this step. By Claim \ref{main:cl:sample:runtime:up}, we have
 \begin{eqnarray}
\label{main:eq:down:1}
 C &=& \sum_{i' = (i+1)}^{(L-1)}O(1+   \dd_v(Z_i^{(t)},S_{i'}^{(t)}))
\end{eqnarray}  
Let $\gamma = (1+\epsilon)^4/(1-\epsilon)^2$.  Equation~(\ref{main:eq:down:2}) holds since $v \in B_i^{(t)}$ and since we conditioned on the event $\mathcal{F}$ (see Definition~\ref{main:def:sample:main:2}). Equation~(\ref{main:eq:down:3}) follows from equations (\ref{main:eq:down:1}), (\ref{main:eq:down:2}) and since  $\gamma, c$ are constants. 
 \begin{eqnarray}
 \label{main:eq:down:2}
 \dd_v(Z_i^{(t)}, S_{i'}^{(t)}) & \leq & \gamma c \log n  \text{ for all } i' \in [i, L-1] \\
\label{main:eq:down:3}
C & = & O(L \log n)
\end{eqnarray}  
Let $\Delta$ be the net decrease in the overall potential $\B$ due to this step. We observe that:
\begin{enumerate} 
\item By eq.~(\ref{main:eq:down:2}), the potential $\Phi(v)$ decreases by at least $(j-i) \cdot (L/\epsilon) \cdot ((1-\epsilon)^2\alpha -\gamma) \cdot (c \log n)$.
\item For $u \in V \setminus \{v\}$ and $i' \in  [1, i] \cup [j+1, L-1]$,  the potential $\Gamma_{i'}(u)$ remains unchanged.  This  observation, along with   equation~(\ref{main:eq:down:2}), implies that the sum $\sum_{u \neq v} \Phi(u)$ increases by at most 
$(L/\epsilon) \cdot \sum_{i' = (i+1)}^j \dd_v(Z_i^{(t)}, S_{i'}^{(t)}) \leq (j-i) \cdot (L/\epsilon) \cdot (\gamma c \log n)$.
\item For every $i' \in [1,i]$, and $e \in S_{i'}^{(t)}$ the potential $\Psi_{i'}(e)$ remains unchanged. Next, consider any $i' \in [i+1, L-1]$. For each edge $(u,v) \in S_{i'}^{(t)}$ with $u \in Z_i^{(t)}$, the potential $\Psi_{i'}(u,v)$ increases by at most $3(j-i)$. For every other edge $e \in S_{i'}^{(t)}$, the potential $\Psi_{i'}(e)$ remains unchanged. These observations, along with equation~(\ref{main:eq:down:2}), imply that the sum $\sum_{i'} \sum_{e \in S_{i'}} \Psi_{i'}(e)$ increases by at most $\sum_{i' = (i+1)}^{(L-1)} 3(j-i) \cdot \dd_v(Z_i^{(t)}, S_{i'}^{(t)}) \leq (j-i) \cdot (3L) \cdot (\gamma c \log n)$. 
\end{enumerate}
\noindent Taking into account all these observations, we get:

 \begin{eqnarray}
\Delta &\geq&  (j-i)  (L/\epsilon)  ((1-\epsilon)^2\alpha -\gamma)  (c \log n)  \nonumber  \\
           &&-  (j-i)  (L/\epsilon)  (\gamma c \log n) - (j-i)  (3L)  (\gamma c \log n) \nonumber  \\ 
          &=&  (j-i) \cdot (L/\epsilon) \cdot ((1-\epsilon)^2\alpha -2\gamma 
          - 3 \epsilon \gamma) \cdot (c \log n)  \nonumber  \\  
          & \geq&  L c \log n    \nonumber  \label{main:eq:down:4}  \\
\end{eqnarray}

The last inequality holds since $(j-i) \geq 1$ and $\alpha \geq  (\epsilon + (2+3\epsilon) \gamma)/(1-\epsilon)^2 = 2 +\Theta(\epsilon)$, for some sufficiently small constant $\epsilon \in (0,1)$. From eq.~(\ref{main:eq:down:3}) and (\ref{main:eq:down:4}), we conclude that  the net decrease in the  overall potential $\B$ is sufficient to pay for the cost of the computation performed.

\section{Extension to Directed Graphs}
\label{sec:directed}

In this section, we extend our  dynamic algorithm from Section~\ref{sec:dynamic} to directed graphs. The notion of  ``density'' of a directed graph was introduced  by Kannan et al.~\cite{KannanV}.  We first recall their definition. 

\begin{definition}
\label{def:directed:main}\cite{KannanV}
Consider two subsets of nodes $X, Y \subseteq V$ in a directed graph $G = (V, E)$. The ``{\em density}'' of this pair equals $\rho(X, Y) = |E(X, Y)|/\sqrt{|X| |Y|}$. Here, $E(X, Y) = \{ (u, v) : u \in X, v \in Y \}$ is the set of edges going from $X$ to $Y$. The value of the densest subgraph is given by $\rho(G) = \max_{X, Y \subseteq V} \rho(X, Y)$. Note that we do not require the sets $X$ and $Y$ to be mutually disjoint. 
\end{definition}

Throughout this section, we denote the input graph by $G = (V, E)$. In the beginning, the input graph is empty, i.e., we have $E = \emptyset$. Subsequently, at each time-step, either a directed edge is inserted into the graph, or an already existing edge is deleted from the graph. The set of nodes, on the other hand, remains unchanged. Our goal is to maintain a good approximation of $\rho(G)$ in this dynamic setting. The main result is stated below.

\begin{theorem}
\label{th:directed:main}
We can deterministically maintain a $(8+\epsilon)$-approximation to the value of the densest subgraph of a directed graph $G = (V, E)$. The algorithm requires $\O(m+n)$-space,  where $m$ (resp. $n$) denotes the number of nodes (resp. edges) in the graph. Furthermore, the algorithm   has an amortized update time of $\O(1)$ and a query time of $O(1)$.
\end{theorem}

We devote the rest of Section~\ref{sec:directed} to the proof of Theorem~\ref{th:directed:main}. We first define the preliminary concepts and notations in Section~\ref{sec:derived:notation}. Next, we extend the notion of an $(\alpha, d, L)$-decomposition to directed graphs in Section~\ref{sec:derived:decomposition}. In Section~\ref{sec:derived:maintain}, we present our main algorithm. Finally, in Section~\ref{sec:directed:conclude}, we combine all these ingredients together and conclude the proof of Theorem~\ref{th:directed:main}.

\subsection{Notations and Preliminaries}
\label{sec:derived:notation}

We first define the notion of a ``{\em derived graph}'',  which will be crucially used in our algorithm.

\begin{definition}
\label{def:derived:graph}
The ``{\em derived graph}'' $G' = (V', E')$ of the input graph $G = (V, E)$ is constructed as follows. For each node $v \in V$, we create two nodes $s_v$ and $t_v$. We define the  node set $V' = S' \cup T'$, where $S' = \{ s_v : v \in V \}$ and $T' = \{ t_v : v \in V \}$. Next, for each directed edge $(u, v) \in E$, we create the directed edge $(s_u, t_v)$, and define the directed edge set $E' = \{ (s_u,t_v)  : (u, v) \in E \}$.  Thus, in the derived graph $G' = (V', E')$, each node in $S'$ (resp. $T'$) has zero in-degree (resp. out-degree). 
\end{definition}

It is easy to check that the derived graph $G' = (V', E')$ can be maintained in the dynamic setting using $O(m+n)$ space and $O(1)$ update time: Fix the set of nodes $V' = S' \cup T'$, and whenever an edge $(u,v)$ is inserted into (resp. deleted from) $G = (V, E)$, insert (resp. delete) the corresponding derived edge $(s_u, t_v)$ in $G'$. From now on, unless explicitly mentioned otherwise, our main algorithm will work on the derived graph $G' = (V', E')$. Before proceeding any further, we introduce some notations that will be used throughout the rest of this section.

\begin{itemize}
\item Consider the derived graph $G' = (V', E')$. Given any node $s \in S'$ and any subset of nodes $T \subseteq T'$, we let $\N_{s}(T) = \{ t \in T : (s, t) \in E'\}$ denote the set of neighbors of $s$ among the nodes in $T$. Furthermore, we let $D_s(T) = | \N_s(T) |$ denote the degree of  $s$ among the nodes in $T$. For a node $t \in T'$ and a subset $S \subseteq S'$, the notations $\N_t(S) = \{ (s,t) \in E' : s \in S\}$ and $D_t(S) = |\N_t(S)|$ are defined analogously. Next, for any two subsets of nodes $S \subseteq S'$ and $T \subseteq T'$, we let $E'(S, T) = \{ (s, t) \in E' : s \in S, t \in T \}$ denote the set of edges in the derived graph that are going from $S$ to $T$. We also define $\rho'(S, T)  = |E'(S,T)|/\sqrt{ |S| |T| }$ for all $S \subseteq S', T \subseteq T'$.  Hence, from Definition~\ref{def:directed:main}, it follows that $\rho'(S, T) = \rho(S, T)$. Accordingly, the value of the densest subgraph of the input graph $G = (V, E)$ is given by $\rho(G) = \max_{S \subseteq S', T \subseteq T'} \rho'(S, T)$. We will denote the densest subgraph of $G = (V, E)$ by the pair $(S^*, T^*)$, where $S^* \subseteq S'$ and $T^* \subseteq T'$. Thus, we have $\rho'(S^*, T^*) = \rho(G)$. Finally, we define the parameters $\lambda_{S'}, \lambda_{T'}$ as follows.
\begin{equation}
\label{eq:derived:lambda}
\lambda_{S'} =  |E(S^*, T^*)| \cdot (1- \sqrt{1 - 1/|S^*|}) \text{ and } \lambda_{T'} =  |E(S^*, T^*)| \cdot (1- \sqrt{1 - 1/|T^*|}).
\end{equation} 
\end{itemize} 

\noindent We now state one crucial lemma that will be used in the analysis of our algorithm. This lemma was proved by Khuller et al.~\cite{KhullerS09}. For the sake of completeness, we state their proof below.  

\begin{lemma}
\label{lm:derived:main}\cite{KhullerS09}
Consider the densest subgraph $(S^*, T^*)$ of the input graph $G = (V, E)$. We have:
\begin{enumerate}
\item $D_s(T^*) \geq \lambda_{S'}$ for all nodes $s \in S^*$, and $D_t(S^*) \geq \lambda_{T'}$ for all nodes $t \in T^*$. 
\item $\rho(G)/2 \leq   \sqrt{\lambda_{S'} \lambda_{T'}} \leq \rho(G)$. 
\end{enumerate}
\end{lemma}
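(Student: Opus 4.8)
The plan is to follow the standard "peeling/averaging" argument that underlies the analogous statement for undirected graphs (Lemma~\ref{main:lm:prelim:structure}), adapted to the bipartite-like structure of the derived graph. Write $E^* = E(S^*,T^*)$ and $\rho^* = \rho(G) = |E^*|/\sqrt{|S^*||T^*|}$. For part~(1), I would argue by contradiction: suppose some node $s \in S^*$ has $D_s(T^*) < \lambda_{S'}$. Delete $s$ from $S^*$ to get $S' = S^* \setminus \{s\}$, keeping $T^*$ fixed, and compare densities. The number of edges drops by exactly $D_s(T^*)$, so
\[
\rho'(S', T^*) = \frac{|E^*| - D_s(T^*)}{\sqrt{(|S^*|-1)|T^*|}} > \frac{|E^*| - \lambda_{S'}}{\sqrt{(|S^*|-1)|T^*|}}.
\]
Now substitute the definition $\lambda_{S'} = |E^*|(1 - \sqrt{1 - 1/|S^*|})$, so that $|E^*| - \lambda_{S'} = |E^*|\sqrt{1 - 1/|S^*|} = |E^*|\sqrt{(|S^*|-1)/|S^*|}$. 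Plugging in, the right-hand side becomes $|E^*|\sqrt{(|S^*|-1)/|S^*|} / \sqrt{(|S^*|-1)|T^*|} = |E^*|/\sqrt{|S^*||T^*|} = \rho^*$, contradicting maximality of $(S^*,T^*)$. The symmetric argument with $T^*$, peeling a node $t$ of low in-degree and using $\lambda_{T'}$, gives the second half of part~(1).

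For part~(2), the upper bound $\sqrt{\lambda_{S'}\lambda_{T'}} \le \rho(G)$ follows by bounding $\lambda_{S'}$ and $\lambda_{T'}$ directly: since $1 - \sqrt{1-x} \le x$ for $x \in [0,1]$ (equivalently $\sqrt{1-x} \ge 1-x$), we get $\lambda_{S'} \le |E^*|/|S^*|$ and $\lambda_{T'} \le |E^*|/|T^*|$, hence $\sqrt{\lambda_{S'}\lambda_{T'}} \le |E^*|/\sqrt{|S^*||T^*|} = \rho(G)$. For the lower bound $\sqrt{\lambda_{S'}\lambda_{T'}} \ge \rho(G)/2$, I would use the opposite inequality $1 - \sqrt{1-x} \ge x/2$ for $x \in [0,1]$ (which holds since $\sqrt{1-x} \le 1 - x/2$), giving $\lambda_{S'} \ge |E^*|/(2|S^*|)$ and $\lambda_{T'} \ge |E^*|/(2|T^*|)$; multiplying and taking square roots yields $\sqrt{\lambda_{S'}\lambda_{T'}} \ge |E^*|/(2\sqrt{|S^*||T^*|}) = \rho(G)/2$.

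I do not expect a genuine obstacle here; the only place requiring care is the two elementary inequalities $x/2 \le 1 - \sqrt{1-x} \le x$ on $[0,1]$, which should be stated explicitly and verified (both reduce to $1 - x \le \sqrt{1-x}$ and $\sqrt{1-x} \le 1 - x/2$ respectively, the latter by squaring). One should also note the degenerate cases $|S^*| = 0$ or $|T^*| = 0$ are vacuous (then $\rho(G) = 0$ and the graph is edgeless), and that the peeling step in part~(1) is only valid when $|S^*| \ge 2$ (resp. $|T^*| \ge 2$), while if $|S^*| = 1$ the claimed bound $D_s(T^*) \ge \lambda_{S'} = |E^*|$ holds with equality trivially since $s$ is incident to all of $E^*$.
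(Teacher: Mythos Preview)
Your proof is correct. Part~(1) matches the paper's argument exactly (peeling a low-degree node and using the definition of $\lambda_{S'}$ to get a contradiction).

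Part~(2), however, proceeds differently. The paper derives the upper bound $\sqrt{\lambda_{S'}\lambda_{T'}}\le\rho(G)$ by invoking part~(1): since every $s\in S^*$ has $D_s(T^*)\ge\lambda_{S'}$, summing gives $|E^*|\ge|S^*|\lambda_{S'}$, and symmetrically $|E^*|\ge|T^*|\lambda_{T'}$; multiplying and rearranging yields the bound. For the lower bound, the paper uses a trigonometric substitution $|S^*|=1/\sin^2\theta_1$, $|T^*|=1/\sin^2\theta_2$, reducing $\lambda_{S'}\lambda_{T'}$ to $\rho(G)^2/(4\cos^2(\theta_1/2)\cos^2(\theta_2/2))\ge\rho(G)^2/4$. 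Your route bypasses both devices via the elementary sandwich $x/2\le 1-\sqrt{1-x}\le x$ applied directly to the definitions of $\lambda_{S'},\lambda_{T'}$. This is shorter and self-contained (it does not need part~(1) for the upper bound), at the cost of being slightly less tight pointwise than the trig identity---but the final inequalities are the same. Your handling of the degenerate cases $|S^*|\in\{0,1\}$ is also more careful than the paper's.
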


\begin{proof} \ 
\begin{enumerate}
\item For the sake of contradiction, suppose that there exists a node $s \in S^*$ with $D_s(T^*) < \lambda_{S'}$. Then we can show that the density of the pair $(S^* \setminus \{s\}, T^*)$ is strictly larger than the density of the pair $(S^*, T^*)$. Specifically, we can infer the following guarantee:
$$\rho'(S^* \setminus \{s\}, T^*) > \frac{|E'(S^*, T^*)| - \lambda_{S'}}{\sqrt{(|S^*| - 1) \cdot |T^*|}} = \frac{|E'(S^*, T^*)|}{\sqrt{|S^*| \cdot |T^*|}} = \rho'(S^*, T^*) = \rho(G).$$
Since $\rho(G)$ denotes the value of the densest subgraph of $G$, the above inequality leads to a contradiction. Thus, we conclude that $D_s(T^*) \geq \lambda_{S'}$ for all nodes $s \in S^*$. Using the same line of reasoning, we can conclude that $D_t(S^*) \geq \lambda_{T'}$ for all nodes $t \in T^*$. 
\item From part I of the lemma, we have the following guarantee: Every node in $S^*$ (resp. $T^*$) has a degree of at least $\lambda_{S'}$ (resp. $\lambda_{T'}$) among the nodes in $T^*$ (resp. $S^*$). This implies that $|E'(S^*, T^*)| \geq |S^*| \cdot \lambda_{S'}$ and $|E'(S^*, T^*) \geq |T^*| \cdot \lambda_{T'}$.  Thus, we get:
\begin{equation}
\label{eq:derived:verynew:1}
|E'(S^*, T^*)| \geq \sqrt{|S^*| \cdot |T^*|} \cdot \sqrt{\lambda_{S'} \cdot \lambda_{T'}}
\end{equation}
Since $\rho(G) = \rho'(S^*, T^*) = |E'(S^*, T^*)|/\sqrt{|S^*| \cdot |T^*|}$, from equation~\ref{eq:derived:verynew:1} we conclude that:
\begin{equation}
\label{eq:derived:verynew:2}
\rho(G) =  \frac{|E'(S^*, T^*)|}{\sqrt{|S^*| \cdot |T^*|}} \geq \sqrt{\lambda_{S'} \cdot \lambda_{T'}}
\end{equation}
Next, putting $|S^*| = 1/\sin^2 \theta_1$ and $|T^*| = 1/\sin^2 \theta_2$, and recalling equation~\ref{eq:derived:lambda}, we get:
\begin{eqnarray}
\lambda_{S'} \cdot \lambda_{T'} &  = & \frac{|E'(S^*, T^*)|^2}{|S^*| \cdot |T^*|} \cdot \left(|S^*| \cdot |T^*|\right) \cdot \left(1 - \sqrt{1- 1/|S^*|}\right) \cdot \left( 1 - \sqrt{1 - 1/|T^*|}\right) \nonumber \\
& = & \left(\rho'(S^*, T^*)\right)^2 \cdot \frac{(1-\cos \theta_1) \cdot (1-\cos \theta_2)}{\sin^2 \theta_1 \sin^2 \theta_2} \nonumber \\
& = & \frac{\left(\rho'(S^*, T^*)\right)^2}{4 \cos^2 (\theta_1/2) \cos^2 (\theta_2/2)} \nonumber \\
& \geq & \frac{\left(\rho'(S^*, T^*)\right)^2}{4} = \frac{(\rho(G))^2}{4} \label{eq:derived:verynew:10}
\end{eqnarray}
From equation~\ref{eq:derived:verynew:10} we conclude that:
\begin{equation}
\label{eq:derived:verynew:11}
\sqrt{\lambda_{S'} \cdot \lambda_{T'}} \geq \rho(G)/2
\end{equation}
The part II of the lemma follows from equations~\ref{eq:derived:verynew:2} and~\ref{eq:derived:verynew:11}.
\end{enumerate}
\end{proof}

\subsection{$(\alpha, d_{S'}, d_{T'}, L)$-Decomposition and Its Properties}
\label{sec:derived:decomposition}

 We  extend the concept of an $(\alpha, d, L)$-decomposition (see Definition~\ref{def:partition}) to directed graphs.  This requires us to introduce one additional parameter in the definition, and we call the corresponding entity an $(\alpha, d_{S'}, d_{T'}, L)$-decomposition. Specifically, an $(\alpha, d_{S'}, d_{T'}, L)$-decomposition of the derived graph $G' = (V', E')$, where $V' = S' \cup T'$, is given by two laminar families of subsets of nodes $S' = S_1 \supseteq S_2 \supseteq \cdots \supseteq S_L$ and $T' = T_1 \supseteq T_2 \supseteq \cdots \supseteq T_L$. These subsets are iteratively constructed as follows. First, we set $S_1 = S$ and $T_1 = T$. Next, for each $i \in \{1, \ldots, L-1\}$, while constructing the subsets $S_{i+1}, T_{i+1}$ from the subsets $S_i, T_i$, we  ensure  the following conditions:
 \begin{itemize}
 \item Every node $s \in S_i$ with $D_s(T_i) > \alpha \cdot d_{S'}$ is included in the subset $S_{i+1}$. On the other hand, every node $s \in S_i$ with $D_s(T_i) < d_{S'}$ is excluded from the subset $S_{i+1}$.
 \item Every node $t \in T_i$ with $D_t(S_i) > \alpha \cdot d_{T'}$ is included in the subset $T_{i+1}$. On the other hand, every node $t \in T_i$ with $D_t(S_i) < d_{T'}$ is excluded from the subset $T_{i+1}$.
 \end{itemize}  
 \noindent We are now ready to formally define the concept of an $(\alpha, d_{S'}, d_{T'}, L)$-decomposition of $G'$.

\begin{definition}
\label{main:def:partition:derived}
Fix any $\alpha \geq 1$,  $d_{S'},d_{T'} \geq 0$, and any positive integer $L$. Consider a family of subsets   $S' = S_1  \supseteq \cdots \supseteq S_L$
and $T' = T_1  \supseteq \cdots \supseteq T_L$.
The tuples $(S_1, \ldots, S_L)$ and $(T_1, \ldots, T_L)$ form  an 
$(\alpha,  d_{S'}, d_{T'}, L)$-decomposition of the derived  graph $G'$ iff  for every $i \in \{1, \ldots, L-1\}$, we have:
\begin{enumerate} 
\item $S_{i+1} \supseteq \left\{s \in S_i : \dd_s(T_i) > \alpha \cdot d_{S'} \right\}$ and  $S_{i+1} \cap \left\{ s \in S_i : \dd_s(T_i) < d_{S'} \right\} = \emptyset$.
\item $T_{i+1} \supseteq \left\{t \in T_i : \dd_t(S_i) > \alpha \cdot d_{T'} \right\}$ and   $T_{i+1} \cap \left\{ t \in T_i : \dd_t(S_i) < d_{T'} \right\} = \emptyset$.
\end{enumerate}
Let $V'_i = (S_i \cup T_i) \setminus (S_{i+1}
\cup T_{i+1})$ for all $i \in \{1, \ldots, L-1\}$, and $V'_i = S_i \cup T_i$ for $i = L$. We say that the nodes in $V'_i$ constitute the $i^{th}$ level of this decomposition. We also denote the level of a node $v \in V'$ by $\ell(v)$. Thus, we have $\ell(v) = i$ whenever $v \in V'_i$, and the set of nodes $V'$ is partitioned into $L$ subsets $V'_1, \ldots, V'_L$. 
\end{definition}

Theorem~\ref{main:thm:derived} and Corollary~\ref{cor:derived} will play the main role in the rest of this section. Roughly speaking, they are generalizations of Theorem~\ref{thm:test} and Corollary~\ref{cor:test:1} from Section~\ref{sec:decomposition}, and they state that we can use the $(\alpha, d_{S'}, d_{T'}, L)$-decomposition to $4\alpha (1+\epsilon)^{3/2}$-approximate of the value of the densest subgraph of $G = (V, E)$. All we need to do is to  set $L = O(\log n/\epsilon)$ and try different values of $d_{S'}, d_{T'}$ in powers of $(1+\epsilon)$.

\begin{theorem}
\label{main:thm:derived}
Fix any  $\alpha \geq 1$, $d_{S'}, d_{T'} \geq 0$, $\epsilon \in (0,1)$, and $L  \leftarrow 2 \cdot (2 + \lceil \log_{(1+\epsilon)} n \rceil)$. 
Let $(S_1, \ldots, S_L), (T_1, \ldots, T_L)$ be an $(\alpha,  d_{S'}, d_{T'}, L)$-decomposition of the derived graph $G' = (V', E')$ as per Definition~\ref{main:def:partition:derived}. Then the following conditions are satisfied. 
\begin{enumerate}
\item If $d_{S'} d_{T'} >  4  (1+\eps)\lambda_{S'} \lambda_{T'}$, then 
$V'_L = \emptyset$ (i.e., the topmost level of the decomposition is empty). 

\item If $d_{S'} < \lambda_{S'}/\alpha$ and
$d_{T'} < \lambda_{T'}/\alpha$, then  $V'_L \ne \emptyset$ (i.e.,  the topmost level  is nonempty). 
\end{enumerate}
\end{theorem}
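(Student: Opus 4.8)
The plan is to mimic the proof of Theorem~\ref{main:thm:test} for undirected graphs, but now carry two simultaneous "shrinking" processes, one on the $S'$-side and one on the $T'$-side, and to use Lemma~\ref{lm:derived:main} in place of Lemma~\ref{main:lm:prelim:structure} to control the densest pair. Recall that $\rho(G)=\rho'(S^*,T^*)$ and that by Lemma~\ref{lm:derived:main}(2) we have $\sqrt{\lambda_{S'}\lambda_{T'}}\in[\rho(G)/2,\rho(G)]$. The two parts are proved separately.

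\textbf{Part 1 ($d_{S'}d_{T'} > 4(1+\epsilon)\lambda_{S'}\lambda_{T'}$ implies $V'_L=\emptyset$).} First I would show that for every level $i$ the densities $\rho'(S_i,T_i)$ are bounded: since $\rho'(S_i,T_i)=|E'(S_i,T_i)|/\sqrt{|S_i||T_i|}$, a double-counting of the edges $E'(S_i,T_i)$ from the $S$-side and from the $T$-side gives $\sum_{s\in S_i}D_s(T_i)=|E'(S_i,T_i)|=\sum_{t\in T_i}D_t(S_i)$. The key quantitative claim is a geometric-mean analogue of the "at most $|Z_i|/(1+\epsilon)$ high-degree nodes" argument: if $C_{S,i}=\{s\in S_i: D_s(T_i)<d_{S'}\}$ and $C_{T,i}=\{t\in T_i:D_t(S_i)<d_{T'}\}$, then I claim $|S_i\setminus C_{S,i}|\cdot|T_i\setminus C_{T,i}|\le |S_i||T_i|/(1+\epsilon)$. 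To see why, suppose not; then the number of "surviving" $S$-nodes times the number of "surviving" $T$-nodes would be large, forcing $|E'(S_i,T_i)|\ge$ (roughly) $\sqrt{|S_i\setminus C_{S,i}|\cdot d_{S'}}\cdot\sqrt{|T_i\setminus C_{T,i}|\cdot d_{T'}}$-type bound via the two degree-sum identities, which in turn forces $\rho'(S_i,T_i)^2 \ge$ something exceeding $(1+\epsilon)\lambda_{S'}\lambda_{T'}\ge (1+\epsilon)\rho(G)^2/4$ — actually, more carefully, one uses $|E'(S_i,T_i)|\ge |S_i\setminus C_{S,i}|\cdot d_{S'}$ isn't quite right; instead I would argue on each side independently: $|E'(S_i,T_i)|\ge$ no, let me restate — the clean route is: $|E'(S_i,T_i)|\le |S_i|\cdot(\text{avg }S\text{-degree})$, and if "many" $S$-nodes have degree $\ge d_{S'}$ then $|E'(S_i,T_i)|\ge |S_i\setminus C_{S,i}|\, d_{S'}$; combined with $\rho'(S_i,T_i)\le\rho(G)\le 2\sqrt{\lambda_{S'}\lambda_{T'}}$ and the symmetric $T$-side bound, a short computation yields $|S_{i+1}|\le|S_i\setminus C_{S,i}|$, $|T_{i+1}|\le|T_i\setminus C_{T,i}|$, and the product bound $|S_{i+1}||T_{i+1}|\le |S_i||T_i|/(1+\epsilon)$ whenever $d_{S'}d_{T'}>4(1+\epsilon)\lambda_{S'}\lambda_{T'}$ (using $\lambda_{S'}\lambda_{T'}\ge\rho(G)^2/4\ge\rho'(S_i,T_i)^2/4$). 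Iterating over $i=1,\dots,L-1$ gives $|S_L||T_L|\le |S_1||T_1|/(1+\epsilon)^{L-1}\le n^2/(1+\epsilon)^{L-1}<1$ since $L-1 \ge 2(1+\log_{(1+\epsilon)}n)+1 > \log_{(1+\epsilon)}(n^2)$, whence $S_L=\emptyset$ or $T_L=\emptyset$, i.e.\ $V'_L=\emptyset$.

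\textbf{Part 2 ($d_{S'}<\lambda_{S'}/\alpha$ and $d_{T'}<\lambda_{T'}/\alpha$ imply $V'_L\ne\emptyset$).} Here I would prove by induction on $i$ that $S^*\subseteq S_i$ and $T^*\subseteq T_i$ for every $i\in\{1,\dots,L\}$; this immediately gives $S_L\supseteq S^*\ne\emptyset$ (and $T_L\supseteq T^*\ne\emptyset$), so $V'_L\ne\emptyset$. The base case is $S^*\subseteq S'=S_1$, $T^*\subseteq T'=T_1$. For the inductive step, assume $S^*\subseteq S_i$ and $T^*\subseteq T_i$. For any $s\in S^*$, monotonicity of degree under superset gives $D_s(T_i)\ge D_s(T^*)\ge\lambda_{S'}>\alpha d_{S'}$ by Lemma~\ref{lm:derived:main}(1) and the hypothesis $d_{S'}<\lambda_{S'}/\alpha$; hence by Definition~\ref{main:def:partition:derived}(1), $s\in S_{i+1}$, so $S^*\subseteq S_{i+1}$. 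The symmetric argument using $D_t(S_i)\ge D_t(S^*)\ge\lambda_{T'}>\alpha d_{T'}$ and Definition~\ref{main:def:partition:derived}(2) gives $T^*\subseteq T_{i+1}$, completing the induction.

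\textbf{Main obstacle.} The delicate point is Part 1: the undirected proof gets a clean factor-$(1+\epsilon)$ shrinkage per level because $\delta(Z_i)=2\rho(Z_i)$ directly counts surviving high-degree nodes. In the directed/bipartite setting the relevant quantity $\rho'(S_i,T_i)$ involves a \emph{geometric} mean $\sqrt{|S_i||T_i|}$, so I have to be careful to get a \emph{product} decrease $|S_{i+1}||T_{i+1}|\le|S_i||T_i|/(1+\epsilon)$ rather than two separate decreases — it may be that only the product shrinks each level (one side could even stay the same). Pinning down the exact constant $4(1+\epsilon)$ requires combining the two degree-sum identities with the $\lambda_{S'}\lambda_{T'}\ge\rho(G)^2/4$ bound and being careful that $\rho'(S_i,T_i)\le\rho(G)$ for all $i$ (which holds since $(S_i,T_i)$ is a candidate pair). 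I expect the rest — Part 2 and the final "iterate and $n^2/(1+\epsilon)^{L-1}<1$" step — to be routine given the $L=2(2+\lceil\log_{(1+\epsilon)}n\rceil)$ choice.
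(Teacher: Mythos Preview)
Your Part~2 is correct and matches the paper's proof exactly: induct on $i$ using Lemma~\ref{lm:derived:main}(1) to keep $S^*\subseteq S_i$ and $T^*\subseteq T_i$.

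Your Part~1 is also correct in substance, but the route differs from the paper's. The paper uses a \emph{cross-level} argument: since every $s\in S_k$ survived from level $k-1$, it has $D_s(T_{k-1})\ge d_{S'}$, so $|E'(S_k,T_{k-1})|\ge |S_k|\,d_{S'}$; combining with $|E'(S_k,T_{k-1})|\le \rho(G)\sqrt{|S_k||T_{k-1}|}<\sqrt{d_{S'}d_{T'}|S_k||T_{k-1}|/(1+\epsilon)}$ gives $|S_k|\,d_{S'}\le |T_{k-1}|\,d_{T'}/(1+\epsilon)$, and symmetrically $|T_k|\,d_{T'}\le |S_{k-1}|\,d_{S'}/(1+\epsilon)$. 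A case split on whether $|S_{k-1}|d_{S'}\ge |T_{k-1}|d_{T'}$ then shows that \emph{one} of $|S_k|\le|S_{k-1}|/(1+\epsilon)$ or $|T_k|\le|T_{k-1}|/(1+\epsilon)$ holds at each level. Your approach instead stays within a single level: from $|E'(S_i,T_i)|\ge |S_i\setminus C_{S,i}|\,d_{S'}$ and $|E'(S_i,T_i)|\ge |T_i\setminus C_{T,i}|\,d_{T'}$, multiplying and using $|E'(S_i,T_i)|^2\le 4\lambda_{S'}\lambda_{T'}|S_i||T_i|$ (via $\rho'(S_i,T_i)\le\rho(G)\le 2\sqrt{\lambda_{S'}\lambda_{T'}}$) yields the product bound $|S_{i+1}||T_{i+1}|\le |S_i||T_i|/(1+\epsilon)$ directly. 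Both work; once your ``clean route'' is written out without the hesitations, it is arguably more direct than the paper's case analysis.

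One genuine slip to fix: you conclude ``$S_L=\emptyset$ or $T_L=\emptyset$, i.e.\ $V'_L=\emptyset$''. Since $V'_L=S_L\cup T_L$, one side being empty does \emph{not} give $V'_L=\emptyset$. The patch is easy and is needed in the paper's proof as well: the choice $L=2(2+\lceil\log_{(1+\epsilon)}n\rceil)$ has enough slack that your product bound already forces $|S_{L-1}||T_{L-1}|\le n^2/(1+\epsilon)^{L-2}<1$ at level $L-1$; then if, say, $S_{L-1}=\emptyset$, every $t\in T_{L-1}$ has $D_t(S_{L-1})=0<d_{T'}$ (note $d_{S'}d_{T'}>0$ under the hypothesis), so $T_L=\emptyset$ as well.
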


\begin{proof} \
\begin{enumerate}
\item For any level $k \in \{2, \ldots , L\}$. Definition~\ref{main:def:partition:derived} states that for every node $s \in S_k$, we have $D_s(T_{k-1}) > d_{S'}$. We thus get a lower bound on the number edges going from $S_k$ to $T_{k-1}$.
\begin{equation}
\label{eq:derived:new:1}
 |E'(S_k, T_{k-1})| \ge |S_k| \cdot d_{S'}
 \end{equation}
 Recall that by Lemma~\ref{lm:derived:main}, we have $\rho(G) \leq 2 \cdot \sqrt{\lambda_{S'} \lambda_{T'}}$. This implies that $\rho'(S_k, T_{k-1}) = |E'(S_k, T_{k-1})|/\sqrt{|S_k| |T_{k-1}|} \leq \rho(G) \leq 2 \cdot \sqrt{\lambda_{S'} \lambda_{T'}} \leq \sqrt{ d_{S'} d_{T'}/ (1+\epsilon)}$. Hence, we get:
 \begin{equation}
 \label{eq:derived:new:2}
 |E'(S_k, T_{k-1})| \leq \sqrt{\frac{|S_k| \cdot |T_{k-1}| \cdot |d_{S'}| \cdot |d_{T'}|}{(1+\epsilon)}}
 \end{equation}
 From equations~\ref{eq:derived:new:1} and~\ref{eq:derived:new:2}, we get:
 \begin{equation}
 \label{eq:derived:new:3}
 |S_k| \cdot d_{S'} \leq |T_{k-1}| \cdot d_{T'}/(1+\epsilon)
 \end{equation}
 Replacing $S$ by $T$ in the above argument, we can analogously show that:
 \begin{equation}
 \label{eq:derived:new:4}
 |T_k| \cdot d_{T'} \leq |S_{k-1}| \cdot d_{S'}/(1+\epsilon)
 \end{equation}
 Now, we consider two possible cases.
 \begin{itemize}
 \item If $|S_{k-1}| \cdot d_{S'} \geq |T_{k-1}| \cdot d_{T'}$, then equation~\ref{eq:derived:new:3} implies that $|S_k| \leq |S_{k-1}|/(1+\epsilon)$.
 \item Else if $|S_{k-1}| \cdot d_{S'} < |T_{k-1}| \cdot d_{T'}$, then equation~\ref{eq:derived:new:4} implies that $|T_k| < |T_{k-1}|/(1+\epsilon)$.
 \end{itemize}
In other words, when going from level $k-1$ to level $k$, either the size of $S$ or the size of $T$ reduces by a factor of $(1+\epsilon)$. Since $|S_1| = |T_1| = n$ and $L = 2 \cdot (2 + \lceil \log_{(1+\epsilon)} n \rceil)$, after $L$ levels both the sets $S$ and $T$ are empty. We thus have $S_L = T_L = \emptyset$.
 
\item Consider the densest subgraph of the input graph $G = (V, E)$, given by the pair  $(S^*, T^*)$ where $S^* \subseteq S'$ and $T^* \subseteq T'$. Lemma~\ref{lm:derived:main} states that each node $s \in S^*$ (resp. $t \in T^*$) has out-degree (resp. in-degree) at least $\lambda_{S'}$ (resp. $\lambda_{T'}$) with respect to the nodes in $T^*$ (resp. $S^*$).  Since $\lambda_{S'} > \alpha \cdot d_{S'}$ and $\lambda_{T'} > \alpha \cdot d_{T'}$, Definition~\ref{main:def:partition:derived} implies the following guarantee:
\begin{itemize}
\item If $S^* \subseteq S_i$, $T^* \subseteq T_i$ for some  $i \in \{1, \ldots, L-1\}$, then we also have $S^* \subseteq S_{i+1}$, $T^* \subseteq T_{i+1}$.
\end{itemize}
Since $S_1 = S'$, $T_1 = T'$, we  have $S^* \subseteq S_1$ and $T^* \subseteq T_1$. Hence, applying the above guarantee in an inductive fashion, we conclude that $S^* \subseteq S_L$ and $T^* \subseteq T_L$. Thus,  $V'_L = S_L \cup T_L \neq \emptyset$.
\end{enumerate}
\end{proof}

\begin{corollary}
\label{cor:derived}
As in Theorem~\ref{main:thm:derived}, fix any $\alpha \geq 1$, $\epsilon \in (0,1)$ and $L = 2 \cdot (2 + \lceil \log_{(1+\epsilon)} n \rceil)$. Define $\lambda^* = 1 - \sqrt{1-1/n}$. Discretize the range $[\lambda^*/(\alpha (1+\epsilon)), n^2]$ in powers of $(1+\epsilon)$, by  letting $q_k = (1+\epsilon)^{k-1} \lambda^*/\alpha$ for every  integer $k \geq 0$. Let $K$ be the smallest integer $k$ for which $q_k \geq n^2$. For all $d_{S'}, d_{T'} \in \{d_0, \ldots, d_K\}$, construct an $(\alpha, d_{S'}, d_{T'}, L)$-decomposition of  $G'$ as per Definition~\ref{main:def:partition:derived}, and let $V'_L(d_{S'}, d_{T'})$ denote the set of nodes at level $L$ of such a decomposition. Let $P = \{ (d_{S'}, d_{T'}) : d_{S'}, d_{T'} \in \{0, \ldots, K\}, V'_L(d_{S'}, d_{T'}) \neq \emptyset \}$ denote the set of those $(d_{S'}, d_{T'})$  pairs for which the topmost level of the decomposition is non-empty. If $P = \emptyset$, then define $\gamma = 0$. Else define $\gamma = \max_{(d_{S'}, d_{T'}) \in P} \left\{ d_{S'} \cdot d_{T'} \right\}$. Then we have:
$$2 \cdot \sqrt{(1+\epsilon)} \cdot \rho(G) \geq \sqrt{\gamma} \geq \frac{\rho(G)}{2 \cdot \alpha \cdot (1+\epsilon)}.$$
\end{corollary}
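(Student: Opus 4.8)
The plan is to mimic the proof of Corollary~\ref{cor:test:1}, using Theorem~\ref{main:thm:derived} in place of Theorem~\ref{main:thm:test} and Lemma~\ref{lm:derived:main} in place of Lemma~\ref{main:lm:prelim:structure}. The key quantities to control are $\lambda_{S'}$ and $\lambda_{T'}$ defined in~\eqref{eq:derived:lambda}. First I would establish the elementary bounds $\lambda^*/\alpha \le \lambda_{S'}, \lambda_{T'} \le n$ (in fact $\le |E|\le n^2$, but the cleaner upper bound $\le\sqrt{n^2}=n$ on each suffices after squaring). For the lower bound: since $|S^*|\le n$ we have $1-\sqrt{1-1/|S^*|}\ge 1-\sqrt{1-1/n}=\lambda^*$, and $|E(S^*,T^*)|\ge 1$ (assuming $\rho(G)>0$; the case $\rho(G)=0$ is trivial since then $P=\emptyset$ and $\gamma=0$ is a valid $0$-approximation), so $\lambda_{S'}\ge\lambda^*$; similarly $\lambda_{T'}\ge\lambda^*$. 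For the upper bound: $\lambda_{S'}=|E(S^*,T^*)|(1-\sqrt{1-1/|S^*|})\le |E(S^*,T^*)|/|S^*|\le n$ (using $1-\sqrt{1-x}\le x$ and $|E(S^*,T^*)|\le |S^*|\cdot n$), and likewise $\lambda_{T'}\le n$. Hence the discretization grid $\{q_0,\dots,q_K\}$ with $q_k=(1+\epsilon)^{k-1}\lambda^*/\alpha$ covers the relevant range: there exist indices so that some $q_{k_{S'}}\in[\lambda_{S'}/(\alpha(1+\epsilon)),\lambda_{S'}/\alpha)$ and some $q_{k_{T'}}\in[\lambda_{T'}/(\alpha(1+\epsilon)),\lambda_{T'}/\alpha)$, and $q_K\ge n^2\ge 4(1+\epsilon)\lambda_{S'}\lambda_{T'}$ comfortably (since $\lambda_{S'}\lambda_{T'}\le n^2$).

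Next I would prove the two inequalities separately. For the lower bound $\sqrt{\gamma}\ge\rho(G)/(2\alpha(1+\epsilon))$: take the pair $(d_{S'},d_{T'})=(q_{k_{S'}},q_{k_{T'}})$ just identified. Then $d_{S'}<\lambda_{S'}/\alpha$ and $d_{T'}<\lambda_{T'}/\alpha$, so by part~2 of Theorem~\ref{main:thm:derived} we get $V'_L(d_{S'},d_{T'})\ne\emptyset$, i.e.\ $(d_{S'},d_{T'})\in P$ and $P\ne\emptyset$. Therefore
\[
\gamma \;\ge\; d_{S'}\cdot d_{T'}\;\ge\;\frac{\lambda_{S'}}{\alpha(1+\epsilon)}\cdot\frac{\lambda_{T'}}{\alpha(1+\epsilon)}\;=\;\frac{\lambda_{S'}\lambda_{T'}}{\alpha^2(1+\epsilon)^2},
\]
and taking square roots and using $\sqrt{\lambda_{S'}\lambda_{T'}}\ge\rho(G)/2$ from part~2 of Lemma~\ref{lm:derived:main} gives $\sqrt{\gamma}\ge\rho(G)/(2\alpha(1+\epsilon))$.

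For the upper bound $\sqrt{\gamma}\le 2\sqrt{1+\epsilon}\,\rho(G)$: if $P=\emptyset$ then $\gamma=0$ and the inequality is trivial, so assume $P\ne\emptyset$ and let $(d_{S'},d_{T'})\in P$ achieve the maximum $d_{S'}d_{T'}=\gamma$. Since $V'_L(d_{S'},d_{T'})\ne\emptyset$, the contrapositive of part~1 of Theorem~\ref{main:thm:derived} forces $d_{S'}d_{T'}\le 4(1+\epsilon)\lambda_{S'}\lambda_{T'}$. Combining with $\sqrt{\lambda_{S'}\lambda_{T'}}\le\rho(G)$ (part~2 of Lemma~\ref{lm:derived:main}) yields $\gamma\le 4(1+\epsilon)\rho(G)^2$, i.e.\ $\sqrt{\gamma}\le 2\sqrt{1+\epsilon}\,\rho(G)$. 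Putting the two bounds together is exactly the claimed chain of inequalities.

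I expect the main obstacle to be purely bookkeeping rather than conceptual: carefully checking that the grid $\{q_k\}$ is fine enough and long enough to simultaneously straddle $\lambda_{S'}/\alpha$ and $\lambda_{T'}/\alpha$ from below (one needs $q_1=\lambda^*/\alpha\le\lambda_{S'}/(\alpha(1+\epsilon))$, which follows from $\lambda_{S'}\ge\lambda^*(1+\epsilon)$ --- but in fact $\lambda_{S'}\ge\lambda^*$ only, so one should instead argue that the largest grid point below $\lambda_{S'}/\alpha$ is at least $\lambda_{S'}/(\alpha(1+\epsilon))$, which is automatic from the geometric spacing provided $\lambda_{S'}/\alpha\ge q_1$, and if $\lambda_{S'}/\alpha<q_1$ one uses $q_0=\lambda^*/(\alpha(1+\epsilon))\le\lambda_{S'}/(\alpha(1+\epsilon))$ and $q_0<\lambda_{S'}/\alpha$, so $q_0$ works) and that $q_K\ge n^2$ dominates $4(1+\epsilon)\lambda_{S'}\lambda_{T'}$. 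A minor subtlety is the degenerate case $\rho(G)=0$ (equivalently $E=\emptyset$), where one should observe directly that $P=\emptyset$ so $\gamma=0$ and all inequalities hold. Everything else is a direct translation of the argument already given for the undirected Corollary~\ref{cor:test:1}.
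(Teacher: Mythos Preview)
Your proposal is correct and follows essentially the same route as the paper's own proof: handle the degenerate case $E'=\emptyset$ separately, then for the lower bound pick grid points $\hat d_{S'},\hat d_{T'}$ just below $\lambda_{S'}/\alpha,\lambda_{T'}/\alpha$ and invoke part~2 of Theorem~\ref{main:thm:derived}, for the upper bound take the maximizing pair in $P$ and invoke the contrapositive of part~1, and in both directions convert between $\sqrt{\lambda_{S'}\lambda_{T'}}$ and $\rho(G)$ via part~2 of Lemma~\ref{lm:derived:main}. Your bookkeeping on the grid (in particular that $q_0<\lambda^*/\alpha\le\lambda_{S'}/\alpha$ guarantees a valid $\hat d_{S'}$, and your sharper bound $\lambda_{S'},\lambda_{T'}\le n$ via $1-\sqrt{1-x}\le x$) is fine and matches or slightly tightens what the paper does.
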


\begin{proof}
If the derived graph is empty, i.e., if $E' = \emptyset$, then we have $E = \emptyset$, $P = \emptyset$, $\gamma = 0$, and $\rho(G) = 0$. Thus, in this case the corollary is trivially true. 

For the remainder of the proof,  we assume that $E' \neq \emptyset$. Here, it is easy to see that $P \neq \emptyset$: simply consider the $(\alpha, d_{S'}, d_{T'}, L)$-decomposition with $d_{S'} = d_{T'} = q_0 < 1$. In this decomposition, every node in $G'$ with nonzero degree will be promoted to the topmost level $V'_L(d_{S'}, d_{T'})$. 

Accordingly, for the rest of the proof, we fix a $(d_{S'}, d_{T'})$-pair in $P$ for which $d_{S'} \cdot d_{T'} = \gamma$. Let this pair be identified as $(d^*_{S'}, d^*_{T'})$. Since $(d^*_{S'}, d^*_{T'}) \in P$, we infer that $V'_L(d^*_{S'}, d^*_{T'}) \neq \emptyset$. Hence, by the first part of Theorem~\ref{main:thm:derived}, we must have:
\begin{equation}
\label{eq:derived:old:1}
\gamma = d^*_{S'} \cdot d^*_{T'} \leq 4(1+\epsilon) \cdot \lambda_{S'} \cdot \lambda_{T'}.
\end{equation} 
Next, define $\hat{d}_{S'}$ (resp. $\hat{d}_{T'}$) to be the maximum  value of $d_{S'}$ (resp. $d_{T'}$) which is less than the threshold $\lambda_{S'}/\alpha$ (resp. $\lambda_{T'}/\alpha$). Note that since $E' \neq \emptyset$, we have $\lambda^* \leq \lambda_{S'}, \lambda_{T'} < n^2$. Thus, we are guaranteed the existence of such a pair $(\hat{d}_{S'}, \hat{d}_{T'})$. 
\begin{equation}
\label{eq:derived:old:2}
\hat{d}_{S'} = \max_{k \in \{0, \ldots, K\} \,  : \, q_k < \lambda_{S'}/\alpha} \{q_k\} \text{ and } \hat{d}_{T'} = \max_{k \in \{0, \ldots, K\} \,  : \, q_k < \lambda_{T'}/\alpha} \{q_k\}  
\end{equation} 
Next, note that $q_0 < \lambda^*/\alpha$ and $q_{K} \geq n^2$.  Since the consecutive $q_k$ values are within a factor of $(1+\epsilon)$ from each other, equation~\ref{eq:derived:old:2} gives us:
\begin{equation}
\label{eq:derived:old:3}
\hat{d}_{S'} \cdot \hat{d}_{T'} \geq \frac{\lambda_{S'} \cdot \lambda_{T'}}{\alpha^2 \cdot (1+\epsilon)^2}
\end{equation}
Since $\hat{d}_{S'} < \lambda_{S'}/\alpha$ and $\hat{d}_{T'} < \lambda_{T'}/\alpha$, Theorem~\ref{main:thm:derived} (part II) implies that $V'_L(\hat{d}_{S'}, \hat{d}_{T'}) \neq \emptyset$. Thus, we have $(\hat{d}_{S'}, \hat{d}_{T'}) \in P$. Since $(d^*_{S'}, d^*_{T'}) \in P$  maximizes the product of its two components, we get:
\begin{equation}
\label{eq:derived:old:4}
\gamma = d^*_{S'} \cdot d^*_{T'} \geq \hat{d}_{S'} \cdot \hat{d}_{T'}
\end{equation}
From equations~\ref{eq:derived:old:1},~\ref{eq:derived:old:3} and~\ref{eq:derived:old:4}, we infer that:
\begin{equation}
\label{eq:derived:old:5}
\frac{\lambda_{S'} \cdot \lambda_{T'}}{\alpha^2 \cdot (1+\epsilon)^2} \leq \gamma \leq 4 (1+\epsilon) \cdot \lambda_{S'} \cdot \lambda_{T'}
\end{equation}
By the second part of Lemma~\ref{lm:derived:main}, we have:  $\rho(G)/2 \leq   \sqrt{\lambda_{S'} \lambda_{T'}} \leq \rho(G)$. Combining this observation with equation~\ref{eq:derived:old:5}, we get:
\begin{equation}
\label{eq:derived:old:6}
\frac{\rho(G)}{2\alpha (1+\epsilon)} \leq \sqrt{\gamma} \leq 2 \cdot \sqrt{(1+\epsilon)} \cdot \rho(G)
\end{equation}
This concludes the proof of the corollary. 
\end{proof}

\subsection{The Algorithm for Maintaining an $(\alpha, d_{S'}, d_{T'}, L)$-Decomposition}
\label{sec:derived:maintain}

Throughout this section, we fix the values of $d_{S'}, d_{T'}, L$. Furthermore, we fix an $\alpha \geq 2+\epsilon$. We describe an algorithm for  maintaining an $(\alpha, d_{S'}, d_{T'}, L)$-decomposition of the derived graph $G' = (V', E')$  in a dynamic setting (see Definition~\ref{def:derived:graph}). We assume that the input graph $G = (V, E)$ is empty in the beginning, and hence, at that instant we also have $E' = \emptyset$. Subsequently, at each time-step, a directed edge $(u,v)$ is inserted into (resp. deleted from) the graph $G = (V, E)$, and accordingly, the edge $(s_u, t_v)$ is inserted into (resp. deleted from) the derived graph $G' = (V', E')$. Our main result is stated below.

\begin{theorem}
\label{th:derived:maintain}
For every polynomially bounded $\alpha \ge 2 + 3\eps$, we can deterministically  maintain an $(\alpha, d_{S'}, d_{T'}, L)$-decomposition  of the derived graph $G' = (V',E')$. Starting from an empty graph, we can  handle a sequence of $t$ update operations (edge insertions/deletions) in total time $O(t L/\eps)$. Thus, we get an amortized update time of $O(L/\eps)$. The space complexity of the data structure at a given time-step is $O(n +m)$, where $m = |E'|$ denotes the number of edges in the derived graph at that time-step, and $n = |V'|$ denotes the number of nodes in the derived graph (which does not change over time). Note that $|V| = 2n$ and $|E| = m$, where $G = (V, E)$ is the input graph. 
\end{theorem}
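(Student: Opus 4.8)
The plan is to mirror, in the bipartite \emph{derived graph} $G' = (V', E')$, the potential-function argument used in Section~\ref{sec:dynpart} to prove Theorem~\ref{th:sec:dynpart:main}, with the only substantive change being that a node's ``degree'' is now measured against the \emph{other} side of the bipartition. Concretely, $G'$ is maintained dynamically in $O(m+n)$ space and $O(1)$ time per update by the trivial construction following Definition~\ref{def:derived:graph}: each update $(u,v)$ in $G$ becomes the single edge $(s_u,t_v)$ in $G'$. So it suffices to maintain an $(\alpha, d_{S'}, d_{T'}, L)$-decomposition of $G'$ subject to a stream of edge insertions/deletions in $G'$, where every edge has one endpoint in $S'$ and one in $T'$.

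First I would set up the data structures exactly as in Section~\ref{sec:dynpart}: for each node $v \in V'$ and each level $i$, a doubly linked list $\text{{\sc Friends}}_i[v]$ recording the neighbors of $v$ that lie at level $i$ (for $i<\ell(v)$) or at level $\geq i$ (for $i = \ell(v)$), together with its size counter, and a $\text{{\sc Level}}[v]$ counter. Note the key structural fact: since $G'$ is bipartite with all edges between $S'$ and $T'$, a node $s\in S'$ at level $i$ only ever interacts with nodes of $T'$, so $D_s(T_i)$ is exactly the quantity appearing in Definition~\ref{main:def:partition:derived}, and symmetrically for $t\in T'$. A node $s\in S'$ at level $\ell(s)=i$ is \emph{dirty} iff $D_s(T_i) > \alpha d_{S'}$ (and $i<L$) or $D_s(T_{i-1}) < d_{S'}$ (and $i>1$); a node $t\in T'$ at level $i$ is \emph{dirty} iff $D_t(S_i)>\alpha d_{T'}$ or $D_t(S_{i-1})<d_{T'}$. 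The algorithm is: after each edge insert/delete in $G'$ (which updates a constant number of {\sc Friends} lists in $O(1)$ time), run a RECOVER() procedure that, while some dirty node $y$ exists, moves $y$ up one level if its ``up'' condition is violated and down one level if its ``down'' condition is violated, updating the incident {\sc Friends} lists; when RECOVER() terminates every node is clean, so the invariant of Definition~\ref{main:def:partition:derived} holds.

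Next I would carry over the potential function. Define, for $s\in S'$, $\Phi(s) = (1/\eps)\sum_{i=1}^{\ell(s)-1}\max(0,\alpha d_{S'} - D_s(T_i))$ and analogously $\Phi(t)$ for $t\in T'$ using $d_{T'}$ and $D_t(S_i)$; for each edge $(s,t)\in E'$ put $\Psi(s,t) = 2(L-\min(\ell(s),\ell(t))) + f(s,t)$ where $f=1$ iff the two endpoints share a level. Let $\B = \sum_v \Phi(v) + \sum_e \Psi(e)$. The insertion of an edge creates a $\Psi$ term of value at most $3L$ while not increasing any $\Phi$ (a new edge can only raise someone's degree, hence lower $\Phi$), so $\B$ rises by $\leq 3L$; the deletion of an edge destroys a nonnegative $\Psi$ and raises the two endpoints' $\Phi$ by at most $L/\eps$ each, so $\B$ rises by $\leq 2L/\eps$. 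For the analysis of a single level-change in RECOVER(), the cost of moving a node $y$ between levels $i$ and $i\pm1$ is $O(1 + D_y(\cdot))$ where the degree is measured against the opposite side; when $y$ moves up we have $D_y > \alpha d$ so the edge potentials incident to $y$ drop by at least $D_y(Z_i)$ while $\Phi(y)$ is unchanged and neighbors' $\Phi$ can only drop, paying for the work; when $y$ moves down we have $D_y < d$, so $\Phi(y)$ drops by $\geq(\alpha-1)d/\eps$, the neighbors on higher levels gain at most $d/\eps$ total in $\Phi$, and the incident edge potentials gain at most $2d$, netting a drop of at least $(\alpha-2-2\eps)d/\eps \geq d$ for $\alpha\geq 2+3\eps$, again covering the $O(1+D_y)$ cost. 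Here the only thing to check carefully is that all the degree quantities that appear (e.g.\ the ``$\dd_y(Z_{i+1})$'' bound in the down-move, which used $\dd_y(Z_{i+1})\le\dd_y(Z_{i-1})<d$) go through verbatim once ``degree'' is read as cross-side degree in the bipartite graph — this is immediate because monotonicity $D_y(T_{i+1})\le D_y(T_{i-1})$ still holds as $T_{i+1}\subseteq T_{i-1}$. Summing, the amortized update time is $O(L/\eps)$, and the space is $O(n+m)$ since each edge appears in exactly two {\sc Friends} lists; this is Theorem~\ref{th:derived:maintain}, and combined with Corollary~\ref{cor:derived} over the $K^2 = \tilde\Theta(1)$ choices of $(d_{S'},d_{T'})$ it will yield Theorem~\ref{th:directed:main}.

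The main obstacle I anticipate is purely bookkeeping rather than conceptual: one must be scrupulous that the ``up'' move of a node on one side of the bipartition is charged to, and the potentials of, edges and nodes on the \emph{other} side, and that the telescoping in $\Phi$ (which sums over levels $1,\dots,\ell(v)-1$ of the deficiency $\alpha d - D_v(\cdot)$) interacts correctly with the fact that a down-move of a node $y\in S'$ changes $D_t(S_j)$ for its $T'$-neighbors $t$ at levels $>i$ by exactly one. Because the two sides are symmetric and each edge touches one node of each side, there is no cross-contamination between the $S'$-potentials and $T'$-potentials beyond the shared edge potential $\Psi$, so the accounting decouples cleanly; I would write the up-move and down-move cases once in a ``side-agnostic'' way. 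A secondary point to state explicitly is that $|V'| = 2|V|$ and $|E'| = |E|$, so the $O(n+m)$ bound for $G'$ translates to $O(n+m)$ for $G$ as claimed.
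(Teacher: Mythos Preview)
Your proposal is correct and matches the paper's own proof essentially line for line: the paper likewise adapts the data structures, the RECOVER() loop, and the potential function $\B = \sum_v \Phi(v) + \sum_e \Psi(e)$ from Section~\ref{sec:dynpart} to the bipartite derived graph, with $\Phi(s)$ using $\alpha d_{S'} - D_s(T_i)$ and $\Phi(t)$ using $\alpha d_{T'} - D_t(S_i)$, and then invokes the identical up-move/down-move accounting (facts F1--F4) to conclude the $O(L/\eps)$ amortized bound. Your observation that the bipartite structure makes the cross-side accounting decouple cleanly is exactly the point, and the paper handles it the same ``side-agnostic'' way you suggest.
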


The proof of Theorem~\ref{th:derived:maintain} is very similar to the proof of Theorem~\ref{th:sec:dynpart:main} from Section~\ref{sec:dynpart}. Nevertheless, for the sake of completeness, we highlight the main parts of the algorithm and its analysis.

\paragraph{Data Structures.}  Recall the concept of the level of a node from Definition~\ref{main:def:partition:derived}. We now separately describe the data structures associated with the nodes in $S'$ and $T'$.
\begin{itemize}
\item Every node $s \in S'$ maintains $L$  lists $\text{{\sc Friends}}_i[s]$, for $i \in \{1, \ldots, L\}$. For $i < \ell(s)$, the list $\text{{\sc Friends}}_i[s]$ consists of the neighbors of $s$  that are at level $i$: these are nodes belonging to the set $\NN_s(V'_i \cap T_i)$. For $i = \ell(s)$, the set $\text{{\sc Friends}}_i[s]$ consists of the neighbors of $s$ that are at level $i$ or above: these are the nodes belonging to  the set $\NN_s(T_i)$. For $i > \ell(s)$, the list $\text{{\sc Friends}}_i[s]$ is empty.
Each list is stored in a doubly linked list together with its size, $\text{{\sc Count}}_i[s]$. Using appropriate pointers, we can insert or delete a given node to or from a concerned list in constant time. The counter $\text{{\sc Level}}[s]$ keeps track of the level of the node $s$.
\item Analogously, every node $t \in T'$ maintains $L$  lists $\text{{\sc Friends}}_i[t]$, for $i \in \{1, \ldots, L\}$. For $i < \ell(t)$, the list $\text{{\sc Friends}}_i[t]$ consists of the neighbors of $t$  that are at level $i$: these are nodes belonging to the set $\NN_t(V'_i \cap S_i)$. For $i = \ell(t)$, the set $\text{{\sc Friends}}_i[t]$ consists of the neighbors of $t$ that are at level $i$ or above: these are the nodes belonging to  the set $\NN_t(S_i)$. For $i > \ell(t)$, the list $\text{{\sc Friends}}_i[t]$ is empty. Each list is stored in a doubly linked list together with its size, $\text{{\sc Count}}_i[t]$. Using appropriate pointers, we can insert or delete a given node to or from a concerned list in constant time.  The counter $\text{{\sc Level}}[t]$ keeps track of the level of the node $t$.
\end{itemize}

\paragraph{The Algorithm.}
If a node   violates one of the conditions
of an $(\alpha, d_{S'}, d_{T'}, L)$-decomposition (see Definition~\ref{main:def:partition:derived}), then we call the node ``dirty'', else the node is called ``clean''. Specifically,  consider two possible cases depending on the type of the node.
\begin{itemize}
\item A node $s \in S'$ at level $\ell(s) = i$ is dirty iff either (a) $i < L$ and $\dd_s(T_i) > \alpha \cdot d_{S'}$, or (b) $i > 1$ and $\dd_s(T_{i-1}) < d_{S'}$.
\item A node $t \in T'$ at level $\ell(t) = i$ is dirty iff either (a) $i < L$ and $\dd_t(S_i) > \alpha \cdot d_{T'}$, or (b) $i > 1$ and $\dd_t(S_{i-1}) < d_{T'}$.
\end{itemize}
 Initially, the derived graph $G' = (V', E')$ is empty,  every node   is at level $1$, and  every node is clean. 
When an edge $(s, t)$, $s \in S', t \in T'$, is inserted/deleted in the derived graph $G'$,  we first update the $\text{{\sc Friends}}$ lists of $s$ and
$t$ by adding or removing that edge in constant time. Next we check whether $s$ or $t$ becomes dirty due to this edge insertion/deletion.
If yes, then we run the RECOVER-DIRECTED() procedure described in Figure~\ref{fig:dirty:main:derived}. Note that a single iteration of the {\sc While} loop (Steps 01-15) may change the status of some more nodes from clean to dirty (or vice versa). If and when the procedure terminates, however, every node is clean by definition.

\begin{figure}[htbp]
\centerline{\framebox{
\begin{minipage}{5.5in}
\begin{tabbing}
01. \ \ \ \    \=  {\sc While} there exists a dirty node  $y \in V' = S' \cup T'$: \\
02. \> \ \ \ \ \ \ \ \ \ \ \= {\sc If} $y \in S'$, {\sc Then} \\
03. \>  \>  \ \ \ \ \ \ \ \= {\sc If} $\dd_y(T_{\ell(y)}) > \alpha \cdot d_{S'}$ and $\ell(y) < L$, {\sc Then} \\
04.  \> \> \> \ \ \ \ \ \ \ \ \ \= Increment the level of $s$ by setting $\ell(y) \leftarrow \ell(y)+1$. \\
05. \> \> \> \> Update the relevant data structures to reflect this change. \\
06.  \> \> \> {\sc Else if} $\dd_y(T_{\ell(y)-1}) <  d_{S'}$ and $\ell(y) > 1$, {\sc Then} \\
07.  \> \> \> \> Decrement the level of $y$ by setting $\ell(y) \leftarrow \ell(y)-1$. \\
08. \> \> \> \> Update the relevant data structures to reflect this change. \\ 
09. \> \ \ \ \ \ \ \ \ \ \ \= {\sc Else if} $y \in T'$, {\sc Then} \\
10. \>  \>  \ \ \ \ \ \ \ \= {\sc If} $\dd_y(S_{\ell(y)}) > \alpha \cdot d_{T'}$ and $\ell(y) < L$, {\sc Then} \\
11.  \> \> \> \ \ \ \ \ \ \ \ \ \= Increment the level of $s$ by setting $\ell(y) \leftarrow \ell(y)+1$. \\
12. \> \> \> \> Update the relevant data structures to reflect this change. \\
13.  \> \> \> {\sc Else if} $\dd_y(S_{\ell(y)-1}) <  d_{T'}$ and $\ell(y) > 1$, {\sc Then} \\
14.  \> \> \> \> Decrement the level of $y$ by setting $\ell(y) \leftarrow \ell(y)-1$. \\
15. \> \> \> \> Update the relevant data structures to reflect this change.
\end{tabbing}
\end{minipage}
}}
\caption{\label{fig:dirty:main:derived} RECOVER-DIRECTED(). }
\end{figure}

\paragraph{Space complexity.} Since each edge in $G' = (V', E')$  appears in two linked lists (corresponding to each of its endpoints), the space complexity of the data structure is $O(n+ m)$.

\paragraph{Analysis of the Update Time.}
Handling each edge insertion/deletion takes constant time plus the time for the RECOVER-DIRECTED() procedure. We show below that the total time spent in procedure 
RECOVER-DIRECTED() during $t$ update operations is $O(t L/\eps)$.

\paragraph{Potential Function.} To determine the amortized update time we use a potential function $\B$ that depends on the state of the $(\alpha, d_{S'}, d_{T'}, L)$-decomposition. For any two nodes $s \in S', t \in T'$, let $f(s,t) = 1$ if $l(s) = l(t)$ and  0 otherwise. We define $\B$, the node  potentials $\Phi(x)$ (for each $x \in V' = S' \cup T'$), and the edge potentials $\Psi(s,t)$ (for each $(s, t) \in E'$) as follows. 

\begin{eqnarray}
\label{eq:potential:main}
\B & = & \sum_{x \in S' \cup T'} \Phi(x) + \sum_{(s, t) \in E'} \Psi(s,t) \\
\Phi(s) & = & \left({1 \over \eps}\right) \cdot \sum_{i = 1}^{\ell(s)-1} \max(0, \alpha \cdot d_{S'} - \dd_s(T_i)) \ \ \text{ for all nodes } s \in S' \label{eq:potential:node:derived:1} \\
\Phi(t) & = & \left({1 \over \eps}\right) \cdot \sum_{i = 1}^{\ell(t)-1} \max(0, \alpha \cdot d_{T'} - \dd_t(S_i)) \ \ \text{ for all nodes } t \in T' \label{eq:potential:node:derived:2} \\
\Psi(s, t) & = & 2 \cdot (L -  \min(\ell(s), \ell(t))) + f(s,t)\ \ \text{ for all edges } (s,t) \in E' \label{eq:potential:edge:derived}
\end{eqnarray}
\noindent It is easy to check that all these potentials are nonnegative, and that they are uniquely defined by the $(\alpha, d_{S'}, d_{L'}, L)$-decomposition under consideration. Now, mimicking  the potential function based analysis from Section~\ref{sec:dynpart}, we can infer the following facts.
\begin{itemize}
\item (F1) In the beginning, when the derived graph $G' = (V', E')$ is empty, we have $\B = 0$. Subsequently, the potential $\B$ remains always nonnegative. 
\item (F2) Insertion/deletion of an edge in the derived graph $G' = (V', E')$ increases the potential $\B$ by at most $3L/\epsilon$. 
\item  To analyze the amortized running time of the RECOVER-DIRECTED() procedure, we have the following claims.
\begin{itemize}
\item (F3) Consider a single iteration of the {\sc While} loop in Figure~\ref{fig:dirty:main:derived} where a node $s \in S'$ with $\ell(s) = i$ changes (increments or decrements) its level by one. This  takes $O(1+ \dd_s(T_i))$ time. On the other hand, the net drop in the overall potential $\B$ due to the same iteration of the {\sc While} loop is  $\Omega(1+ \dd_s(T_i))$, provided $\alpha \geq 2+3\epsilon$. 
\item (F4) Consider a single iteration of the {\sc While} loop in Figure~\ref{fig:dirty:main:derived} where a node $t \in T'$ with $\ell(t) = i$ changes (increments or decrements) its level by one. This  takes $O(1+ \dd_t(S_i))$ time. On the other hand, the net drop in the overall potential $\B$ due to the same iteration of the {\sc While} loop is $\Omega(1+ \dd_t(S_i))$, provided $\alpha \geq 2+3\epsilon$. 
\end{itemize}
\end{itemize}
\noindent Facts (F1) -- (F4) imply that the RECOVER-DIRECTED() procedure takes a total time of $O(t L/\epsilon)$ to handling the first $t$ edge insertions/deletions in the derived graph $G' = (V', E')$. This gives an amortized update time of $O(L/\epsilon)$ for our algorithm and concludes the proof of Theorem~\ref{th:derived:maintain}.

\subsection{Wrapping Up: Proof of Theorem~\ref{th:directed:main}}
\label{sec:directed:conclude}

We fix a sufficiently small constant $\epsilon \in (0, 1)$ and set $\alpha = 2+3\epsilon$, $L = 2 \cdot (2 + \lceil \log_{(1+\epsilon)} n \rceil)$, and $\lambda^* = 1 -1/\sqrt{n}$. Next, as in Corollary~\ref{cor:derived}, we discretize the range $[\lambda^*/(\alpha (1+\epsilon)), n^2]$ by setting $q_k = (1+\epsilon)^{k-1} \cdot \lambda^*/\alpha$ for every integer $k \geq 0$. We then define $K$ to be the smallest integer $k$ for which $q_k \geq n^2$. Next, we maintain an $(\alpha, d_{S'}, d_{T'}, L)$-decomposition of the derived graph $G' = (V', E')$ for every $d_{S'}, d_{T'} \in \{d_0, \ldots, d_K\}$. By Theorem~\ref{th:derived:maintain}, maintaining each of these decompositions requires $O(m+n)$ space and $O(L/\epsilon)$ amortized update time. Hence, the total space requirement of our scheme is $O(K^2 (m+n)) = \tilde{O}(m+n)$ and the total amortized update time is $O(K^2 L/\epsilon) = \tilde{O}(1)$. We also maintain the value of $\gamma$ as defined in Corollary~\ref{cor:derived}. Since there are $O(K^2)$ decompositions, maintaining the value of $\gamma$ also requires $O(K^2) = \tilde{O}(1)$ update time. By Corollary~\ref{cor:derived}, the quantity $\sqrt{\gamma}/(2 \sqrt{1+\epsilon})$ gives a $4 \alpha \cdot (1+\epsilon)^{3/2} = 8 \cdot (1 + O(\epsilon))$-approximation to the value of the densest subgraph $\rho(G)$. This concludes the proof of Theorem~\ref{th:directed:main}.

\section{Sublinear-Time Algorithm}\label{sec:sublinear}

In this section, we focus on sublinear time algorithms for the approximate densest subgraph problem. Our main results are summarized in Theorems~\ref{th:sublinear:upper} and~\ref{th:sublinear:lower}.

If we assume that an algorithm has to read {\em all} of its input, then no sublinear  (in the input size) time algorithm is possible.
However, if we assume that the input is given by an {\em oracle} that gives efficient access to the input, then sublinear time algorithms
might exist. We present in the following such an oracle that allows us to turn our algorithm from Section~\ref{sec:sketch} into 
a sublinear time algorithm. Specifically, we will give an 
$\tilde O(n)$ time algorithm that requires $\tilde O(n)$ oracle queries and space. Afterwards we will also show that with this oracle no further
assymptotic improvement is possible.

\paragraph{Oracle model.} We first present the oracle model for the input graph.
It is a standard representation that is, e.g., assumed in the sublinear time algorithms of \cite{ChazelleRT05,GoelKK13} and is
called {\em incident-list model}.  In this representation, we allow two types of accesses to the input graph (called {\em oracle queries}): (1) the {\em degree query} which asks for the degree of some node $v$, and (2) the {\em neighbor query} which asks for the $i^{th}$ neighbor of $v$ (i.e. the $i^{th}$ element in the incidence list corresponding to the neighbors of $v$). See, e.g., \cite{Onak-survey10,CzumajS-survey10,Goldreich11q-intrographproperties,Goldreich11p-briefintro,RubinfeldS11,Ron-fttcs09-survey} for further surveys. 



\paragraph{Upper Bound.} In Section~\ref{sec:sketch}, we showed how to compute a $(2+\epsilon)$-approximate solution to the densest subgraph problem using only $\tilde O(n)$ edges sampled uniformly at random. In the above oracle model, sampling an edge can be done using one neighbor query. Thus, the algorithm needs only $\tilde O(n)$ queries. After the sampling is completed we can process the collection of sampled edges using $\tilde O(n)$ time and space, as in the proof of Theorem~\ref{main:th:stream:main}, simply by computing the $(1+\epsilon, d, \tilde O(1))$-decomposition for different $\tilde O(1)$ values of $d$, to get the desired $(2+\epsilon)$-approximate solution.  This leads to the following theorem.

\begin{theorem}
\label{th:sublinear:upper}
There is a sub-linear time algorithm for computing a $(2+\epsilon)$-approximate solution to the densest subgraph problem in the incidence-list model. The algorithm makes $\O(n)$ oracle queries, and requires $\O(n)$ time and $\O(n)$ space. 
\end{theorem}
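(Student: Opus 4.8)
The plan is to combine the sampling-based construction behind Theorem~\ref{main:th:stream:main} with the observation that, in the incidence-list model, a single uniformly random edge can be drawn with $O(1)$ oracle queries. First I would recall from the proof of Theorem~\ref{main:th:stream:main} exactly which random objects are needed: for each of the $K = \tilde\Theta(1)$ guesses $d_k$ for the density, and for each of the $L-1 = \tilde\Theta(1)$ levels, Lemma~\ref{main:lm:stream:1} requires $\Theta(m\log n / d_k)$ i.i.d.\ uniformly random edges, and since $d_k \geq d_1 = \pi = m/(2\alpha n)$, each such collection has size $O(\alpha n\log n) = \tilde O(n)$. Summing over all $K$ guesses and all $L-1$ levels, the total number of random edges required is still $\tilde O(n)$.

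Next I would explain how to realize one uniform random edge with the oracle. The standard trick is: query all $n$ degree values once at the start (that is $n$ degree queries and $\tilde O(n)$ time/space to store them, and it also yields $m = \frac12\sum_v \dd_v$ exactly); then to sample an edge, pick a node $v$ with probability proportional to $\dd_v$, pick a uniformly random index $i \in [\dd_v]$, issue one neighbor query to get the $i$-th neighbor $u$ of $v$, and output the oriented pair. This yields a uniformly random edge of $G$ (each undirected edge $\{u,v\}$ is produced from exactly two oriented pairs, each with probability $\frac1m \cdot \frac12$ per pair in the relevant normalization; the minor bookkeeping about orientations is irrelevant since Lemma~\ref{main:lm:stream:1} only looks at induced degrees, which are unaffected). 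Each sample costs $O(1)$ queries and $O(1)$ time after the one-time degree scan. Drawing the $\tilde O(n)$ edges therefore costs $\tilde O(n)$ queries, $\tilde O(n)$ time, and $\tilde O(n)$ space.

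Then I would finish exactly as in the proof of Theorem~\ref{main:th:stream:main}: with $m$ in hand, set $\pi = m/(2\alpha n)$, $\sigma = 2(1+\epsilon)n$, $\alpha = (1+\epsilon)/(1-\epsilon)$, $L = 2 + \lceil\log_{(1+\epsilon)} n\rceil$, and $K = 2 + \lceil \log_{(1+\epsilon)}(\sigma/\pi)\rceil = \tilde\Theta(1)$; partition the sampled edges appropriately; for each $k \in [K]$ run the iterative peeling of Lemma~\ref{main:lm:stream:1} to build an $(\alpha, d_k, L)$-decomposition; and apply Corollary~\ref{cor:test:1} (with the observation $m/n \le d^\ast < n$ from Lemma~\ref{main:lm:stream:range}) to read off a $(2\alpha(1+\epsilon)^3) = (2+O(\epsilon))$-approximation of $d^\ast$, rescaling $\epsilon$. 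Building all decompositions touches each sampled edge $\tilde O(1)$ times, so the post-sampling phase runs in $\tilde O(n)$ time and $\tilde O(n)$ space, and the correctness holds with high probability by Lemma~\ref{main:lm:stream:1}. The degenerate case $m = 0$ is detected from the degree scan and handled trivially.

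There is essentially no deep obstacle here — the theorem is a corollary of machinery already developed in Section~\ref{sec:sketch} — so the only point requiring care is the first one: verifying that edge sampling is faithful in the incidence-list model and that the degree scan needed to implement it costs only $\tilde O(n)$ (rather than, say, forcing us to read incidence lists in full). This is immediate once one notes that a degree query returns $\dd_v$ in $O(1)$ and that we never enumerate an incidence list beyond the single index we query. Everything downstream is a verbatim invocation of Lemma~\ref{main:lm:stream:1} and Corollary~\ref{cor:test:1}.
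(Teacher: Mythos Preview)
Your proposal is correct and follows essentially the same approach as the paper: reduce to the sampling-based scheme of Theorem~\ref{main:th:stream:main} and observe that uniform edge samples can be drawn cheaply in the incidence-list model, then post-process with Lemma~\ref{main:lm:stream:1} and Corollary~\ref{cor:test:1}. Your write-up is in fact more careful than the paper's sketch, which simply asserts that ``sampling an edge can be done using one neighbor query'' without spelling out the one-time $O(n)$ degree scan you describe; that scan is indeed needed (both to sample a node proportional to its degree and to learn $m$), and you handle it correctly.
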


\paragraph{Lower Bound.} We adapt the proof of \cite[Lemma 7]{BahmaniKV12} to show that for any $\lambda \geq 3/2$, a $\lambda$-approximation algorithm needs to make $\Omega(n/(\lambda^2\poly\log(n)))$ oracle queries. 
%
%
%
Consider the following communication complexity problem ``P1'':
\begin{itemize}
\item There are $k\geq 2$ players, denoted by $p_1, \ldots, p_k$  and an $n$-node input graph $G$ consisting of $\ell$ disjoint subgraphs, denoted by $G_1, \ldots, G_\ell$. Each $G_i$ has $k$ nodes, denoted by $\{u_{i, 1}, \ldots, u_{i, k}\}$ (thus $n=k\ell$). Further each subgraph is either a star or a clique. For any node $u_{i, j}$ in $G_i$, if its degree is more than one then player $p_j$ knows about all edges incident to $u_{i, j}$. In other words, $p_j$ knows about edges incident to nodes with degree more than one among $u_{1, j}, u_{2, j}, \ldots, u_{\ell, j}$. The players want to distinguish between the case where there is a clique (thus the densest subgraph has density at least $(k-1)/2$) and when there is no clique (thus the densest subgraph has density at most $1$). Their communication protocol is in the blackboard model, where in each round a player can write a message on the backboard, which will be seen by all other players, and the communcation complexity is the number of bits written to the board.
Using a reduction from the multi-party set disjointness problem, the papers \cite{ChakrabartiKS03,BahmaniKV12} showed that this problem require $\tilde \Omega(\ell/k) = \tilde \Omega(n/k^2)$ communication bits. 
\end{itemize}

\begin{lemma}
If there is a sublinear-time algorithm with $q$ oracle queries for the problem P1 defined above, then the problem P1 can also be solved using $\tilde O(q)$ communication bits. 
\end{lemma}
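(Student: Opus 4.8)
The plan is to simulate the sublinear-time oracle algorithm in the communication model by having the players jointly answer each oracle query with only polylogarithmic communication per query. First I would fix the sublinear-time algorithm $\mathcal{A}$ that uses $q$ oracle queries (degree queries and neighbor queries in the incidence-list model) and decides P1. The key observation is that in the instance of P1, the node set is globally known: everyone knows that $G$ has $n = k\ell$ nodes partitioned into $\ell$ groups $G_1,\dots,G_\ell$ of $k$ nodes each, and that node $u_{i,j}$ is ``owned'' by player $p_j$ in the sense that $p_j$ knows all edges incident to $u_{i,j}$ whenever $\deg(u_{i,j}) > 1$. So when $\mathcal{A}$ issues a query about a node $u_{i,j}$, the natural thing is to let player $p_j$ be responsible for answering it.

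The main steps would be: (1) Designate one player, say $p_1$, to run the simulation of $\mathcal{A}$ locally, feeding it oracle answers obtained from the blackboard. (2) To answer a degree query on $u_{i,j}$: if $j \neq 1$, player $p_1$ writes the query $(i,j)$ on the board; player $p_j$ reads it and writes back $\deg(u_{i,j})$ — but here we must handle the subtlety that $p_j$ only ``knows'' edges incident to $u_{i,j}$ when that node has degree more than one. If $u_{i,j}$ has degree exactly one (a leaf of a star whose center is some other $u_{i,j'}$), then $p_j$ does not know this directly; however, $p_j$ does know whether $u_{i,j}$ is the center of a star or part of a clique among the nodes it owns with degree $>1$, and for the remaining case $p_j$ can simply report ``degree $\le 1$'' and, if needed, the partner who owns the center can disambiguate. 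A cleaner route: note each $G_i$ is either a clique on $\{u_{i,1},\dots,u_{i,k}\}$ or a star; to answer any query about $u_{i,j}$ it suffices to know which of these two $G_i$ is, plus (in the star case) which node is the center. This is $O(\log k)$ bits of information about group $i$, and it can be supplied by any player who owns a degree-$>1$ node in $G_i$ (the center of the star, or any clique node — and there always is at least one such player). So upon the first query touching group $i$, $p_1$ writes $i$ on the board, and the relevant players announce the type of $G_i$ (and the center's index if it is a star) using $O(\log n)$ bits; thereafter all queries about group $i$ are answered by $p_1$ locally. (3) Neighbor queries are handled identically from the same $O(\log n)$-bit description of $G_i$.

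Thus each oracle query costs at most $O(\log n)$ communication bits — the query name $(i,j)$ is $O(\log n)$ bits and the answer is $O(\log n)$ bits — and the group-type announcements cost $O(\log n)$ bits each, incurred at most once per group touched, hence at most $q$ times total. Summing over all $q$ queries gives $O(q\log n) = \tilde O(q)$ communication bits, and the players output whatever $\mathcal{A}$ outputs, which correctly distinguishes the clique case from the no-clique case. I expect the main obstacle to be the bookkeeping around the ``degree more than one'' caveat: one has to argue carefully that for every group $G_i$ there is always \emph{some} player holding a degree-$>1$ node in $G_i$ (true, since a star has a center of degree $k-1$ and a clique has all nodes of degree $k-1 \ge 1$, using $k \ge 2$), and that a leaf node of a star, which no player ``knows'' locally, still has its incidence structure fully determined once the star's center is announced. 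Making this translation precise — so that the simulated $\mathcal{A}$ receives answers consistent with the true graph on \emph{every} query, including queries about degree-$1$ nodes — is the delicate part, but it is routine once the $O(\log n)$-bit-per-group encoding of each $G_i$ is in place.
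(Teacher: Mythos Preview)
Your proposal is correct and follows essentially the same approach as the paper: have player $p_1$ simulate $\mathcal{A}$ and answer each oracle query with $\tilde O(1)$ bits of communication by exploiting the fact that each $G_i$ is either a clique or a star (and hence has an $O(\log n)$-bit description recoverable from some player owning a high-degree node). Your variant of caching the full description of each group upon first touch, rather than routing each individual query to $p_j$ as the paper does, is a harmless implementation choice that neither changes the argument nor the bound.
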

\begin{proof}
Let $\cal A$ be such algorithm. Player $p_1$ simulates $\cal A$ by answering each query of $\cal A$ using $\tilde O(1)$ communication bits, as follows. If $\cal A$ makes a degree query on node $u_{ij}$, player $p_1$ will ask for an answer from player $p_j$: either $p_j$ knows all edges incident to $u_{ij}$ (in which case the degree of $u_{ij}$ is $k$) or the degree of $u_{ij}$ is one. If $\cal A$ asks for the $t^{th}$ neighbor of node $u_{ij}$, player $p_1$ asks for this from player $p_j$. If player $p_j$ does not know the answer, then we know that the degree of $u_{ij}$ is one and $G_i$ is a star. In this case, player $p_1$ writes on a blackboard asking for the unique node $u_{ij'}$ in $G_i$ whose degree is more than one. Then, the only edge incident to $u_{ij}$ is $u_{ij}u_{ij'}$. This edge can be used to answer the query. 
\end{proof}
Note that any $((k-1)/2-\epsilon)$-approximation algorithm for the densest subgraph problem solves problem P1. Thus, 
the above lemma implies that any $((k-1)/2-\epsilon)$-approximation algorithm requires $\tilde \Omega(n/k)$ queries. By considering any $k\geq 4$, we get the following theorem. 

\begin{theorem}
\label{th:sublinear:lower}
In the incidence-list model, for any $\lambda\geq 3/2$ and any $\epsilon>0$, any $\lambda-\epsilon$-approximation algorithm for the densest subgraph problem needs to make $\tilde \Omega(n/\lambda^2)$ queries.
\end{theorem}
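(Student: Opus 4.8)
The plan is to reduce from the multi-party communication problem P1 described above, for which a $\tilde\Omega(\ell/k) = \tilde\Omega(n/k^2)$ communication lower bound is already available via multi-party set disjointness (following \cite{ChakrabartiKS03,BahmaniKV12}). The chain of implications is: (i) a $q$-query sublinear-time algorithm in the incidence-list model yields a blackboard protocol for P1 using $\tilde O(q)$ bits — this is exactly the Lemma stated just above; (ii) any algorithm whose approximation ratio is strictly better than $(k-1)/2$ for the densest subgraph problem solves P1; hence (iii) such an algorithm must make $\tilde\Omega(n/k^2)$ oracle queries. Choosing $k = \Theta(\lambda)$ then yields the claimed $\tilde\Omega(n/\lambda^2)$ bound.

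First I would verify step (ii). In the ``clique'' instance of P1 some $G_i$ is a $k$-clique, so $\rho^*(G) \ge \binom{k}{2}/k = (k-1)/2$. In the ``all stars'' instance every connected piece is a star on at most $k$ nodes, and any induced subgraph consisting of a center together with $j$ leaves has $j$ edges on $j+1$ vertices, so $\rho^*(G) < 1$. A $(\lambda-\epsilon)$-approximation returns $\eta$ with $\rho^*(G)/(\lambda-\epsilon) \le \eta \le \rho^*(G)$; when $\lambda \le (k-1)/2$ this forces $\eta \ge (k-1)/\big(2(\lambda-\epsilon)\big) > 1$ in the clique case and $\eta < 1$ in the star case, so comparing $\eta$ against $1$ decides P1.

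Next I would spell out step (i), which is the only part needing a genuine argument and is precisely the Lemma in the text: player $p_1$ simulates the algorithm $\mathcal A$, answering each oracle query with $\tilde O(1)$ bits. A degree query on $u_{ij}$ is forwarded to $p_j$, who answers $k$ if it owns the edges incident to $u_{ij}$ (degree $>1$, so $G_i$ is a clique) and $1$ otherwise. A neighbor query on $u_{ij}$ is likewise forwarded to $p_j$; if $p_j$ does not know the incident edges then $u_{ij}$ is a leaf of a star $G_i$, and $p_1$ broadcasts a request for the index $j'$ of the unique high-degree node of $G_i$ (its center), whose owner supplies the single edge $u_{ij}u_{ij'}$. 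Each query thus costs $O(\log n)$ bits over $O(1)$ rounds, so $q$ queries cost $\tilde O(q)$ bits; this also handles randomized $\mathcal A$, since the underlying set-disjointness lower bound holds against randomized protocols.

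Finally, to conclude: given $\lambda \ge 3/2$ and $\epsilon > 0$, pick an integer $k$ with $k \ge 4$ and $(k-1)/2 \ge \lambda$, e.g.\ $k = \max\{4, \lceil 2\lambda + 1 \rceil\} = \Theta(\lambda)$. By (ii) the hypothetical $(\lambda-\epsilon)$-approximation solves P1, by (i) it induces a $\tilde O(q)$-bit protocol, and by the P1 lower bound $\tilde O(q) = \tilde\Omega(n/k^2)$, which rearranges to $q = \tilde\Omega(n/k^2) = \tilde\Omega(n/\lambda^2)$. The main point requiring care — really the only ``hard'' part — is the bookkeeping of polylogarithmic factors so that the $\tilde O(\cdot)$ in the reduction composes cleanly with the $\tilde\Omega(\cdot)$ in the communication bound; the combinatorial substance is inherited wholesale from \cite{BahmaniKV12,ChakrabartiKS03}.
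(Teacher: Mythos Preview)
Your proposal is correct and follows essentially the same approach as the paper: reduce from the multi-party communication problem P1 via the simulation lemma, use the density gap between the clique case ($\rho^*\ge (k-1)/2$) and the all-stars case ($\rho^*<1$), and invoke the $\tilde\Omega(n/k^2)$ set-disjointness bound with $k=\Theta(\lambda)$. You are in fact slightly more explicit than the paper in verifying the density gap and in spelling out the choice of $k$; the paper simply asserts ``by considering any $k\ge 4$'' and moves on.
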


\section{Distributed Streams}\label{sec:distributed}

In the distributed streaming model (see, e.g., \cite{CormodeMYZ10}), there are $k$ {\em sites} receiving different sequences of edge insertions (without any deletion), and these sites must coordinate with the {\em coordinator}. The objective is to minimize the communication between the sites and the coordinator in order to maintain the densest subgraph. We sample $\tilde O(n)$ edges (without replacement)\danupon{@Sayan: Or do we need a replacement?} as a sketch by using the sampling algorithm of Cormode~et~al.~\cite{CormodeMYZ10}: their algorithm can sample $\tilde O(n)$ edges using $\tilde O(k+n)$ bits of communication, whereas the coordinator needs $\tilde O(n)$ space and each site needs $\O(1)$ space. The coordinator can then use this sketch to compute a $(2+\epsilon)$-approximate solution. 

\begin{theorem}
\label{th:distributed:stream}
In the distributed streaming setting with $k$ sites~\cite{CormodeMYZ10}, we can compute a $(2+\epsilon)$-approximate solution to the densest subgraph problem using $\O(k+n)$ bits of communication. The coordinator needs $\O(n)$ space and each site needs $\O(1)$ space.
\end{theorem}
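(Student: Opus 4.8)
\textbf{Proof plan for Theorem~\ref{th:distributed:stream}.} The plan is to reduce the problem to the task of extracting $\tilde O(n)$ uniformly random edges from the union of the $k$ insertion-only streams, and then feed those sampled edges to the $(2+\epsilon)$-approximation machinery from Section~\ref{sec:sketch}. Concretely, recall from the proof of Theorem~\ref{main:th:stream:main} that a $(2+\epsilon)$-approximation of $d^* = \max_{S\subseteq V}\rho(S)$ can be computed from a collection of $\tilde\Theta(n)$ independent uniform samples of the edge set $E$: one sets $\pi = m/(2\alpha n)$, $\sigma = 2(1+\epsilon)n$, discretizes $[\pi,\sigma]$ into $K = \tilde\Theta(1)$ powers of $(1+\epsilon)$, and for each $d_k$ builds an $(\alpha,d_k,L)$-decomposition from $\lambda_k = cm(L-1)\log n/d_k$ samples (Lemma~\ref{main:lm:stream:1}), reading off the largest $d_k$ whose topmost level is nonempty (Corollary~\ref{cor:test:1}). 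Since $\lambda_k \le \lambda_1 = 2\alpha c(L-1)n\log n = \tilde\Theta(n)$ for all $k$, a pool of $\tilde\Theta(n)$ samples suffices. So the only distributed ingredient needed is a protocol that maintains $\tilde\Theta(n)$ uniform samples (here without replacement is fine, since sampling without replacement only makes the degree estimates concentrate at least as well — indeed the negative-association argument of Theorem~\ref{main:th:dynamic:sample:dense} part~2 and Theorem~\ref{th:chernoff:negative} already handle exactly this) from the distributed insertion-only stream, plus a count of $m = |E|$.

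Next I would invoke the sampling primitive of Cormode~et~al.~\cite{CormodeMYZ10}. Their continuous distributed sampling algorithm maintains a random sample of a prescribed size $s$ from the union of the $k$ streams using $\tilde O(k + s)$ total communication, with the coordinator holding the current sample in $\tilde O(s)$ space and each site needing only $\tilde O(1)$ space; a global counter for $m$ is maintained within the same budget (each insertion triggers $O(1)$ amortized, or $\tilde O(1)$, communication, and the standard $k$-site counting overhead is the additive $\tilde O(k)$ term). Setting $s = \tilde\Theta(n)$ large enough to cover $\sum_{k}\lambda_k \le \lambda^* K^* = \tilde\Theta(n)$ (exactly as in the proof of Theorem~\ref{main:th:stream:main}), the total communication is $\tilde O(k + n)$, the coordinator uses $\tilde O(n)$ space and each site $\tilde O(1)$ space, as claimed.

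Finally I would assemble the pieces: at query time the coordinator partitions its pool of $\tilde\Theta(n)$ samples into the $K\cdot(L-1) = \tilde\Theta(1)$ groups required by Lemma~\ref{main:lm:stream:1}, constructs the $(\alpha,d_k,L)$-decompositions locally (no further communication), and applies Corollary~\ref{cor:test:1} to output a $(2\alpha + \Theta(\epsilon)) = (2+\Theta(\epsilon))$-approximation; rescaling $\epsilon$ gives $(2+\epsilon)$. Correctness holds with high probability by the same union bound as in Theorem~\ref{main:th:stream:main}, provided the samples returned by the Cormode~et~al.\ protocol are (close to) uniform and independent enough for the Chernoff/negative-association steps — this is the one point that needs care. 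The main obstacle is therefore not the decomposition analysis, which is imported wholesale, but verifying that the distributed sampler's guarantee is of the right form: we need $\tilde\Theta(n)$ samples whose inclusion indicators are either independent or negatively associated and whose marginals are within a $(1\pm\epsilon)$ factor of uniform, so that Theorem~\ref{th:chernoff:negative} applies to the per-node degree sums $D_v(Z_i,S_i)$ inside each decomposition. Once that is checked against the precise statement in \cite{CormodeMYZ10} (and, if their primitive samples with replacement or returns slightly biased marginals, absorbed into the constant $c$ and the slack already present in $\alpha = (1+\epsilon)/(1-\epsilon)$), the theorem follows.
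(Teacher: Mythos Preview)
Your proposal is correct and follows essentially the same approach as the paper: invoke the distributed sampler of Cormode~et~al.\ to obtain $\tilde O(n)$ uniform edge samples with $\tilde O(k+n)$ communication, then have the coordinator run the $(2+\epsilon)$-approximation from Section~\ref{sec:sketch} on the sample. Your write-up is in fact more careful than the paper's (which is a one-paragraph sketch), in particular in flagging the with/without-replacement mismatch between Lemma~\ref{main:lm:stream:1} and the Cormode~et~al.\ primitive and noting that negative association (Theorem~\ref{th:chernoff:negative}) resolves it.
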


\section{Open problems}\label{sec:open}

An obvious question is whether the $(4+\epsilon)$ approximation ratio provided by our algorithm is tight. In particular, it will be interesting if one can improve the approximation ratio to $(2+\epsilon)$ to match the case where an update time is not a concern. 
Getting this approximation ratio even with larger space complexity is still interesting. (Epasto~et~al.~\cite{EpastoLS15} almost achieved this except that they have to assume that the deletions happen uniformly at random.)
It is equally interesting to show a hardness result. Currently, there is only a hardness result for maintaining the optimal solution \cite{HenzingerKNS15}. It will be interesting to show a hardness result for approximation algorithms. 
Another interesting question is whether a similar result to ours can be achieved with polylogarithmic {\em worst-case} update time. 
Finally, a more general question is whether one can obtain space- and time-efficient fully-dynamic algorithm like ours for other fundamental graph problems, e.g. maximum matching and single-source shortest paths.

\shortOnly{\bibliographystyle{plain}}
\longOnly{\bibliographystyle{alpha}}
\bibliography{reference}



\newpage
\appendix

\end{document}